\newtheorem{Corollary}{Corollary}
\newtheorem{Proposition}{Proposition}
\newtheorem{Assumption}{Assumption}
\newtheorem{Lemma}{Lemma}
\newtheorem{Definition}{Definition}
\newtheorem{Theorem}{Theorem}
\newenvironment{manualassumption}[1]{%
  \manualassumptioninner
}{\endmanualassumptioninner}
\theoremstyle{remark}
\newtheorem{Remark}{Remark}
\newcommand{\convd}{\stackrel{d}{\to}}
\newcommand{\convp}{\stackrel{p}{\to}}
\newcommand{\sumn}{\sum_{i=1}^n}
\newcommand{\supn}{\sup_{i\in[n]}}
\newcommand{\ind}{\text{ind}}
\newcommand{\ot}{{\otimes2}}
\newcommand{\ic}{{i\cdot}}
\newcommand{\R}{\mathbb{R}}
\newcommand{\E}{{\mathbb{E}}}
\newcommand{\PP}{{\mathbb{P}}}
\newcommand{\hatqp}{\widehat{q}^\prime}
\newcommand{\tildegp}{\widetilde{g}^\prime}
\newcommand{\hatgp}{\widehat{g}^\prime}
\newcommand{\qp}{q^\prime}
\newcommand{\qpp}{q^{\prime\prime}}
\newcommand{\gp}{g^\prime}
\newcommand{\hatsigma}{\widehat{\sigma}}
\newcommand{\hatbeta}{\widehat{\beta}}
\newcommand{\hateta}{\widehat{\eta}}
\newcommand{\hatkappa}{\widehat{\kappa}}
\newcommand{\hatomega}{\widehat{\omega}}
\newcommand{\hattheta}{\widehat{\theta}}
\newcommand{\hatvarphi}{\widehat{\varphi}}
\newcommand{\hats}{\widehat{s}}
\newcommand{\hatH}{\widehat{H}}
\newcommand{\hatm}{\widehat{m}}
\newcommand{\hatW}{\widehat{W}}
\newcommand{\hatv}{\widehat{v}}
\newcommand{\hatU}{\widehat{U}}
\newcommand{\hatF}{\widehat{F}}
\newcommand{\lep}{\lesssim_p}
\newcommand{\gep}{\gtrsim_p}
\newcommand{\logpM}{\log (pM)}
\renewcommand{\hat}{\widehat}
\renewcommand{\tilde}{\widetilde}
\begin{document}
\title{Inference for Nonlinear Endogenous Treatment Effects Accounting for High-Dimensional Covariate Complexity\thanks{\noindent The authors are in alphabetical order. We thank Xiaohong Chen, Whitney Newey, and seminar/workshop/conference participants at various places for helpful comments. Emails: michaelqfan@gmail.com (Q. Fan), zijguo@stat.rutgers.edu (Z. Guo), zwmei@link.cuhk.edu.hk (Z. Mei), czhang@stat.rutgers.edu (C.-H. Zhang). }
	}
	\author[a]{Qingliang Fan}
	\author[b]{Zijian Guo}
	\author[a]{Ziwei Mei}
	\author[b]{Cun-Hui Zhang}	
	
	\affil[a]{Department of Economics, The Chinese University of Hong Kong}
	\affil[b]{Department of Statistics,  Rutgers University}

\date{\today}
\maketitle

\begin{abstract} 
Nonlinearity and endogeneity are prevalent challenges in causal analysis using observational data. This paper proposes an inference procedure for a nonlinear and endogenous \emph{marginal effect function}, defined as the derivative of the nonparametric treatment function, with a primary focus on an additive model that includes high-dimensional covariates. Using the control function approach for identification, we implement a regularized nonparametric estimation to obtain an initial estimator of the model. Such an initial estimator suffers from two biases: the bias in estimating the control function and the regularization bias for the high-dimensional outcome model. Our key innovation is to devise the \emph{double bias correction} procedure that corrects these two biases simultaneously. Building on this debiased estimator, we further provide a confidence band of the marginal effect function. Simulations and an empirical study of air pollution and migration demonstrate the validity of our procedures.
\end{abstract}
\bigskip
\textbf{JEL classification:} C14, C21, C26, C55
\\
\textbf{Keywords:} Nonlinear causal effects, control function, double bias correction, data-rich environment.
\clearpage
\onehalfspacing
\section{Introduction} 
Nonlinear treatment effects with endogeneity are prevalent in empirical economic studies. The econometric model could be high-dimensional with the increasing availability of rich datasets that include many potential covariates. For example, in \citet{gopalan2021home}, the effect of home equity on labor income is found to be nonlinear, while as a measure of home equity, the loan-to-value (LTV) ratio is endogenous. Their study addressed the issue of endogeneity using the synthetic Loan-To-Value (LTV) ratio as an instrumental variable (IV). This ratio is constructed using synthetic loans, the original LTV level, and changes in the house price index. Other control variables might include the original loan amount, purchase price, loan balance, job tenure, and homeowner's age, among many other individual characteristics. Another example, which is used as our later empirical study, is the potential nonlinear effect of environmental pollution on migration \citep{chen2022}. Air pollution is endogenous due to unobserved common economic factors affecting both migration and pollution. Thermal inversion is a popular IV in related environmental economic studies. We consider high-dimensional controls, including county-level income, expenditure, investment in health and education, etc. A third example is the nature of the job ladder in terms of working time and career advancement. \cite{Gicheva2013} found a positive but nonlinear relationship between weekly working hours and hourly wage growth using data\footnote{Data were from the 1979 cohort of National Longitudinal Survey of Youth, U.S. Bureau of Labor Statistics, and GMAT Registrant Survey.}  from the high-end labor market. Potential covariates include gender, race, age, the number of children under 18, marital status, the mother's education, the college GPA, the GMAT score, working experience, and whether a person is enrolled in school. The marginal effects seem small at low levels of working hours and increase at high levels like 50 hours or above. However, some missing variables, including parents' involvement in learning and human capital formation, are likely to be correlated with working hours, which causes bias in the results.

Motivated by the real applications above, we propose new estimation and inference procedures for nonlinear treatment functions with high-dimensional covariates. For observations indexed by $i=1,2,\dots,n$ with a scalar outcome $Y_i$, a scalar endogenous treatment $D_i$, high-dimensional covariates $X_\ic$ and finite-dimensional instruments $Z_\ic$, we consider the model satisfying the conditional moment restriction $\mathbb{E}(Y_i-g(D_i)-X_\ic^{\top}\theta|X_\ic,Z_\ic)=0$,  where the function $g$ and high-dimensional coefficients $\theta$ are unknown. Granted that our methodology applies to the original nonlinear treatment effect function $g(\cdot)$, we focus on the inference for the \emph{marginal effect function}, defined as the derivative function  $g'(\cdot)$. It is interpreted as the ``marginal effect"  at given levels of the treatment variable, analogous to the slope coefficient in a linear causal model. The valid inference for the nonlinear marginal effect function, taking into account high-dimensional covariates, is essential for empirical researchers and policy-makers.

\par This paper uses the control function approach \citep{newey1999} for model identification. It is well known in the literature (e.g., \citealp{neweypowell89,Florens03,aichen2003})  that a fully nonparametric IV model encounters an ill-posed inverse problem and the ``curse of dimensionality". We thus consider an additive model specified in Section \ref{sec21}, which is already a challenging problem in the presence of high-dimensional covariates.


\subsection{Main Results and Contributions}
\par Under the control function setup, our procedure mainly includes the following steps: First, estimate the control function using the LASSO regularization \citep{tibshirani1996regression}; Second, construct an initial estimator of the marginal effect function using LASSO again; Third, correct the regularization bias in the initial estimator for inference. Our constructed initial estimator suffers from two sources of regularization bias from estimating both the control function in the first step and the nonlinear treatment in the second step. The former causes new obstacles to the inference problem, which is absent from the high-dimensional inference in additive models without endogeneity.  
\par To address this difficulty, we propose a novel debiasing procedure for valid inference of the marginal effect function in the presence of endogeneity, nonlinearity, and high dimensionality. We call it the \emph{double bias correction} because it corrects two sources of biases. Furthermore, we construct a uniform confidence band for the nonlinear marginal effect in the manner of \citet{lu2020kernel}, adopting the techniques of multiplier bootstrap \citep{chernozhukov2014anti,chernozhukov2014gaussian}. 
\par The simulation study supports the validity of our estimator and confidence band of the endogenous marginal effect in the presence of high-dimensional covariates. In our empirical analysis of migration and air pollution, we find new results in addition to those of \citet{chen2022}, who consider only the linear effect. Specifically, using thermal inversion as IV, the effect of pollution on migration is found to be insignificant when the pollution level is very low or moderately high. The effect is significant when pollution is worse than the ``good air quality'' level but below the medium level or at high levels. The magnitude of the effect increases at very high levels of pollution. In Section \ref{sec: emp}, we provide an explanation of this result based on the  \emph{reference-dependent} principle in behavioral economics. 

The main contributions of the paper are summarized as follows. \begin{enumerate}
    \item In a high-dimensional and endogenous setting, we propose the double bias correction procedure to correct for regularization biases from both the outcome regression and the nonlinear control function estimation. 
    \item We provide a uniform confidence band for the nonlinear marginal effect function. The development of a uniform confidence band for the nonlinear endogenous effects under high dimensions appears to be novel in the literature. 
\end{enumerate}

\subsection{Literature Review}

Our research connects to the literature on nonparametric estimation, causal inference, and high-dimensional models. Endogeneity in general nonparametric models was considered in \cite{neweypowell89, Matzkin94, blundell03, Hall05, darolles2011nonparametric}. \cite{Newey1990} studied the efficiency of a nonparametric reduced form function for the instrumental variables. The control function method for nonparametric IV models was widely studied in \cite{newey1999,Horowitz2011,Wooldridge05, Guo16,su2008local,ozabaci2014additive,lee2007endogeneity,aghion2013innovation}. 
Recently, \citet{chen2018optimal}, \cite{chen2021adaptive} developed optimal sup-norm rates and uniform confidence band for the nonlinear treatment function and its derivatives. \cite{Breunig2022adaptive} considered minimax adaptive estimation of quadratic functionals in nonparametric IV models. \cite{babii2020honest} developed honest inference for a wide class of ill-posed regression models, including the nonparametric IV model. Distinguished from the above literature focusing on low-dimensional models, we specialize in the valid inference of nonlinear marginal effect that addresses the complexity from high-dimensional covariates.  \cite{chernozhukov2022locally,chernozhukov2022automatic} established automatic debiasing estimators of functionals like average treatment effect for general nonlinear models. Our method differs in that we use the control function approach for identification, and develop a uniform confidence band for the marginal effect function. 
\par A surge of machine learning methods provides solutions to the estimation and inference of linear treatment effects with rich observational features. The popular double machine learning (DML) by \citet{chernozhukov2018double} corrects the regularization bias from estimating the linear treatment effect with high-dimensional covariates. \cite{belloni2014} proposed a post-selection inference procedure for high-dimensional linear treatment effects models. \cite{fan2020endogenous} considered a model with a mixture of controls and instruments. \citet{fan2022testing} proposed an overidentification test for high-dimensional linear IV models. Unlike the above studies, we focus on the nonlinear effect.
\par Another strand of literature focuses on the inference for high-dimensional nonlinear models without endogeneity. The estimation of high-dimensional partially linear models was considered by \cite{wang2010estimation,muller2015partial,yu2016minimax}. The estimation of high-dimensional additive models has also been frequently discussed \citep{meier2009high,huang2010variable,koltchinskii2010sparsity,suzuki2013fast,yuan2016minimax,tan2019doubly}. \citet{lu2020kernel} extended the procedure of \citet{javanmard2014confidence} to the high-dimensional additive model. \citet{ kozbur2021inference} and \citet{gregory2021statistical} proposed inferential procedures for additive models through post-selection or debiased estimators. \citet{su2019non} proposed bootstrapping inference for high-dimensional nonseparable models. \citet{guo2022decorrelated} used a decorrelated local linear estimator for inference of the first-order derivative of the target function. \citet{ning2023} proposed estimation and inference for high-dimensional partially linear models with estimated outcomes. The complexity of endogeneity distinguishes our problem from the abovementioned literature. 
\par \noindent \textbf{Notations.} We use ``$\convp$'' and ``$\convd$'' to denote convergence in probability and distribution, respectively. The phrase ``with probability approaching one as $n\to\infty$" is abbreviated as ``w.p.a.1". We use ``$\E(\cdot)$" to denote the expectation and ``$\E_{\mathcal{F}}(\cdot)$" to denote the conditional expectation $\E(\cdot|\mathcal{F})$ for any $\sigma$-field $\mathcal{F}$. For any positive sequences $a_n$ and $b_n$, ``$a_n\lesssim b_n$'' means there exists some constant $C$ such that $a_n\leq Cb_n$, ``$a_n\gtrsim b_n$'' means $b_n\lesssim a_n$, and ``$a_n\asymp b_n$'' indicates $a_n\lesssim b_n$ and $b_n\lesssim a_n$.  We use $[n]$ for some $n\in\mathbb{N}$ to denote the integer set $\{1,2,\cdots,n\}$. The floor and ceiling functions are $\lfloor\cdot\rfloor$ and $\lceil\cdot\rceil$, respectively. For a $p$-dimensional vector $x=(x_{1,}x_{2},\cdots,x_{p})^{\top}$, the number of nonzero entries is $\|x\|_0$, the $L_{2}$ norm is $\left\Vert x\right\Vert _{2}=\sqrt{\sum_{j=1}^{p}x_{j}^{2}}$, the $L_{1}$ norm is $\left\Vert x\right\Vert_{1}=\sum_{j=1}^{n}\left|x_{j}\right|$, and its maximum norm is $\|x\|_{\infty}=\max_{j\in[p]}|x_{j}|$. For a $p\times r$ matrix $A=(A_{ij})_{i\in[p],j\in[r]}$, we define the maximum norm $\|A\|_\infty = \max_{i,j}|A_{i,j}|$, $L_2$ norm $\|A\|_2 = \sqrt{\lambda_{\max}(A^\top A)}$ and the $L_1$ norm $\|A\|_1 = \max_{j\in[r]}\sum_{i\in[p]}|A_{ij}|$. For a square-integrable function $f(\cdot)$, define its $L_2$ norm as $\|f\|_2 = (\int f^2(x) dx)^{1/2}$. We use $0_p$ to denote the $p\times 1$ null vector. The indicator function is ${\textbf 1}(\cdot)$.   For any function $g(\cdot)$, its first-order derivative function is $\gp(\cdot)$. 
\par \noindent \textbf{Paper Organization.} The remainder of the paper is organized as follows. In Section \ref{sec: Model}, we introduce the model and the main methodology. Section \ref{sec: inference theory NL} provides the main theoretical justification. In Section \ref{sec: sim}, we demonstrate the finite-sample performance of our nonlinear effect estimator and uniform confidence band. Section \ref{sec: emp} provides an empirical example. Section \ref{sec: conc} concludes the paper. Technical proofs and additional simulation results are provided in the Appendix.

\section{Model and Methodology}
\label{sec: Model}
\subsection{Model}\label{sec21}
We consider the following additive model:
\begin{align}
 Y_i&=g(D_i)+X_{i\cdot}^{\top}\theta+u_i,
 \label{eq: y model}\\
 D_i&= \sum_{\ell=1}^{p_z} \psi_\ell(Z_{i\ell}) + X_{i\cdot}^{\top}\varphi  + v_i,
 \label{eq: D model}
\end{align}
where $Y_i\in\R$ is the outcome variable, $D_i\in\R$ is the univariate treatment variable, $g(\cdot)$ is the unknown \emph{treatment function} of interest,  $X_{i\cdot}\in\R^{p}$ represents the high-dimensional covariates\footnote{Without loss of generality, $X_\ic$ may include nonlinear terms such as the polynomials of some elementary covariates. The extension of \eqref{eq: y model} and \eqref{eq: D model} to the nonparametric additive model is possible.}, $Z_{i\cdot}\in\R^{p_z}$ denotes the instrumental variables, and $(u_i,v_i)^\top$ are unmeasured errors. Due to the existence of unmeasured confounders, $u_i$ might be correlated with $v_i$, leading to the endogeneity of $D_i$. In the above model \eqref{eq: D model}, we consider the additive nonlinear relation between $D_i$ and $Z_i$, where the $\psi_\ell(\cdot)$'s, for $\ell=1,2,\cdots,p_z$, are unknown functions. It accommodates the common linear reduced form with  $\psi_\ell(Z_{i\ell})=Z_{i\ell}\psi_\ell$. The additive model in \eqref{eq: D model} has been widely considered in the literature  \citep{belloni2012sparse,fan2018nonparametric,ozabaci2014additive}. We assume that the dataset $\mathscr{D}_n := \{Y_i,D_i,X_\ic,Z_\ic\}_{i\in[n]}$ is independently and identically distributed (i.i.d.). We allow the dimension $p$ of covariates $X_{i\cdot}$ to be larger than $n$ and assume that the high-dimensional nuisance parameters $\theta,\varphi\in{\mathbb{R}}^{p}$ are sparse with the sparsity level $s=\max\{\|\theta\|_0,\|\varphi\|_0\}$. We assume that $p_z$ is a fixed number ($p_z=1$ in many applications).  
\par The main focus of this paper is the statistical inference for the derivative function $\gp(\cdot)$. 
 Formally, we call $\gp(\cdot)$ the \emph{marginal effect function} or marginal effect for short\footnote{The marginal effect function that we define here (see also \citealt{chen2021adaptive}) should be distinguished from the term ``marginal treatment effect'' as in \cite{heckman05}. The latter is defined as the average causal effect of $D$ on $Y$ for individuals with some observed $X = x$ and unobserved $U = u$, where $U$ is a continuously distributed random variable satisfying the monotonicity condition of \cite{imbens94} with a binary treatment $D$. See the excellent survey by \cite{mogstad2018}.}. This includes $g(D_i)=D_i\beta$ as a special case, where the derivative $g^\prime(D_i)=\beta$ is the homogeneous treatment effect in linear causal models.

\par The identification of $g(D_i)$ relies on the control function approach. Specifically, we impose the condition that 
\begin{equation}\label{eq: control function}
    \mathbb{E}(u_i|v_i,X_{i\cdot},Z_\ic)=\mathbb{E}(u_i|v_i), 
\end{equation}
which is widely used in the literature \citep{florens2008identification,newey1999,imbens2009identification,su2008local,Horowitz2011,Wooldridge05}. A simple sufficient condition for (\ref{eq: control function}) is that $(u_i,v_i)$ is independent of the exogeneous covariates $X_\ic$ and the instruments $Z_\ic$.
\par Define $q(v_i) := \mathbb{E}(u_i|v_i)$. The function $q(v_i)$ is called the control function in the literature; see \cite{blundell03} for a review. Then we have the following decomposition:
\begin{equation}\label{eq: decom control func}
    u_i = q(v_i) + \varepsilon_i, \quad \text{with}\quad \mathbb{E}(\varepsilon_i|Z_\ic,X_{i\cdot},v_i)=\mathbb{E}(\varepsilon_i|v_i)=0.
\end{equation}
For simplicity, we assume that $\mathbb{E}(g(D_i))$, $\mathbb{E}(q(v_i))$, and $\mathbb{E}(\psi_\ell(Z_{i\ell}))$ for all $\ell\in[p_z]$ are all zero. In practice, we allow nonzero expectations by adding an intercept term to the model and handle the intercept by demeaning.
\subsection{Estimation}
\label{sec: est}
\par Throughout the paper, we use the B-spline basis functions defined in Section \ref{subsec: def} following \citet{chen2018optimal} to approximate the nonlinear functions in the model specified by (\ref{eq: y model})-(\ref{eq: decom control func}). We use the uniformly based knots \citep{schumaker2007spline} for B-spline functions such that the distances between every two adjacent knots are equal. Specifically, we use $B$, with the $(i,j)$-th element $B_{ij}=B_{j}(D_i)$, to denote an $n\times M_D$ matrix of $g(D_i)$'s basis functions; we use $H$, with the $(i,j)$-th element $H_{ij}=H_{j}(v_i)$, to denote an $n\times M_v$ matrix of $q(v_i)$'s basis functions; we use $K_\ell$, with the $(i,j)$-th element $(K_{\ell})_{ij}=K_{j\ell}(Z_{i\ell})$, to denote an $n\times M_\ell$ matrix of $\psi_\ell(Z_{i\ell})$'s basis functions, and $K=(K_1,K_2,\cdots,K_{p_z})$. We define $B(\cdot) = (B_1(\cdot),B_2(\cdot),\dots,B_{M_D}(\cdot))^\top$ as a functional vector that collects the spline functions of $g(\cdot)$, and similarly define $H(\cdot)$ and $K_\ell(\cdot)$. More spline functions result in smaller approximation errors of the nonlinear functions but larger variances in the estimators. The choice of numbers of spline functions $M_D$, $M_v$, and $M_\ell$ are discussed in the third paragraph of Section \ref{subsec: Setup} for simulation studies. 
\par Let $\mathcal{D}$ denote the support of $D_i$. By Proposition \ref{prop: spline approx power}, when the function $g$ belongs to the $\gamma$-th H\"{o}lder Class   specified in Definition \ref{def: holder}, there exists a $\beta\in\mathbb{R}^{M_D}$ such that the following approximation error 
\begin{equation}
r_{g}(d)=g(d)-B(d)^\top\beta
\end{equation}
satisfies 
\begin{equation}\label{eq: error approx}
    \sup_{d\in\mathcal{D}}|r_{g}(d)| = O(M_D^{-\gamma}),\ {\rm and }\ \sup_{d\in\mathcal{D}}|r^\prime_{g}(d)| = O(M_D^{-\gamma+1}).
\end{equation} 
Therefore, the coefficient $\beta$ guarantees accurate spline approximations of both the original treatment function $g(\cdot)$ and its derivative $g^\prime(\cdot)$. Similarly, we use the bases $H(\cdot)$ to approximate $q(\cdot)$, and $K_\ell(\cdot)$ to approximate $\psi_\ell(\cdot)$ for $\ell=1,2,\dots,p_z$. There exist $\eta\in\mathbb{R}^{M_v}$ and  $\kappa_\ell\in\mathbb{R}^{M_\ell}$ for $\ell=1,2,\dots,p_z$ such that the approximation errors 
\begin{equation}
r_{q}(v)=q(v)-H(v)^\top\eta 
\end{equation}
and 
\begin{equation}
r_{\ell}(z)=\psi_\ell(z)- K_\ell(z)^\top\kappa_\ell\text{ for }\ell=1,2,\dots,p_z
\end{equation} 
satisfy similar error bounds as \eqref{eq: error approx}. 
We write $B_\ic = (B_1(D_i),\cdots.B_{M_D}(D_i))^\top$ as the spline functions for the $i$-th individual; $H_\ic$ and $K_\ic$ are defined similarly. Furthermore, define $\kappa = (\kappa_1^\top, \kappa_2^\top,\cdots,\kappa_{p_z}^\top)^\top$.  With the spline approximation, the original models (\ref{eq: y model})-(\ref{eq: decom control func}) are rewritten as follows:
\begin{align}\label{eq: y model spline}
Y_i&= B_\ic^\top\beta +   H_\ic^\top \eta + X_{i\cdot}^{\top}\theta+r_g(D_i)+r_q(v_i)+\varepsilon_i,\\\label{eq: d model spline}
D_i&= K_\ic^\top \kappa + X_{i\cdot}^{\top}\varphi + r_{\psi i} + v_i, 
\end{align}
where $r_{\psi i} := \sum_{\ell=1}^{p_z}r_{\ell}(Z_{i\ell})$. In order to carry out the regression for the aforementioned models, it is necessary to estimate $H_\ic$ that includes an unobservable random error $v_i$. To address the high dimensionality in (\ref{eq: d model spline}), we implement the following partial $L_1$ penalized regression  
\begin{equation}
\left\{\widehat{\kappa},\widehat{\varphi}\right\}=\arg\min_{\kappa,\varphi}\frac{1}{n}\sumn \left(D_i- K_\ic^\top\kappa -X_{i\cdot}^{\top}\varphi\right)^2+\lambda_D \|\varphi\|_1,
\label{eq: partial penalization for D}
\end{equation}
where the tuning parameter $\lambda_D>0$ is chosen by cross validation. The $L_1$ penalization in \eqref{eq: partial penalization for D} is only applied to $\varphi$ to address the high-dimensionality of $X_{i\cdot}$. We compute the residuals $\hatv_i=D_i- K_\ic^\top\hat\kappa - X_{i\cdot}^{\top}\hatvarphi$ and use it to estimate $v_i$.
\par Recall that $H_j(\cdot)$ for $j=1,2,\dots,M_v$ are basis functions of $q(v_i)$. Define $\widehat{H}_{ij} = H_{j}(\hatv_i)$, $\hat H_\ic = (\widehat{H}_{i1},\widehat{H}_{i2},\dots,\widehat{H}_{iM_v})^\top$, and    $\widehat{H} = (\hat H_{1\cdot},\hat H_{2\cdot},\dots,\hat H_{n\cdot})^\top$.  The model (\ref{eq: y model spline}) can be rewritten as follows:
\begin{equation}\label{eq: y model spline v hat}
Y_i= B_\ic^\top \beta + \hat H_\ic^\top\eta +X_{i\cdot}^{\top}\theta+r_i+ \varepsilon_i,
\end{equation}
where the approximation error $r_i$ admits the following form 
\begin{equation}
\label{eq: def  ri}
  r_i =  r_g(D_i)+r_q(\hatv_i)+q(v_i)-q(\hatv_i).
\end{equation}
The above error $r_i$ includes the spline approximation error $r_g(D_i)+r_q(\hatv_i)$ and the control function estimation error $q(v_i)-q(\hatv_i)$. 

To estimate $\beta$, $\eta$, and $\theta$, we use a similar partially $L_1$ penalized regression for the outcome model (\ref{eq: y model spline v hat}), 
\begin{equation}
\left\{\widehat{\beta},\widehat{\eta},\widehat{\theta}\right\}=\arg\min_{\beta,\eta,\theta}\frac{1}{n}\sumn \left(Y_i-B_\ic^\top\beta -  \hat H_\ic ^\top\eta-X_{i\cdot}^{\top}\theta\right)^2+\lambda_Y \|\theta\|_1,
\label{eq: partial penalization}
\end{equation}
where $\lambda_Y>0$ is the tuning parameter chosen by cross-validation. Similar to \eqref{eq: partial penalization for D}, the penalty is imposed on the high-dimensional vector $\theta$ only. 
\par Define $B^\prime(d) = (B_j^\prime(d))_{j\in[M_D]}$ as the derivative of spline functions.  The plug-in estimator for $g^\prime(d)$ using $\widehat{\beta}$ is 
\begin{equation} \widehat{g}^\prime(d)=B^\prime(d)^\top\hat\beta. 
\label{eq: plug-in estimator}
\end{equation}
The above plugin estimator admits the following error,
\begin{equation}
\widehat{g}^\prime(d)-g^\prime(d)= B^\prime(d)^\top \left(\widehat{\beta} -\beta \right) - r_{g}^\prime(d),
\label{eq: error decomposition}
\end{equation}
where $r_{g}^\prime(d) = g^\prime(d) - B^\prime(d)^\top\beta $ is a higher-order term from the spline approximation error of $g^\prime(d)$. Even though the penalization in \eqref{eq: partial penalization} is not directly applied to the parameter $\beta$, the plug-in estimator $\widehat{g}^\prime(d)$ using $\hat\beta$ still inherits a certain level of bias from \eqref{eq: partial penalization}. This is due to the correlation between the treatment variable $D_i$ and the high-dimensional covariates $X_{i\cdot}$. We propose a new debiased estimator for $g^\prime(d)$ in Section \ref{sec: inference g NL}, and construct a uniform confidence band for $g^\prime(d)$ based on the bias-corrected estimator in Section \ref{sec: ucb}. 

\subsection{Double Bias Correction} 
\label{sec: inference g NL}
\par We aim at a debiased estimator of the marginal effect function $g^\prime(d)$, approximated by $B^\prime(d)^\top \beta$. The primary task is thus to find a debiased estimator of $\beta$. 
Earlier literature \citep{zhang2014confidence,van2014asymptotically,javanmard2014confidence} developed debiased LASSO estimators for inference of high-dimensional models without endogeneity. In this section, we develop a novel debiasing procedure to overcome the unique challenge in our model, which arises from high dimensionality, nonlinearity, and endogeneity. 
\par To fix ideas, let the $p_F$-dimensional vector $\hat F_\ic$ collect some ``regressors''. The bias-corrected estimator of $\beta$ has the following form: 
\begin{equation}\label{eq: db beta}
    \tilde\beta = \hat\beta + \dfrac{1}{n}\hat\Omega_B\sumn \hat F_\ic\hat\varepsilon_i, 
\end{equation}
where $\hat\varepsilon_i$ is the residual, and $\hat\Omega_B$ is an estimator of a submatrix composed of the first $M_D$ rows of the inverse Gram matrix of $\hat F_\ic$. We will specify $\hat F_\ic$ and $\hat\Omega_B$ in \eqref{eq: def F hat i} and \eqref{eq: projection direction 1}, respectively.
\par In view of (\ref{eq: db beta}), we examine the LASSO residual $\hat\varepsilon_i$ of the model (\ref{eq: y model spline v hat}) and (\ref{eq: def  ri}). Define $\hatW_\ic := (B_\ic,\hatH_\ic)$, $\omega := (\beta^\top,\eta^\top )^\top$, and $\hat\omega = (\hat\beta^\top,\hat\eta^\top )^\top$ as the initial LASSO estimator of $\omega$. The fitted value is defined as $\hat Y_i = \hatW_\ic^\top\hat\omega + X_\ic^\top\hat\theta$, and the residual is decomposed as
\begin{equation}\label{eq: residual begin}
\begin{aligned}
   \hat\varepsilon_i = Y_i - \hat Y_i =  \underbrace{\hatW_\ic^\top(\omega-\hat\omega) + X_\ic^\top(\theta-\hat\theta)}_{\varDelta_i^Y}  + \underbrace{q(v_i) - q(\hat v_i)}_{\varDelta_i^v}  + r_g(D_i) + r_q(\hat v_i) + \varepsilon_i.
\end{aligned}
\end{equation}
We see the necessity of double bias correction from the above error decomposition \eqref{eq: residual begin}. The first source of bias $\varDelta_i^Y$ comes from the regularization in the outcome regression (\ref{eq: partial penalization}). It can be addressed by existing bias correction methods. The second source of bias $\varDelta_i^v$ comes from the control function approximation using an $L_1$ regularized regression (\ref{eq: partial penalization for D}). The additional bias $\varDelta_i^v$ brings a unique challenge to our high-dimensional nonparametric endogenous effect model with the control function. Consequently, we need a solution to simultaneously correct $\varDelta_i^Y$ from regularization (\ref{eq: partial penalization}) and $\varDelta_i^v$ from the regularization (\ref{eq: partial penalization for D}). We call our solution \emph{double bias correction} to highlight the necessity of removing two biases. 
\par Using Taylor expansion, we have the following approximation of $q(v_i)-q(\hat v_i),$ 
\[q(v_i)-q(\hat v_i) \approx q^\prime(\hat v_i)( v_i - \hat v_i) \approx q^\prime(\hat v_i) X_\ic^\top (\hat\varphi - \varphi) + q^\prime(\hat v_i) K_\ic^\top (\hat\kappa - \kappa), \]
where $\hat\varphi$ and $\hat\kappa$ are from (\ref{eq: partial penalization for D}), and the higher-order biases and spline approximation errors are omitted for simplicity of presentation. This approximation is linear in the LASSO estimators $\hat\varphi$ and $\hat\kappa$. Thus, the residual \eqref{eq: residual begin} is approximated by 
\begin{equation}\label{eq: residual continue}
\begin{aligned}
    \hat\varepsilon_i &\approx \hatW_\ic^\top(\omega-\hat\omega) + X_\ic^\top(\theta-\hat\theta) +   q^\prime(\hat v_i) X_\ic^\top (\hat\varphi - \varphi) +  q^\prime(\hat v_i) K_\ic^\top (\hat\kappa - \kappa) + \varepsilon_i.
\end{aligned}
\end{equation}
The linearization \eqref{eq: residual begin} is not operational for bias correction since the function  $q^\prime(\cdot)$ is infeasible. We thus estimate it by 
\begin{equation}
\label{eq: qphat def}
\hatqp(\hat v_i) := \hatH_{i\cdot}^{\prime\top} \hateta 
\end{equation}
where $\hat\eta$ comes from the LASSO regression (\ref{eq: partial penalization}), and $\hatH_\ic^\prime$ is an $M_v\times 1$ vector with the $j$-th coordinate being the derivative of the spline  function $\hat H_{ij}^\prime = H^\prime_j(\hat v_i)$. Now the regressors in (\ref{eq: residual continue}) are approximated by 
\begin{equation}\label{eq: def F hat i}
    \hatF_\ic=(\hat W_{i\cdot}^\top, X_\ic^\top, \hatqp_i(\hat v_i), K_{i\cdot}^\top,\hatqp(\hat v_i) X_{i\cdot}^\top)^\top\in\mathbb{R}^{p_F},
\end{equation}
where $p_F$ is the length of $\hat F_\ic$. Notice that the valid instruments $Z_\ic$ have no direct effects on $Y_i$ in model (\ref{eq: y model}). Nevertheless, the regularization bias $\varDelta_i^v$ from estimating the control function relates to the splines $K_\ic$ of the instruments $Z_\ic$. Thus, though the instruments $Z_\ic$ do not explicitly appear in model (\ref{eq: y model}), their spline functions $K_\ic$ are included in $\hat F_\ic$ and used for bias correction. 
\par We then construct a debiased LASSO estimator based on the residual $\hat\varepsilon_i$ and the regressors $\hat F_\ic$ defined in \eqref{eq: def F hat i}. Define $\hatF :=(\hatF_\ic^\top)_{i\in[n]}^\top$. Then  $\hat\Omega_B^\top = (\hat\Omega_1,\cdots,\hat\Omega_{M_D})$ is a $p_F\times M_D$ matrix whose $j$-th column is defined as follows:
\begin{equation}
\begin{aligned}
\hat\Omega_{j} =\arg\min_\Omega\  &\Omega^{\top} \hat\Sigma_F \Omega\\
\text{subject to \ }
&{ {\|\hat\Sigma_{F} \Omega - \textbf{i}_j\|_{\infty} \leq \mu_j} } \\
&\|n^{-1/2}\hat F \Omega\|_\infty \leq \mu_j \\
\end{aligned}
\label{eq: projection direction 1}
\end{equation}
where $\textbf{i}_j$ is the $j$-th standard basis and $\mu_j$ is a tuning parameter. The constraint $\|\hat\Sigma_{F} \Omega - \textbf{i}_j\|_{\infty} \leq  \mu_j$ is imposed to control the bias term, and $\|n^{-1/2}\hat F \Omega \|_\infty\leq \mu_j$ bounds the higher-order conditional moments and hence guarantees the asymptotic normality for non-Gaussian errors $\varepsilon_i$. By the bias-corrected estimator $\tilde\beta$ defined in (\ref{eq: db beta}), we finally construct a debiased estimator for the marginal effect function:
\begin{equation}\label{eq: local correction 1}
    \tilde g^\prime(d) := B^\prime(d)^\top \tilde\beta = B^\prime(d)^\top \hat\beta + \hat m(d)^\top\dfrac{\sumn \hat F_\ic\hat\varepsilon_i}{n},
\end{equation}
where $\hat m(d)$ = $\hat\Omega_B^\top B^\prime(d)$. The tuning parameter selection is specified in the last two paragraphs of Section \ref{subsec: Setup}. 
\par We use the debiased estimator (\ref{eq: local correction 1}) for inference of the marginal effect  $g^\prime(d)$. Define \begin{equation}\label{eq: s.e. def}
  \hats(d)=\sqrt{\hatm(d)^\top\hat\Sigma_F\hatm(d)}, \quad \text{and} \quad \  \hat\sigma_\varepsilon^2 = n^{-1}\sumn \hat\varepsilon_i^2.
\end{equation}  For any fixed $d$, a point-wise $100(1-\alpha)\%$ confidence interval is given by  
\begin{equation}
\begin{aligned}\label{eq: def confidence interval}   
   \left[\tildegp(d)-z_{1-\alpha/2}\cdot\hats(d)\cdot\sqrt{\dfrac{\hatsigma_\varepsilon^2}{n}},\ \tildegp(d)+ z_{1-\alpha/2}\cdot\hats(d)\cdot\sqrt{\dfrac{\hatsigma_\varepsilon^2}{n}}\right],
\end{aligned}
\end{equation}
where $z_{1-\alpha/2}$ is the $(1-\alpha/2)$-th quantile of the standard normal distribution. To establish the uniform confidence band of the whole function $g^\prime(d)$ for all $d$ over its domain $\mathcal{D}$, we will use a critical value from multiplier bootstrap in place of the $z_{1-\alpha/2}$ used in (\ref{eq: def confidence interval}). 
\subsection{Uniform Confidence Band}
\label{sec: ucb} 
We define the following quantity:
\begin{equation}
    \label{eq: def H n}\mathbb{H}_n(d) := \dfrac{\sqrt{n}(\tildegp(d) - \gp(d))} {\hats(d)\hatsigma_\varepsilon},
\end{equation} 
where $\hats(d)$ and $\hatsigma_\varepsilon$ are defined in (\ref{eq: s.e. def}). Under the regularity conditions stated in Section \ref{sec: inference theory NL}, we show that  $\mathbb{H}_n(d)$ converges in distribution to a standard normal variable for a given treatment level $d$. We proceed to a confidence band for all $d$ over its domain $\mathcal{D}$ by considering the distribution of the empirical process $ \mathbb{H}_n(d) $. Following the techniques of \citet{chernozhukov2014anti,chernozhukov2014gaussian}, we approximate $\mathbb{H}_n(d)$ by the following Gaussian multiplier process:

\begin{equation}\label{eq: H hat}
    \hat{\mathbb{H}}_n(d) = \dfrac{1}{\sqrt{n}}\sum_{i\in[n]} e_i\cdot\dfrac{ \hat F_\ic^\top \hat m(d)}{\hat s(d)},
\end{equation}
where $e_i$ for $1\leq i\leq n$ are i.i.d.\ standard normal variables. Let $\hat c_n(\alpha)$ be the $(1-\alpha)$-th quantile of $ \sup_{d\in\mathcal{D}} \left|\hat{\mathbb{H}}_n(d)\right| $. Then, we construct the following confidence band at level $100(1-\alpha)\%$, 
\begin{equation}
\begin{aligned}\label{eq: def confidence band}    \mathcal{C}_{n,\alpha}^{\mathcal{D}} &:= \{\mathcal{C}_{n,\alpha}(d):d\in\mathcal{D}\}, \text{ with }\\ 
    \mathcal{C}_{n,\alpha}(d) &:= \left[\tildegp(d)-\hat c_n(\alpha)\cdot\hats(d)\cdot\sqrt{\dfrac{\hatsigma_\varepsilon^2}{n}},\tildegp(d)+\hat c_n(\alpha)\cdot\hats(d)\cdot\sqrt{\dfrac{\hatsigma_\varepsilon^2}{n}}\right]. 
\end{aligned}
\end{equation}
The critical value $\hat c_n(\alpha)$ approximates the $(1-\alpha)$-th quantile of $\sup_{d\in\mathcal{D}}\left| \mathbb{H}_n(d)\right|$, which is the supreme absolute value of an asymptotically Gaussian empirical process. The theoretical justifications of Gaussian approximation by \citet{chernozhukov2014anti,chernozhukov2014gaussian} suggest that $\hat c_n(\alpha)$ is a good approximation such that the confidence band (\ref{eq: def confidence band}) is asymptotically honest.
\par All methods described thus far have utilized the full sample. Unlike the regression models without endogeneity, our regressors $\hat F_\ic$ used for bias correction include some initial LASSO estimators that substantially complicate the theoretical analysis for our double bias correction. These estimators include $\hat\varphi$ and  $\hat\kappa$ that construct the residual $\hat v_i$ of the LASSO regression (\ref{eq: partial penalization for D}), and $\hat\eta$ used to estimate the derivative  $q^\prime$ as in (\ref{eq: qphat def}). Thus, sample-splitting (see details in the following Algorithm \ref{algorithm1}) is required for these initial LASSO estimators in theoretical derivations. We highlight that the sample-splitting is merely a technical restriction. In Section \ref{sec: sim} for numerical studies, we demonstrate that the confidence band using a full sample has a reasonable coverage rate and is more informative than the split-sample confidence band due to a larger effective sample size, and thus we recommend using the full-sample inference for practitioners. 

\par To keep consistent with the theoretical justifications, we summarize the split-sample procedure for a uniform confidence band of $g^\prime(d)$ in Algorithm \ref{algorithm1} below\footnote{The code for implementation is available at \url{https://github.com/ZiweiMEI/HDNPIV}.}. The algorithm for the full-sample inference is summarized in Appendix \ref{subsec: algo full sample}. We remark that our procedure also works for the original function $g(d)$, with $B^\prime(d)$ in (\ref{eq: local correction 1}) replaced by the original splines $B(d)$ and (\ref{eq: H hat})-(\ref{eq: def confidence band}) revised accordingly. \\
\begin{algorithm}[H]
\footnotesize
\caption{\label{algorithm1}Split-Sample Uniform Confidence Band for $g^\prime(d)$}
\hspace*{0.01in}
\begin{algorithmic}[1]
\State Data preparations: Prepare two independent datasets $\mathscr{D}_n, \mathscr{D}^\prime_{n^\prime}$.
\State Initial estimators: Obtain the full sample LASSO estimators $\hat\kappa,\hat\varphi$ by (\ref{eq: partial penalization for D}) and $\hat\beta,\hat\eta,\hat\theta$ by (\ref{eq: partial penalization}) with $\mathscr{D}_n\cup \mathscr{D}^\prime_{n^\prime}$. Furthermore,  compute the LASSO estimators $\hat\kappa^\ind$, $\hat\varphi^\ind$, and $\hat\eta^\ind$ using $\mathscr{D}^\prime_{n^\prime}$ only. 
\State Control function approximation: For $i\in \mathscr{D}_n$, compute the residuals $\hat v_i = D_i - K_\ic^\top \hat\kappa^\ind - X_\ic^\top\hat\varphi^\ind$ and estimate $q^\prime(v_i)$ by $\hatqp(\hat v_i) = H^\prime(\hat v_i)^\top\hat\eta^{\ind}$.
\State Double bias correction: For $i\in \mathscr{D}_n$, construct $\hat F_\ic$ as in (\ref{eq: def F hat i}). Solve (\ref{eq: projection direction 1}) and save the solution $\hat\Omega_B$.
\State Multiplier bootstrap: Compute the $(1-\alpha)$-th quantile of $\sup_{d\in\mathcal{D}}\left| \hat{\mathbb{H}}_n(d)\right|$ with $\Hat{\mathbb{H}}_n(d)$ defined as (\ref{eq: H hat}). Save the quantile as $\hat c_n(\alpha)$.
\State Uniform confidence band: Construct the confidence band of $g^\prime(d)$ as in (\ref{eq: def confidence band}).
\end{algorithmic}
\end{algorithm}
\section{Asymptotic Theory} \label{sec: inference theory NL}
Throughout the paper, we consider the number of covariates $p$, the sparsity index $s$, and the number of bases functions $M_D$, $M_v$, and $M_\ell$ for $\ell=1,2,\dots,p_z$ as deterministic functions of the sample size $n$. In asymptotic statements, we only explicitly let $n\to\infty$, thus allowing $p$, $M_D$, $M_v$, and $M_\ell$ all go to infinity, whereas $s$ is either fixed or divergent.  
\par We impose the following theoretical assumptions. 
\begin{Assumption}[Control Function] \label{assu: control func}Assume that the dataset $\mathscr{D}_n := \{Y_i,D_i,X_\ic,Z_\ic\}_{i\in[n]}$ is independently and identically distributed. Furthermore, suppose that (\ref{eq: control function}) holds, and $\mathbb{E}(v_i|Z_{i\cdot},X_{i\cdot}) = 0$. 
We assume that $\sigma_v^2 = \mathbb{E}(v_i^2|Z_\ic,X_{i\cdot})$ and $\sigma_\varepsilon^2 = \mathbb{E}(\varepsilon_i^2|v_i,Z_\ic,X_{i\cdot})$ are positive constants where $\varepsilon_i$ is defined in (\ref{eq: decom control func}). Furthermore, $\E(\varepsilon_i^4|v_i,Z_\ic,X_{i\cdot})$ is bounded by some positive constant independent of $n$ and $p$.
\end{Assumption}
\par As mentioned earlier in Section \ref{sec21}, we impose the condition (\ref{eq: control function}) for identification using the control function approach in  Assumption \ref{assu: control func}. We also impose the conditional homoskedasticity of $v_i$ and $\varepsilon_i$. Even though the consistency of the initial estimators does not require such homoskedasticity assumption, it is unclear how to relax this assumption for establishing asymptotic normality of our double-bias-correction estimators unless we impose additional sparsity conditions on precision matrix of $\hat F_\ic$ (the inverse of $\Sigma_{F|\mathcal{L}}$ defined before Proposition \ref{prop: feasible}). Since $\hat F_\ic$ includes LASSO estimators and spline functions, the sparsity of this precision matrix is difficult to deduce from low-level assumptions. For the sake of simplicity and to maintain focus on our methodological contributions, we will concentrate on homoskedasticity, thereby avoiding any diversions. We further assume the bounded third order moment of the random shock $\varepsilon_i$ to control for the error bounds.

\begin{Assumption}[Bounded Supports and Densities] \label{assu: D v}
Suppose that $D_i\in\mathcal{D}$, $v_i\in\mathcal{V}$, $X_{ij}\in\mathcal{X}$ $\forall$ $j\in[p]$, and $Z_{i\ell}\in\mathcal{Z}$ for all $\ell\in[p_z]$ are continuous, where $\mathcal{D}=[a_D,b_D]$, $\mathcal{V}=[a_v,b_v]$, $\mathcal{X}=[a_x,b_x]$, and $\mathcal{Z}=[a_z,b_z]$ are compact intervals in $\mathbb{R}$. Additionally, the marginal density functions of $D_i$ and $v_i$, as well as the joint density functions of $(Z_{i1},\cdots,Z_{ip_z})$, are absolutely continuous with density functions bounded below by $c_f>0$ and above by $C_f>0$.
\end{Assumption}

We assume boundedness of $v_i$ and $D_i$ to guarantee small errors of B-spline approximation. This technical assumption has been used in the control function literature  \citep{newey1999, imbens2009identification,su2008local,ozabaci2014additive}. In practice, we can normalize the data to a compact interval. As pointed out by \citet{ozabaci2014additive}, boundedness is possible to remove with arguments in \citet{su2012sieve}, inducing additional complications. Finally, we focus on the continuously valued treatment, instruments, and covariates.

We further impose regular conditions on the nonlinear functions $g(\cdot)$, $q(\cdot)$ and $\psi_\ell(\cdot)$ in models \eqref{eq: y model} and \eqref{eq: D model}. We define the functions of the H\"{o}lder Class as follows.
\begin{Definition}\label{def: holder}
The $\gamma$-th H\"{o}lder Class $\mathcal{H}_{\mathcal{X}}(\gamma,L)$ is the set of $\gamma$-times differentiable functions $f:\mathcal{X}\to\mathbb{R}$ such that its derivative $f^{(\ell)}$ with $\ell=\lfloor\gamma\rfloor$ satisfies the following:
\begin{equation}
    |f^{(\ell)}(x)-f^{(\ell)}(y)|\leq L|x-y|^{\gamma-\ell},\text{ for any } x,y\in\mathcal{X}.
\end{equation}
\end{Definition}
When $\mathcal{X}$ is compact in the real line, $f\in\mathcal{H}_{\mathcal{X}}(\gamma,L)$ implies that all $j$-th derivatives of $f$ are bounded in $\mathcal{X}$ for any $j\in\{1,2,\cdots,\ell=\lfloor\gamma\rfloor\}$. To address the error between $\hat v_i$ and $v_i$, we assume that the domain of $q(\cdot)$ is $\mathcal{V}_q=[a_v-\epsilon_v,b_v+\epsilon_v]$ for some constant $\epsilon_v > 0$, which extends $\mathcal{V}=[a_v,b_v]$, the support of $v_i$, , as specified in Assumption \ref{assu: D v}.
\begin{Assumption}[H\"{o}lder Class] Suppose that \label{assu: function}
$g(\cdot)\in\mathcal{H}_{\mathcal{D}}(\gamma,L)$, $q(\cdot)\in\mathcal{H}_{\mathcal{V}_q}(\gamma,L)$ and $\psi_\ell(\cdot)\in\mathcal{H}_{\mathcal{Z}}(\gamma,L)$ for some $\gamma\geq 2$ and positive constant $L$. Further assume $\sum_{\ell=1}^{p_z}\psi_\ell^\prime(Z_{i\ell}) \neq 0$ with probability one. 
\end{Assumption}
 Assumption \ref{assu: D v} about the compact supports and bounded densities for $D_i$ and $v_i$, together with Assumption \ref{assu: function}, guarantees a small spline function approximation error. The condition of nonzero derivative in Assumption \ref{assu: function} relates to identification, as explained in the following Remark \ref{rem: id}.

\begin{Remark}[Identifiability]\label{rem: id}
    \par According to \citet[Theorem 2.3]{newey1999}, the models (\ref{eq: y model})-(\ref{eq: decom control func}) are identifiable if (a) the boundary of the support of $(Z_\ic^\top, X_\ic^\top,v_i)$ shares zero probability; (b) all functions are differentiable; and (c) $\sum_{\ell=1}^{p_z}\psi_\ell^\prime(Z_{i\ell})$ is almost surely nonzero. These conditions are reflected in our assumptions. Specifically, condition (a) is implied by the bounded support and density condition by Assumption \ref{assu: D v}. Condition (b) is implied by Assumption \ref{assu: function} that all functions are in the H\"{o}lder class. The last condition (c), imposed by Assumption \ref{assu: function}, means the IVs are relevant to the endogenous variable $D$.
\end{Remark}

In the following, for any random vector $\zeta = (\zeta_{j})_{j\geq 1}$, we use $\tilde \zeta = (\tilde\zeta_{j})_{j\geq 1}$ to denote the standardized vector with $\tilde\zeta_{ij} :=(\E(\zeta_{ij}^2))^{-1/2}\zeta_{ij}$. 
\begin{Assumption}[Eigenvalues] \label{assu: eigen} Suppose that $\E(X_{ij}^2)$ is bounded away from zero and above uniformly for all $j$. Furthermore, define $Q_{i\cdot}:=( K_{i\cdot}^\top, X_{i\cdot}^\top)^\top$ and $U_{i\cdot}=(B_{i\cdot}^\top,H_{i\cdot}^\top,X_{i\cdot}^\top)^\top$. We use $\tilde Q_\ic$ and $\tilde U_\ic$ to denote their standardized versions. Suppose that the eigenvalues of $\E(\tilde Q_\ic\tilde Q_\ic^\top)$ and $\E(\tilde U_\ic\tilde U_\ic^\top)$ are bounded away from zero and above.
\end{Assumption}

\begin{Remark}[No Perfect Collinearity]
In Assumption \ref{assu: eigen}, the vectors $Q_\ic$ and $U_\ic$ respectively collect the regressors, or their population version, in the LASSO regressions (\ref{eq: partial penalization for D}) and (\ref{eq: partial penalization}). Intuitively, Assumption \ref{assu: eigen} rules out perfect collinearity of the regressors in LASSO. We need this assumption for the restrictive eigenvalue conditions to derive the LASSO consistency for high-dimensional models \citep{bickel2009simultaneous}. We impose the assumption on the standardized regressors, since the B-splines $B_\ic$, $H_\ic$, and $K_\ic$ have Gram matrices with eigenvalues convergent to zero as the number of spline bases pass to infinity, and thus are of smaller scales than the covariates $X_\ic$. 
\end{Remark}

\begin{Remark}[IV Strength]
The bounded-away-from-zero eigenvalues of $\E(\tilde U_\ic\tilde U_\ic^\top)$ in Assumption \ref{assu: eigen} can be considered as a restriction on the IV strength. This condition means that the variables in $U_\ic$, including (functions of) $D_i$, $v_i$ and $X_\ic$, are not perfectly correlated. Intuitively, if the nonlinear functions of IVs $\{\psi_\ell(Z_{i\ell})\}_{\ell=1}^{p_z}$ in the model (\ref{eq: D model}) introduce sufficient variations to $D_i$, the variables $D_i$, $v_i$, and $X_\ic$ in (\ref{eq: D model}) are far away from perfect collinearity, and the eigenvalue condition for $\E(\tilde U_\ic\tilde U_\ic^\top)$ can thus be satisfied. The sufficient variations from $\{\psi_\ell(Z_{i\ell})\}_{\ell=1}^{p_z}$ requires that neither the covariates $X_\ic$ and the functions of IVs $\psi_\ell(Z_{i\ell})$ for  $\ell=1,\dots,p_z$ are perfectly collinear, nor $\sum_{\ell=1}^{p_z}\psi_\ell(Z_{i\ell})$ is  close to a constant function. The latter means the treatment $D_i$ and the IVs $Z_\ic$ are significantly relevant, which holds in many empirical applications. 
\end{Remark}

We further state the restrictions on the asymptotic regime. 

\begin{Assumption}\label{assu: asymp}
    Suppose the following conditions hold:
    \begin{enumerate}[(a)]
        \item The sparsity index $s = O(n^{1/4-c_0})$ for an arbitrarily small $ c_0 \in (0, \frac{1}{4})$ and $\log p = O(n^{c_\gamma})$ with the constant $c_\gamma$ only depending on  $\gamma$ in Assumption \ref{assu: function}. 
        \item   The number of spline bases satisfy $M_D\asymp M_v \asymp M_\ell$ for $\ell=1,2,\dots,p_z$. Define $M := \max\left(M_D, M_v, M_1, \dots, M_{p_z} \right)$. Assume $M \asymp n^\nu$ with $\nu\in[\frac{1}{2\gamma+1},\frac{1}{4})$. The LASSO tuning parameters satisfy  $\lambda_Y = C_Y\sqrt{\frac{\log (pM)}{n}}$ and $\lambda_D = C_D\sqrt{\frac{\log (pM)}{n}}$ for some sufficiently large constants $C_Y,C_D>0$. 
    \end{enumerate} 
\end{Assumption}
The rates specified in Assumption \ref{assu: asymp} of the tuning parameters $\lambda_D$, $\lambda_Y$ are similar to the linear model case, which is applied to the remainder of the paper without further clarifications. These rates are only for technical proofs; in practice, we use cross-validation for data-driven choice of LASSO tuning parameters.
\par The following theorem establishes the rate of convergence for the initial estimators proposed in \eqref{eq: partial penalization}. Recall that $\hat\beta,\hat\eta$ are the estimated coefficients of spline functions, and  $\hat\theta$ stores the estimated coefficients of covariates. 
\begin{Theorem}Suppose that Assumptions \ref{assu: control func}-\ref{assu: asymp} hold. Then the initial estimators $\hat\beta,\hat\eta,\hat\theta$ in (\ref{eq: partial penalization}) and $\hat g^\prime(\cdot)$ in (\ref{eq: plug-in estimator}) have the following error bounds w.p.a.1: 
\begin{equation}
\|\widehat{\beta}-\beta\|_2^2 + \|\hat\eta - \eta\|_2^2 \lesssim M^{-2\gamma + 1} + \dfrac{(s+M)M\log (pM)}{n},
\label{eq: rate omega}
\end{equation}
\begin{equation}
\|\hatgp - \gp\|_2^2 \lesssim  M^{-2(\gamma-1) } + \dfrac{M^2(s+M)\log (pM)}{n},
\label{eq: rate derivative}
\end{equation}
\begin{equation}
\|\widehat{\theta}-{\theta}\|_1 \lesssim  M^{-2\gamma}\sqrt{\dfrac{n}{\log p}} + (s+M)\sqrt{\dfrac{\logpM}{n}}. 
\label{eq: rate theta}
\end{equation}
\label{thm: estimation bound}
\end{Theorem}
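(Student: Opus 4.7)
The plan is to reduce the proof to a partially-penalized LASSO oracle inequality applied to the second-stage regression (\ref{eq: partial penalization}), with the first-stage LASSO error from (\ref{eq: partial penalization for D}) feeding into the second stage through the random basis matrix $\hat H$. I will proceed in three steps: a first-stage analysis, a propagation step that bounds the perturbation $\hat H - H$, and a second-stage analysis with the perturbed design.

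First I would analyze (\ref{eq: partial penalization for D}) as a standard partial-penalty LASSO. The population standardized design has bounded-from-below eigenvalues by Assumption \ref{assu: eigen}, and sub-Gaussian concentration plus a union bound over $(s+M)$-sparse supports delivers a restricted-eigenvalue condition for the empirical Gram. The spline bias is $O(M^{-\gamma})$ by (\ref{eq: error approx}), the KKT noise is $O_p(\sqrt{\log(pM)/n})$ by Assumption \ref{assu: control func}, and a standard oracle inequality then yields $\|\hat\kappa-\kappa\|_1 + \|\hat\varphi-\varphi\|_1 \lesssim (s+M)\sqrt{\log(pM)/n}$ and $n^{-1}\sum_{i=1}^{n}(\hat v_i - v_i)^2 \lesssim M^{-2\gamma} + (s+M)\log(pM)/n$.

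Next I would propagate this to the second stage. For uniformly-knotted B-splines $|H_j^\prime| \lesssim M$ and only $O(1)$ basis functions are active per observation, so the mean value theorem gives $\|\hat H_{i\cdot}-H_{i\cdot}\|_2^2 \lesssim M^2(\hat v_i-v_i)^2$ and hence $n^{-1}\|\hat H - H\|_F^2 \lesssim M^2 \cdot n^{-1}\sum_{i=1}^n(\hat v_i-v_i)^2$. Under Assumption \ref{assu: asymp} this perturbation is small enough that the restricted-eigenvalue condition for $(B,\hat H, X)$ inherits from that of $(B,H,X)$ (the latter from Assumption \ref{assu: eigen}). Boundedness of $q^\prime$ (Assumption \ref{assu: function}) gives $|q(v_i)-q(\hat v_i)|\lesssim|\hat v_i-v_i|$, so the composite error $r_i$ in (\ref{eq: y model spline v hat}) satisfies $n^{-1}\sum_{i=1}^n r_i^2 \lesssim M^{-2\gamma} + (s+M)\log(pM)/n$. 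Applying a partially-penalized LASSO oracle inequality with effective active dimension $s+O(M)$ to (\ref{eq: partial penalization}) then gives the standardized coefficient errors; unstandardizing the spline blocks inflates the $\ell_2$ rate by a factor of $M$ (since the diagonal of $n^{-1}B^\top B$ is of order $1/M$), which produces (\ref{eq: rate omega}). For (\ref{eq: rate theta}), the second summand is the standard $(s+M)\lambda_Y$ contribution while the first comes from the approximation-to-noise ratio $M^{-2\gamma}/\lambda_Y \asymp M^{-2\gamma}\sqrt{n/\log p}$ entering the oracle inequality for $\|\hat\theta-\theta\|_1$. Finally, (\ref{eq: rate derivative}) follows by writing $\hat g^\prime(d)-g^\prime(d)=B^\prime(d)^\top(\hat\beta-\beta)-r_g^\prime(d)$: since the Gram matrix $\int B^\prime(d) B^\prime(d)^\top dd$ has operator norm $\lesssim M$ (from the $1/M$ knot spacing), we obtain $\|\hat g^\prime - g^\prime\|_2^2 \lesssim M\|\hat\beta-\beta\|_2^2 + \|r_g^\prime\|_2^2$, which combined with (\ref{eq: rate omega}) and (\ref{eq: error approx}) yields the claimed bound.

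The main obstacle is the propagation step: ensuring that the random perturbation $\hat H - H$ does not destroy the restricted-eigenvalue property of the second-stage design, and controlling the additional cross-term $n^{-1}\hat W^\top(H-\hat H)\eta$ that enters the KKT condition of (\ref{eq: partial penalization}). This requires $M^2\cdot[M^{-2\gamma} + (s+M)\log(pM)/n]$ to be of smaller order than $\sqrt{\log(pM)/n}$, which is precisely what the asymptotic regime in Assumption \ref{assu: asymp} (with $\nu < 1/4$ and $s = O(n^{1/4 - c_0})$) is calibrated to guarantee.
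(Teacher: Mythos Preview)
Your proposal is correct and follows essentially the same route as the paper: a first-stage partially-penalized LASSO (the paper's Proposition~\ref{prop: Lasso D} and Corollary~\ref{cor: v hat}), a mean-value-theorem propagation bound on $\hat H-H$ (Corollary~\ref{cor: H approx}), and a second-stage basic-inequality argument with a restricted-eigenvalue step for the perturbed design $(B,\hat H,X)$ (Lemma~\ref{lem: RE concentration}), followed by the derivative bound via $\lambda_{\max}\bigl(\E[B'(D_i)B'(D_i)^\top]\bigr)\lesssim M$.

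One caution on your final paragraph. The sufficient condition you write, namely that $M^{2}\bigl[M^{-2\gamma}+(s+M)\log(pM)/n\bigr]$ be of smaller order than $\sqrt{\log(pM)/n}$, is \emph{not} implied by Assumption~\ref{assu: asymp}: for $\gamma=2$ the first piece is $M^{-2}\asymp n^{-2\nu}$, and $n^{-2\nu}\ll n^{-1/2}$ would force $\nu>1/4$, contradicting $\nu<1/4$. The paper avoids this by exploiting the spline eigenvalue scale: in the RE perturbation it uses $\|W\|_2\lesssim\sqrt{n/M}$ (from $\lambda_{\max}(\E[B_{i\cdot}B_{i\cdot}^\top])\lesssim M^{-1}$) rather than the crude $\sqrt{n}$, so the effective requirement becomes roughly $M^{3}(s+M)\log(pM)/n=o(1)$ and $M^{3-2\gamma}=o(1)$, both of which \emph{do} follow from $\nu<1/4$, $s=O(n^{1/4-c_0})$, $\gamma\ge 2$. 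Relatedly, the separate ``cross-term $n^{-1}\hat W^\top(H-\hat H)\eta$'' you mention does not appear in the KKT once you adopt the parametrization $Y_i=\hat W_{i\cdot}^\top\omega+X_{i\cdot}^\top\theta+r_i+\varepsilon_i$; its effect is already absorbed into $r_i$ through the $q(v_i)-q(\hat v_i)$ piece you correctly bounded via Lipschitzness of $q$, so no extra $M$-factor enters there.
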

\vspace{-1em}
Though we consider estimation of the nonlinear endogenous effect with high-dimensional covariates, the error bounds in Theorem \ref{thm: estimation bound} echo the literature of spline regression without endogeneity. For instance, refer to \citet[Remark 1]{zhou2000derivative} and \citet[Theorem 1]{huang2010variable}. 
\par In terms of inference, we need additional theoretical assumptions. We first formalize the independence assumption for split-sample estimators mentioned in the discussions before Algorithm \ref{algorithm1}. Define the LASSO estimators $\hat\varphi^{\ind}$, $\hat\kappa^{\ind}$, and $\hat\eta^{\ind}$ as those in Algorithm \ref{algorithm1} using a different dataset independent of $\mathscr{D}_n$ specified in Assumption \ref{assu: control func}, where the superscript ``ind'' means independence. 
\begin{Assumption}[Split-Sample Estimators]\label{assu: indep}We assume that the LASSO estimators $\hat\varphi^{\ind}$, $\hat\kappa^{\ind}$, and $\hat\eta^{\ind}$  are constructed by another i.i.d.\ dataset independent of data $\mathscr{D}_n$ with a sample size $n^\prime \asymp n$.
\end{Assumption}
We define $F_{i\cdot} := (B_{i\cdot}^\top,H_{i\cdot}^\top,{X}_{i\cdot}^\top,\qp(v_i) K_{i\cdot}^\top, \qp(v_i){X}_{i\cdot}^\top)^\top$ as the population truth of $\hat F_{i\cdot}$ in (\ref{eq: def F hat i}). We define $F_{ij}^{\rm std} := (\E(F_{ij}^2))^{-1/2}F_{ij}$ as the standardized version of $F_{ij}$ for any $j=1,2,\dots,p_F$, and  $F_\ic^{\rm std} = (F_{i1}^{\rm std},F_{i2}^{\rm std},\dots,F_{ip_F}^{\rm std})^\top$. 
\begin{Assumption}\label{assu: more compatibility}Suppose that $v_i$ is independent of the vector $(X_{i\cdot}^\top,Z_{i\cdot}^\top)^\top$ and $X_{i\cdot}^\top$ has a bounded sub-Gaussian norm. Furthermore, assume that the eigenvalues of $\E( F_\ic^{\rm std}  (F_\ic^{\rm std})^\top)$ are bounded away from zero and above.
\end{Assumption}
 The sub-Gaussian norm of a random vector is defined in Definition \ref{eq:def-subG} in Appendix \ref{sec: proof}. We need the bounded sub-Gaussian norm of the whole vector $X_\ic$ (on top of the compact support in Assumption \ref{assu: D v}) to rule out strong dependence among the high dimensional covariates. Similar to Assumption \ref{assu: eigen}, Assumption \ref{assu: more compatibility} rules out perfect collinearity among the variables in $F_\ic$. This implies that $q$ is a nonzero and nonlinear function; otherwise, $q^\prime(v_i)$ is constant, and therefore, $X_\ic$ and $q^\prime(v_i) X_\ic$ are perfectly collinear. This assumption ensures the invertibility of $\Sigma_{F|\mathcal{L}}$, and thus, problem \eqref{eq: projection direction 1} is feasible with high probability. For a linear $q$, a feasible solution is also available. In Appendix \ref{sec: linear q}, we discuss the feasibility of  (\ref{eq: projection direction 1}) when $q$ is linear.
\par We first show the feasibility of \eqref{eq: projection direction 1}. Let $\mathcal{L}$ denote the $\sigma$-field generated by the split-sample LASSO estimators  specified in Assumption \ref{assu: indep}. We define $\Sigma_{F|\mathcal{L}} := \mathbb{E}_{\mathcal{L}}\left(\hat F_{i\cdot} \hat F_{i\cdot}^\top\right)$. 
\begin{Proposition}\label{prop: feasible}Suppose that Assumptions \ref{assu: control func}-\ref{assu: more compatibility} hold. Then w.p.a.1, 
\begin{align}
    \left\|\hat\Sigma_F \Sigma_{F|\mathcal{L}}^{-1} - I \right\|_{\infty} &\lesssim M\sqrt{\dfrac{\log (pM)}{n}}, \label{eq: feasibility 1}\\
    \|\hat F^\top \Sigma_{F|\mathcal{L}}^{-1}\|_\infty &\lesssim M\sqrt{\dfrac{\log (pM)}{n}}. \label{eq: feasibility 2}
\end{align}
\end{Proposition}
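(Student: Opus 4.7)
The plan is to work conditionally on the $\sigma$-field $\mathcal{L}$ generated by the independent split-sample estimators $(\hat\varphi^{\ind}, \hat\kappa^{\ind}, \hat\eta^{\ind})$. Under Assumption \ref{assu: indep} the rows $\hat F_{i\cdot}$, $i\in[n]$, are conditionally i.i.d.\ given $\mathcal{L}$ with second-moment matrix $\Sigma_{F|\mathcal{L}}$, so $\hat\Sigma_F$ becomes a standard empirical covariance. Both target bounds can then be reduced, via the elementary inequality $\|AB\|_\infty \le \|A\|_\infty\,\|B\|_1$ (max-entry norm on the left, max-column-sum on the right), to two ingredients: (i) a max-entry concentration of $\hat\Sigma_F$ around $\Sigma_{F|\mathcal{L}}$ and of the rows of $\hat F$ around zero, and (ii) a bound $\|\Sigma_{F|\mathcal{L}}^{-1}\|_1 \lesssim M$.

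For ingredient (i), the entries of $\hat F_{i\cdot}$ are uniformly bounded on an event of probability approaching one: the splines $B_{i\cdot}, \hat H_{i\cdot}, K_{i\cdot}$ are bounded on the relevant compact intervals, $X_{i\cdot}$ is bounded by Assumption \ref{assu: D v}, and both $\hat v_i$ and $\hat q'(\hat v_i)$ stay bounded because the $\ell_1$ errors of $(\hat\varphi^{\ind}, \hat\kappa^{\ind}, \hat\eta^{\ind})$ are $o_P(1)$ by Theorem \ref{thm: estimation bound} applied to the independent sample, together with $\|H'\|_\infty$ being bounded uniformly in $v$ after the $\E F_{ij}^2$-scaling that enters $\Sigma_{F|\mathcal{L}}$. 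Bernstein's inequality conditional on $\mathcal{L}$, combined with a union bound over the $p_F^{2}=O((pM)^2)$ entries, then gives $\|\hat\Sigma_F-\Sigma_{F|\mathcal{L}}\|_\infty \lesssim \sqrt{\log(pM)/n}$ w.p.a.1.

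The main obstacle is ingredient (ii). I would first compare $\Sigma_{F|\mathcal{L}}$ with the oracle covariance $\Sigma_F := \E(F_{i\cdot}F_{i\cdot}^\top)$ built from $F_{i\cdot}$. Taylor expansion of the B-spline basis $H_j(\cdot)$ and of $q'(\cdot)$ lets us control $\hat F_{i\cdot}-F_{i\cdot}$ linearly in $\hat v_i-v_i$ and $\hat\eta^{\ind}-\eta$, whose sizes are bounded through Theorem \ref{thm: estimation bound}; this gives a perturbation that is small in operator norm, so the identity $\Sigma_{F|\mathcal{L}}^{-1}-\Sigma_F^{-1}=\Sigma_F^{-1}(\Sigma_F-\Sigma_{F|\mathcal{L}})\Sigma_{F|\mathcal{L}}^{-1}$ implies $\|\Sigma_{F|\mathcal{L}}^{-1}\|_1$ and $\|\Sigma_F^{-1}\|_1$ agree up to a lower-order term. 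To bound the oracle version, let $D$ be the diagonal matrix with entries $(\E F_{ij}^2)^{1/2}$, so $\Sigma_F = D\,\Sigma_{F^{\rm std}}\,D$ and $\Sigma_F^{-1} = D^{-1}\Sigma_{F^{\rm std}}^{-1}D^{-1}$. Assumption \ref{assu: more compatibility} gives $\|\Sigma_{F^{\rm std}}^{-1}\|_2 = O(1)$; moreover, the B-spline sub-blocks of $\Sigma_{F^{\rm std}}$ are approximately banded with bandwidth no larger than the spline order, so Demko-type exponential-decay results for inverses of banded positive-definite matrices show that each column of $\Sigma_{F^{\rm std}}^{-1}$ has a bounded $\ell_1$ norm. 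Since $D_{jj} \asymp 1$ for the $X$-type coordinates but $D_{jj} \asymp M^{-1/2}$ for the B-spline coordinates (uniform knots of spacing $\asymp 1/M$), conjugating by $D^{-1}$ inflates the $\ell_1$-column-norm by at most $\|D^{-1}\|_\infty^{2} \lesssim M$, yielding $\|\Sigma_F^{-1}\|_1 \lesssim M$ and hence $\|\Sigma_{F|\mathcal{L}}^{-1}\|_1 \lesssim M$.

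Combining the ingredients, \eqref{eq: feasibility 1} follows from $\|\hat\Sigma_F\Sigma_{F|\mathcal{L}}^{-1}-I\|_\infty = \|(\hat\Sigma_F-\Sigma_{F|\mathcal{L}})\Sigma_{F|\mathcal{L}}^{-1}\|_\infty \le \|\hat\Sigma_F-\Sigma_{F|\mathcal{L}}\|_\infty\,\|\Sigma_{F|\mathcal{L}}^{-1}\|_1 \lesssim M\sqrt{\log(pM)/n}$. For \eqref{eq: feasibility 2}, applied entrywise, $|n^{-1/2}\hat F_{i\cdot}^\top(\Sigma_{F|\mathcal{L}}^{-1})_{\cdot,j}| \le n^{-1/2}\|\hat F_{i\cdot}\|_\infty\,\|\Sigma_{F|\mathcal{L}}^{-1}\|_1 \lesssim M/\sqrt{n}$ uniformly in $i,j$ on the same bounded-rows event, which is dominated by the claimed rate $M\sqrt{\log(pM)/n}$. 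The principal challenge throughout is the $O(M)$ bound on $\|\Sigma_F^{-1}\|_1$: a naive $\sqrt{p_F}$-inflation from $\|\cdot\|_2$ to $\|\cdot\|_1$ would give a strictly worse rate, so the argument crucially uses the near-banded structure of the B-spline Gram matrix together with a careful perturbation analysis transferring the bound from $\Sigma_F$ to $\Sigma_{F|\mathcal{L}}$.
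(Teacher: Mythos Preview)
Your decomposition $\|\hat\Sigma_F\Sigma_{F|\mathcal{L}}^{-1}-I\|_\infty\le\|\hat\Sigma_F-\Sigma_{F|\mathcal{L}}\|_\infty\,\|\Sigma_{F|\mathcal{L}}^{-1}\|_1$ is valid, and ingredient (i) is essentially fine. The genuine gap is ingredient (ii), specifically the claim that $\|\Sigma_{F^{\rm std}}^{-1}\|_1=O(1)$ via Demko-type decay. Demko's theorem requires the \emph{entire} matrix to be banded, but $\Sigma_{F^{\rm std}}$ contains the $p\times p$ block $\E(\tilde X_{i\cdot}\tilde X_{i\cdot}^\top)$, the $q'(v_i)X_{i\cdot}$ block, and all their cross-terms with the spline blocks, none of which carries any banded structure under the stated assumptions. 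For a generic positive-definite matrix with bounded eigenvalues the max-column-sum of the inverse can be as large as $\sqrt{p_F}$, and here $p_F\ge 2p$. So the step ``each column of $\Sigma_{F^{\rm std}}^{-1}$ has bounded $\ell_1$ norm'' does not follow; without it your product bound delivers only $\sqrt{p}\,M\sqrt{\log(pM)/n}$, which is useless in the high-dimensional regime. The perturbation transfer from $\Sigma_F^{-1}$ to $\Sigma_{F|\mathcal{L}}^{-1}$ in the $\|\cdot\|_1$ norm inherits the same difficulty.

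The paper sidesteps the $\|\cdot\|_1$ norm entirely. It treats each entry of $\hat F_{i\cdot}\hat F_{i\cdot}^\top\Sigma_{F|\mathcal{L}}^{-1}$ as the product of a bounded scalar $\hat F_{ij}$ and the linear form $\hat F_{i\cdot}^\top(\Sigma_{F|\mathcal{L}}^{-1})_{\cdot k}$. Because Assumption~\ref{assu: more compatibility} gives a bounded sub-Gaussian norm for the \emph{vector} $X_{i\cdot}$ (not merely coordinatewise boundedness), the conditional sub-Gaussian norm of that linear form is controlled by $\|\hat F_{i\cdot}\|_{\psi_2\mid\mathcal{L}}\,\|(\Sigma_{F|\mathcal{L}}^{-1})_{\cdot k}\|_2\le\|\hat F_{i\cdot}\|_{\psi_2\mid\mathcal{L}}\,\|\Sigma_{F|\mathcal{L}}^{-1}\|_2$. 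Only the \emph{spectral} norm $\|\Sigma_{F|\mathcal{L}}^{-1}\|_2\lesssim M$ is needed, and this follows directly from the eigenvalue condition in Assumption~\ref{assu: more compatibility} plus a perturbation argument comparing $\Sigma_{F|\mathcal{L}}$ to $\Sigma_F$ (the paper's Lemma~\ref{lem: cond cov eigen}), with no bandedness invoked. Hoeffding's inequality applied entrywise, conditional on $\mathcal{L}$, then gives \eqref{eq: feasibility 1} directly; \eqref{eq: feasibility 2} is handled the same way. The point you missed is that leveraging the vector sub-Gaussianity lets the $\ell_2$ geometry absorb the high-dimensional $X$-block, so the dimension-dependent $\|\cdot\|_1$ bound never enters.
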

Proposition \ref{prop: feasible} shows that the problem \eqref{eq: projection direction 1} is feasible w.p.a.1 when \begin{equation}\label{eq: mu j}
    \mu_j = C_j  M\sqrt{\dfrac{\log (pM)}{n}}
\end{equation}  
for $j\in[M_D]$ with some constant $C_j$ large enough.  Specifically, the first $M_D$ columns of $\Sigma_{F|\mathcal{L}}^{-1}$ fulfill the constraints in \eqref{eq: projection direction 1} w.p.a.1. Note that the feasible solution is the inverse of the conditional Gram matrix $\Sigma_{F|\mathcal{L}}$ instead of the unconditional one $\E(\hat F_\ic \hat F_\ic^\top)$, since $\{\hat F_\ic\}_{i\in[n]}$ are i.i.d.\ conditionally on $\mathcal{L}$, but not unconditionally. The convergence rate of a tuning parameter like (\ref{eq: mu j}) is a commonly used technical assumption for proofs \citep{javanmard2014confidence,gold2020inference,lu2020kernel}. For practitioners, we introduce a data-driven choice of $\mu_j$ in the last paragraph of Section \ref{subsec: Setup} for simulations.

For asymptotic normality of the debiased estimator, we need stronger restrictions on the sparsity index $s$ and the number of spline bases $M$.
\begin{manualassumption}{\ref{assu: asymp}$^\prime$} \label{assu: asym more}
       Suppose the following conditions hold:
    \begin{enumerate}[(a)]
        \item The sparsity index $s = O(n^{2/9-c_0})$ for an arbitrarily small $ c_0 \in (0, \frac{2}{9})$ and $\log p = O(n^{c_\gamma})$ with $c_\gamma$ dependent only on $\gamma$ in Assumption \ref{assu: function}. 
        \item  The number of spline bases satisfy $M_D\asymp M_v \asymp M_\ell$ for $\ell=1,2,\dots,p_z$. Recall $M = \max\left(M_D, M_v, M_1, \dots, M_{p_z} \right)$  Assume $M \asymp n^\nu$ with $\nu\in(\frac{1}{2\gamma},\frac{1}{4.5})$. The LASSO tuning parameters satisfies  $\lambda_Y = C_Y\sqrt{\frac{\log (pM)}{n}}$ and $\lambda_D = C_D\sqrt{\frac{\log (pM)}{n}}$ for some $C_Y,C_D>0$ large enough. 
        \item The tuning parameter $\mu_j$ in \eqref{eq: projection direction 1} follows \eqref{eq: mu j}.
    \end{enumerate} 
\end{manualassumption}
 Assumption \ref{assu: asym more} for asymptotic normality of the debiased estimator distinguishes from Assumption \ref{assu: asymp} for consistency of initial estimators in the following aspects. First, in Assumption \ref{assu: asym more}(a) we require $s = o(n^{2/9})$ that is slightly stronger than the rate $o(n^{1/4})$ in Assumption \ref{assu: asymp}(a). Second, a stronger condition $\nu\in (\frac{1}{2\gamma},\frac{1}{4.5})$ is needed in Assumption \ref{assu: asym more}(b) to ensure that the error caused by \eqref{eq: def  ri} is small enough to guarantee asymptotic normality. Specifically, the lower bound $\frac{1}{2\gamma}$ ensures that the spline approximation errors are small enough, and the upper bound $\frac{1}{4.5}$ controls for the variance of $\hat q^\prime(\hat v_i)$ in \eqref{eq: qphat def} to bound the estimation error of $q^\prime(v_i)$. The range of $\nu$ implies that $\gamma > 2.25$, which imposes slight additional smoothness on the second-order derivatives compared to Assumption \ref{assu: function}. Third, in Assumption \ref{assu: asym more}, we specify the rate of $\mu_j$ in \eqref{eq: projection direction 1} for bias correction. 
\par We are now ready to present the main theoretical results for the double bias correction and uniform confidence band. Recall that $\hat m(d)$ is defined below (\ref{eq: local correction 1}) and $\hat s(d)$ is defined below (\ref{eq: def H n}). 
\begin{Proposition}\label{prop: debiased estimator}
Suppose that Assumptions \ref{assu: control func}-\ref{assu: eigen}, \ref{assu: asym more}, \ref{assu: indep}, and \ref{assu: more compatibility} hold. Then for any fixed $d\in\mathcal{D}$, the debiased estimator $\tilde g^\prime(d)$ in (\ref{eq: local correction 1}) satisfies
\begin{equation}
        \sqrt{n}(\tilde g^\prime(d) - g^\prime(d)) = \mathcal{Z}(d)  +  \Delta(d), 
    \end{equation} 
where $ \Delta(d) / \hat s(d) \convp 0$ and $\mathcal{Z}(d) := \dfrac{\hat m(d)^\top \sumn \hat F_\ic \varepsilon_i}{\sqrt{n}}$ with $\E(\mathcal{Z}(d)|\hat F_\ic) = 0$ and ${\rm var}[\mathcal{Z}(d)|\hat F_\ic] = \left(\sigma_\varepsilon\hat s(d)\right)^2$. In addition, $\hat s(d) \asymp M^{1.5}$ uniformly for all $d\in\mathcal{D}$ w.p.a.1.  
\end{Proposition}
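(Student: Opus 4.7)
The plan is to expand $\sqrt{n}(\tilde g^\prime(d) - g^\prime(d))$ using the definition in \eqref{eq: local correction 1}, substitute the residual decomposition \eqref{eq: residual continue}, and leverage the $\ell_\infty$ constraint in \eqref{eq: projection direction 1} so that the bias-correction term cancels the leading regularization error. Starting from \eqref{eq: error decomposition},
\begin{equation*}
    \sqrt{n}(\tilde g^\prime(d) - g^\prime(d)) = \sqrt{n}\, B^\prime(d)^\top (\hat\beta - \beta) + \hat m(d)^\top \frac{1}{\sqrt{n}}\sumn \hat F_\ic \hat\varepsilon_i - \sqrt{n}\, r_g^\prime(d).
\end{equation*}
Following \eqref{eq: residual begin}--\eqref{eq: residual continue}, I would write $\hat\varepsilon_i = \hat F_\ic^\top \delta + \varepsilon_i + \rho_i$, where $\delta$ stacks the LASSO estimation errors $(\beta-\hat\beta,\ \eta-\hat\eta,\ \theta-\hat\theta,\ \hat\kappa-\kappa,\ \hat\varphi-\varphi)$ aligned with the coordinates of $\hat F_\ic$ in \eqref{eq: def F hat i}, and $\rho_i$ collects the spline approximation errors $r_g(D_i)+r_q(\hat v_i)$, the higher-order Taylor remainder of $q(v_i)-q(\hat v_i)$, and the substitution error between $\hat q^\prime(\hat v_i)$ and $q^\prime(v_i)$. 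Plugging this in yields
\begin{equation*}
    \hat m(d)^\top \frac{1}{\sqrt{n}}\sumn \hat F_\ic \hat\varepsilon_i = \sqrt{n}\, \hat m(d)^\top \hat\Sigma_F \delta + \hat m(d)^\top \frac{1}{\sqrt{n}}\sumn \hat F_\ic \varepsilon_i + \hat m(d)^\top \frac{1}{\sqrt{n}}\sumn \hat F_\ic \rho_i.
\end{equation*}

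The feasibility constraint $\|\hat\Sigma_F \hat\Omega_j - \textbf{i}_j\|_\infty \leq \mu_j$ for $j\in[M_D]$ combined with $\hat m(d) = \hat\Omega_B^\top B^\prime(d)$ gives $\|\hat m(d)^\top \hat\Sigma_F - (B^\prime(d)^\top,\ 0,\ \ldots,\ 0)\|_\infty \lesssim \|B^\prime(d)\|_1\,\max_j\mu_j$, so that $\hat m(d)^\top \hat\Sigma_F \delta = B^\prime(d)^\top(\beta-\hat\beta) + O_p(\|B^\prime(d)\|_1\,\max_j\mu_j\,\|\delta\|_1)$. This cancels the leading $\sqrt{n} B^\prime(d)^\top(\hat\beta-\beta)$ term, delivering the asserted decomposition with $\mathcal{Z}(d)$ as stated and $\Delta(d)$ consisting of four components: (i) the $\ell_\infty$ slack from the feasibility constraint, (ii) the linearization remainder $\hat m(d)^\top n^{-1/2}\sumn \hat F_\ic \rho_i$, (iii) the approximation bias $\sqrt{n}\, r_g^\prime(d)$, and (iv) a cross-term from replacing $q^\prime(v_i)$ by $\hat q^\prime(\hat v_i)$ inside $\hat F_\ic$. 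I would bound $\|\delta\|_1$ via Theorem~\ref{thm: estimation bound}, pick $\mu_j$ at the rate in \eqref{eq: mu j} using Proposition~\ref{prop: feasible}, and control the spline remainders by the H\"older smoothness in Assumption~\ref{assu: function}; Assumption~\ref{assu: asym more} is tuned so that all four pieces are $o_p(\hat s(d))$.

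For the conditional moments, I would condition on $\mathcal{L}$ together with $\{v_i, Z_\ic, X_\ic\}_{i\in[n]}$, which makes $\hat F_\ic$ deterministic while leaving only $\varepsilon_i$ random. By Assumption~\ref{assu: control func}, $\E(\varepsilon_i\mid v_i, Z_\ic, X_\ic)=0$ and $\var(\varepsilon_i\mid v_i, Z_\ic, X_\ic)=\sigma_\varepsilon^2$, and the independence of $\hat\kappa^{\ind},\hat\varphi^{\ind},\hat\eta^{\ind}$ from $\mathscr{D}_n$ in Assumption~\ref{assu: indep} keeps these identities intact after conditioning, giving $\E(\mathcal{Z}(d)\mid \hat F)=0$ and $\var(\mathcal{Z}(d)\mid \hat F) = \sigma_\varepsilon^2\,\hat m(d)^\top \hat\Sigma_F \hat m(d) = \sigma_\varepsilon^2 \hat s(d)^2$. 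For the rate $\hat s(d) \asymp M^{1.5}$, I would combine the optimality of $\hat\Omega_B$ in \eqref{eq: projection direction 1} with the feasibility of $\Sigma_{F|\mathcal{L}}^{-1}\textbf{i}_j$ from Proposition~\ref{prop: feasible} to conclude $\hat\Omega_B^\top \hat\Sigma_F \hat\Omega_B$ is asymptotically equivalent to the top-left $M_D\times M_D$ block of $\Sigma_{F|\mathcal{L}}^{-1}$; by Assumption~\ref{assu: more compatibility} the standardized eigenvalues are bounded, so that block has eigenvalues of order $M$ (the B-spline Gram matrix with uniform knots has eigenvalues of order $M^{-1}$), and using the standard derivative-spline bound $\|B^\prime(d)\|_2^2\asymp M^2$ uniformly in $d$ gives $\hat s(d)^2 \asymp M^3$.

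The main obstacle is the control of the linearization residual $\rho_i$, particularly the pieces arising from $q(v_i)-q(\hat v_i)$ and the plug-in of $\hat q^\prime$ for $q^\prime$, since these couple the first-stage LASSO errors in \eqref{eq: partial penalization for D} with the second-stage errors in \eqref{eq: partial penalization} and interact nonlinearly through the B-spline basis. The sharpened sparsity rate $s=O(n^{2/9-c_0})$ and the narrower bandwidth range $\nu\in(1/(2\gamma),1/4.5)$ in Assumption~\ref{assu: asym more} are designed precisely so that this double-bias-correction residual vanishes uniformly at rate $o_p(M^{1.5}/\sqrt{n})$, and verifying this product-rate bound is the technical crux of the proof.
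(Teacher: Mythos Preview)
Your decomposition and the handling of $\Delta(d)$ via the $\ell_\infty$ constraint in \eqref{eq: projection direction 1} match the paper's route: split the residual as $\hat\varepsilon_i = \hat F_\ic^\top(\pi-\hat\pi) + \varepsilon_i + \tilde r_i$, cancel the leading $B'(d)^\top(\hat\beta-\beta)$ term via $\|\hat\Sigma_F\hat\Omega_j-\textbf{i}_j\|_\infty\le\mu_j$ combined with an $\ell_1$ bound on the stacked LASSO errors, and control the linearization residual. For the last piece the paper exploits Cauchy--Schwarz in the $\hat F$-weighted inner product, i.e.\ $|\hat m(d)^\top n^{-1/2}\sum_i \hat F_\ic\rho_i|\le \hat s(d)\sqrt{\sum_i\rho_i^2}$, which reduces everything to showing $\sum_i\rho_i^2=o_p(1)$; you correctly flag this product-rate verification as the technical crux.

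There is one genuine gap in your sketch, concerning the \emph{lower} bound $\hat s(d)\gtrsim M^{1.5}$. Optimality of $\hat\Omega_j$ together with feasibility of $\Sigma_{F|\mathcal{L}}^{-1}\textbf{i}_j$ only gives $\hat\Omega_j^\top\hat\Sigma_F\hat\Omega_j \le (\Sigma_{F|\mathcal{L}}^{-1}\textbf{i}_j)^\top\hat\Sigma_F(\Sigma_{F|\mathcal{L}}^{-1}\textbf{i}_j)$, i.e.\ an \emph{upper} bound on the quadratic form; it does not yield that $\hat\Omega_B\hat\Sigma_F\hat\Omega_B^\top$ is asymptotically equivalent to the top-left block of $\Sigma_{F|\mathcal{L}}^{-1}$, nor any lower bound on $\hat s(d)$. (Even if it did, the top-left block of an inverse is the inverse of a Schur complement, not $[\E(B_\ic B_\ic^\top)]^{-1}$, so the ``eigenvalues of order $M$'' claim would need separate justification.) The paper obtains the lower bound by a dual/Lagrangian argument in the style of \citet[Lemma~12]{javanmard2014confidence}: it introduces a relaxed one-dimensional program $m^*(d)=\arg\min_m m^\top\hat\Sigma_F m$ subject to $|B'(d)^\top I_B\hat\Sigma_F m-\|B'(d)\|_2^2|\le\mu\|B'(d)\|_1^2$, notes that $\hat m(d)$ is feasible so $m^*(d)^\top\hat\Sigma_F m^*(d)\le \hat s(d)^2$, and then lower-bounds $m^*(d)^\top\hat\Sigma_F m^*(d)$ by maximizing the Lagrangian over the multiplier, arriving at $\hat s(d)^2\gtrsim \|B'(d)\|_2^4/\big(B'(d)^\top\hat\Sigma_B B'(d)\big)\asymp M^4/M=M^3$. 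You will need this (or an equivalent) constraint-driven lower bound; the optimality-plus-feasibility comparison alone cannot supply it.
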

Proposition \ref{prop: debiased estimator} presents a result on the decomposition of the estimation error for the debiased estimator $\tilde g^\prime(d)$. The asymptotic distribution of $\sqrt{n}(\hat g^\prime(d) - g^\prime(d))$ is determined by $\mathcal{Z}(d)$ with a zero conditional mean and a conditional variance $(\sigma_\varepsilon\hat s(d))^2$. It can be shown that $\mathcal{Z}(d)/\sigma_\varepsilon\hat s(d)$ converges in distribution to a standard normal variable for any fixed $d$. We relegate this intermediate result and highlight the honesty of the uniform confidence band for the whole function displayed in Theorem \ref{thm: limiting distribution} below. 
\par The last result in Proposition \ref{prop: debiased estimator} implies that the standard error for the doubly bias corrected derivative function estimator is $M^{1.5}/\sqrt{n}$ w.p.a.1. Though we study the nonparametric inference of a derivative function under a high-dimensional model with endogeneity, the standard error of our debiased estimator shares the same order with the nonparametric estimator using spline regressions under the low-dimensional model without endogeneity \citep[Lemma 5.4]{zhou2000derivative}.
\par Next we present the asymptotic honesty of the confidence band \eqref{eq: def confidence band}. For simplicity, we only consider the honesty of our confidence band over the H\"{o}lder class of functions $g$ in the following theorem. We treat other nonrandom components in  (\ref{eq: y model}) and (\ref{eq: D model}) as fixed, including the nonlinear functions $ \psi_\ell(\cdot)$ for $\ell = 1,2,\dots,p_z$, the coefficients of covariates $\theta$ and $\varphi$, and the distributions of covariates $X_\ic$, instruments $Z_\ic$, and unmeasured errors $(u_i,v_i)$.
\begin{Theorem}
Under the conditions in Proposition \ref{prop: debiased estimator},
the $100(1-\alpha)\%$ confidence band $\mathcal{C}_{n,\alpha}^{D}=\{\mathcal{C}_{n,\alpha}(d):d\in\mathcal{D}\}$ defined as \eqref{eq: def confidence band} is asymptotically honest in the sense that
\begin{equation}
\mathop{\lim\inf}\limits_{n\to\infty}\inf_{g\in\mathcal{H}_{\mathcal{D}}(\gamma,L)}\Pr\left(g^{\prime}(d)\in\mathcal{C}_{n,\alpha}(d)\text{ for all }d\in\mathcal{D}\right) \geq 1 - \alpha,
\label{eq: honest confidence band}
\end{equation} 
where the 
H\"{o}lder functional class $\mathcal{H}_{\mathcal{D}}(\gamma,L)$ is defined in Definition \ref{def: holder}. 
\label{thm: limiting distribution}
\end{Theorem}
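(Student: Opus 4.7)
The plan is to show that the random quantile $\hat c_n(\alpha)$ produced by the multiplier bootstrap asymptotically dominates the $(1-\alpha)$-th quantile of $\sup_{d\in\mathcal{D}}|\mathbb{H}_n(d)|$, uniformly over $g\in\mathcal{H}_{\mathcal{D}}(\gamma,L)$. The argument chains three approximations via the triangle inequality: (a) $\sup_d|\mathbb{H}_n(d)|$ is close to $\sup_d|\mathcal{Z}(d)/(\sigma_\varepsilon\hats(d))|$; (b) the leading statistic $\sup_d|\mathcal{Z}(d)/(\sigma_\varepsilon\hats(d))|$ admits a Gaussian approximation of Chernozhukov–Chetverikov–Kato (CCK) type; (c) conditional on the data, $\sup_d|\hat{\mathbb{H}}_n(d)|$ approaches the same Gaussian supremum. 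Anti-concentration of Gaussian suprema (\citealp{chernozhukov2014anti}) then converts distributional closeness into closeness of quantiles, yielding \eqref{eq: honest confidence band}.

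For step (a), Proposition \ref{prop: debiased estimator} gives the pointwise decomposition $\sqrt{n}(\tildegp(d)-\gp(d))=\mathcal{Z}(d)+\Delta(d)$ with $\Delta(d)/\hats(d)\cip 0$. I would upgrade this to a uniform statement by noting that $\hat m(d)=\hat\Omega_B^\top B^\prime(d)$ is Lipschitz in $d$ with modulus controlled by $\sup_d\|B^{\prime\prime}(d)\|_2$ times a norm bound on $\hat\Omega_B$ furnished by the constraints in \eqref{eq: projection direction 1}, and $\hats(d)\asymp M^{1.5}$ uniformly by Proposition \ref{prop: debiased estimator}. A discretization over a polynomial-in-$n$ grid in $\mathcal{D}$, combined with Lipschitz control between grid points, upgrades $\Delta(d)/\hats(d)=o_P(1)$ to $\sup_d|\Delta(d)/\hats(d)|=o_P(1/\sqrt{\log n})$ under Assumption \ref{assu: asym more}. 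Consistency $\hatsigma_\varepsilon\cip\sigma_\varepsilon$ follows from the initial error bounds in Theorem \ref{thm: estimation bound} plus the bounded fourth moment of $\varepsilon_i$ in Assumption \ref{assu: control func}.

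For steps (b) and (c), I view $\mathcal{Z}(d)/(\sigma_\varepsilon\hats(d))$ as a sum of $n$ i.i.d.\ zero-mean random variables, indexed by $d\in\mathcal{D}$, with weights $\hatF_\ic^\top\hatm(d)/\hats(d)$. The second constraint in \eqref{eq: projection direction 1} delivers the envelope $|\hatF_\ic^\top\hatm(d)/\hats(d)|\lesssim \mu_{\max}\|B^\prime(d)\|_1/M^{3/2}$, which is polynomial in $n$ and $M$; the one-dimensional index set $\mathcal{D}$ together with the smoothness of $B^\prime(d)$ forces a VC-type entropy bound for the corresponding function class. Invoking the Gaussian approximation for suprema of empirical processes (Proposition 2.1 of \citealp{chernozhukov2014gaussian}) and the multiplier bootstrap theorem (Theorem 3.1 therein) then yields Kolmogorov-distance approximations of $\sup_d|\mathcal{Z}(d)/(\sigma_\varepsilon\hats(d))|$ and $\sup_d|\hat{\mathbb{H}}_n(d)|$ by a common Gaussian supremum, both at rate $O(n^{-c})$ for some $c>0$ under the strengthened sparsity and basis-growth restrictions in Assumption \ref{assu: asym more}.

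The hardest step is (a): the remainder $\Delta(d)$ absorbs the spline approximation errors for $g,q,\psi_\ell$, the first-order Taylor error in linearizing $q(v_i)-q(\hatv_i)$ via $\hatqp$, and the cross-products of the LASSO errors from \eqref{eq: partial penalization for D} and \eqref{eq: partial penalization}, all of which must be negligible at the refined $o_P(1/\sqrt{\log n})$ scale demanded by anti-concentration rather than the coarser $o_P(1)$ used in Proposition \ref{prop: debiased estimator}. Assumption \ref{assu: asym more}(a)(b) is calibrated precisely to deliver this; independence of the split-sample estimators in Assumption \ref{assu: indep} decouples $\hatqp(\hatv_i)$ from the bias-correction sample and keeps the cross-product errors manageable. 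Uniformity over $g\in\mathcal{H}_{\mathcal{D}}(\gamma,L)$ then drops out because each ingredient depends on $g$ only through the Hölder constants used in \eqref{eq: error approx} and through $\|\beta\|_\infty$, both bounded uniformly over the class, while the LASSO, Gaussian approximation, and bootstrap rates involve only the fixed distribution of $(X_\ic,Z_\ic,u_i,v_i)$ held constant in the statement of the theorem.
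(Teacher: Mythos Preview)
Your proposal is correct and follows essentially the same strategy as the paper: control of the remainder $\Delta(d)/\hats(d)$ and of $\hatsigma_\varepsilon-\sigma_\varepsilon$, CCK Gaussian coupling for the supremum, multiplier bootstrap validity, and anti-concentration of Gaussian suprema to convert coupling rates into quantile and coverage statements. Two minor technical differences are worth noting. First, the paper does not need to ``upgrade'' Proposition \ref{prop: debiased estimator} from pointwise to uniform as you propose: its proof already establishes $\sup_{d\in\mathcal{D}}|\Delta(d)|/\hats(d)\lesssim_p 1/\log n$ directly, because the bounds on $\Delta_1(d)$ (via Cauchy--Schwarz against $\hats(d)$) and $\Delta_2(d)$ (via $\|B^\prime(d)\|_1\|\hat\Omega_B\hat\Sigma_F-I_B\|_\infty\|\hat\pi-\pi\|_1$) are uniform in $d$ by construction; a discretization-plus-Lipschitz upgrade of a bare pointwise $o_P(1)$ would not work without such quantitative control. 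Second, for the Gaussian approximation the paper discretizes $\mathcal{D}$, applies the finite-dimensional coupling bound (Corollary~4.1 of \citealp{chernozhukov2014gaussian}) on the grid, and controls the mesh error via the Lipschitz continuity of $d\mapsto \hats(d)^{-1}B^\prime(d)$, whereas you invoke the empirical-process version (Proposition~2.1 of the same reference) directly through a VC-entropy argument; both routes are valid and yield the same $o(1/\sqrt{\log n})$ coupling rate needed for anti-concentration.
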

Theorem \ref{thm: limiting distribution} formally establishes the asymptotic honesty of our confidence band (\ref{eq: def confidence band}) over the 
H\"{o}lder class of functions $g$. The result is shown by the techniques of Gaussian coupling for the suprema of empirical processes by \citet{chernozhukov2014anti,chernozhukov2014gaussian}. 
\section{Simulations}\label{sec: sim}

\subsection{Setup}\label{subsec: Setup}
In the simulation studies, we generate the outcome $Y_i$ and the treatment $D_i$  following models \eqref{eq: y model} and \eqref{eq: D model}. We generate the covariates $X_{i\cdot}$ and IVs $Z_{i\cdot}$ following \citet{lu2020kernel}. Specifically, let $(U_{ji})_{i\in[n],j\in[p+p_z+1]}$ be i.i.d.\ $U[0,1]$ random variables, and 
\[X_{ij} = \dfrac{U_{i,j}+0.3U_{i,p+p_z+1}}{1.3},\ j\in[p],\]
\[Z_{ij} = \dfrac{U_{i,p + j}+0.3U_{i,p+p_z+1}}{1.3},\ j\in[p_z].\]
The error terms are generated from $v_i\sim\text{ i.i.d.\ }\sqrt{12}U[-0.5,0.5]$, $\varepsilon_i \sim\text{ i.i.d.\ }N(0,1)$ and $u_i = q(v_i) + \varepsilon_i$ with $q(v) = v^2 - 1$. We vary $p\in\{150,400,800\}$ for high-dimensional covariates and the sample size $n\in\{500,1000,2000,3000\}$. For the sample-splitting inference, we divide the two samples into two subsets with cardinalities $n_a = \lfloor n/2 \rfloor$ and $n_b = n - n_a$.
\par We set $\theta = (1_6^\top, 0_{p-6}^\top)^\top$ and $\varphi = (1,-1,1,-1,1,-1,0_{p-6}^\top)^\top$. Additionally, we set $p_z=1$ for a just-identified IV model, which is the leading case in empirical studies. We use $\psi(z)=4(2z-1)^2$ as the nonlinear function measuring the relevance of the IV in \eqref{eq: D model}.
We consider four cases of the $g$ function in \eqref{eq: y model}:
\[g_1(d) = 0,\ g_2(d) = d,\ g_3(d)=0.05(d-3)^2,\ g_4(d)=0.02(d-3)^3. \]
The small coefficients in the nonlinear treatment functions balance the relatively large range of $D_i$ for moderate values of marginal functions. We simulate $10^5$ samples following the DGPs \eqref{eq: y model} and \eqref{eq: D model} and form a compact interval with the lower and upper bounds as the 10th and 90th percentiles of the simulated $D_i$, respectively. We take 1000 grid points in this compact interval and construct the confidence band on these grid points.
\par We use cubic B-spline functions for estimation and set the number of basis functions as $M_D = M_v = M_z = 5$ following \citet{lu2020kernel}. As a robustness check, we provide additional simulations in Table \ref{tab:simul_M} for the finite-sample performance with different numbers of $M_D$, with $M_v$ and $M_z$ fixed to 5. It turns out that a larger $M_D$ brings almost no benefits to the coverage of the confidence bands but produces larger variances due to higher variable dimensions.
\par The LASSO tuning parameters in \eqref{eq: partial penalization} and \eqref{eq: partial penalization for D} are selected based on cross-validation. Following the idea of \citet{gold2020inference}, we set the tuning parameter $\mu_j$ in \eqref{eq: projection direction 1} as
$\mu_j = a_0 \cdot \min_\Omega \left\|\hat\Sigma_F \Omega - \textbf{i}_j\right\|_{\infty}$,
where the factor $a_0$ is chosen to balance the bias control, for which a smaller $a_0$ is desirable, and the confidence band's length, for which a larger $a_0$ is desirable. We set $a_0=1.2$ following \citet{gold2020inference}.

\subsection{Numerical Results}
Table \ref{tab:simul} shows the simulation results for the linear $g$ functions. The left panel ``Full-Sample'' shows the results from full-sample inference where we always use all samples for all estimators and bias correction procedures. The right panel ``Split-Sample'' shows the split-sample results based on the procedures described in Algorithm \ref{algorithm1}.
\par We first focus on the left panel showing the full-sample results. In the ``BiasInit'' and ``BiasDB'' columns, we see that compared to the initial plug-in LASSO estimator defined as $\hat{g}^\prime(\cdot)$ in \eqref{eq: plug-in estimator}, the bias-corrected estimator $\tilde{g}^\prime(\cdot)$ in \eqref{eq: local correction 1} significantly reduces the bias from LASSO penalization. The bias of the bias-corrected estimator decreases as the sample size grows. In terms of inference, when $p=150$, the coverage probability of the full-sample confidence bands exceeds the nominal size of 0.95 under all sample sizes. The average confidence band length continues to decrease as the sample size increases. When $p=400$ or $800$, the coverage deviates from the nominal size when the sample is not large enough. As a nonparametric method, our inferential procedure requires a relatively large sample size to handle high-dimensional covariates under finite sample. Among the settings where the coverage is close to the nominal size, the average confidence length decreases as the sample size increases.
\par We then compare the full-sample and split-sample results. Under the same sample size, the split-sample confidence band, compared to the full-sample one, produces either a much worse coverage, or a far wider confidence band if its coverage is close to the nominal size. These comparative results show that even if sample splitting is required for technical proofs, the full-sample inference dominates the split-sample procedure in practice because of a larger effective sample size. We thus recommend the full-sample inference for practitioners. 
\par Table \ref{tab:simul_NL} shows the results for nonlinear $g$. In general, the performance of our procedure is robust to different settings for the $g$ functions. The simulation results show the validity of our procedure for the uniform confidence band of endogenous marginal effects.
\par As a robustness check, additional simulation results are provided in Appendix \ref{app: additional simul}. These additional results show that our procedure is robust to unbounded distributions, although the theory depends on the assumption of boundedness as most high-dimensional nonparametric methods do. In addition, as mentioned in Section \ref{subsec: Setup}, we compare various settings of $M_D$ and show that $M_D=5$ is a reasonable choice since an increase in the number of basis functions does not improve coverage but brings additional variance.
\begin{table}[H]
\begin{center} 
\caption{Simulation Results for Linear $g$ Functions.}
\label{tab:simul}
\begin{tabular}{cc|cccc|cccc}
\hline\hline 
\multirow{2}{*}{$p_x$}  & \multirow{2}{*}{$n$} & \multicolumn{4}{c}{Full-Sample}                      & \multicolumn{4}{c}{Split-Sample}                 \\
\cline{3-10}
                     &                    & BiasInit & BiasDB & Coverage & Length & BiasInit & BiasDB & Coverage & Length \\
                    \hline 
                     \multicolumn{10}{c}{$g(d) = 0$, $g^\prime(d)=0$.}  \\    
                     \hline 
\multirow{4}{*}{150} & 500                & 0.047        & 0.012          & 0.956    & 1.319     & 0.049        & 0.040          & 0.914    & 1.435     \\
                     & 1000               & 0.065        & 0.011          & 0.954    & 0.695     & 0.065        & 0.011          & 0.948    & 1.251     \\
                     & 2000               & 0.050        & 0.008          & 0.954    & 0.434     & 0.048        & 0.008          & 0.932    & 0.655     \\
                     & 3000               & 0.050        & 0.003          & 0.964    & 0.342     & 0.046        & 0.003          & 0.946    & 0.496     \\
                      \hline 
\multirow{4}{*}{400} & 500                & 0.055        & 0.030          & 0.872    & 0.847     & 0.057        & 0.058          & 0.792    & 0.873     \\
                     & 1000               & 0.069        & 0.010          & 0.946    & 1.246     & 0.066        & 0.046          & 0.896    & 0.795     \\
                     & 2000               & 0.066        & 0.009          & 0.966    & 0.515     & 0.058        & 0.006          & 0.948    & 1.181     \\
                     & 3000               & 0.058        & 0.005          & 0.948    & 0.376     & 0.057        & 0.006          & 0.946    & 0.638     \\
                      \hline 
\multirow{4}{*}{800} & 500                & 0.059        & 0.051          & 0.718    & 0.702     & 0.066        & 0.068          & 0.656    & 0.795     \\
                     & 1000               & 0.076        & 0.033          & 0.888    & 0.613     & 0.080        & 0.071          & 0.744    & 0.655     \\
                     & 2000               & 0.072        & 0.015          & 0.952    & 0.844     & 0.077        & 0.043          & 0.862    & 0.578     \\
                     & 3000               & 0.065        & 0.011          & 0.954    & 0.463     & 0.063        & 0.014          & 0.956    & 0.937     \\
                    \hline 
                     \multicolumn{10}{c}{$g(d) = d$, $g^\prime(d)=1$.}  \\    
                     \hline 
\multirow{4}{*}{150} & 500                & 0.052        & 0.014          & 0.956    & 1.337     & 0.054        & 0.040          & 0.930    & 1.445     \\
                     & 1000               & 0.056        & 0.006          & 0.956    & 0.689     & 0.067        & 0.016          & 0.938    & 1.244     \\
                     & 2000               & 0.053        & 0.012          & 0.938    & 0.435     & 0.054        & 0.006          & 0.950    & 0.657     \\
                     & 3000               & 0.044        & 0.010          & 0.948    & 0.341     & 0.048        & 0.005          & 0.936    & 0.494     \\
                      \hline 
\multirow{4}{*}{400} & 500                & 0.059        & 0.036          & 0.890    & 0.857     & 0.063        & 0.054          & 0.760    & 0.888     \\
                     & 1000               & 0.074        & 0.008          & 0.966    & 1.249     & 0.072        & 0.050          & 0.862    & 0.796     \\
                     & 2000               & 0.059        & 0.010          & 0.956    & 0.513     & 0.064        & 0.005          & 0.946    & 1.174     \\
                     & 3000               & 0.057        & 0.009          & 0.938    & 0.378     & 0.056        & 0.010          & 0.934    & 0.638     \\
                      \hline 
\multirow{4}{*}{800} & 500                & 0.055        & 0.046          & 0.732    & 0.696     & 0.053        & 0.054          & 0.726    & 0.797     \\
                     & 1000               & 0.080        & 0.031          & 0.890    & 0.613     & 0.080        & 0.069          & 0.736    & 0.650     \\
                     & 2000               & 0.071        & 0.009          & 0.948    & 0.845     & 0.074        & 0.035          & 0.878    & 0.579     \\
                     & 3000               & 0.064        & 0.010          & 0.968    & 0.463     & 0.067        & 0.013          & 0.958    & 0.947    \\
                      \hline  \hline 
\end{tabular}
\end{center}
{\footnotesize Note: ``BiasInit'' and ``BiasDB'' denote the average bias of the initial Lasso estimator $\hat g^\prime(\cdot)$ and the bias-corrected estimator $\tilde g^\prime(\cdot)$ , respectively. ``Coverage'' shows the coverage probability of the 95\% confidence band defined as \eqref{eq: def confidence band} over 500 replications. ``Length'' stands for the point-wise average length of the confidence band.}
\end{table}
\begin{table}[H]
\begin{center} 
\caption{Simulation Results for Nonlinear $g$ Functions.}
\label{tab:simul_NL}
\begin{tabular}{cc|cccc|cccc}
\hline\hline 
\multirow{2}{*}{$p_x$}  & \multirow{2}{*}{$n$} &   \multicolumn{4}{c}{Full-Sample}                      & \multicolumn{4}{c}{Split-Sample}                  \\
\cline{3-10}
                     &                    & BiasInit & BiasDB & Coverage & Length & BiasInit & BiasDB & Coverage & Length \\
                     \hline 
                     \multicolumn{10}{c}{$g(d) = 0.05(d-3)^2$, $g^\prime(d) = 0.1(d-3)$}   \\    
                     \hline 
\multirow{4}{*}{150} & 500                & 0.058        & 0.011          & 0.962    & 1.341     & 0.052        & 0.033          & 0.912    & 1.436     \\
                     & 1000               & 0.062        & 0.012          & 0.966    & 0.693     & 0.061        & 0.008          & 0.932    & 1.245     \\
                     & 2000               & 0.053        & 0.010          & 0.962    & 0.435     & 0.055        & 0.006          & 0.960    & 0.655     \\
                     & 3000               & 0.043        & 0.008          & 0.934    & 0.341     & 0.047        & 0.003          & 0.952    & 0.496     \\
                     \hline 
\multirow{4}{*}{400} & 500                & 0.059        & 0.036          & 0.862    & 0.848     & 0.056        & 0.059          & 0.744    & 0.881     \\
                     & 1000               & 0.071        & 0.010          & 0.948    & 1.254     & 0.071        & 0.044          & 0.868    & 0.792     \\
                     & 2000               & 0.064        & 0.010          & 0.960    & 0.515     & 0.066        & 0.007          & 0.946    & 1.180     \\
                     & 3000               & 0.056        & 0.009          & 0.950    & 0.378     & 0.059        & 0.005          & 0.936    & 0.636     \\
                     \hline 
\multirow{4}{*}{800} & 500                & 0.062        & 0.047          & 0.726    & 0.701     & 0.064        & 0.066          & 0.682    & 0.796     \\
                     & 1000               & 0.085        & 0.033          & 0.890    & 0.613     & 0.079        & 0.067          & 0.724    & 0.649     \\
                     & 2000               & 0.073        & 0.021          & 0.950    & 0.840     & 0.071        & 0.037          & 0.896    & 0.576     \\
                     & 3000               & 0.063        & 0.011          & 0.954    & 0.460     & 0.064        & 0.021          & 0.936    & 0.935     \\
                       \hline 
                     \multicolumn{10}{c}{$g(d) = 0.02(d-3)^3$, $g^\prime(d) = 0.06(d-3)^2$}   \\    
                     \hline 
\multirow{4}{*}{150} & 500                & 0.055        & 0.015          & 0.968    & 1.336     & 0.054        & 0.026          & 0.890    & 1.431     \\
                     & 1000               & 0.061        & 0.007          & 0.958    & 0.693     & 0.071        & 0.014          & 0.952    & 1.243     \\
                     & 2000               & 0.055        & 0.007          & 0.946    & 0.436     & 0.053        & 0.006          & 0.934    & 0.654     \\
                     & 3000               & 0.047        & 0.006          & 0.960    & 0.341     & 0.047        & 0.004          & 0.932    & 0.495     \\
                      \hline 
\multirow{4}{*}{400} & 500                & 0.066        & 0.037          & 0.870    & 0.848     & 0.066        & 0.067          & 0.760    & 0.881     \\
                     & 1000               & 0.072        & 0.018          & 0.962    & 1.253     & 0.073        & 0.054          & 0.868    & 0.797     \\
                     & 2000               & 0.063        & 0.012          & 0.962    & 0.516     & 0.066        & 0.006          & 0.962    & 1.165     \\
                     & 3000               & 0.060        & 0.008          & 0.958    & 0.377     & 0.060        & 0.003          & 0.946    & 0.636     \\
                      \hline 
\multirow{4}{*}{800} & 500                & 0.057        & 0.050          & 0.710    & 0.706     & 0.061        & 0.064          & 0.678    & 0.799     \\
                     & 1000               & 0.077        & 0.034          & 0.880    & 0.612     & 0.081        & 0.071          & 0.716    & 0.653     \\
                     & 2000               & 0.075        & 0.011          & 0.958    & 0.840     & 0.074        & 0.032          & 0.876    & 0.576     \\
                     & 3000               & 0.064        & 0.009          & 0.948    & 0.463     & 0.062        & 0.009          & 0.966    & 0.941     \\
                      \hline  \hline 
\end{tabular}
\end{center}
{\footnotesize Note: ``BiasInit'' and ``BiasDB'' denote the average bias of the initial Lasso estimator $\hat g^\prime(\cdot)$ and the bias-corrected estimator $\tilde g^\prime(\cdot)$ , respectively. ``Coverage'' shows the coverage probability of the 95\% confidence band defined as \eqref{eq: def confidence band} over 500 replications. ``Length'' stands for the point-wise average length of the confidence band.}
\end{table}
\label{sec: numerical}

\section{The Nonlinear Effect of Pollution on Migration}\label{sec: emp}
In this section, we revisit the study of pollution and migration. \cite{chen2022} studied the effects of air pollution on migration in China using changes in the average strength of thermal inversions over five-year periods as a source of exogenous variation for medium-run air pollution levels. Their findings suggested that air pollution is responsible for large changes in inflows and outflows of migration in China. Specifically, they found that a 10\% increase in air pollution is capable of reducing the population through outmigration by approximately 2.8\% in a given county.

Air pollution might have nonlinear effects on net outmigration. We use the proposed nonlinear treatment effects estimator on the same dataset in \cite{chen2022} and enrich it with many other covariates. Our method has two advantages compared to those of the original study. First, we can explore the potential nonlinear effect of pollution on migration. Second, we can improve the estimation accuracy by explicitly controlling for high-dimensional characteristics in the model. Specifically, we consider the following empirical econometric model:
\begin{align} \label{eq: emp}
 \ddot{Y}_i& = g(\ddot{D}_i)+\ddot{X}_{i\cdot}^{\top}\theta+u_i,\\
\ddot{D}_i&=  \psi(\ddot{Z}_{i}) + \ddot{X}_{i\cdot}^{\top}\varphi  + v_i,
\end{align}
where for any variable $x_i$, the notation  $\ddot{x}_i$ denotes its standardized version with a zero sample mean and a unit standard deviation. 

\begin{table}[H]
\caption{Summary Statistics}
\label{tab: summary statistics}
\begin{tabular}{llrrrrr}
\hline \hline 
Variable & Descriptions                                  & Mean      & Std       & Min      & Median    & Max        \\
\hline 
    $Y$        & Migration                            & 7.96     & 10.06     & -15.13  & 8.08      & 44.71      \\
$D$       & PM2.5 in $\mu\text{g/m}^3$                                        & 51.38    & 20.35     & 3.88     & 48.37     & 112.10     \\
$Z$         & Thermal Inversions  & 0.17      & 0.17      & 0.00     & 0.10      & 0.88       \\
$X_1$       & Death Rate (per thousand)                                   & 1.28      & 0.68      & 0.29     & 1.20      & 2.65       \\
$X_2$      & Temperature                                   & 13.25     & 1.96      & 6.64     & 13.25     & 20.02      \\
$X_3$      & Precipitation (in mm)                           & 696.00    & 204.22    & 17.40    & 697.19    & 1325.41    \\
$X_4$       & Sunshine Duration                             & 5.10      & 0.98      & 2.24     & 5.10      & 8.39       \\
$X_5$       & Humidity                                      & 60.18     & 7.04      & 33.84    & 60.22     & 85.01      \\
$X_6$       & Wind                                          & 4.44      & 1.63      & 1.34    & 4.43      & 7.44       \\
$X_7$       & Income                                        & 5.89\texttt{e}4  & 2.49\texttt{e}4  & 1.30\texttt{e}4 & 5.89\texttt{e}4  & 1.06\texttt{e}5 \\
$X_8$       & Expenditure                                   & 2.66\texttt{e}4  & 1.36\texttt{e}4  & 2.56\texttt{e}3   & 2.45\texttt{e}4  & 6.65\texttt{e}4   \\
$X_{9}$       & Spending in Health                            & 5.24\texttt{e}3   & 3.07\texttt{e}3   & 8.02\texttt{e}2   & 4.54\texttt{e}3   & 1.30\texttt{e}4   \\
$X_{10}$        & Investment in Education                          & 4.70\texttt{e}3   & 2.28\texttt{e}3   & 8.01\texttt{e}2   & 4.41\texttt{e}3   & 9.98\texttt{e}3    \\
$X_{11}$        & Grain Subsidy per Capita                      & 288.06    & 679.11    & 0.00     & 0.00      & 2891.57   \\
$X_{12}$        & Workforce in Agriculture                  & 0.36      & 0.10      & 0.01     & 0.36      & 0.81      \\
\hline \hline 
\end{tabular} \\
{\footnotesize Note:  $Y$ is measured by destination-based net immigration ratio  \citep{chen2022}. For thermal inversions, we use the five-year average strength from 2006 to 2010. Sample size $n=2860$. Data sources: outcome variable: China population census, pollution: WUSTL surface PM2.5 data, thermal inversion: MERRA-2, other controls: China Statistical Yearbook.}
\end{table}
\par Instead of using the fixed effect to account for unobserved factors of migration, we explicitly use the county-level controls of one cross-sectional data point, which is the five-year average of 2006-2010 that contains 2860 counties. Table \ref{tab: summary statistics} provides the summary statistics of the main variables. The outcome variable, $Y$, denotes the measure of migration in county $i$. Specifically, we use the destination-based net immigration ratio, which is the fraction of people entering a county but with their \emph{hukou} in their place of origin, net of the outflows. The treatment variable, denoted as $D$, measures the 5-year average concentration of PM2.5. The key identification strategy is the exogenous variation in thermal inversion \citep{Ransom1995,Arceo2016}. A thermal inversion refers to an abnormal temperature-altitude gradient, where the air becomes hotter instead of cooler with higher altitude and traps pollutants near the ground. Other key covariates include income, expenditure, health, education, grain subsidies, the workforce in agriculture, temperature, precipitation, sunshine, humidity, and wind. To evaluate the performance of our procedure for high-dimensional covariates, we add another 105 geo-economic variables to control for potential omitted variable bias\footnote{These variables include but are not limited to industrial, educational, and environmental variables from the China City Statistical Yearbook and China County Statistical Yearbook. We use a population- or GDP-weighted method to convert some city-level variables to county-level variables.}.
\par Following the simulation studies, we take 1000 grid points between the 10th and 90th percentiles of $D_i$. The selection of tuning parameters also follows the simulation section. Panel (A) of  Figure \ref{empfig1} illustrates the baseline results. The horizontal axis represents the standardized pollution level $D$, and the vertical axis shows the marginal effect. We see a clear nonlinear pattern of the treatment effect. When the pollution level is below the mean, the marginal effect of pollution on immigration is insignificant, showing that people can tolerate low-level pollution. When pollution approaches the mean value such that the standardized $D$ is approximately zero, air pollution shows a significant negative marginal effect on immigration; thus, pollution starts to drive out the population. As pollution becomes more severe but is at a moderately high level (within one standard deviation from the mean), the effect becomes insignificant. When the pollution level is very high, e.g., exceeding 1.3 standard deviations from the mean, the negative effect becomes significantly negative again and the magnitude continues to increase, indicating that people generally cannot tolerate high-level pollution. Quantitatively, in the significant range of treatment effects, at the low-medium level of pollution, a standard deviation increase in pollution is responsible for an approximately -0.6 standard deviation decrease in destination-based immigration. At the very high level of pollution, a standard deviation increase in pollution is responsible for up to a -1.8 standard deviation decrease in destination-based immigration.
\par The ineffectiveness of air pollution in the medium range may be explained by the reference dependence in prospect theory \citep{Kahneman1979}. People in counties with moderately high pollution can tolerate a moderate level of pollution since they treat very high pollution levels as their reference point. Reference points are found to be an important factor in human behavior, such as retirement plans \citep{Seibold2021} and labor supply \citep{Crawford2011}. Another possible explanation is adaptive behavior, such as the use of air purifiers when the pollution level increases.
\begin{figure}[H]
\begin{center}
\includegraphics[scale=0.8]{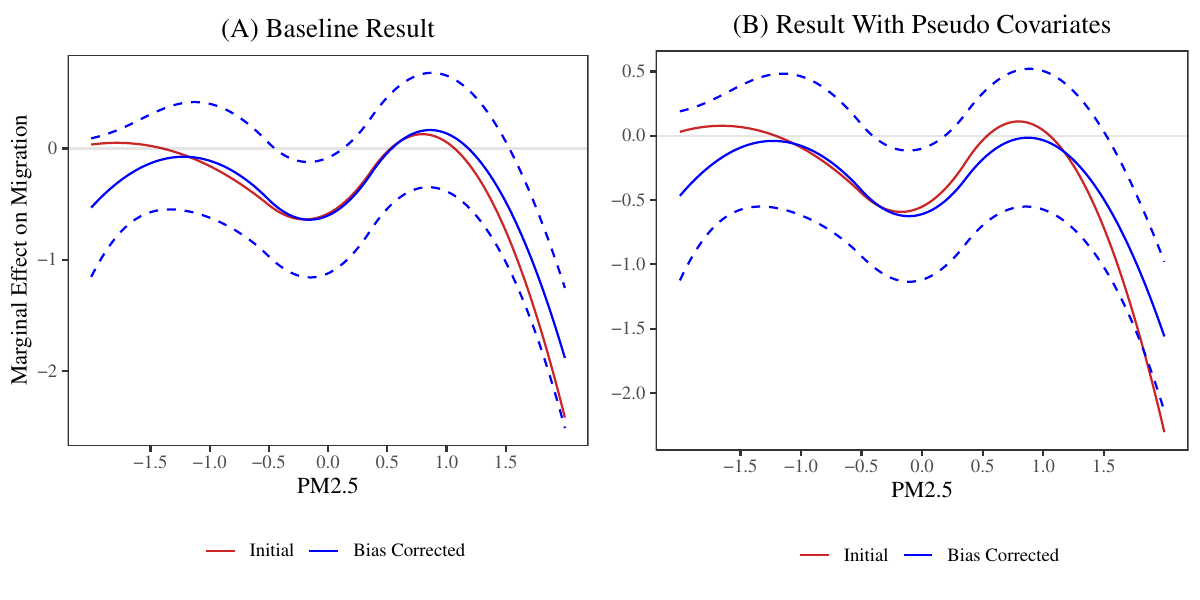}
\caption{Nonlinear Effects of Pollution on Migration} 
\label{empfig1}
\end{center}
{\footnotesize Note: The blue dashed lines show the 95\% uniform confidence bands.}
\end{figure}
\par For a further robustness check, we randomly generate 100 i.i.d.\ standard normal variables and add them to the covariate set. We expect them to be identified as pure noise by the proposed procedure such that they do not severely impact the results. The additional result shown in Panel (B) of Figure \ref{empfig1} satisfies our expectation, as it presents a very similar pattern as in Panel (A) without these irrelevant noises.

\section{Conclusion}\label{sec: conc}
We propose an inferential procedure to construct a uniform confidence band of the endogenous marginal effect function. We include high-dimensional covariates to account for potential omitted variable bias and adopt the control function approach to identify the endogenous effect function. Our main novelty is to design a novel double bias correction procedure to tackle the unique challenge in the nonparametric inference of under endogeneity and the high-dimensional covariate complexity. Our empirical application finds the nonlinear effect of air pollution on migration, which is an important complement to the recent empirical literature. To better facilitate practical studies, it is an exciting topic for future studies to extend our methodology to a model with multiple endogenous variables and heteroskedastic errors, which are also common in empirical applications.

\bibliography{NPIV}

\begin{thebibliography}{}

\bibitem[Aghion et~al., 2013]{aghion2013innovation}
Aghion, P., Van~Reenen, J., and Zingales, L. (2013).
\newblock Innovation and institutional ownership.
\newblock {\em American Economic Review}, 103(1):277--304.

\bibitem[Ai and Chen, 2003]{aichen2003}
Ai, C. and Chen, X. (2003).
\newblock Efficient estimation of models with conditional moment restrictions
  containing unknown functions.
\newblock {\em Econometrica}, 71:1795--1843.

\bibitem[Arceo et~al., 2016]{Arceo2016}
Arceo, E., Hanna, R., and Oliva, P. (2016).
\newblock Does the effect of pollution on infant mortality differ between
  developing and developed countries? evidence from {M}exico {C}ity.
\newblock {\em The Economic Journal}, 126:257--280.

\bibitem[Babii, 2020]{babii2020honest}
Babii, A. (2020).
\newblock Honest confidence sets in nonparametric iv regression and other
  ill-posed models.
\newblock {\em Econometric Theory}, 36(4):658--706.

\bibitem[Belloni et~al., 2012]{belloni2012sparse}
Belloni, A., Chen, D., Chernozhukov, V., and Hansen, C. (2012).
\newblock Sparse models and methods for optimal instruments with an application
  to eminent domain.
\newblock {\em Econometrica}, 80(6):2369--2429.

\bibitem[Belloni et~al., 2014]{belloni2014}
Belloni, A., Chernozhukov, V., and Hansen, C. (2014).
\newblock Inference on treatment effects after selection among high-dimensional
  controls.
\newblock {\em The Review of Economic Studies}, 81(2):608--650.

\bibitem[Bickel et~al., 2009]{bickel2009simultaneous}
Bickel, P.~J., Ritov, Y., and Tsybakov, A.~B. (2009).
\newblock Simultaneous analysis of lasso and dantzig selector.
\newblock {\em The Annals of Statistics}, 37(4):1705--1732.

\bibitem[Blundell and Powell, 2003]{blundell03}
Blundell, R. and Powell, J. (2003).
\newblock {\em Endogeneity in Nonparametric and Semiparametric Regression
  Models, in Advances in Economics and Econometrics: Theory and Applications:
  Eighth World Congress, Vol. 2}.
\newblock Cambridge University Press.

\bibitem[Breunig and Chen, 2022]{Breunig2022adaptive}
Breunig, C. and Chen, X. (2022).
\newblock Simple adaptive estimation of quadratic functionals in nonparametric
  {IV} models.
\newblock {\em arXiv preprint arXiv:2101.12282}.

\bibitem[Chen et~al., 2022]{chen2022}
Chen, S., Oliva, P., and Zhang, P. (2022).
\newblock The effect of air pollution on migration: Evidence from {C}hina.
\newblock {\em Journal of Development Economics}, 156:102833.

\bibitem[Chen et~al., 2024]{chen2021adaptive}
Chen, X., Christensen, T., and Kankanala, S. (2024).
\newblock Adaptive estimation and uniform confidence bands for nonparametric
  structural functions and elasticities.
\newblock {\em Review of Economic Studies}, page rdae025.

\bibitem[Chen and Christensen, 2018]{chen2018optimal}
Chen, X. and Christensen, T.~M. (2018).
\newblock Optimal sup-norm rates and uniform inference on nonlinear functionals
  of nonparametric {IV} regression.
\newblock {\em Quantitative Economics}, 9(1):39--84.

\bibitem[Chernozhukov et~al., 2018]{chernozhukov2018double}
Chernozhukov, V., Chetverikov, D., Demirer, M., Duflo, E., Hansen, C., Newey,
  W., and Robins, J. (2018).
\newblock {Double/debiased machine learning for treatment and structural
  parameters}.
\newblock {\em The Econometrics Journal}, 21(1):C1--C68.

\bibitem[Chernozhukov et~al., 2014a]{chernozhukov2014anti}
Chernozhukov, V., Chetverikov, D., and Kato, K. (2014a).
\newblock Anti-concentration and honest, adaptive confidence bands.
\newblock {\em The Annals of Statistics}, 42(5):1787--1818.

\bibitem[Chernozhukov et~al., 2014b]{chernozhukov2014gaussian}
Chernozhukov, V., Chetverikov, D., and Kato, K. (2014b).
\newblock Gaussian approximation of suprema of empirical processes.
\newblock {\em The Annals of Statistics}, 42(4):1564--1597.

\bibitem[Chernozhukov et~al., 2022a]{chernozhukov2022locally}
Chernozhukov, V., Escanciano, J.~C., Ichimura, H., Newey, W.~K., and Robins,
  J.~M. (2022a).
\newblock Locally robust semiparametric estimation.
\newblock {\em Econometrica}, 90(4):1501--1535.

\bibitem[Chernozhukov et~al., 2022b]{chernozhukov2022automatic}
Chernozhukov, V., Newey, W.~K., and Singh, R. (2022b).
\newblock Automatic debiased machine learning of causal and structural effects.
\newblock {\em Econometrica}, 90(3):967--1027.

\bibitem[Crawford and Meng, 2011]{Crawford2011}
Crawford, V. and Meng, J. (2011).
\newblock New york city cab drivers' labor supply revisited:
  Reference-dependent preferences with rational-expectations targets for hours
  and income.
\newblock {\em American Economic Review}, 101:1912--1932.

\bibitem[Darolles et~al., 2011]{darolles2011nonparametric}
Darolles, S., Fan, Y., Florens, J.-P., and Renault, E. (2011).
\newblock Nonparametric instrumental regression.
\newblock {\em Econometrica}, 79(5):1541--1565.

\bibitem[{d}e Boor, 2001]{de1978practical}
{d}e Boor, C. (2001).
\newblock {\em A practical guide to splines}, volume~27.
\newblock Revised Edition, Springer-verlag New York (Original Edition 1978).

\bibitem[DeVore and Lorentz, 1993]{devore1993constructive}
DeVore, R.~A. and Lorentz, G.~G. (1993).
\newblock {\em Constructive approximation}.
\newblock Springer Berlin, Heidelberg.

\bibitem[Fan et~al., 2023]{fan2022testing}
Fan, Q., Guo, Z., and Mei, Z. (2023).
\newblock A heteroskedasticity-robust overidentifying restriction test with
  high-dimensional covariates.
\newblock Technical report.

\bibitem[Fan and Wu, 2022]{fan2020endogenous}
Fan, Q. and Wu, Y. (2022).
\newblock Endogenous treatment effect estimation with a large and mixed set of
  instruments and control variables.
\newblock {\em Review of Economics and Statistics}, forthcoming.

\bibitem[Fan and Zhong, 2018]{fan2018nonparametric}
Fan, Q. and Zhong, W. (2018).
\newblock Nonparametric additive instrumental variable estimator: A group
  shrinkage estimation perspective.
\newblock {\em Journal of Business \& Economic Statistics}, 36(3):388--399.

\bibitem[Florens, 2003]{Florens03}
Florens, J.-P. (2003).
\newblock {\em Inverse Problems and Structural Econometrics: The Example of
  Instrumental Variables, in Advances in Economics and Econometrics: Theory and
  Applications}.
\newblock Cambridge University Press.

\bibitem[Florens et~al., 2008]{florens2008identification}
Florens, J.-P., Heckman, J.~J., Meghir, C., and Vytlacil, E. (2008).
\newblock Identification of treatment effects using control functions in models
  with continuous, endogenous treatment and heterogeneous effects.
\newblock {\em Econometrica}, 76(5):1191--1206.

\bibitem[Gicheva, 2013]{Gicheva2013}
Gicheva, D. (2013).
\newblock Working long hours and early career outcomes in the high-end labor
  market.
\newblock {\em Journal of Labor Economics}, 31(4):785--824.

\bibitem[Gold et~al., 2020]{gold2020inference}
Gold, D., Lederer, J., and Tao, J. (2020).
\newblock Inference for high-dimensional instrumental variables regression.
\newblock {\em Journal of Econometrics}, 217(1):79--111.

\bibitem[Gopalan et~al., 2021]{gopalan2021home}
Gopalan, R., Hamilton, B.~H., Kalda, A., and Sovich, D. (2021).
\newblock Home equity and labor income: The role of constrained mobility.
\newblock {\em The Review of Financial Studies}, 34(10):4619--4662.

\bibitem[Gregory et~al., 2021]{gregory2021statistical}
Gregory, K., Mammen, E., and Wahl, M. (2021).
\newblock Statistical inference in sparse high-dimensional additive models.
\newblock {\em The Annals of Statistics}, 49(3):1514--1536.

\bibitem[Guo and Small, 2016]{Guo16}
Guo, Z. and Small, D. (2016).
\newblock Control function instrumental variable estimation of nonlinear causal
  effect models.
\newblock {\em Journal of Machine Learning Research}, 17:1--25.

\bibitem[Guo et~al., 2022]{guo2022decorrelated}
Guo, Z., Yuan, W., and Zhang, C.-H. (2022).
\newblock Decorrelated local linear estimator: Inference for non-linear effects
  in high-dimensional additive models.
\newblock {\em arXiv preprint arXiv:1907.12732}.

\bibitem[Hall and Horowitz, 2005]{Hall05}
Hall, P. and Horowitz, J. (2005).
\newblock Nonparametric methods for inference in the presence of instrumental
  variables.
\newblock {\em The Annals of Statistics}, 33:2904--2929.

\bibitem[Heckman and Vytlacil, 2005]{heckman05}
Heckman, J. and Vytlacil, E. (2005).
\newblock Structural equations, treatment effects, and econometric policy
  evaluation.
\newblock {\em Econometrica}, 73(3):669--738.

\bibitem[Horowitz, 2011]{Horowitz2011}
Horowitz, J. (2011).
\newblock Applied nonparametric instrumental variables estimation.
\newblock {\em Econometrica}, 79:347--394.

\bibitem[Huang et~al., 2010]{huang2010variable}
Huang, J., Horowitz, J.~L., and Wei, F. (2010).
\newblock Variable selection in nonparametric additive models.
\newblock {\em The Annals of Statistics}, 38(4):2282--2313.

\bibitem[Imbens and Angrist, 1994]{imbens94}
Imbens, G. and Angrist, J. (1994).
\newblock Identification and estimation of local average treatment effects.
\newblock {\em Econometrica}, 62(2):467--475.

\bibitem[Imbens and Newey, 2009]{imbens2009identification}
Imbens, G.~W. and Newey, W.~K. (2009).
\newblock Identification and estimation of triangular simultaneous equations
  models without additivity.
\newblock {\em Econometrica}, 77(5):1481--1512.

\bibitem[Javanmard and Montanari, 2014]{javanmard2014confidence}
Javanmard, A. and Montanari, A. (2014).
\newblock Confidence intervals and hypothesis testing for high-dimensional
  regression.
\newblock {\em Journal of Machine Learning Research}, 15(1):2869--2909.

\bibitem[Kahneman and Tversky, 1979]{Kahneman1979}
Kahneman, D. and Tversky, A. (1979).
\newblock Prospect theory: An analysis of decision under risk.
\newblock {\em Econometrica}, 47:263--291.

\bibitem[Koltchinskii and Yuan, 2010]{koltchinskii2010sparsity}
Koltchinskii, V. and Yuan, M. (2010).
\newblock Sparsity in multiple kernel learning.
\newblock {\em The Annals of Statistics}, 38(6):3660--3695.

\bibitem[Kozbur, 2021]{kozbur2021inference}
Kozbur, D. (2021).
\newblock Inference in additively separable models with a high-dimensional set
  of conditioning variables.
\newblock {\em Journal of Business \& Economic Statistics}, 39(4):984--1000.

\bibitem[Lee, 2007]{lee2007endogeneity}
Lee, S. (2007).
\newblock Endogeneity in quantile regression models: A control function
  approach.
\newblock {\em Journal of Econometrics}, 141(2):1131--1158.

\bibitem[Lu et~al., 2020]{lu2020kernel}
Lu, J., Kolar, M., and Liu, H. (2020).
\newblock Kernel meets sieve: Post-regularization confidence bands for sparse
  additive model.
\newblock {\em Journal of the American Statistical Association},
  115(532):2084--2099.

\bibitem[Matzkin, 1994]{Matzkin94}
Matzkin, R. (1994).
\newblock {\em Restrictions of Economic Theory in Nonparametric Methods, in
  Handbook of Econometrics, Vol. 4}.
\newblock Elsevier–North-Holland.

\bibitem[Meier et~al., 2009]{meier2009high}
Meier, L., van~de Geer, S., and B{\"u}hlmann, P. (2009).
\newblock High-dimensional additive modeling.
\newblock {\em The Annals of Statistics}, 37(6B):3779--3821.

\bibitem[Mogstad and Torgovitsky, 2018]{mogstad2018}
Mogstad, M. and Torgovitsky, A. (2018).
\newblock Identification and extrapolation of causal effects with instrumental
  variables.
\newblock {\em Annual Review of Economics}, 10(1):577--613.

\bibitem[M{\"u}ller and van~de Geer, 2015]{muller2015partial}
M{\"u}ller, P. and van~de Geer, S. (2015).
\newblock The partial linear model in high dimensions.
\newblock {\em Scandinavian Journal of Statistics}, 42(2):580--608.

\bibitem[Newey, 1990]{Newey1990}
Newey, W. (1990).
\newblock Efficient instrumental variable estimation on nonlinear models.
\newblock {\em Econometrica}, 58:809--837.

\bibitem[Newey and Powell, 1989]{neweypowell89}
Newey, W. and Powell, J. (1989).
\newblock Nonparametric instrumental variables estimation.
\newblock {\em working paper, MIT}.

\bibitem[Newey et~al., 1999]{newey1999}
Newey, W.~K., Powell, J., and Vella, F. (1999).
\newblock Nonparametric estimation of triangular simultaneous equations models.
\newblock {\em Econometrica}, 67:565--603.

\bibitem[Ning et~al., 2023]{ning2023}
Ning, Y., Peng, S., and Tao, J. (2023).
\newblock Estimation and inference for partially linear models with estimated
  outcomes using high-dimensional data.
\newblock {\em Working Paper}.

\bibitem[Ozabaci et~al., 2014]{ozabaci2014additive}
Ozabaci, D., Henderson, D.~J., and Su, L. (2014).
\newblock Additive nonparametric regression in the presence of endogenous
  regressors.
\newblock {\em Journal of Business \& Economic Statistics}, 32(4):555--575.

\bibitem[Ransom and Pope~III, 1995]{Ransom1995}
Ransom, M.~R. and Pope~III, C.~A. (1995).
\newblock External health costs of a steel mill.
\newblock {\em Contemporary Economic Policy}, 13:86--97.

\bibitem[Schumaker, 2007]{schumaker2007spline}
Schumaker, L. (2007).
\newblock {\em Spline functions: basic theory}.
\newblock Cambridge University Press.

\bibitem[Seibold, 2021]{Seibold2021}
Seibold, A. (2021).
\newblock Reference points for retirement behavior: Evidence from german
  pension discontinuities.
\newblock {\em American Economic Review}, 111:1126--1165.

\bibitem[Su and Jin, 2012]{su2012sieve}
Su, L. and Jin, S. (2012).
\newblock Sieve estimation of panel data models with cross section dependence.
\newblock {\em Journal of Econometrics}, 169(1):34--47.

\bibitem[Su and Ullah, 2008]{su2008local}
Su, L. and Ullah, A. (2008).
\newblock Local polynomial estimation of nonparametric simultaneous equations
  models.
\newblock {\em Journal of Econometrics}, 144(1):193--218.

\bibitem[Su et~al., 2019]{su2019non}
Su, L., Ura, T., and Zhang, Y. (2019).
\newblock Non-separable models with high-dimensional data.
\newblock {\em Journal of Econometrics}, 212(2):646--677.

\bibitem[Suzuki and Sugiyama, 2013]{suzuki2013fast}
Suzuki, T. and Sugiyama, M. (2013).
\newblock Fast learning rate of multiple kernel learning: Trade-off between
  sparsity and smoothness.
\newblock {\em The Annals of Statistics}, 41(3):1381--1405.

\bibitem[Tan and Zhang, 2019]{tan2019doubly}
Tan, H. and Zhang, C.-H. (2019).
\newblock Doubly penalized estimation in additive regression with
  high-dimensional data.
\newblock {\em The Annals of Statistics}, 47(5):2567--2600.

\bibitem[Tibshirani, 1996]{tibshirani1996regression}
Tibshirani, R. (1996).
\newblock Regression shrinkage and selection via the lasso.
\newblock {\em Journal of the Royal Statistical Society Series B: Statistical
  Methodology}, 58(1):267--288.

\bibitem[Tropp, 2015]{tropp2015introduction}
Tropp, J.~A. (2015).
\newblock {\em An introduction to matrix concentration inequalities}.
\newblock Now Foundations and Trends.

\bibitem[van~de Geer et~al., 2014]{van2014asymptotically}
van~de Geer, S., B{\"u}hlmann, P., Ritov, Y., and Dezeure, R. (2014).
\newblock On asymptotically optimal confidence regions and tests for
  high-dimensional models.
\newblock {\em The Annals of Statistics}, 42(3):1166--1202.

\bibitem[Vershynin, 2010]{vershynin2010introduction}
Vershynin, R. (2010).
\newblock Introduction to the non-asymptotic analysis of random matrices.
\newblock {\em arXiv preprint arXiv:1011.3027}.

\bibitem[Wang et~al., 2010]{wang2010estimation}
Wang, J.-L., Xue, L., Zhu, L., and Chong, Y.~S. (2010).
\newblock Estimation for a partial-linear single-index model.
\newblock {\em The Annals of Statistics}, 38(1):246--274.

\bibitem[Wooldridge, 2015]{Wooldridge05}
Wooldridge, J. (2015).
\newblock Control function methods in applied econometrics.
\newblock {\em The Journal of Human Resources}, 50:420--445.

\bibitem[Yu et~al., 2016]{yu2016minimax}
Yu, Z., Levine, M., and Cheng, G. (2016).
\newblock Minimax optimal estimation in high dimensional semiparametric models.
\newblock {\em arXiv preprint arXiv:1612.05906}.

\bibitem[Yuan and Zhou, 2016]{yuan2016minimax}
Yuan, M. and Zhou, D.-X. (2016).
\newblock Minimax optimal rates of estimation in high dimensional additive
  models.
\newblock {\em The Annals of Statistics}, 44(6):2564--2593.

\bibitem[Zhang and Zhang, 2014]{zhang2014confidence}
Zhang, C.-H. and Zhang, S.~S. (2014).
\newblock Confidence intervals for low dimensional parameters in high
  dimensional linear models.
\newblock {\em Journal of the Royal Statistical Society: Series B: Statistical
  Methodology}, pages 217--242.

\bibitem[Zhou and Wolfe, 2000]{zhou2000derivative}
Zhou, S. and Wolfe, D.~A. (2000).
\newblock On derivative estimation in spline regression.
\newblock {\em Statistica Sinica}, 10(1):93--108.

\end{thebibliography}
\bibliographystyle{apalike}

\setcounter{section}{0}
\renewcommand\thesection{\Alph{section}}
\clearpage
\setcounter{page}{1}
\setcounter{footnote}{0}
{\Large \bf 
\begin{center}
Appendices to ``Inference for Nonlinear Endogenous Treatment Effects Accounting for High-Dimensional Covariate Complexity''
\end{center}
}
{
\begin{center}
{\sc Qingliang Fan}$^{\dag}$, {\sc Zijian Guo}$^{\ddag}$, {\sc Ziwei Mei}$^{\dag}$, {\sc Cun-Hui Zhang}$^{\ddag}$ \\
{$^{\dag}$Department of Economics, The Chinese University of Hong Kong}\\
{$^{\ddag}$Department of Statistics, Rutgers University}
\end{center}
}

\vspace{-1em}
\newcounter{counter}[section]
\setcounter{table}{0}
\renewcommand{\thetable}{\thesection\arabic{table}}
\setcounter{equation}{0}
\renewcommand\theequation{\thesection\arabic{equation}}
\setcounter{figure}{0}
\renewcommand\thefigure{\thesection\arabic{figure}}
\setcounter{Theorem}{0}
\setcounter{Assumption}{0}
\setcounter{Lemma}{0}
\setcounter{Remark}{0}
\setcounter{Corollary}{0}
\setcounter{Proposition}{0}
\setcounter{Definition}{0}
\renewcommand\theTheorem{\thesection\arabic{Theorem}}
\renewcommand\theLemma{\thesection\arabic{Lemma}}
\renewcommand\theRemark{\thesection\arabic{Remark}}
\renewcommand\theCorollary{\thesection\arabic{Corollary}}
\renewcommand\theProposition{\thesection\arabic{Proposition}}
\renewcommand\theAssumption{\thesection\arabic{Assumption}}
\renewcommand\theDefinition{\thesection\arabic{Definition}}
\newcommand\thealgorithm{\thesection\arabic{algorithm}} 

\setcounter{subsection}{0}
\setcounter{table}{0}
\renewcommand{\thetable}{\thesection\arabic{table}}
\setcounter{equation}{0}
\renewcommand\theequation{\thesection\arabic{equation}}
\setcounter{figure}{0}
\renewcommand\thefigure{\thesection\arabic{figure}}
\setcounter{Theorem}{0}
\setcounter{Assumption}{0}
\setcounter{Lemma}{0}
\setcounter{Remark}{0}
\setcounter{Corollary}{0}
\setcounter{Proposition}{0}

\renewcommand{\thealgocf}{\thesection\arabic{algocf}}
 \setcounter{algocf}{0}

\section{Algorithm of Full-Sample Inference}\label{subsec: algo full sample}
\begin{algorithm}[H]
\footnotesize
\caption{\label{algorithmA1}Full-Sample Uniform Confidence Band for $g^\prime(d)$}
\hspace*{0.01in}
\begin{algorithmic}[1]
\State Data preparations: Prepare the dataset $\mathscr{D}_n$. 
\State Initial estimators: Obtain the full sample LASSO estimators $\hat\kappa,\hat\varphi$ by (\ref{eq: partial penalization for D}) and $\hat\beta,\hat\eta,\hat\theta$ by (\ref{eq: partial penalization}) with $\mathscr{D}_n$.
\State Control function approximation: Save the residuals $\hat v_i = D_i - K_\ic^\top \hat\kappa - X_\ic^\top\hat\varphi$. Estimate $q^\prime(v_i)$ by $\hatqp(\hat v_i) = H^\prime(\hat v_i)^\top\hat\eta$.
\State Double bias correction: Construct $\hat F_\ic$ as (\ref{eq: def F hat i}). Solve (\ref{eq: projection direction 1}) and save the solution $\hat\Omega_B$.
\State Multiplier bootstrap: Compute the $(1-\alpha)$-th quantile of $\sup_{d\in\mathcal{D}}\left| \hat{\mathbb{H}}_n(d)\right|$ with $\Hat{\mathbb{H}}_n(d)$ defined as (\ref{eq: H hat}). Save the quantile as $\hat c_n(\alpha)$.
\State Uniform confidence band: Construct the confidence band of $g^\prime(d)$ by (\ref{eq: def confidence band}).
\end{algorithmic}
\end{algorithm}
\section{The Feasibility of (\ref{eq: projection direction 1}) When  $q$ Is Linear} \label{sec: linear q}
The control function $q(\cdot)$ is usually unknown in practice. The methodology proposed in the main text accommodates the special case of a linear $q$ function. To simplify the illustration, we assume $M_D = M_v = M_\ell = M$, and that $\Sigma_{F|\mathcal{L}}$ is observable so that our goal is to find the vector $\xi_j$ such that
\begin{equation}
    \Sigma_{F|\mathcal{L}}\xi_j = \textbf{i}_j \label{eq: sol linear q}
\end{equation} 
for $j\in[M]$, where $\textbf{i}_j$ is the $j$-th standard basis.  When $q(\cdot)$ is linear, there exists a constant $q_0$ such that $q^\prime(v) \equiv q_0$ for all $v$. Define $\Psi_i = (B_\ic^\top, H_\ic^\top, q_0K_\ic^\top)^\top \in \mathbb{R}^{M_{\rm all}}$ with $M_{\rm all} := (p_z+2)M$. Recall $X_\ic\in \mathbb{R}^p$. Thus, we have 
\[\Sigma_{F|\mathcal{L}} = \left(\begin{array}{ccc}
  \E_{\mathcal{L}}[\Psi_\ic \Psi_\ic ^\top]   &  \E_{\mathcal{L}}[\Psi_\ic X_\ic^\top]  & q_0\E_{\mathcal{L}}[\Psi_\ic X_\ic^\top]\\
      \E_{\mathcal{L}}[X_\ic \Psi_\ic ^\top]   &  \E_{\mathcal{L}}[X_\ic X_\ic^\top]  & q_0\E_{\mathcal{L}}[X_\ic X_\ic^\top]\\ 
         q_0\E_{\mathcal{L}}[X_\ic \Psi_\ic ^\top]   &  q_0\E_{\mathcal{L}}[X_\ic X_\ic^\top]  & q_0^2\E_{\mathcal{L}}[X_\ic X_\ic^\top]\\ 
\end{array}\right) =: \left(\begin{array}{ccc}
   A_{11}  & A_{12} &  q_0 A_{12}\\
   A_{12}^\top  & A_{22}  & q_0 A_{22} \\
    q_0 A_{12}^\top  & q_0 A_{22}  & q_0^2 A_{22} \\
\end{array}\right)\]
where $A_{11} :=  \E_{\mathcal{L}}[\Psi_\ic \Psi_\ic ^\top]$, $A_{22} := \E_{\mathcal{L}}[X_\ic X_\ic^\top]$ and $A_{12}:= \E_{\mathcal{L}}[\Psi_\ic X_\ic^\top] $. By similar eigenvalue conditions in Assumption \ref{assu: more compatibility}, the submatrix 
$$\Sigma_{11} := \left(\begin{array}{cc}
   A_{11}  & A_{12}  \\
   A_{12}^\top   &  A_{22}
\end{array}\right) \in \mathbb{R}^{(M_{\rm all}+p)\times(M_{\rm all}+p)}$$ is invertible. Define 
\[\Omega_O := \left(\begin{array}{cc}
   \Sigma_{11}^{-1}  & O_{(M_{\rm all}+p)\times p}  \\
   O_{p\times (M_{\rm all}+p)}   &  O_{p\times p}
\end{array}\right)\] 
and $$\Sigma_{12}:= (A_{12}^\top,A_{22})^\top.$$ We show the solutions for $\{\xi_j\}_{j\in[M]}$ are the first $M$ columns of $\Omega_O$. Note that   
\[\begin{aligned}
\Sigma_{F|\mathcal{L}} \Omega_O = \left(\begin{array}{cc}
   \Sigma_{11}  &  q_0\Sigma_{12}  \\
   q_0\Sigma_{12}^\top   &  q_0^2 A_{22}
\end{array}\right) \cdot \left(\begin{array}{cc}
   \Sigma_{11}^{-1}  & O_{(M_{\rm all}+p)\times p}  \\
   O_{p\times (M_{\rm all}+p)}   &  O_{p\times p}
\end{array}\right) = \left(\begin{array}{cc}
   I_{M_{\rm all}+p}   & O_{(M_{\rm all}+p)\times p}  \\
   q_0\Sigma_{12}^\top\Sigma_{11}^{-1} &  O_{p\times p}
\end{array}\right).
\end{aligned}\]
Observe that $\Sigma_{12}^\top$ is the last $p$ rows of $\Sigma_{11}$. Thus, $\Sigma_{12}^\top\Sigma_{11}^{-1}$ is the last $p$ rows of the identity matrix $I_{M_{\rm all}+p}$. It turns out that the first $M$ columns of $q_0\Sigma_{12}^\top\Sigma_{11}^{-1}$ are all zero. Consequently, the first $M$ columns of $\Sigma_{F|\mathcal{L}} \Omega_O = \left(\begin{array}{cc}
   I_{M_{\rm all}+p}   & O_{(M_{\rm all}+p)\times p}  \\
   q_0\Sigma_{12}^\top\Sigma_{11}^{-1} &  O_{p\times p}
\end{array}\right)$ are $\{\textbf{i}_j\}_{j\in[M]}$ so that the first $M$ columns of $\Omega_O$ are the solutions to (\ref{eq: sol linear q}).  
\newcommand{\supg}{\sup_g }
\newcommand{\lepg}{\lesssim_{u.p.}}
\newcommand{\gepg}{\gtrsim_{u.p.}}
\newcommand{\asymppg}{\asymp_{u.p.}}
\section{Proofs}\label{sec: proof}
\par Throughout the proofs, we use $c$ and $C$ (sometimes with subscripts) to denote generic positive constants irrelevant to the sample size, which may vary from place to place. The notations ``$\lesssim_p$'', ``$\gtrsim_p$'', and $\asymp_p$ indicate ``$\lesssim$'', ``$\gtrsim$'', and ``$\asymp$'' hold w.p.a.1 respectively.  We assume that function $q$ and its derivatives $q^\prime(\cdot)$, $q^{\prime\prime}(\cdot)$, as well as the B-splines $H(\cdot)$ and the derivatives $H^\prime(\cdot)$, $H^{\prime\prime}(\cdot)$ are continuously extended to the whole real line. Specifically, $q(v) = q(a_v - \epsilon_v)$ for all $v < a_v - \epsilon_v $, and $q(b_v + \epsilon_v)$ for all $v > b_v + \epsilon_v$. Therefore, $\hat v_i$ will always fall into the support of $v_i$. 
\subsection{Definitions} \label{subsec: def}
\par We first define the B-spline basis following \citet[Section E.1]{chen2018optimal}. We consider a uniform B-spline basis with $m$ interior knots and support $[a,b]$. Let $a = t_{-k} =\cdots = t_0 < t_1 < \cdots < t_{m} = t_{m+1} = \cdots = t_{m+k-1} = b$ denote the extended knot sequence. Let $I_0 = [t_0,t_1),\cdots,I_m = [t_m,t_{m+1}]$. As mentioned in the beginning of Section \ref{sec: est}, we use the uniformly based knots \citep{schumaker2007spline} for B-spline functions such that the distances between every two adjacent knots are equal. All $I_m$'s thus share the same length.
\par A basis of degree 0 is constructed by 
\[N_{j,0}(x) = \begin{cases}
    1,\ \text{if } x \in I_j, \\
    0,\ \text{otherwise}
\end{cases}\]
for $j=0,\cdots,m$. Bases of degree $k>0$ is defined recursively by 
\[N_{j,k}(x) = \dfrac{x-t_j}{t_{j+k}-t_j}N_{j,k-1}(x) + \dfrac{t_{j+k+1}-x}{t_{j+k+1}-t_j}N_{j+1,k-1}(x) \]
 where $\frac{1}{0}:=0$ following \citet[Section E.1; See also Section 5 of \citet{devore1993constructive}]{chen2018optimal}. Without loss of generality, we define $N_{j,k}(x)=0$ for $j<-k$ or $j>m$. Furthermore, for any $x\in I_j$, $N_{j^\prime,k}(x)$ is nonzero only if $j^\prime = j,\cdots,j+k$.
\par Define $B_{j}^{[k^\prime]}(x) := N_{j-k^\prime-1,k^\prime}(x)$ for any $j$ and $k^\prime$. We use $M$ to denote the number of spline functions used in the estimation and inference. By the definition of $N_{j,r}$ above, $M = m + k + 1$. In addition, the B-splines satisfy partition to unity $\sum_{j\in[M]}B_{j}^{[k^\prime]}(x) = 1$ for all $k^\prime \leq k$. Similar definitions and properties also hold for $H(\cdot)$ and $K_\ell(\cdot)$ in the corresponding supports and we will not repeat the statements. 
\par In addition, we provide formal definitions for (conditionally) sub-Gaussian and sub-exponential random variables and vectors \citep{vershynin2010introduction}. 
 \begin{Definition}[Sub-Gaussian norms] 
 The sub-Gaussian norm of any random variable $x$ (conditional on the sigma-field $\mathcal{F}$) is
 \begin{equation}\label{eq:def-subG}
     \|x\|_{\psi_2|\mathcal{F}} := \sup_{q\geq 1}\dfrac{1}{\sqrt{q}}[\mathbb{E}_{\mathcal{F}}|x|^q]^{1/q}.
 \end{equation}
  For any random vector $X\in\mathbb{R}^p$, we define its (conditional) sub-Gaussian norm as  
   \begin{equation}\label{eq:def-subG-vec}
     \|X\|_{\psi_2|\mathcal{F}} := \sup_{b\in\R^p:\|b\|_2=1} \|b^\top x\|_{\psi_2|\mathcal{F}}. 
 \end{equation}
 A random variable or vector is called (conditionally) sub-Gaussian if its (conditional) sub-Gaussian norm is uniformly bounded by some absolute constant.
 \end{Definition}
 
  \begin{Definition}[Sub-Exponential norms] 
 The sub-exponential norm of any random variable $x$ (conditional on the sigma-field $\mathcal{F}$) is
 \begin{equation}\label{eq:def-sube}
     \|x\|_{\psi_1|\mathcal{F}} := \sup_{q\geq 1}\dfrac{1}{q}[\mathbb{E}_{\mathcal{F}}|x|^q]^{1/q}.
 \end{equation}
  For any random vector $X\in\mathbb{R}^p$, we define its (conditional) sub-exponential norm as  
   \begin{equation}\label{eq:def-sube-vec}
     \|X\|_{\psi_1|\mathcal{F}} := \sup_{b\in\R^p:\|b\|_2=1} \|b^\top x\|_{\psi_1|\mathcal{F}}. 
 \end{equation}
 A random variable or vector is called (conditionally) sub-exponential if its (conditional) sub-exponential norm is uniformly bounded by some absolute constant.
 \end{Definition}

\subsection{Preliminary Propositions}\label{subsec: prop}
Proposition \ref{prop: DB} provides widely used sup-norm bounds of cross-products. Propositions \ref{prop: spline approx power}-\ref{prop: spline eigen}  state some properties of spline functions. Proposition \ref{prop: Lasso D} and the consequent corollaries provide the estimation error of $\hat v_i$ and its functions from the LASSO algorithm (\ref{eq: partial penalization for D}). Throughout the whole Section \ref{subsec: prop}, all probabilistic bounds are independent of the function $g$, and thus hold uniformly for any $g\in\mathcal{H}_{\mathcal{D}}(\gamma,L)$. 

\begin{Proposition}
\label{prop: DB}Under Assumptions \ref{assu: control func} and \ref{assu: D v}, there exist some absolute constants $c$ and $C$ such that with probability at least $1-c(pM)^{-4}$ 
\begin{equation}
    \left\|\dfrac{X^\top X}{n} - \mathbb{E}\left[\dfrac{X^\top X}{n}\right]\right\|_\infty + \left\|\dfrac{X^\top v}{n}  \right\|_\infty + \left\|\dfrac{X^\top \varepsilon}{n}  \right\|_\infty \leq C\sqrt{\dfrac{\log (pM) }{n}}. 
\end{equation}
for any $t > 0$. 
\end{Proposition}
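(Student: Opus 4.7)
The claim is a standard concentration-and-union-bound result: each of the three sums is, entry by entry, an average of $n$ i.i.d.\ mean-zero random variables, so my strategy is to (i) derive a per-entry exponential tail bound of the form $\Pr(|\cdot| > t) \leq c_1 \exp(-c_2 n t^2)$ with $t = C\sqrt{\log(pM)/n}$, and (ii) union-bound over at most $p^2 \leq (pM)^2$ entries to push the failure probability down to $c(pM)^{-4}$. I would handle the three terms in order of increasing difficulty.

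For $\|X^\top X / n - \E[X^\top X/n]\|_\infty$, each entry $(j,k)$ is $n^{-1}\sum_{i=1}^n (X_{ij}X_{ik} - \E[X_{ij}X_{ik}])$, an average of i.i.d.\ zero-mean random variables bounded by $2C_X^2$ with $C_X = \max(|a_x|,|b_x|)$ from Assumption \ref{assu: D v}. Hoeffding's inequality with $t = C\sqrt{\log(pM)/n}$ gives the desired per-entry bound, and a union bound over the $p^2$ entries absorbs the polynomial factor into the log. For $\|X^\top v / n\|_\infty$, each entry $n^{-1}\sum_i X_{ij} v_i$ is an average of i.i.d.\ mean-zero variables (using $\E(v_i|X_\ic,Z_\ic)=0$ from Assumption \ref{assu: control func}) that are uniformly bounded because both $X_{ij}$ and $v_i$ have compact supports by Assumption \ref{assu: D v}; Hoeffding and a union bound over $p$ entries again suffice.

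The third term $\|X^\top \varepsilon / n\|_\infty$ is the main obstacle because $\varepsilon_i$ is only assumed to have a bounded conditional fourth moment by Assumption \ref{assu: control func}, so Hoeffding does not apply directly. My plan is a standard truncation argument: split $\varepsilon_i = \bar\varepsilon_i + \tilde\varepsilon_i$ with $\bar\varepsilon_i = \varepsilon_i \mathbf{1}\{|\varepsilon_i| \leq \tau_n\}$ for a threshold $\tau_n$ chosen of order $\sqrt{n/\log(pM)}$ (or slightly larger). For the truncated part, $X_{ij}(\bar\varepsilon_i - \E\bar\varepsilon_i)$ is bounded by $2C_X \tau_n$ with variance at most $C_X^2 \sigma_\varepsilon^2$, so Bernstein's inequality yields the $\sqrt{\log(pM)/n}$ rate once $\tau_n$ is balanced against the log factor. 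For the tail part, Markov's inequality applied with the fourth-moment bound, combined with a union bound over $i \leq n$, shows that $\tilde\varepsilon_i \equiv 0$ for all $i$ on an event of high probability. The centering bias $\E[X_{ij}\bar\varepsilon_i] = -\E[X_{ij}\tilde\varepsilon_i]$ is of order $C_X \E|\varepsilon_i|^4/\tau_n^3$ by tail integration, which is negligible for the chosen $\tau_n$.

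The final step is a simple union bound combining the three tail events. The delicate technical point is the third term: achieving the sub-Gaussian-like rate $\sqrt{\log(pM)/n}$ with tail probability $(pM)^{-4}$ under only a fourth-moment hypothesis on $\varepsilon$ requires calibrating $\tau_n$ carefully relative to the growth rate of $\log(pM)$, and implicitly uses the high-dimensional regime $\log(pM) = o(n^{c})$ consistent with the asymptotic framework of the paper. If $\varepsilon_i$ is in fact implicitly sub-Gaussian or sub-exponential (as is common in related work and compatible with the later use of Bernstein-type tools), then the truncation step can be bypassed and one applies Bernstein's inequality directly to $X_{ij}\varepsilon_i$, which is sub-exponential because $X_{ij}$ is bounded.
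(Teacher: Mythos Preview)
Your treatment of the first two terms---Hoeffding on bounded summands plus a union bound---matches the paper's. For $\|X^\top\varepsilon/n\|_\infty$ you and the paper diverge. The paper does not truncate; it applies the self-normalized moderate deviation inequality of \citet[Lemma~5]{belloni2012sparse}, which controls the studentized maximum $\max_j |\sum_i X_{ij}\varepsilon_i|/(\sum_i X_{ij}^2\varepsilon_i^2)^{1/2}$ by a Gaussian quantile under only a Lyapunov-type ratio condition, and then multiplies back by $(n^{-1}\sum_i X_{ij}^2\varepsilon_i^2)^{1/2}\convp \sigma_\varepsilon (\E X_{ij}^2)^{1/2}$. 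The point of self-normalization is that it delivers the Gaussian tail rate directly under the bounded fourth conditional moment of $\varepsilon_i$, with no truncation level to tune.

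Your truncation route is workable but, as you anticipate, does not reach the stated $(pM)^{-4}$ probability. With $\tau_n\asymp\sqrt{n/\log(pM)}$, the fourth-moment Markov bound plus union over $i\le n$ yields only $\Pr(\exists i:|\varepsilon_i|>\tau_n)\lesssim (\log(pM))^2/n$, which is $o(1)$ but not $(pM)^{-4}$; enlarging $\tau_n$ enough to fix this would spoil the Bernstein exponent through the linear term $\tau_n t$. In fairness, the paper's own proof also concludes only ``w.p.a.1'' for this term (it takes $\gamma_0=M^{-1}$ in the moderate deviation bound), so the $(pM)^{-4}$ in the statement appears to bind only on the first two terms. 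Both routes are therefore valid at the w.p.a.1 level; the self-normalized approach is cleaner here because it sidesteps the truncation calibration entirely and never needs a sub-Gaussian or sub-exponential hypothesis on $\varepsilon_i$.
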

\begin{proof}[Proof of Proposition \ref{prop: DB}]Since $v$ and $X$ are uniformly bounded, their cross products have bounded sub-Gaussian norm. The first two terms can be thus easily bounded using the Hoeffding-type inequality for centered sub-Gaussian variables \citep[Proposition 5.10]{vershynin2010introduction} and the union bound. The procedures in the proof of Lemma B2 in \citet{fan2022testing} apply to this proposition. 
\par The bound of the third term applies the moderate deviation inequality \citep[Lemma 5]{belloni2012sparse} that 
\begin{equation}\label{eq: moderate deviation}
    \begin{aligned}
    \Pr\left(\max_{j\in[p_x]}\dfrac{\left|\sumn X_{ij}\varepsilon_i\right|}{\sqrt{\sumn X_{ij}^2\varepsilon_i^2}} > \Phi^{-1}(1-\gamma_0/2p_x)\right) \leq \gamma_0 (1+A/\ell_n^3)
\end{aligned}
\end{equation} 
where $A$ is an absolute constant and $\Phi$ is the cumulative distribution function of a standard normal variable, provided that $\ell_n > 0$ and under the i.i.d.\ setting, 
\begin{equation}\label{eq: moderate deviation condition}
    0 \leq \Phi^{-1}(1-\gamma_0/2p) \leq \dfrac{n^{1/6}}{\ell_n} \min_{j\in[p]} \dfrac{\left( \E(X_{ij}^2\varepsilon_i^2)\right)^{1/2}}{\left(  \E|X_{ij}^3\varepsilon_i^3|\right)^{1/3}} - 1. 
\end{equation}
The following arguments verify \eqref{eq: moderate deviation condition}. By the boundness of $X_{ij}$ in Assumption \ref{assu: D v} and the bounded high-order conditional moment of $\varepsilon_i$ in Assumption \ref{assu: control func}, 
\[\E|X_{ij}^3\varepsilon_i^3| \lesssim \E|\varepsilon_i^3| \lesssim 1.\]
By the lower and upper bounds of $\E(X_{ij}^2)$ in Assumption \ref{assu: eigen}, 
\[\E(X_{ij}^2\varepsilon_i^2) = \E(X_{ij}^2\E(\varepsilon_i^2|X_{ij})) = \sigma_\varepsilon^2 \E(X_{ij}^2) \asymp 1. \]
Thus, if we take $\ell_n=A$, the right hand side of \eqref{eq: moderate deviation condition} is lower bounded by $cn^{1/6}$ for some absolute constant $c$. Then when $\gamma_0=M^{-1}$, \eqref{eq: moderate deviation condition} follows by the fact that $\Phi^{-1}(1-\gamma_0/2p) \lesssim \sqrt{\log(2p/\gamma_0)} \lesssim \sqrt{\logpM}$.
\par The validity of \eqref{eq: moderate deviation condition} implies \eqref{eq: moderate deviation} when $\gamma_0 = M^{-1}$ and $\ell_n=A$, which also implies that the right hand side of  \eqref{eq: moderate deviation} converges to zero. Thus, w.p.a.1 
\[\begin{aligned}
    \|n^{-1}X^\top\varepsilon\|_\infty &\leq n^{-1/2}\cdot \Phi^{-1}(1-\gamma_0/2p)\cdot \max_{j\in[p]}\sqrt{n^{-1}\sumn X_{ij}^2\varepsilon_i^2 } \\
    &\lesssim n^{-1/2} \cdot \sqrt{\logpM} \cdot \sqrt{n^{-1}\sumn \varepsilon_i^2 }\\
    &\lep \sqrt{\logpM / n},
\end{aligned}\]
where the second row applies the $\Phi^{-1}(1-\gamma_0/2p)\lesssim \sqrt{\logpM}$ and the boundness of $X_{ij}$, and the third row applies $n^{-1}\sumn \varepsilon_i^2 \convp \sigma_\varepsilon^2$ by the i.i.d.\ and the third moment conditions. 
\end{proof}

\begin{Proposition}\label{prop: Spline2norm}For all $1\leq i\leq n$ we have  \[\max\{\|B_{i\cdot}\|_2,   \|H_{i\cdot}\|_2,  (\|(K_\ell)_{i\cdot}\|_2)_{\ell\in[p_z]} \}\leq 1,\] 
\[\min\{\|B_{i\cdot}\|_2,   \|H_{i\cdot}\|_2,   (\|(K_\ell)_{i\cdot}\|_2)_{\ell\in[p_z]} \}\geq \dfrac{1}{\sqrt{k}}.\]
\end{Proposition}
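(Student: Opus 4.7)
The plan is to prove both bounds by exploiting the two fundamental structural properties of the uniform B-spline basis recalled in Section \ref{subsec: def}: non-negativity $B_j(\cdot)\geq 0$, partition of unity $\sum_{j\in[M]} B_j(x) = 1$, and the local support property that at any evaluation point $x$ at most $k+1$ of the basis functions are nonzero. Since $H(\cdot)$ and $K_\ell(\cdot)$ are constructed from the same B-spline scheme on their respective compact supports, the argument I give for $B_{i\cdot}$ will transfer verbatim to $H_{i\cdot}$ and $(K_\ell)_{i\cdot}$, so I only need to treat one case.

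For the upper bound, I would combine non-negativity with partition of unity to note that $B_j(D_i) \in [0,1]$ for every $j$. This yields $B_j(D_i)^2 \le B_j(D_i)$ pointwise, and summing over $j\in[M_D]$ gives
\[
\|B_{i\cdot}\|_2^2 \;=\; \sum_{j=1}^{M_D} B_j(D_i)^2 \;\le\; \sum_{j=1}^{M_D} B_j(D_i) \;=\; 1,
\]
which immediately delivers $\|B_{i\cdot}\|_2 \le 1$.

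For the lower bound, I would localize: fix the unique interval $I_m$ containing $D_i$, and let $S \subset [M_D]$ be the index set of basis functions that do not vanish on $I_m$. By the local support property recalled in Section \ref{subsec: def}, $|S| \le k+1$ (the paper uses the shorthand $\sqrt{k}$ in the statement). Applying Cauchy--Schwarz to the partition of unity restricted to $S$ gives
\[
1 \;=\; \sum_{j\in S} B_j(D_i) \;\le\; \sqrt{|S|}\Bigl(\sum_{j\in S} B_j(D_i)^2\Bigr)^{1/2} \;\le\; \sqrt{k+1}\,\|B_{i\cdot}\|_2,
\]
which yields the claimed lower bound. The identical argument, with $M_D$ replaced by $M_v$ or $M_\ell$ and $D_i$ replaced by $v_i$ or $Z_{i\ell}$, handles $H_{i\cdot}$ and $(K_\ell)_{i\cdot}$.

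There is no genuine obstacle here; the proposition is essentially a direct consequence of the two structural properties of B-splines and a single application of Cauchy--Schwarz. The only small bookkeeping point is to ensure that the evaluation point lies in the support used to construct the knot sequence, which is granted by Assumption \ref{assu: D v} together with the continuous extensions of the bases described at the start of Section \ref{sec: proof}.
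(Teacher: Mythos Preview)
Your proposal is correct and follows essentially the same route as the paper: partition of unity plus non-negativity for the upper bound, and local support plus Cauchy--Schwarz for the lower bound. The only cosmetic differences are that the paper phrases the upper bound via $\|B_{i\cdot}\|_2\le\|B_{i\cdot}\|_1=\sum_j B_j(D_i)=1$ rather than via $B_j^2\le B_j$, and it cites Schumaker for a cardinality bound of $k$ (rather than your $k+1$) on the active index set, which is exactly the convention issue you already flagged.
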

\begin{proof}[Proof of Proposition \ref{prop: Spline2norm}]
By \citet[Theorem 4.17, Definition 4.19]{schumaker2007spline} we know that for each fixed $x$, the cardinality of the set $\{j\in[M]:B_{j}(x) \neq 0\}$ is no greater than $k$. By partition of unity \citep[Theorem 4.20]{schumaker2007spline} that  $\sum_{j=1}^{M}B_{j}(D_i)=1$, we deduce that 
$$ \|B_{i\cdot}\|_2 \leq  \sum_{j=1}^{M}B_{j}(D_i) =1,$$ 
$$ \|B_{i\cdot}\|_2 \geq  \dfrac{1}{\sqrt{k}}\sum_{j=1}^{M}B_{j}(D_i) = \dfrac{1}{\sqrt{k}}.$$ 
Likewise, we can deduce the same bounds for
$\|H_{i\cdot}\|_2$ and $\|(K_\ell)_{i\cdot}\|_2$.
\end{proof}

The following Proposition \ref{prop: SplineDerivative} is about the first-order derivative of B-Spline functions. Proposition \ref{prop: SplineDerivative} is a direct result of  \citet[p. 115]{de1978practical}. 
\begin{Proposition} 
For any $v\in[a_v-\epsilon_v,b_v+\epsilon_v]$, 
\label{prop: SplineDerivative}
$H_j^\prime(v)=M\left(H_{j}^{[k-1]}(v)-H_{j+1}^{[k-1]}(v)\right)$, where $H_{j}^{[k-1]}(v)$ is the $j$-th B-Spline function of degree $k-1$. Similarly, for any $d\in[a_D,b_D]$, 
$B_j^\prime(d) = M\left(B_{j}^{[k-1]}(d)-B_{j+1}^{[k-1]}(d)\right)$.
\end{Proposition}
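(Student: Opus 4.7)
This is the classical derivative formula for B-splines, so the plan is to derive it from the recursive definition of $N_{j,k}$ given earlier in the excerpt rather than to invoke de Boor (1978) as a black box. The first step is to differentiate the recurrence
\[
N_{j,k}(x) = \frac{x-t_j}{t_{j+k}-t_j}\,N_{j,k-1}(x) + \frac{t_{j+k+1}-x}{t_{j+k+1}-t_{j+1}}\,N_{j+1,k-1}(x)
\]
with respect to $x$, apply the product rule, and then use the same recurrence in reverse to regroup the four resulting terms. A short induction on $k$ then yields the standard identity
\[
N_{j,k}'(x) \;=\; \frac{k}{t_{j+k}-t_j}\,N_{j,k-1}(x)\;-\;\frac{k}{t_{j+k+1}-t_{j+1}}\,N_{j+1,k-1}(x),
\]
valid at all points where $x$ is not a knot (so in particular almost everywhere on the relevant interval, and by continuity of $N_{j,k-1}$ for $k\ge 2$ everywhere).

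The second step is to specialize to the uniformly spaced knots used in the paper. By the construction in Section \ref{subsec: def}, the interior knots partition $[a,b]$ into equal subintervals; if the knot spacing is $h$, then $t_{j+k}-t_j = kh$ for every $j$ in the relevant range. Consequently
\[
N_{j,k}'(x) \;=\; \frac{1}{h}\bigl(N_{j,k-1}(x)-N_{j+1,k-1}(x)\bigr).
\]
Since the total number of basis functions is $M = m+k+1$ and $h$ is proportional to $1/(m+1)$, we have $1/h \asymp M$; in the paper's normalization the proportionality constant is absorbed so that $1/h = M$ for the functional form stated.

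Finally, I would translate this back to the paper's indexed notation $B_j^{[k']}(x) := N_{j-k'-1,k'}(x)$. The shift $j \mapsto j-k'-1$ is purely cosmetic: writing $H_j(v)=H_j^{[k]}(v)$ and applying the shift to both sides of the derivative identity gives
\[
H_j'(v) \;=\; M\bigl(H_j^{[k-1]}(v)-H_{j+1}^{[k-1]}(v)\bigr),
\]
and the same computation with $B$ and $d$ yields the analogous formula on $[a_D,b_D]$.

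The main obstacle is nothing mathematical but rather bookkeeping: one has to be careful with boundary knots (the multiplicity-$k+1$ endpoints introduced in Section \ref{subsec: def}) so that no $0/0$ factors appear in the recurrence-derivative identity, using the convention $1/0:=0$ adopted in the paper. With that convention, terms corresponding to coincident knots simply drop out, and the identity extends cleanly across the entire support. No other subtlety is involved, which is why the result is usually quoted rather than proved; the derivation above is short enough to include for completeness.
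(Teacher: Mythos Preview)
Your derivation is correct in substance, but it is worth knowing that the paper does not prove this proposition at all: the full ``proof'' in the paper is a single-sentence reference to \citet[p.~115]{de1978practical}. So you have chosen a genuinely different route, supplying an actual derivation from the recurrence rather than quoting the standard result. What you gain is self-containment and transparency about how the constant $M$ enters; what the paper gains is brevity, since this identity is textbook and nothing downstream depends on more than the order-of-magnitude relation $1/h\asymp M$.

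One small caveat: your acknowledgment that ``the proportionality constant is absorbed so that $1/h=M$'' is doing real work. With $M=m+k+1$ basis functions and uniform knot spacing $h=(b-a)/m$, the exact identity is $H_j'(v)=h^{-1}\bigl(H_{j}^{[k-1]}(v)-H_{j+1}^{[k-1]}(v)\bigr)$ up to an index shift, and $1/h$ equals $M$ only up to a bounded factor depending on $k$ and $b-a$. The paper's statement of the proposition is therefore a slight abuse, and all subsequent uses (Propositions~\ref{prop: SplineDerivativeL2Norm} and~\ref{prop: spline eigen}) only need $\|H'(v)\|\asymp M$, so the discrepancy is harmless. You handled this honestly; just be aware that the exact equality with $M$ as written in the proposition is a convention, not a computation.
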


\begin{Proposition} 
\label{prop: SplineDerivativeL2Norm} $\|B^\prime(d))\|_q \asymp M$ and $\|H^\prime(v)\|_q\asymp M$ for $q=1,2$ uniformly for all $d\in[a_D,b_D]$ and $v\in[a_v-\epsilon_v,b_v+\epsilon_v]$. Also, $\|B^{\prime\prime}(d)\|_q,\|H^{\prime\prime}(v)\|_q\lesssim M^2$.
\end{Proposition}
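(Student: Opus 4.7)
The plan is to use the explicit derivative formula from Proposition~\ref{prop: SplineDerivative}, namely $B_j^\prime(d) = M\bigl(B_j^{[k-1]}(d) - B_{j+1}^{[k-1]}(d)\bigr)$, together with two structural properties of B-splines: (i) partition of unity $\sum_j B_j^{[k-1]}(d)=1$ and nonnegativity; (ii) at most $k$ of the $\{B_j^{[k-1]}(d)\}_j$ are nonzero for any fixed $d$. The $M$ in front gives the correct scaling, and (i)--(ii) convert the sums over $j$ into $O(1)$ quantities bounded away from $0$ and $\infty$.

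For the upper bounds, I would write
\[
\|B^\prime(d)\|_1 = M\sum_j \bigl|B_j^{[k-1]}(d)-B_{j+1}^{[k-1]}(d)\bigr|\leq 2M\sum_j B_j^{[k-1]}(d)=2M,
\]
using the triangle inequality together with partition of unity. For the $L^2$ norm, the differences are bounded by $1$ in absolute value and at most $k+1$ indices contribute, so $\|B^\prime(d)\|_2^2\leq (k+1)M^2\lesssim M^2$. For the lower bounds, I would argue via total variation: for a fixed $d$, the sequence $a_j:=B_j^{[k-1]}(d)$ is nonnegative, sums to one, and vanishes outside a window of size at most $k$; in particular $\max_j a_j\geq 1/k$. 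Viewing $\{a_j\}$ as a nonnegative sequence that rises from $0$ to its maximum and returns to $0$, the total variation is at least $2\max_j a_j\geq 2/k$. This yields $\|B^\prime(d)\|_1\geq (2/k)\,M\gtrsim M$. For the $L^2$ lower bound, Cauchy--Schwarz combined with the fact that only $O(k)$ differences are nonzero gives
\[
\sum_j \bigl(B_j^{[k-1]}(d)-B_{j+1}^{[k-1]}(d)\bigr)^2 \;\geq\; \frac{1}{k+1}\Bigl(\sum_j \bigl|B_j^{[k-1]}(d)-B_{j+1}^{[k-1]}(d)\bigr|\Bigr)^2 \;\gtrsim\; 1,
\]
so $\|B^\prime(d)\|_2\gtrsim M$. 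All bounds are uniform in $d\in[a_D,b_D]$ since $k$ is a fixed degree and the knots are uniform.

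For the second derivative, I would iterate Proposition~\ref{prop: SplineDerivative} to get
\[
B_j^{\prime\prime}(d)=M^2\bigl(B_j^{[k-2]}(d)-2B_{j+1}^{[k-2]}(d)+B_{j+2}^{[k-2]}(d)\bigr),
\]
whose entries are bounded by a constant times $M^2$, with at most $k+2$ nonzero $j$'s, yielding $\|B^{\prime\prime}(d)\|_q\lesssim M^2$ for $q=1,2$ by the same counting argument. The identical reasoning applies verbatim to $H^\prime(v)$ and $H^{\prime\prime}(v)$ on $[a_v-\epsilon_v,b_v+\epsilon_v]$ since only partition of unity, nonnegativity, and compact support per knot are used; the uniform knot spacing assumed in Section~\ref{subsec: def} ensures the constants do not depend on the point.

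The only nontrivial step is the total-variation lower bound used to establish $\|B^\prime(d)\|_1\gtrsim M$ uniformly in $d$: one has to rule out the pathological case where all adjacent differences nearly cancel. This is handled by the observation above that the sequence begins and ends at zero with mass one, so its total variation is at least twice the maximum value, which in turn is at least $1/k$; everything downstream is a direct consequence.
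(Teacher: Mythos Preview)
Your proof is correct and follows the same overall strategy as the paper: apply the derivative formula from Proposition~\ref{prop: SplineDerivative}, then use partition of unity and the fact that at most $k$ splines are active at any point. The only substantive difference lies in the lower bound for $\sum_j|B_j^{[k-1]}(d)-B_{j+1}^{[k-1]}(d)|$. The paper proves this by a telescoping contradiction: writing $1=\sum_{j=j_0}^{j_0+k-1}B_j^{[k-1]}(d)=\sum_{j}\sum_{i\geq j}(B_i^{[k-1]}-B_{i+1}^{[k-1]})$ and bounding the double sum by $k$ times the total variation, one obtains the lower bound $1/(k+1)$. Your argument via $\text{TV}\geq 2\max_j a_j\geq 2/k$ is a cleaner alternative that gives a slightly sharper constant and makes the geometric picture (rise from zero, fall back to zero) explicit. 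For the second derivative the paper applies the first-derivative bound to the degree-$(k-1)$ basis rather than writing out the second difference, but this is equivalent to your iteration. Both routes are elementary and yield the same uniform $\asymp M$ and $\lesssim M^2$ conclusions.
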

\begin{proof}[Proof of Proposition \ref{prop: SplineDerivativeL2Norm}]
We only prove this proposition for $B(d)$ and the arguments for $H(v)$ are exactly the same. We first focus on $q=2$. From Proposition \ref{prop: SplineDerivative}, we have $\|B^\prime(d)\|_2 =\sqrt{\sum_{j=1}^M(B_j^\prime(d))^2} = M\sqrt{\sum_{j=1}^M [B_j^{[k-1]}(d) - B_{j+1}^{[k-1]}(d)]^2}$. Given $B_{M+1}^{[k-1]}(d) = 0$, it is easy to show that 
\[\begin{aligned}
    \sum_{j=1}^M [B_j^{[k-1]}(d) - B_{j+1}^{[k-1]}(d)]^2 \leq 2\sum_{j=1}^M[B_j^{[k-1]}(d)]^2 + 2\sum_{j=1}^M[B_{j+1}^{[k-1]}(d)]^2
    \leq 4\|B^{[k-1]}(d)\|_2^2 \lesssim 4.
\end{aligned}\]
We then show the other side of the inequality. Recall that for any $d$, there exists at most $k$ numbers $j\in[M]$ such that $B_{j}^{[k-1]}(d)\neq 0$. Suppose these $k$ numbers are $j_0,j_0+1,\cdots,j_0+k-1$. Consequently, 
\[\sum_{j=1}^M [B_j^{[k-1]}(d) - B_{j+1}^{[k-1]}(d)]^2 \geq \frac{1}{k}\left(\sum_{j=j_0}^{j_0+k-1} \left|B_j^{[k-1]}(d) - B_{j+1}^{[k-1]}(d)\right|\right).\]
It suffices to show that $\sum_{j=j_0}^{j_0+k-1} \left|B_j^{[k-1]}(d) - B_{j+1}^{[k-1]}(d)\right|$ is bounded away from zero. We will show by contradiction that a lower bound is $\frac{1}{k+1}$. Suppose that $\sum_{j=j_0}^{j_0+k-1} \left|B_j^{[k-1]}(d) - B_{j+1}^{[k-1]}(d)\right| < \frac{1}{k+1}$. Then
\[\begin{aligned}
    1 = \sum_{j=j_0}^{j_0+k-1} B_j^{[k-1]}(d) = \sum_{j=j_0}^{j_0+k-1} \sum_{i=j}^{j_0+k-1} \left[B_i^{[k-1]}(d) - B_{i+1}^{[k-1]}(d)\right] < \dfrac{k}{k+1} 
\end{aligned}\]
which is a contradiction. 
Therefore, $\|B^\prime(d)\|_2\asymp M$. As for the $L_1$ norm, note that 
\[M \lesssim \|B^\prime(d)\|_2 \leq \|B^\prime(d)\|_1 = M \sum_{j=1}^M |B_j^{[k-1]}(d) - B_{j+1}^{[k-1]}(d)| \leq 2M \]
where the last inequality applies the partition to unity property that $\sum_{j\in[M]} B_j^{[k-1]}(d) = 1$. As for the second-order derivative, note that 
\[\begin{aligned}
    \|B^{\prime\prime}(d)\|_1 &= \sum_{j\in[M]} |B_j^{\prime\prime}(d)| \leq M\sum_{j\in[M]}|(B_j^{[k-1]})^\prime(d) - (B_{j+1}^{[k-1]})^\prime(d)| \leq M\cdot 2\|(B^{[k-1]})^\prime(d)\|_1 \lesssim M^2.  
\end{aligned}\]
This completes the proof of Proposition \ref{prop: SplineDerivativeL2Norm}. 
\end{proof}

The following proposition about eigenvalues of Gram matrices for the spline bases and their derivatives. 
\begin{Proposition}
\label{prop: spline eigen}Define $B^\prime_\ic=B^\prime(D_i)$ and $H^\prime_\ic = H^\prime(v_i)$. Under Assumptions \ref{assu: D v}, there exists some absolute constants $c$ and $C$ such that for $M$ large enough, 
\begin{equation}\label{eq: spline eigen}
\begin{aligned}
    cM^{-1} \leq \lambda_{\min}(\E(B_\ic B_\ic^\top)) \leq  \lambda_{\max}(\E(B_\ic B_\ic^\top)) \leq CM^{-1} \\
        cM^{-1} \leq \lambda_{\min}(\E(H_\ic H_\ic^\top)) \leq  \lambda_{\max}(\E(H_\ic H_\ic^\top)) \leq CM^{-1} \\
            cM^{-1} \leq \lambda_{\min}(\E(K_\ic K_\ic^\top)) \leq  \lambda_{\max}(\E(K_\ic K_\ic^\top)) \leq CM^{-1}. \\
\end{aligned}
\end{equation} 
and 
\begin{equation}\label{eq: spline deriv eigen}
\begin{aligned}
    \lambda_{\max}(\E(B^\prime_\ic (B^\prime_\ic)^\top)) \leq CM \\
          \lambda_{\max}(\E(H^\prime_\ic (H^\prime_\ic)^\top)) \leq CM.
\end{aligned}
\end{equation} 
\end{Proposition}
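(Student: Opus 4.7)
The plan is to reduce each eigenvalue bound to a statement about the Gram matrix of uniformly-knotted B-spline bases with respect to Lebesgue measure on a compact interval, and then invoke the classical scaling for such Gram matrices.

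First I would handle \eqref{eq: spline eigen} for $B_\ic$; the arguments for $H_\ic$ and $K_\ic$ are identical once one notes that under Assumption \ref{assu: D v} the marginal densities of $v_i$ and of each $Z_{i\ell}$ are bounded above and below by positive constants on their compact supports (for $K_\ic$ one uses that $K_\ic$ is block-diagonal in $\ell$, so it suffices to bound each block separately, each block being exactly the same kind of Gram matrix). For any unit vector $a\in\mathbb{R}^M$,
\begin{equation*}
a^\top \mathbb{E}[B_\ic B_\ic^\top]\, a \;=\; \int_{\mathcal{D}} \bigl(a^\top B(d)\bigr)^2 f_D(d)\,dd,
\end{equation*}
so by Assumption \ref{assu: D v},
\begin{equation*}
c_f \int_{\mathcal{D}} \bigl(a^\top B(d)\bigr)^2 dd \;\le\; a^\top \mathbb{E}[B_\ic B_\ic^\top] a \;\le\; C_f \int_{\mathcal{D}} \bigl(a^\top B(d)\bigr)^2 dd .
\end{equation*}
Thus the bounds in \eqref{eq: spline eigen} reduce to bounding the eigenvalues of the Lebesgue Gram matrix $G := \int_{\mathcal{D}} B(d) B(d)^\top dd$ by $c' M^{-1}$ and $C' M^{-1}$.

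For the Lebesgue Gram matrix I would use the standard B-spline stability bound (see, e.g., Schumaker (2007), Theorem 4.41, or de Boor's inequality): for uniformly-knotted B-splines of fixed degree $k$ with knot spacing $h \asymp M^{-1}$, there exist constants $c_k, C_k>0$ depending only on $k$ such that
\begin{equation*}
c_k h \,\|a\|_2^2 \;\le\; \int_{\mathcal{D}} \bigl(a^\top B(d)\bigr)^2 dd \;\le\; C_k h \,\|a\|_2^2 \qquad \forall a\in\mathbb{R}^M.
\end{equation*}
Since $h \asymp M^{-1}$, substituting back yields the two-sided bound $cM^{-1}\le \lambda_{\min}(\mathbb{E}[B_\ic B_\ic^\top]) \le \lambda_{\max}(\mathbb{E}[B_\ic B_\ic^\top])\le CM^{-1}$, as claimed.

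For \eqref{eq: spline deriv eigen}, I would use Proposition \ref{prop: SplineDerivative}: writing $B^\prime(d) = M\cdot \Delta\, B^{[k-1]}(d)$, where $\Delta$ is the (at most $M\times (M+1)$) first-difference matrix sending $(u_1,\ldots,u_{M+1})$ to $(u_j-u_{j+1})_{j\in[M]}$. Each column of $\Delta$ has at most two nonzero entries of magnitude one, so $\|\Delta\|_2\le 2$. Therefore
\begin{equation*}
\mathbb{E}[B^\prime_\ic (B^\prime_\ic)^\top] \;=\; M^2 \Delta\, \mathbb{E}[B^{[k-1]}_\ic (B^{[k-1]}_\ic)^\top]\, \Delta^\top,
\end{equation*}
and taking $\lambda_{\max}$ gives $\lambda_{\max}(\mathbb{E}[B^\prime_\ic (B^\prime_\ic)^\top]) \le M^2 \|\Delta\|_2^2 \,\lambda_{\max}(\mathbb{E}[B^{[k-1]}_\ic (B^{[k-1]}_\ic)^\top])$. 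Applying to the degree-$(k-1)$ basis the same reduction as in the first step (its Lebesgue Gram matrix also has maximal eigenvalue $\asymp h \asymp M^{-1}$) yields the upper bound $CM^2 \cdot M^{-1} = CM$. The same argument, applied to $H^\prime_\ic$ under the bounded density of $v_i$, gives the corresponding bound for $H$.

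The main obstacle, and the only step that is not purely algebraic bookkeeping, is the two-sided Lebesgue-Gram bound $c_k h \le \lambda_{\min}(G) \le \lambda_{\max}(G) \le C_k h$. The upper bound follows quickly from the fact that $G$ is banded with bandwidth at most $k$ and has entries bounded by $h$ (since each $B_j$ is supported on an interval of length $(k+1)h$ and bounded by one). The lower bound is the delicate piece: it is equivalent to the stability of the B-spline basis in $L^2$ and is the classical content of de Boor's inequality, whose constant depends only on the degree $k$. I would either cite this directly or reproduce the short proof via a local argument on each knot interval, showing that the $k$ basis functions supported there form a well-conditioned basis (uniformly in the interval) and combining via the bandedness of $G$.
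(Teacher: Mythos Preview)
Your proposal is correct and takes essentially the same approach as the paper: for \eqref{eq: spline eigen} the paper simply cites \citet[Lemmas E.1 and E.2]{chen2018optimal}, which is exactly the density-sandwich reduction to the Lebesgue Gram matrix followed by the classical B-spline $L^2$-stability bound you spell out; for \eqref{eq: spline deriv eigen} the paper uses the same decomposition via Proposition \ref{prop: SplineDerivative} and the upper eigenvalue bound on the degree-$(k-1)$ Gram matrix, phrased through the inequality $(a-b)(a-b)^\top \preceq 2aa^\top + 2bb^\top$ rather than your equivalent difference-matrix formulation. One small caveat: the Gram matrix $\E(K_\ic K_\ic^\top)$ is not block-diagonal in $\ell$ unless the instruments are independent, so your remark there needs the joint-density bound in Assumption \ref{assu: D v} rather than a block argument, but this does not affect the overall plan.
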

\begin{proof}[Proof of Proposition \ref{prop: spline eigen}]
Equation \eqref{eq: spline eigen} is a direct result of \citet[Lemmas E.1 and E.2]{chen2018optimal}. For \eqref{eq: spline deriv eigen}, we only show the bounds for $H^\prime$ since the same arguments apply to $B^\prime$. Define \[\Delta H_\ic := \left(H_{j}^{[k-1]}(v_i) - H_{j+1}^{[k-1]}(v_i)\right)_{j\in[M]},\] $H^{[k-1]}_\ic := \left(H_{j}^{[k-1]}(v_i)\right)_{j\in[M]}$ and $H^{[k-1]}_{+1,\ic} := \left(H_{j+1}^{[k-1]}(v_i)\right)_{j\in[M]}$. Note that 
\[\begin{aligned}
    \|\E(H_\ic^\prime H_\ic^{\prime\top})\|_2 = M^2\left\|\E\left[\Delta H_\ic^{\ot}\right]\right\|_2 \leq M^2 \left(\left\|\E[H^{(k-1)\ot}_\ic]\right\|_2+\left\|\E[H^{(k-1)\ot}_{+1,\ic}]\right\|_2\right) \lesssim M
\end{aligned}\]
where the first equality applies Proposition \ref{prop: SplineDerivative} and the last inequality applies \eqref{eq: spline eigen}. 
\end{proof}

Below are well-known results about B-spline function approximation that are essential in the theoretical analysis; See \citet[pp. 155, Theorem (26)] {de1978practical} and \citet[Eq.\ (14)]{zhou2000derivative}. The following proposition only shows the spline approximation errors of the function $g(\cdot)$; Similar results hold for other functions $q(\cdot)$ and $\psi_\ell(\cdot)$ and are not listed here for simplicity.  
\begin{Proposition}\label{prop: spline approx power}
Suppose $\gamma\geq 2$ and $L>0$ are fixed. Then for any function $g\in\mathcal{H}(\gamma,L)$, there exists a $\beta\in\mathbb{R}^{M_D}$ such that 
\begin{equation}
\sup_{d\in{\mathcal{D}}}  \left|\sum_{j=1}^{{M}}\beta_jB_{j}(d)-g(d)\right|=O(M^{-\gamma}),
\label{eq: sup norm approximation}
\end{equation}
\begin{equation}
\sup_{d\in{\mathcal{D}}} \left|\sum_{j=1}^{{M}}\beta_jB^\prime_{j}(d)-g^\prime(d)\right|=O(M^{-\gamma + 1}),
\label{eq: sup norm prime approximation}
\end{equation}
The $O(\cdot)$ holds uniformly for all $g\in\mathcal{H}(\gamma,L)$.  
\end{Proposition}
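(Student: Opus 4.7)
The plan is to invoke classical B-spline approximation theory for uniform knots, where the knot spacing is $h \asymp M^{-1}$. The target is to produce a single coefficient vector $\beta \in \mathbb{R}^{M_D}$ that simultaneously delivers the sup-norm bounds in \eqref{eq: sup norm approximation} and \eqref{eq: sup norm prime approximation}. The natural route is via a quasi-interpolation operator rather than an $L_2$ projection, since quasi-interpolants are local, linear in $g$, and reproduce polynomials up to degree $k$ (the spline degree). Provided the spline degree is taken large enough so that $k+1 \geq \gamma$, such an operator $Q$ gives a spline $Qg = \sum_j \beta_j(g) B_j$ with local error controlled by $h^\gamma$ times the Hölder constant of $g^{(\ell)}$, where $\ell = \lfloor \gamma \rfloor$. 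This is exactly de Boor's Theorem (26) on p.~155 of \emph{A Practical Guide to Splines}, which gives \eqref{eq: sup norm approximation} with rate $O(M^{-\gamma})$.

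For the derivative bound, the key observation is that differentiation commutes with the quasi-interpolant in the sense that the spline approximation of $g$ and of $g'$ (by a spline one degree lower) share compatible coefficient structures. Specifically, the Proposition \ref{prop: SplineDerivative} identity $B_j'(d) = M(B_j^{[k-1]}(d) - B_{j+1}^{[k-1]}(d))$ lets us rewrite $\sum_j \beta_j B_j'(d)$ as a spline of degree $k-1$ with coefficients $M(\beta_j - \beta_{j-1})$, which are precisely the standard divided-difference coefficients for approximating $g'$. Because $g \in \mathcal{H}_{\mathcal{D}}(\gamma, L)$ implies $g' \in \mathcal{H}_{\mathcal{D}}(\gamma - 1, L)$, applying de Boor's estimate to this lower-degree spline gives \eqref{eq: sup norm prime approximation} with rate $O(M^{-(\gamma-1)})$. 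This is the route taken in Zhou (2000), Eq.~(14).

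The two steps to execute are then: (i) define $\beta_j = (Qg)_j$ via a fixed de Boor-type quasi-interpolant whose construction uses only values (or divided differences) of $g$ on a bounded neighborhood of the $j$-th knot, so that $\|\beta\|_\infty$ and all estimates depend on $g$ only through its Hölder norm; and (ii) verify the polynomial-reproduction property and apply the standard Taylor remainder argument on each interval $I_m$ to obtain pointwise error $\lesssim L \cdot h^\gamma$ for $g$ and $\lesssim L \cdot h^{\gamma-1}$ for $g'$, with constants independent of the particular $g$ in $\mathcal{H}_{\mathcal{D}}(\gamma, L)$. Taking $\sup_{d \in \mathcal{D}}$ yields the uniform statements.

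The main (minor) obstacle is purely bookkeeping: ensuring that the \emph{same} $\beta$ works for both approximations and that the constants are uniform over the Hölder class. Both properties follow because the quasi-interpolant is a fixed linear operator on $\mathcal{H}_{\mathcal{D}}(\gamma, L)$ and the error bounds from Taylor expansion depend on $g$ only through $L$ and $\gamma$. Because \eqref{eq: sup norm approximation}--\eqref{eq: sup norm prime approximation} are verbatim the statements of the cited de Boor and Zhou results, the cleanest presentation is to state the construction of $\beta$ via the quasi-interpolant, then cite \citet[p.~155, Theorem (26)]{de1978practical} and \citet[Eq.~(14)]{zhou2000derivative} to close the argument.
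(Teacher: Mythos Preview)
Your proposal is correct and takes essentially the same approach as the paper: the paper does not give an independent proof but simply cites \citet[pp.~155, Theorem~(26)]{de1978practical} and \citet[Eq.~(14)]{zhou2000derivative}, which are exactly the references you invoke. Your write-up in fact provides more detail than the paper does, spelling out the quasi-interpolant construction and the derivative-coefficient identity that make those citations work.
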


The following Proposition bounds the errors of LASSO estimators in \eqref{eq: partial penalization for D}. 
\begin{Proposition}
\label{prop: Lasso D}Suppose that Assumptions \ref{assu: control func}-\ref{assu: eigen} hold and $\lambda_D = C_D\sqrt{\dfrac{\logpM}{n}}$ for some $C_D>0$ large enough. There exist some constants $c$ such that with probability at least $1-c(pM)^{-4}$
    \begin{equation}\label{eq: kappa hat error L2 prop p}
    \|\hat\kappa - \kappa\|_2 \lesssim \left(\sqrt{\dfrac{n}{\log (pM)}}M^{-\gamma-0.5} + 1\right)M^{-\gamma+0.5} + \sqrt{\dfrac{M^2\log (pM)}{n}} + \sqrt{\dfrac{sM\log (pM)}{n}}, 
\end{equation}
\begin{equation}\label{eq: hat varphi error L1}
\begin{aligned}
     \|\hat\varphi - \varphi\|_1 &  \leq C\left(\sqrt{\dfrac{n}{\log (pM)}}M^{-2\gamma} + (s+M)\sqrt{\dfrac{\log (pM)}{n}}  \right)
\end{aligned}
\end{equation}
and 
\begin{equation}\label{eq: hat varphi error L2}
\begin{aligned}
     \|\hat\varphi - \varphi\|_2^2 &  \leq C\left( M^{-2\gamma} + (s+M)\cdot\dfrac{\log (pM)}{n}\right).
\end{aligned}
\end{equation}
\end{Proposition}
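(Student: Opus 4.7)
The plan is to treat \eqref{eq: partial penalization for D} as a partially penalized LASSO and exploit the fact that $\kappa$ is unpenalized to first derive an oracle inequality for $\hat\varphi$, then back out the bound for $\hat\kappa$ from the unpenalized first-order condition. Starting from the basic inequality, plugging $D_i = K_\ic^\top \kappa + X_\ic^\top \varphi + r_{\psi i} + v_i$ into the objective, and rearranging gives
\begin{equation*}
\tfrac{1}{n}\|K(\hat\kappa-\kappa)+X(\hat\varphi-\varphi)\|_2^2 \leq \tfrac{2}{n}\bigl[K(\hat\kappa-\kappa)+X(\hat\varphi-\varphi)\bigr]^{\top}(r_\psi+v) + \lambda_D(\|\varphi\|_1-\|\hat\varphi\|_1).
\end{equation*}
I would bound the right-hand cross-term using $\|X^\top v/n\|_\infty \lesssim \sqrt{\log(pM)/n}$ from Proposition \ref{prop: DB}, analogous bounds for $\|K^\top v/n\|_\infty$ (here $K_\ic$ is bounded with norm at most $1$, so Hoeffding applies just as for $X$), and bounds on $\|X^\top r_\psi/n\|_\infty$ and $\|K^\top r_\psi/n\|_\infty$ using the uniform spline approximation error $\sup|r_\psi|=O(M^{-\gamma})$ together with the boundedness of $X_{ij}$ and $K_\ic$. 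Choosing $\lambda_D$ of the prescribed order absorbs all these noise contributions as in the standard argument.

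The next step is to invoke a restricted eigenvalue / compatibility condition on the combined design $Q_\ic = (K_\ic^\top, X_\ic^\top)^\top$. Assumption \ref{assu: eigen} gives the bounded-eigenvalue condition on the standardized $\tilde Q_\ic$, which by the column-scale argument in \citet{bickel2009simultaneous} (together with Proposition \ref{prop: spline eigen} giving $\mathbb{E}(K_\ic K_\ic^\top)$ eigenvalues $\asymp M^{-1}$) transfers to an empirical RE condition on $(K,X)/\sqrt n$ up to the usual deviation $\sqrt{(s+M)\log(pM)/n}$; I would import this uniformly-over-cones bound from an existing LASSO proof. The standard cone argument then produces
\begin{equation*}
\tfrac{1}{n}\|K(\hat\kappa-\kappa)+X(\hat\varphi-\varphi)\|_2^2 + \|\hat\varphi-\varphi\|_2^2 \lesssim M^{-2\gamma} + (s+M)\tfrac{\log(pM)}{n},
\end{equation*}
and $\|\hat\varphi-\varphi\|_1 \lesssim \sqrt{n/\log(pM)}\,M^{-2\gamma} + (s+M)\sqrt{\log(pM)/n}$, giving \eqref{eq: hat varphi error L1} and \eqref{eq: hat varphi error L2}. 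The crucial point is that the extra $M^{-2\gamma}\sqrt{n/\log(pM)}$ contribution to the $\ell_1$ bound comes from needing $\lambda_D$ to dominate the approximation-error term in the KKT conditions only in an $\ell_\infty$ sense.

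For $\hat\kappa$, the unpenalized KKT condition $K^\top(D - K\hat\kappa - X\hat\varphi) = 0$ yields
\begin{equation*}
\hat\kappa - \kappa = (K^\top K)^{-1} K^\top\bigl(X(\varphi-\hat\varphi) + r_\psi + v\bigr).
\end{equation*}
Using $\|(K^\top K/n)^{-1}\|_2 \lesssim M$ by Proposition \ref{prop: spline eigen} plus the same empirical-deviation bound on the Gram matrix, I would bound the three pieces separately: the noise term $\|(K^\top K)^{-1}K^\top v\|_2 \lesssim \sqrt{M^2\log(pM)/n}$ by a sub-Gaussian spectral bound on $K^\top v/n$; the bias term by $\|(K^\top K)^{-1}K^\top r_\psi\|_2 \lesssim M^{-\gamma+0.5}$ using $\|r_\psi\|_\infty = O(M^{-\gamma})$; and the $\varphi$-coupling term via the $\ell_2$ bound already derived, giving $\sqrt{M(s+M)\log(pM)/n} + M^{-\gamma + 0.5}$ (here the extra $\sqrt M$ factor from $\|(K^\top K)^{-1}\|_2^{1/2}$ generates the $\sqrt{sM\log(pM)/n}$ and $M\sqrt{\log(pM)/n}$ terms, while the interaction with the $\ell_1$ bound on $\hat\varphi-\varphi$ gives the first bracket in \eqref{eq: kappa hat error L2 prop p}). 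Summing yields the claimed rate.

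The main obstacle I anticipate is the transfer of the eigenvalue condition on the standardized block $\tilde Q_\ic$ (Assumption \ref{assu: eigen}) to an empirical compatibility condition on $(K,X)$, since the two blocks live on very different scales ($M^{-1/2}$ versus $1$); a careful normalization plus a uniform deviation bound on the Gram matrix using Bernstein-type inequalities with $M\sqrt{\log(pM)/n} = o(1)$ will be needed, but once this is available, the $\ell_1/\ell_2$ LASSO oracle inequalities follow from standard machinery.
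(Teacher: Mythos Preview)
Your overall architecture---basic inequality, restricted-eigenvalue control on the joint design $(K,X)$, then KKT for $\hat\kappa$---is sound, and the KKT route for $\hat\kappa$ is a legitimate (and arguably cleaner) alternative to the paper's joint extraction. But there is a real gap in your handling of the spline approximation error $r_\psi$.

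You write that you would control $\|X^\top r_\psi/n\|_\infty$ and $\|K^\top r_\psi/n\|_\infty$ and then let $\lambda_D$ of order $\sqrt{\log(pM)/n}$ ``absorb all these noise contributions as in the standard argument.'' This fails: $\|X^\top r_\psi/n\|_\infty$ is only of order $M^{-\gamma}$, and under the proposition's hypotheses (Assumptions~\ref{assu: control func}--\ref{assu: eigen}, with no rate condition on $M$) there is no guarantee that $M^{-\gamma}\lesssim\sqrt{\log(pM)/n}$. When $M^{-\gamma}\gg\lambda_D$, the score for $\varphi$ is not dominated by $\lambda_D$ and the usual cone inclusion $\|(\hat\varphi-\varphi)_{\mathcal S_\varphi^c}\|_1\le C\|(\hat\varphi-\varphi)_{\mathcal S_\varphi}\|_1$ simply does not follow. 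Your remark that the extra $\sqrt{n/\log(pM)}\,M^{-2\gamma}$ term ``comes from needing $\lambda_D$ to dominate the approximation-error term only in an $\ell_\infty$ sense'' does not repair this: that term is precisely the $\ell_1$ rate one gets \emph{without} the cone, and you need a mechanism that produces it.

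The paper avoids the $\ell_\infty$ route for $r_\psi$ entirely. It bounds the $r_\psi$ cross term by Cauchy--Schwarz against the prediction norm, $\tfrac{1}{n}r_\psi^\top[K(\hat\kappa-\kappa)+X(\hat\varphi-\varphi)]\le R_{D1}\cdot n^{-1/2}\|K(\hat\kappa-\kappa)+X(\hat\varphi-\varphi)\|_2$ with $R_{D1}\lesssim M^{-\gamma}$, then uses $ab\le a^2+b^2/4$ to push it into an additive $R_{D1}^2$ on the right and absorb the quadratic into the left. This yields a basic inequality with right-hand side $2R_{D1}^2+2R_{D2}\|\hat\kappa-\kappa\|_2+(1+c_0)\lambda_D\|(\hat\varphi-\varphi)_{\mathcal S_\varphi}\|_1$, where $R_{D2}\asymp\sqrt{\log(pM)/n}$ comes from $\|K^\top v\|_2$. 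Because of the $R_{D1}^2$ term, the cone may or may not hold, so the paper splits into two cases: if the $R_{D1}^2$ term dominates the rest of the right side, one reads off $\|\hat\varphi-\varphi\|_1\lesssim R_{D1}^2/\lambda_D=\sqrt{n/\log(pM)}\,M^{-2\gamma}$ and $\|\hat\kappa-\kappa\|_2\lesssim R_{D1}^2/R_{D2}$ directly (this is the source of the first bracket in \eqref{eq: kappa hat error L2 prop p} and of the extra $\ell_1$ term); otherwise the cone $\mathcal C_D$ holds and a restricted-eigenvalue lemma on the scaled design (their Lemma giving $n^{-1}\|K\delta_\kappa+X\delta_\varphi\|_2^2\gtrsim M^{-1}\|\delta_\kappa\|_2^2+\|\delta_\varphi\|_2^2$) closes the argument. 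If you patch your proof with this Cauchy--Schwarz/two-case device for $r_\psi$, the rest of your plan---including the KKT representation $\hat\kappa-\kappa=(K^\top K)^{-1}K^\top[X(\varphi-\hat\varphi)+r_\psi+v]$---goes through and recovers the stated rates.
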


\begin{proof}[Proof of Proposition \ref{prop: Lasso D}] By the definition of  $\hat\varphi$  and $\hat\kappa$, we have  
\begin{equation}
\frac{1}{n}\|D-K\hat\kappa-X\widehat{\varphi}\|_2^2+\lambda_D\|\widehat{\varphi}\|_1\leq \frac{1}{n}\|D-K \kappa-X\varphi\|_2^2+\lambda_D\|\varphi\|_1.
\end{equation}
which implies 
\begin{equation}
\frac{1}{n}\|K(\kappa-\widehat{\kappa})+X(\varphi-\widehat{\varphi})\|_2^2+\lambda_D\|\widehat{\varphi}\|_1\leq \frac{2}{n}\left( r_\psi+v\right)^\top\left[K(\kappa-\widehat{\kappa})+X(\varphi-\widehat{\varphi})\right] +\lambda_D\|{\varphi}\|_1
\label{eq: basic inequality phi}
\end{equation}
given that $r_{\psi}+v=D-K\kappa-X\varphi$ where $r_{\psi i}:=\sum_{\ell=1}^{p_z}r_{\ell}(Z_{i\ell})$.

In the following, we make use of this basic inequality and further establish the convergence rate of the proposed estimator. The proof consists of two steps.\\
\noindent{\bf Step 1: Deduce a more convenient basic inequality.}  Note that 
\begin{equation}
\frac{1}{n}r_\psi^{\top}\left(K(\kappa-\widehat{\kappa})+X({\varphi}-\widehat{\varphi})\right) \leq R_{D1} \frac{1}{\sqrt{n}}\|K(\kappa-\widehat{\kappa})+X({\varphi}-\widehat{\varphi})\|_2,
\end{equation}
where  \begin{equation}
\begin{aligned}
 R_{D1}=\sqrt{\frac{1}{n}\sum_{i=1}^{n}r_{\psi i}^2} &\lesssim M^{-\gamma}.  
\end{aligned}
\label{eq: R1 bound phi}
\end{equation} 
 By the inequality $ab\leq a^2+ \frac{b^2}{4}$, we have 
\begin{equation}
\begin{aligned}
\frac{1}{n}r_\psi^{\top}\left(K(\kappa-\widehat{\kappa})+X({\varphi}-\widehat{\varphi})\right)\leq R_{D1}^2+ \frac{1}{4{n}}\|K(\kappa-\widehat{\kappa})+X({\varphi}-\widehat{\varphi})\|_n^2
\end{aligned}
\label{eq: basic 1 phi}
\end{equation}
Proposition \ref{prop: DB} implies that  $$ \Pr\left(\left\|\frac{1}{n}X^{\top}v\right\|_{\infty}\geq \dfrac{c_0\lambda_D}{2}\right) \leq c(pM)^{-4} $$  for some $0<c_0<1$ and $\lambda_D = C_\lambda\sqrt{M\log p/n}$ for some $C_\lambda$ large enough, and hence we have with probability at least $1 - c(pM)^{-4}$
\begin{equation}
\frac{1}{n}X^{\top}v({\varphi}-\widehat{\varphi})\leq \dfrac{c_0\lambda_D}{2} \|{\varphi}-\widehat{\varphi}\|_1
\label{eq: basic 2 phi}
\end{equation}

By the uniform boundedness of $v_i$, it has a uniformly bounded sub-Gaussian norm (conditional on $Z$). Given that $\{v_i\}_{i\in[n]}$ are still i.i.d. conditional on $Z$, there exist some absolute constants $c$ and $C$ such that for any $t>0$
\[\Pr\left( \left|\sumn K_{i\ell}v_i\right| > t \Bigg|Z \right)\leq C\exp\left(-\dfrac{ct^2}{\sum_{i=1}^n K_{i\ell}^2}\right).\]
Note that $f(x)=\exp(-1/x)$ is concave when $x>1/2$. By Jensen's inequality, 
\[\begin{aligned}
    \E\left(\exp\left(-\dfrac{ct^2}{\sum_{i=1}^n K_{i\ell}^2}\right)\textbf{1}\left[\sum_{i=1}^n K_{i\ell}^2>ct^2/2\right]\right) &= \E\left(\exp\left(-\dfrac{ct^2}{\sum_{i=1}^n K_{i\ell}^2\textbf{1}\left[\sum_{i=1}^n K_{i\ell}^2>ct^2/2\right]}\right)\right) \\
    &\leq \exp\left(-\dfrac{ct^2}{\E\left(\sum_{i=1}^n K_{i\ell}^2\textbf{1}\left[\sum_{i=1}^n K_{i\ell}^2>ct^2/2\right]\right)}\right) \\ 
    &\lesssim C\exp\left(-\dfrac{Mct^2}{n}\right).
\end{aligned}\]
Then 
\[\begin{aligned}
    \Pr\left( \left|\sumn K_{i\ell}v_i\right| > t \right) &\leq \mathbb{E}\left[\Pr\left( \left|\sumn K_{i\ell}v_i\right| > t \Bigg|Z \right)\textbf{1}\left[\sum_{i=1}^n K_{i\ell}^2>ct^2/2\right]\right] + \Pr\left(\sum_{i=1}^n K_{i\ell}^2>ct^2/2\right) \\
    &\leq C\exp\left(-\dfrac{ct^2}{\sum_{i=1}^n \mathbb{E}(K_{i\ell}^2)}\right) + C(pM)^{-4} \leq C\exp\left(-\dfrac{Mct^2}{n}\right) + C(pM)^{-4}.
\end{aligned}\]
Let $t=\sqrt{3n\log (pM)/(Mc)}$. Then by union bound 
\[\Pr\left( \max_{\ell\in[p_zM]}n^{-1}\left|\sumn K_{i\ell}v_i\right| > \sqrt{\dfrac{3}{c}\cdot\dfrac{\log (pM)}{nM}} \right)  \leq C(pM)^{-4}.\]
Then with probability at least $1-C(pM)^{-4}$, 
\begin{equation}\label{eq:eW phi} 
\|K^{\top}v\|_2^2 \lesssim M \|K^\top v\|_\infty^2 \lesssim \sqrt{\log (pM)}. 
\end{equation} 
By the inequality $\frac{1}{n}K^\top v(\kappa-\widehat{\kappa})\leq \frac{1}{n}\|K^\top v\|_2\|\kappa -\widehat{\kappa}\|_2$, we further obtain 
\begin{equation}
\PP\left(\left|\frac{1}{n}K^\top v( \kappa -\widehat{\kappa })\right|\geq  R_{D2} \|\kappa-\widehat{\kappa}\|_2\right)\leq \PP\left( \frac{1}{n}\|K^\top v\|_2\geq R_{D2} \right)\lesssim (pM)^{-4},
\label{eq: basic 3 phi}
\end{equation}
where \begin{equation}
    R_{D2} \asymp \sqrt{\dfrac{\log (pM)}{n}}. 
    \label{eq: R2 bound phi}
\end{equation}

By plugging the upper bounds \eqref{eq: basic 1 phi}, \eqref{eq: basic 2 phi} and   \eqref{eq: basic 3 phi}   into the basic inequality \eqref{eq: basic inequality phi}, we conclude that with probability at least $1-2(pM)^{-4}$,
\begin{equation}
\begin{aligned}
&\frac{1}{2n}\|K(\kappa - \hat\kappa)+X({\varphi}-\widehat{\varphi})\|_n^2+(1-c_0){\lambda_D}\|(\widehat{\varphi}-{\varphi})_{\mathcal{S}_\varphi^c}\|_1 \\ \leq&
2R_{D1}^2+2R_{D2}\|\widehat{\kappa}-\kappa\|_2+(1+c_0)\lambda_D\|(\widehat{\varphi} - \varphi)_{\mathcal{S}_\varphi}\|_1,  
\label{eq: key basic inequality phi}
\end{aligned}
\end{equation}
with $\mathcal{S}_\varphi=\{j\in[p]:\varphi_j\neq 0\}$.

\noindent {\bf Step 2: Establish ``Restricted Eigenvalue" type concentration.} In the following, we establish concentration bounds for $\frac{1}{2n}\|K(\kappa-\widehat{\kappa})+X({\varphi}-\widehat{\varphi})\|_2^2$. 
We first consider the case
\begin{equation}
2R_{D2}\|\kappa-\widehat{\kappa}\|_2+(1+c_0)\lambda_D\|({\varphi}-\widehat{\varphi})_{\mathcal{S}_\varphi}\|_1  \leq c_1 R_{D1}^2.
\label{eq: special case 1 phi}
\end{equation} for some positive constant $c_1>0$.  Then by \eqref{eq: key basic inequality phi} and \eqref{eq: special case 1 phi}
\begin{equation}\label{eq: varphi hat L1 special case 1}
    \|\varphi - \hat\varphi\|_1 = \|(\varphi - \hat\varphi)_{\mathcal{S}_\varphi}\|_1 + \|(\varphi - \hat\varphi)_{\mathcal{S}_\varphi^c}\|_1 \leq \left(\dfrac{2+c_1}{(1-c_0)} + \dfrac{1}{1-c_0} \right) \dfrac{R_{D1}^2}{\lambda_D}. 
\end{equation}
Besides, \eqref{eq: special case 1 phi} also implies that 
\begin{equation}\label{eq: kappa hat L2 special case 1}
    \|\kappa - \hat\kappa\|_2 \lesssim \dfrac{c_1R_{D1}^2}{R_{D2}}.
\end{equation} 
We further provide the following lemma about the $L_2$ norm of the estimation error of $\hat\varphi$. 
\begin{Lemma}\label{lem: L2 varphi hat}Under the conditions for Proposition \ref{prop: Lasso D}, when (\ref{eq: special case 1 phi}) holds, we have with probability at least $1 - c(pM)^{-4}$ such that
\begin{equation}\label{eq: varphi hat L2 special case 1}
    \|\hat\varphi-\varphi\|_2^2 \lesssim R_{D1}^2.
\end{equation}
\end{Lemma}
When \eqref{eq: special case 1 phi} does not hold, then 
\begin{equation}
2R_{D2}\|\widehat{\kappa}-\kappa\|_2+(1+c_0)\lambda_D\|(\widehat{\varphi}-{\varphi})_{\mathcal{S}_\varphi}\|_1  > c_1 R_{D1}^2.
\label{eq: special case 2 phi}
\end{equation}
And \eqref{eq: key basic inequality phi} implies  
\begin{equation}
\begin{aligned}
&\frac{1}{2n}\|K(\kappa-\widehat{\kappa})+X({\varphi}-\widehat{\varphi})\|_n^2+(1-c_0){\lambda_D}\|({\varphi}-\widehat{\varphi})_{\mathcal{S}_\varphi^c}\|_1\\
&\leq  \frac{4+2c_1}{c_1}R_{D2}\|\hat\kappa-\kappa\|_2+\left(1+c_0+\dfrac{2}{c_1}\right)\lambda_D\|(\widehat{\varphi}-{\varphi})_{\mathcal{S}_\varphi}\|_1.   
\end{aligned}
\label{eq: key basic inequality 0 phi}
\end{equation}
In this case, we consider the restricted parameter space, 
\begin{equation}
\label{eq: Set C0 phi}
\mathcal{C}_D=\left\{\delta= \left(\begin{array}{c}
     \widehat{\varphi}-\varphi \\
    \widehat{\kappa}-\kappa 
\end{array}\right): \|(\widehat{\varphi}-{\varphi})_{\mathcal{S}_\varphi^c}\|_1\leq \frac{c_2R_{D2}}{\lambda_D}\|\widehat{\kappa}-\kappa\|_2 +    c_3\|(\widehat{\varphi}-{\varphi})_{\mathcal{S}_\varphi}\|_1\right\}
\end{equation}
where   $c_2=\frac{4+2c_1}{(1-c_0)c_1}$ and  $c_3=\frac{(c_1+c_0c_1+2)}{(1-c_0)c_1}$.

 \begin{Lemma} Suppose the conditions in Theorem \ref{thm: estimation bound} hold. Then w.p.a.1 
\begin{equation}
\sup_{\widehat{\varphi}-\varphi\in \mathcal{C}_D} \frac{\frac{1}{n}\|K(\kappa-\widehat{\kappa})+X({\varphi}-\widehat{\varphi})\|_2^2}{M^{-1}\|\kappa-\widehat{\kappa}\|_2^2+\|{\varphi}-\widehat{\varphi}\|_2^2} \geq 2c,
\label{eq: RE concentration phi}
\end{equation}
 \label{lem: RE concentration phi}
for some universal positive constant $c$. 
 \end{Lemma}

 By Lemma \ref{lem: RE concentration phi}, we have with probability at least $1-c(pM)^{-4}$
 \begin{equation}\label{eq: lower bound K X phi}
     \frac{1}{2n}\|K(\kappa-\widehat{\kappa})+X({\varphi}-\widehat{\varphi})\|_n^2 \geq {cM^{-1}\|\kappa-\widehat{\kappa}\|_2^2+c\|{\varphi}-\widehat{\varphi}\|_2^2}. 
 \end{equation}
 By adding both sides of \eqref{eq: key basic inequality phi} with $(1-c_0)\lambda_D\|(\widehat{\varphi}-{\varphi})_{\mathcal{S}_\varphi}\|_1$, we have 
\begin{equation}
\begin{aligned}
\ &\dfrac{1}{2n}\|K(\kappa-\hat\kappa) + X(\varphi - \hat \varphi)\|_2^2+2(1-c_0)\lambda_D\|\widehat{\varphi}-{\varphi}\|_1  \\ 
\leq  &2R_{D1}^2 + 2R_{D2}\|\widehat{\kappa}-\kappa\|_2 + 2\lambda_D\|(\widehat{\varphi}-{\varphi})_{\mathcal{S}_\varphi}\|_1.
\end{aligned}
\label{eq: basic inequality 2 phi}
\end{equation}
Since $\frac{1}{2c}a^2+\frac{c }{2}b^2\geq ab$ for any $a,b>0$,  we have 
\begin{equation}
2R_{D2}\|\widehat{\kappa}-\kappa\|_2\leq \frac{2M}{c}R_{D2}^2+\frac{c}{2M}\|\widehat{\kappa}-\kappa\|_2^2,
\label{eq: decouple 1 phi}
\end{equation}
and 
\begin{equation}
\begin{aligned}
2\lambda_D\|(\widehat{\varphi}-{\varphi})_{\mathcal{S}_\varphi}\|_1&\leq 2\sqrt{s}\lambda_D \|(\widehat{\varphi}-{\varphi})_{\mathcal{S}_\varphi}\|_2 \\ 
&\leq  \frac{8}{c} {s}\lambda_D^2+\frac{c}{2} \|\widehat{\varphi}-{\varphi}\|_2^2. 
\label{eq: decouple 2 phi}
\end{aligned}
\end{equation}
By plugging \eqref{eq: decouple 1 phi} and \eqref{eq: decouple 2 phi} into \eqref{eq: basic inequality 2 phi}, we have 
\begin{equation}
 \dfrac{1}{4n}\|K(\kappa-\hat\kappa) + X(\varphi - \hat\varphi)\|_2^2+(1-c_0)\lambda_D\|\widehat{\varphi}-{\varphi}\|_1 - \dfrac{c}{2M}\|\hat\kappa-\kappa\|_2^2 - \dfrac{c}{2}\|\hat\varphi - \varphi\|_2^2 \lesssim R_{D1}^2 + MR_{D2}^2 + s\lambda_D^2.
\label{eq: basic inequality 3 phi}
\end{equation}
By \eqref{eq: lower bound K X phi} and \eqref{eq: basic inequality 3 phi}, 
\begin{equation}\label{eq: Lasso D bound final}
    \dfrac{c}{2M}\|\kappa - \hat\kappa\|_2^2 + c\|\hat\varphi - \varphi\|_2^2 + (1-c_0)\lambda_D\|\hat\varphi-\varphi\|_1 \lesssim R_{D1}^2 + M R_{D2}^2 + s\lambda_D^2.
\end{equation} 
Note that the rates of $R_{D1}$, $R_{D2}$ and $\lambda_D$ are specified in (\ref{eq: R1 bound phi}), (\ref{eq: R2 bound phi}) and the statement of Proposition \ref{prop: Lasso D}, respectively. Then (\ref{eq: kappa hat error L2 prop p}) follows (\ref{eq: kappa hat L2 special case 1}) and (\ref{eq: Lasso D bound final}); (\ref{eq: hat varphi error L1}) and (\ref{eq: hat varphi error L2}) follow (\ref{eq: special case 1 phi}), Lemma \ref{lem: L2 varphi hat} and (\ref{eq: Lasso D bound final}).
\end{proof}

\begin{Corollary}
\label{cor: v hat}Recall that $\hat v_i = D_i - K_\ic^\top\hat\kappa - X_\ic^\top\hat\varphi$. Suppose that Assumptions \ref{assu: control func}-\ref{assu: eigen} hold.
 There exist some constants $C$ and $c$ such that with probability at least $1-c(pM)^{-4}$
    \begin{equation}\label{eq: vhat error L2 prop p}
    n^{-1} \sumn \left(\hat v_i - v_i\right)^2 \leq C\left(  \dfrac{(s+M)\log (pM)}{n}+ M^{-2\gamma}\right),
\end{equation}
and
\begin{equation}\label{eq: vhat error}
\begin{aligned}
     \supn\left|\hat v_i - v_i \right| &\leq C\left(\sqrt{\dfrac{n}{\log (pM)}}M^{-2\gamma} + (s+M)\sqrt{\dfrac{\log (pM)}{n}}+ M^{-\gamma}\right).
\end{aligned}
\end{equation}
In addition, when $\hat\kappa$ and $\varphi$ are independent of $D_i,K_\ic,X_\ic$ and $v_i$, 
\begin{equation}\label{eq: conditional Exp vhat}
    \E_{\mathcal{L}}\left(\hat v_i - v_i\right)^2 \leq C\left(   \dfrac{(s+M)\log (pM)}{n}+ M^{-2\gamma}\right)
\end{equation}
with probability at least $1-c(pM)^{-4}$. 
\end{Corollary}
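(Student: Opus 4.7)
The starting point is the algebraic identity
\[
\hat v_i - v_i \;=\; K_{i\cdot}^\top(\kappa-\hat\kappa) \;+\; X_{i\cdot}^\top(\varphi-\hat\varphi) \;+\; r_{\psi i},
\]
obtained by subtracting the spline-expanded model \eqref{eq: d model spline} from the definition $\hat v_i = D_i - K_{i\cdot}^\top\hat\kappa - X_{i\cdot}^\top\hat\varphi$. The plan is to estimate each of these three pieces using the LASSO bounds already contained in Proposition \ref{prop: Lasso D} together with the spline facts collected in Propositions \ref{prop: Spline2norm}--\ref{prop: spline approx power}, and then to combine them via a final union bound over the events underlying those statements.

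For the empirical $L_2$ bound \eqref{eq: vhat error L2 prop p}, I would apply $(a+b)^2 \leq 2a^2 + 2b^2$ with $a = K_{i\cdot}^\top(\kappa-\hat\kappa) + X_{i\cdot}^\top(\varphi-\hat\varphi)$ and $b = r_{\psi i}$, reducing the task to bounding $\tfrac{1}{n}\|K(\kappa-\hat\kappa)+X(\varphi-\hat\varphi)\|_2^2$ and $\tfrac{1}{n}\sum_i r_{\psi i}^2$. The first quantity is in fact delivered mid-proof by \eqref{eq: basic inequality 3 phi}; substituting $R_{D1}^2\lesssim M^{-2\gamma}$ from \eqref{eq: R1 bound phi}, $MR_{D2}^2\lesssim M\log(pM)/n$, and $s\lambda_D^2\lesssim s\log(pM)/n$ produces $M^{-2\gamma}+(s+M)\log(pM)/n$. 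The second is itself $\lesssim M^{-2\gamma}$ by \eqref{eq: R1 bound phi}, and the two contributions combine into the target rate.

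For the sup-norm bound \eqref{eq: vhat error} I would treat the three summands separately. Since $\|K_{i\cdot}\|_2\leq 1$ by Proposition \ref{prop: Spline2norm}, Cauchy--Schwarz gives $|K_{i\cdot}^\top(\kappa-\hat\kappa)|\leq \|\hat\kappa-\kappa\|_2$, controlled through \eqref{eq: kappa hat error L2 prop p}; the uniform boundedness of $X_{ij}$ from Assumption \ref{assu: D v} yields $|X_{i\cdot}^\top(\varphi-\hat\varphi)|\lesssim \|\hat\varphi-\varphi\|_1$, controlled through \eqref{eq: hat varphi error L1}; and $|r_{\psi i}|\leq \sum_\ell\sup_z|r_\ell(z)|\lesssim M^{-\gamma}$ by Proposition \ref{prop: spline approx power}. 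Summing these and using $\sqrt{sM}\leq (s+M)/2$ produces all the terms in \eqref{eq: vhat error}, provided one also absorbs the stray $M^{-\gamma+0.5}$ summand coming from \eqref{eq: kappa hat error L2 prop p} into $M\sqrt{\log(pM)/n}$; this absorption is valid in the regime $M\asymp n^\nu$ of Assumption \ref{assu: asymp}(b), because $M^{-\gamma+0.5}\lesssim M\sqrt{\log(pM)/n}$ amounts to $n\lesssim M^{2\gamma+1}\log(pM)$, which holds whenever $\nu\geq 1/(2\gamma+1)$.

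For the conditional-expectation bound \eqref{eq: conditional Exp vhat}, the independence of $(\hat\kappa,\hat\varphi)$ from $(D_i,K_{i\cdot},X_{i\cdot},v_i)$ permits me to pull the LASSO estimators outside $\E_{\mathcal{L}}$ and write
\[
\E_{\mathcal{L}}(\hat v_i-v_i)^2 \;\leq\; 3(\hat\kappa-\kappa)^\top \E[K_{i\cdot}K_{i\cdot}^\top](\hat\kappa-\kappa) \;+\; 3(\hat\varphi-\varphi)^\top \E[X_{i\cdot}X_{i\cdot}^\top](\hat\varphi-\varphi) \;+\; 3\,\E[r_{\psi i}^2].
\]
Proposition \ref{prop: spline eigen} supplies $\|\E[K_{i\cdot}K_{i\cdot}^\top]\|_2\lesssim M^{-1}$, and Assumption \ref{assu: eigen} together with the compact support of $X$ supplies $\|\E[X_{i\cdot}X_{i\cdot}^\top]\|_2\lesssim 1$; substituting the $L_2$ bounds \eqref{eq: kappa hat error L2 prop p} and \eqref{eq: hat varphi error L2} plus the uniform spline error $\lesssim M^{-2\gamma}$ lands on the target. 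The only non-routine point throughout is the harmless absorption of $M^{-\gamma+0.5}$ noted above; the rest is bookkeeping on top of Proposition \ref{prop: Lasso D}.
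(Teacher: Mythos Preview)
Your argument is correct and, for the sup-norm bound \eqref{eq: vhat error} and the conditional-expectation bound \eqref{eq: conditional Exp vhat}, identical to the paper's. For the empirical $L_2$ bound \eqref{eq: vhat error L2 prop p} you take a slightly different route: you invoke the LASSO \emph{prediction} error $\tfrac{1}{n}\|K(\kappa-\hat\kappa)+X(\varphi-\hat\varphi)\|_2^2$ from inside the proof of Proposition~\ref{prop: Lasso D}, whereas the paper re-derives this quantity from the separate parameter-error bounds via a sample-Gram concentration step (and in doing so silently uses the rate condition $n=o(M^{2\gamma+1})$). Your route is the more direct of the two. Two small remarks: first, display \eqref{eq: basic inequality 3 phi} alone does not bound the prediction error because of the subtracted terms $\tfrac{c}{2M}\|\hat\kappa-\kappa\|_2^2+\tfrac{c}{2}\|\hat\varphi-\varphi\|_2^2$; you must also feed in \eqref{eq: Lasso D bound final} (or the Proposition's conclusions) to absorb those, and handle the complementary case \eqref{eq: special case 1 phi} via \eqref{eq: key basic inequality phi}. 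Second, the absorption of $M^{-\gamma+0.5}$ you flag as ``non-routine'' in fact needs no regime assumption at all: $M^{-\gamma+0.5}$ is the geometric mean of $\sqrt{n/\log(pM)}\,M^{-2\gamma}$ and $M\sqrt{\log(pM)/n}$, hence bounded by their sum via AM--GM, so Assumption~\ref{assu: asymp}(b) is not required for this step.
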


\begin{proof}[Proof of Corollary \ref{cor: v hat}] Note that
\begin{equation}\label{eq: corollary proof vhat error very beginning}
    \begin{aligned}
    n^{-1}\sumn(\hat v_i - v_i)^2 &\lesssim n^{-1}\sumn\left(K_\ic^\top (\hat\kappa-\kappa) + X_\ic^\top(\hat\varphi - \varphi)\right)^2 + n^{-1}\sumn   r_{\psi i}^2\\
    &\lesssim n^{-1}\sumn\left(K_\ic^\top (\hat\kappa-\kappa) + X_\ic^\top(\hat\varphi - \varphi)\right)^2 +M^{-2\gamma},\\
\end{aligned} 
\end{equation} 
and 
\begin{equation}\label{eq: proof vhat L2 begin}
    \begin{aligned}
 &\ \ \ \ n^{-1}\sumn\left(K_\ic^\top (\hat\kappa-\kappa) + X_\ic^\top(\hat\varphi - \varphi)\right)^2 \\
    &\lesssim \|\hat\kappa-\kappa\|_2^2\cdot \left\|\dfrac{\sumn H_\ic H_\ic^\top}{n}-\E\left(H_\ic H_\ic^\top\right)\right\|_2 + \|\hat\varphi-\varphi\|_1^2\left\|\dfrac{\sumn X_\ic X_\ic^\top}{n} - \E(X_\ic X_\ic^\top)\right\|_\infty \\
    &\ \ \ \ + \|\hat\kappa-\kappa\|_2^2\cdot \left\| \E\left(H_\ic H_\ic^\top\right)\right\|_2 + \|\hat\varphi - \varphi\|_2^2 \cdot \left\| \E(X_\ic X_\ic^\top)\right\|_2. 
\end{aligned}
\end{equation}

Following similar arguments for (\ref{eq: BB m EBB}), we can show that  \[\left\|\dfrac{\sumn H_\ic H_\ic^\top}{n}-\E\left(H_\ic H_\ic^\top\right)\right\|_2 \lesssim \sqrt{(nM)^{-1}\log (pM)}\]
with probability at least $1-C(pM)^{-4}$. 
Besides, note that $\|\E(H_\ic H_\ic^\top)\|_2\lesssim M^{-1}$ and $\|\E(X_\ic X_\ic^\top)\|_\infty \lesssim 1$. Together with Propositions \ref{prop: DB} and \ref{prop: Lasso D}, with probability at least $1-c(pM)^{-4}$, 
\begin{equation}\label{eq: Lasso D estimation error proof corollary}
    \begin{aligned}
    &\ \ \ \ n^{-1}\sumn\left(K_\ic^\top (\hat\kappa-\kappa) + X_\ic^\top(\hat\varphi - \varphi)\right)^2 \\
    &\lesssim  \|\hat\kappa - \kappa\|_2^2\cdot\left(\sqrt{\dfrac{\log (pM)}{nM}} + \dfrac{1}{M}\right) + \|\hat\varphi - \varphi\|_1^2 \cdot \sqrt{\dfrac{\log (pM)}{n}} + \|\hat\varphi - \varphi\|_2^2   \\
    &\lesssim \|\hat\kappa - \kappa\|_2^2\cdot\dfrac{1}{M} + \|\hat\varphi - \varphi\|_1^2 \cdot \sqrt{\dfrac{\log (pM)}{n}}  + \|\hat\varphi - \varphi\|_2^2    \\
    & \lesssim \dfrac{n}{\log (pM)}M^{-4\gamma - 1} + \dfrac{(s+M)\log (pM)}{n} \\ &\lesssim   M^{-2\gamma}  + \dfrac{(s+M)\log (pM)}{n},
\end{aligned}
\end{equation} 
 where the last step applies $n = o[M^{2\gamma + 1}]$. Thus by (\ref{eq: corollary proof vhat error very beginning}),
 \[n^{-1}\sumn (\hat v_i - v_i)^2 \lesssim M^{-2\gamma} + \dfrac{(s+M)\log (pM)}{n}.\]
Besides, 
\[\begin{aligned}
    \supn |\hat v_i - v_i| 
&\lesssim \|\hat\kappa-\kappa\|_2\cdot\supn\|K_\ic\|_2 + \|\hat\varphi-\varphi\|_1\cdot \supn\|X_\ic\|_\infty + \supn|r_{\psi i}|\\
&\lesssim \sqrt{\dfrac{n}{\log (pM)}}M^{-2\gamma} + (s+M)\sqrt{\dfrac{\log (pM)}{n}} + M^{-\gamma}.
\end{aligned}\]
Finally,  
\[\begin{aligned}
    \E_{\mathcal{L}}(\hat v_i - v_i)^2 &\lesssim \|\hat\kappa - \kappa\|_2^2\cdot\|\E(H_\ic H_\ic^\top)\|_2 + \|\hat\varphi - \varphi\|_2^2\cdot\|\E(X_\ic X_\ic^\top)\|_2 + \E[r_{\psi i}^2] \\
    &\lesssim \|\hat\kappa - \kappa\|_2^2\cdot M^{-1}+ \|\hat\varphi - \varphi\|_2^2  + M^{-2\gamma}, 
\end{aligned}\]
where the last inequality applies Proposition \ref{prop: spline eigen} and the bounded eigenvalues of $\E(X_\ic X_\ic^\top)$. Then (\ref{eq: conditional Exp vhat}) holds by 
Proposition \ref{prop: Lasso D}. This completes the proof of Corollary \ref{cor: v hat}. 
\end{proof}

\begin{Remark}\label{rem: v hat}
Define \begin{equation} \label{eq: hat v event}
    \hat{\mathcal{V}}_n :=  \left\{\supn\left|\hat v_i - v_i \right| \leq C\left(\sqrt{\dfrac{n}{\log (pM)}}M^{-2\gamma} + (s+M)\dfrac{\log (pM)}{n}+ M^{-\gamma}\right)\right\}.
\end{equation}  
When $M \asymp n^\nu$ with $\nu\in [\frac{1}{2\gamma+1},\frac{1}{2\gamma})$ and $s^2\log p = o(n)$, (\ref{eq: vhat error}) implies $\sup_{i\in[n]}|\hat v_i - v_i| = o(1)$ with probability at least $1-c(pM)^{-4}$. Consequently, $\hat{\mathcal{V}}_n$ implies  $\{\hat v_i\in [a_v-\epsilon_v,b_v+\epsilon_v]\text{ for all }i\in[n]\}$. By Corollary \ref{cor: v hat},
\begin{equation}\label{eq: prob vhat compact}
    \Pr\left[\hat{\mathcal{V}}_n\right] \geq 1-c(pM)^{-4}.
\end{equation}
By Assumption \ref{assu: function}, it is easy to deduce that, when $\hat{\mathcal{V}}_n$ holds,
\[|q(v_i) - q(\hat v_i)| \leq \sup_v|q^\prime(v)|\cdot|v_i - \hat v_i| \lesssim |v_i - \hat v_i|,\]
and likewise, 
\[|q^\prime(v_i) - q^\prime(\hat v_i)| \leq \sup_v|q^{\prime\prime}(v)|\cdot|v_i - \hat v_i| \lesssim |v_i - \hat v_i|.\]
Thus, when $\hat{\mathcal{V}}_n$ holds, 
\begin{equation}\label{eq: sum sq q approx}
 n^{-1}\sumn\left(q(v_i) - q(\hat v_i)\right)^2   + n^{-1}\sumn\left(q^\prime(v_i) - q^\prime(\hat v_i)\right)^2 \lesssim  \dfrac{(s+M)\log (pM)}{n}+ M^{-2\gamma}, 
\end{equation}
and 
\begin{equation}\label{eq: sup q approx}
 \supn\left|q(v_i) - q(\hat v_i)\right|   + \supn\left|q^\prime(v_i) - q^\prime(\hat v_i)\right| \lesssim \sqrt{\dfrac{n}{\log (pM)}}M^{-2\gamma} + (s+M)\sqrt{\dfrac{\log (pM)}{n}} + M^{-\gamma}.
\end{equation}
\end{Remark}
\bigskip 
\begin{Corollary}
\label{cor: H approx}Suppose that Assumptions \ref{assu: control func}-\ref{assu: eigen} hold.
 There exists some constant $c$ such that with probability at least $1-c(pM)^{-4}$
    \begin{equation}\label{eq: vhat error H L2 prop p}
    n^{-1} \sumn \sum_{j=1}^{M} (H_j(v_i)-H_j(\hat v_i))^2 \lesssim M^2\left(   \dfrac{(s+M)\log (pM)}{n}+ M^{-2\gamma}\right),
\end{equation}
and,
\begin{equation}\label{eq: vhat error H prime}
\begin{aligned}
    \supn \sqrt{\sum_{j=1}^M (H_j^\prime (v_i)-H_j^\prime (\hat v_i))^2} \lesssim M^2\left(\sqrt{\dfrac{n}{\log (pM)}}M^{-2\gamma} + \sqrt{\dfrac{(s+M)^2\log (pM)}{n}} + M^{-\gamma}\right). 
\end{aligned}
\end{equation}
Furthermore, when $\hat\kappa$ and $\hat\varphi$ are independent of $D_i,K_\ic,X_\ic$ and $v_i$,  with probability at least $1-c(pM)^{-4}$
\begin{equation}\label{eq: cond exp vhat H}
   \E_{\mathcal{L}}\sum_{j=1}^{M} (H_j(v_i)-H_j(\hat v_i))^2 \lesssim M^2\left(  (s+M)\dfrac{\log (pM)}{n}+ M^{-2\gamma}\right) + \epsilon_n, 
\end{equation}
for some $\epsilon_n = O_p[(pM)^{-2}]$.
\end{Corollary}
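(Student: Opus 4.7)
The plan is to reduce each of the three inequalities in Corollary \ref{cor: H approx} to the corresponding bound on $\hat v_i - v_i$ supplied by Corollary \ref{cor: v hat}, by exploiting the uniform control on the $L_2$ norm of the derivatives of the B-spline vector provided by Proposition \ref{prop: SplineDerivativeL2Norm}. The common backbone is a single pointwise Lipschitz-type bound for the spline vector $H(\cdot)$ and its derivative $H'(\cdot)$, proved once and then specialized to the three different norms appearing in the statement.

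First I would establish the following deterministic bounds. For any two real numbers $x,y$, the fundamental theorem of calculus and Cauchy--Schwarz give
\begin{equation*}
(H_j(x)-H_j(y))^2 \leq |x-y|\int_{\min(x,y)}^{\max(x,y)} (H_j'(t))^2\,dt.
\end{equation*}
Summing over $j$ and using Proposition \ref{prop: SplineDerivativeL2Norm} together with the continuous extension of the splines (which keeps $\sup_t\sum_j (H_j'(t))^2\lesssim M^2$ on all of $\mathbb{R}$) yields
\begin{equation*}
\sum_{j=1}^{M}(H_j(x)-H_j(y))^2 \lesssim M^2(x-y)^2.
\end{equation*}
The identical argument with $(H_j,H_j')$ replaced by $(H_j',H_j'')$ and with the bound $\sup_t\sum_j (H_j''(t))^2 \lesssim M^4$ from Proposition \ref{prop: SplineDerivativeL2Norm} gives $\sum_j (H_j'(x)-H_j'(y))^2\lesssim M^4(x-y)^2$.

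With these pointwise bounds in hand the three inequalities follow directly. For \eqref{eq: vhat error H L2 prop p}, plug in $x=v_i,y=\hat v_i$, sum over $i$ and divide by $n$ to obtain $n^{-1}\sum_i\sum_j(H_j(v_i)-H_j(\hat v_i))^2 \lesssim M^2\cdot n^{-1}\sum_i(v_i-\hat v_i)^2$, and then invoke \eqref{eq: vhat error L2 prop p} of Corollary \ref{cor: v hat}. For \eqref{eq: vhat error H prime}, taking square roots gives $\sqrt{\sum_j(H_j'(v_i)-H_j'(\hat v_i))^2}\lesssim M^2|v_i-\hat v_i|$ pointwise, so the supremum over $i$ reduces to $M^2\sup_i|v_i-\hat v_i|$, which is controlled by \eqref{eq: vhat error}. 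For \eqref{eq: cond exp vhat H}, the independence assumption allows us to take $\mathbb{E}_{\mathcal{L}}$ on the pointwise inequality, giving $\mathbb{E}_{\mathcal{L}}\sum_j(H_j(v_i)-H_j(\hat v_i))^2\lesssim M^2\,\mathbb{E}_{\mathcal{L}}(v_i-\hat v_i)^2$, and \eqref{eq: conditional Exp vhat} finishes the job on the high-probability event of Corollary \ref{cor: v hat}. The $\epsilon_n$ term absorbs the contribution from the complementary event of probability $O((pM)^{-4})$: on it we use the trivial deterministic bound $\sum_j(H_j(v_i)-H_j(\hat v_i))^2\leq 2\|H(v_i)\|_2^2+2\|H(\hat v_i)\|_2^2\lesssim 1$ from Proposition \ref{prop: Spline2norm}, so the error contribution is at most a constant times the indicator of the bad event; a Markov-type argument then yields $\epsilon_n=O_p((pM)^{-2})$.

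The main obstacle is really only the first step: one has to be a little careful that the bounds of Proposition \ref{prop: SplineDerivativeL2Norm}, which are stated on the original compact supports, continue to hold under the continuous extension of the splines to $\mathbb{R}$, so that the Lipschitz-type bound is valid for all possible values of $\hat v_i$ (which may occasionally fall outside the support). Once this is checked, everything else is routine substitution and application of Corollary \ref{cor: v hat}; no new probabilistic machinery is needed beyond what has already been developed for $\hat v_i-v_i$.
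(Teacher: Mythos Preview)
Your proposal is correct and follows essentially the same approach as the paper: reduce each bound to the corresponding estimate on $\hat v_i-v_i$ from Corollary~\ref{cor: v hat} via the pointwise Lipschitz control $\sum_j(H_j(x)-H_j(y))^2\lesssim M^2(x-y)^2$ (and the analogue with $H',H''$) coming from Proposition~\ref{prop: SplineDerivativeL2Norm}. The only cosmetic differences are that the paper uses the mean value theorem in place of your FTC--Cauchy--Schwarz step and explicitly restricts to the event $\hat{\mathcal{V}}_n$ (where $\hat v_i\in[a_v-\epsilon_v,b_v+\epsilon_v]$) before applying Proposition~\ref{prop: SplineDerivativeL2Norm}, which forces the split into $1(\hat{\mathcal{V}}_n)$ and $1(\hat{\mathcal{V}}_n^c)$ in the conditional-expectation bound; your global-extension observation sidesteps that split and is a mild simplification, but the substance is the same.
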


\begin{proof}[Proof of Corollary \ref{cor: H approx}]
By the mean value theorem, we know that for any $i\in[n]$ and $j\in[M]$, there exists a $v_{ij}^*$ between $v_i$ and $\hatv_i$ such that 
\[ |H_j(v_i) - H_j(\hatv_i)| = |H^\prime(v_i^*)|\cdot \left|v_i-\hatv_i\right|.\]
As mentioned in Remark \ref{rem: v hat}, $\hat{\mathcal{V}}_n$ implies  $\hat v_i\in [a_v-\epsilon_v,b_v+\epsilon_v]\text{ for all }i\in[n]$ and hence $\{v_i^*\}_{i\in[n]}$ also fall in this compact interval. 
Consequently, when $\hat{\mathcal{V}}_n$ in (\ref{eq: hat v event}) holds
\begin{equation}\label{eq: bound H m H hat}\begin{aligned}
    \sum_{j=1}^{M}\left(H_j(v_i) - H_j(\hatv_i)\right)^2 &= \sum_{j=1}^{M}\left(v_i-\hatv_i\right)^2\left(H_j^\prime(v_{i}^*)\right)^2 \lesssim M^2(v_i-\hat v_i)^2,\\
\end{aligned}
\end{equation}
where the second inequality applies Proposition \ref{prop: SplineDerivativeL2Norm}. Thus, $\sum_{i=1}^n\sum_{j=1}^{M}\left(H_j(v_i) - H_j(\hatv_i)\right)^2 \lesssim M^2\sum_{i=1}^n\left(v_i-\hatv_i\right)^2$. Similarly, using Proposition \ref{prop: SplineDerivativeL2Norm} for second-order derivatives of splines, 
\[\begin{aligned}
\supn\sum_{j=1}^{M}\left(H_j^\prime(v_i) - H_j^\prime(\hatv_i)\right)^2 &= \supn\sum_{j=1}^{M}\left(v_i-\hatv_i\right)^2\left(H_j^{\prime\prime}(v_{i}^*)\right)^2 \lesssim M^4\left(v_i-\hatv_i\right)^2. 
\end{aligned}\]
Then (\ref{eq: vhat error H L2 prop p}) and (\ref{eq: vhat error H prime}) hold by (\ref{eq: vhat error L2 prop p}) and (\ref{eq: vhat error}) in Corollary \ref{cor: v hat}. As for (\ref{eq: cond exp vhat H}), 
\[\begin{aligned}
    \E_{\mathcal{L}}\sum_{j=1}^{M} (H_j(v_i)-H_j(\hat v_i))^2 &=  \E_{\mathcal{L}}\left[1(\hat{\mathcal{V}}_n)\sum_{j=1}^{M} (H_j(v_i)-H_j(\hat v_i))^2\right] + \E_{\mathcal{L}}\left[1(\hat{\mathcal{V}}_n^c)\sum_{j=1}^{M} (H_j(v_i)-H_j(\hat v_i))^2\right] \\ 
    &\lesssim  M^2 \E_{\mathcal{L}}(\hat v_i - v_i)^2 + M{\rm Pr}(\hat{\mathcal{V}}^c|\mathcal{L}),
\end{aligned}\]
where the first term on the RHS applies the (\ref{eq: bound H m H hat}) when $\hat{\mathcal{V}}_n$ holds, and the second term applies the boundness of $H_j$. It suffices to show 
\begin{equation}\label{eq: bound cond prob}
    {\rm Pr}(\hat{\mathcal{V}}^c|\mathcal{L}) = O_p((pM)^{-4}). 
\end{equation}
By Markov inequality, for any $C>0$,
\[\begin{aligned}
    \Pr\left[ {\rm Pr}(\hat{\mathcal{V}}^c|\mathcal{L}) > C(pM)^{-4} \right] \leq \dfrac{\E\left[{\rm Pr}(\hat{\mathcal{V}}^c|\mathcal{L})\right]}{C(pM)^{-4} } = \dfrac{{\rm Pr}(\hat{\mathcal{V}}^c)}{C(pM)^{-4}} = \dfrac{O(1)}{C},
\end{aligned}\]
and hence $\lim_{C\to\infty}\Pr\left[ {\rm Pr}(\hat{\mathcal{V}}^c|\mathcal{L}) > C(pM)^{-4} \right] = 0$. Then 
\[\begin{aligned}
    \E_{\mathcal{L}}\sum_{j=1}^{M} (H_j(v_i)-H_j(\hat v_i))^2  
    &\lesssim  M^2 \E_{\mathcal{L}}(\hat v_i - v_i)^2 + (pM)^{-2},
\end{aligned}\]
and hence (\ref{eq: cond exp vhat H}) holds by (\ref{eq: conditional Exp vhat}).
\end{proof}

\begin{Corollary}\label{cor: Spline2norm}
    For all $1\leq i\leq n$ we have with probability at least $1-c(pM)^{-4}$, 
    \[ \dfrac{1}{\sqrt{k}} \leq \|\hat H_\ic\|_2 \leq 1.\]
\end{Corollary}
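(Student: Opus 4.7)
The plan is to reduce the statement to a pointwise application of Proposition \ref{prop: Spline2norm}, conditional on a high-probability event that ensures the estimated residuals $\hat v_i$ lie in the extended support $[a_v-\epsilon_v, b_v+\epsilon_v]$ on which the defining spline properties (partition of unity and $k$-sparse support) hold. This lets the structural argument used for $H_{i\cdot}$ transfer verbatim to $\hat H_{i\cdot}$.

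First, I would invoke Remark \ref{rem: v hat}, which gives $\Pr[\hat{\mathcal{V}}_n] \geq 1 - c(pM)^{-4}$, and which further guarantees that on $\hat{\mathcal{V}}_n$ we have $\hat v_i \in [a_v - \epsilon_v, b_v + \epsilon_v]$ for every $i \in [n]$ (this uses Assumption \ref{assu: asymp} so that $\sup_i|\hat v_i - v_i| = o(1)$). Thus, with probability at least $1 - c(pM)^{-4}$, every evaluation point $\hat v_i$ lies in the (extended) support of the B-spline basis $H_1, \dots, H_{M_v}$.

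Next, I would argue that on this event the same two ingredients used in the proof of Proposition \ref{prop: Spline2norm} apply at the point $\hat v_i$: (i) partition of unity, $\sum_{j=1}^{M_v} H_j(\hat v_i) = 1$; and (ii) the $k$-sparse support property, namely $|\{j : H_j(\hat v_i) \neq 0\}| \leq k$. Both of these are pointwise properties of the B-spline basis that hold on the support, and are preserved under the continuous constant extension described at the start of Section \ref{sec: proof}. From (i) and the nonnegativity of the $H_j$, we get $\|\hat H_\ic\|_2 \leq \sum_j H_j(\hat v_i) = 1$; and from (i) combined with (ii) and Cauchy–Schwarz, $1 = \sum_j H_j(\hat v_i) \leq \sqrt{k}\,\|\hat H_\ic\|_2$, yielding $\|\hat H_\ic\|_2 \geq 1/\sqrt{k}$.

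No step of this reduction is really an obstacle; the only item worth being careful about is confirming that on $\hat{\mathcal{V}}_n$ we stay within the domain where partition of unity and $k$-sparse support are guaranteed (this is why the probability bound $1 - c(pM)^{-4}$ appears, inherited from \eqref{eq: prob vhat compact}). Everything else is a direct repetition of the pointwise reasoning in Proposition \ref{prop: Spline2norm}, now applied at $\hat v_i$ rather than at $v_i$.
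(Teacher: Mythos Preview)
Your proposal is correct and follows essentially the same approach as the paper: the paper's proof is the one-liner ``the result holds by the proof of Proposition \ref{prop: Spline2norm} when $\hat{\mathcal{V}}_n$ in \eqref{eq: hat v event} holds,'' and you have simply spelled out the details of that reduction. In particular, your identification of the high-probability event $\hat{\mathcal{V}}_n$ from Remark \ref{rem: v hat} (yielding $\hat v_i\in[a_v-\epsilon_v,b_v+\epsilon_v]$ so that partition of unity and the $k$-sparse support property apply at $\hat v_i$) is exactly what the paper intends.
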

\begin{proof}[Proof of Corollary \ref{cor: Spline2norm}]
The result holds by the proof of Proposition \ref{prop: Spline2norm} when $\hat{\mathcal{V}}_n$ in (\ref{eq: hat v event}) holds. 
\end{proof}

\subsection{Proof of Theorem \ref{thm: estimation bound}}


As a preparatory result for honesty of the confidence band, we will show that the high probability error bounds hold uniformly for all $g\in\mathcal{H}_{\mathcal{D}}(\gamma,L)$. We abbreviate $\inf_{g\in\mathcal{H}_{\mathcal{D}}(\gamma,L)}$ and $\sup_{g\in\mathcal{H}_{\mathcal{D}}(\gamma,L)}$ as $\inf_g$ and $\sup_g$ in the remainder of the proofs. Specifically, we will show 
\begin{equation}
\sup_g \left(\|\widehat{\beta}-\beta\|_2^2 + \|\hat\eta - \eta\|_2^2\right) \lep M^{-2\gamma + 1} + \dfrac{(s+M)M\log (pM)}{n},
\label{eq: sup rate omega}
\end{equation}
\begin{equation}
\sup_g\|\widehat{\theta}-{\theta}\|_1 \lep  M^{-2\gamma}\sqrt{\dfrac{n}{\log p}} + (s+M)\sqrt{\dfrac{\logpM}{n}}. 
\label{eq: sup rate theta}
\end{equation}
Throughout the proof, only $\hat\omega$, $\hat\theta$ and the spline approximation error of function $g$, denoted by $r_g$, are dependent on the function $g$, and thus the uniformity over $g$ needs to be taken care of only for the events dependent on these estimators and approximation errors. 
\par Recall that $\omega=(\beta^\top,\eta^\top)^\top$ and $\hat\omega$ denotes the LASSO estimator. By the definition of LASSO estimators $\hatomega$ and $\widehat{\theta}$, we have
\begin{equation}
\frac{1}{n}\|y-\hatW\hatomega-X\widehat{\theta}\|_2^2+\lambda_Y\|\widehat{\theta}\|_1\leq \frac{1}{n}\|y-\hatW\omega-X{\theta}\|_2^2+\lambda_Y\|{\theta}\|_1,
\end{equation}
which implies 
\begin{equation}
\frac{1}{n}\|\hatW(\omega-\widehat{\omega})+X(\theta-\widehat{\theta})\|_2^2+\lambda_Y\|\widehat{\theta}\|_1\leq \frac{2}{n}\langle r+\varepsilon,\hatW(\omega-\widehat{\omega})+X({\theta}-\widehat{\theta})\rangle +\lambda_Y\|{\theta}\|_1,
\label{eq: basic inequality}
\end{equation}
given that $r+\varepsilon=y-\widehat{W}\omega-X\theta$ where $r=(r_1,r_2,\cdots,r_n)^\top$ with $r_i$ defined in (\ref{eq: def  ri}).

In the following, we use this basic inequality and further establish the convergence rate of the proposed estimator. The proof consists of two steps.\\
\noindent{\bf Step 1: Deduce a more convenient inequality.} We analyze the terms in \eqref{eq: basic inequality} and obtained a more convenient version of \eqref{eq: basic inequality}. Note that 
\begin{equation}
\frac{1}{n}r^{\top}\left(\hatW(\omega-\widehat{\omega})+X({\theta}-\widehat{\theta})\right) \leq R_1 \frac{1}{\sqrt{n}}\|\hatW(\omega-\widehat{\omega})+X({\theta}-\widehat{\theta})\|_2,
\end{equation}
where  \begin{equation}
\begin{aligned}
 R_1=\sqrt{ \supg \frac{1}{n}\sum_{i=1}^{n}r_i^2} &= 
 \sqrt{\dfrac{1}{n}\sum_{i=1}^n\left( \supg r_g(D_i) + r_q(\hatv_i)  + q(v_i) - q(\hatv_i) \right)^2} \\
 & \lesssim \sqrt{\dfrac{1}{n}\sum_{i=1}^n (\supg r_B^2(D_i) + r_H^2(\hatv_i))  + \dfrac{1}{n}\sum_{i=1}^n [q(v_i) - q(\hatv_i)]^2 } \\
 &\lep M^{-\gamma} + \sqrt{\dfrac{(s+M)\logpM}{n}}, 
\end{aligned}
\label{eq: R1 bound}
\end{equation} 
 with probability at least $1-c(pM)^{-4}$, where the last inequality applies Assumption \ref{assu: function},    Proposition \ref{prop: spline approx power}, and Remark \ref{rem: v hat}.  By the inequality $ab\leq a^2+ \frac{b^2}{4}$, we have 
\begin{equation}
\begin{aligned}
\frac{1}{n}r^{\top}\left(\hatW(\omega-\widehat{\omega})+X({\theta}-\widehat{\theta})\right)\leq R_1^2+ \frac{1}{4{n}}\|\hatW(\omega-\widehat{\omega})+X({\theta}-\widehat{\theta})\|_n^2.
\end{aligned}
\label{eq: basic 1}
\end{equation}
Proposition \ref{prop: DB} implies that  $\left\|\frac{1}{n}X^{\top}\varepsilon\right\|_{\infty}\leq \dfrac{c_0\lambda_Y}{2} $ for $0<c_0<1$, and hence we have w.p.a.1 
\begin{equation}
\frac{1}{n}X^{\top}\varepsilon({\theta}-\widehat{\theta})\leq \dfrac{c_0\lambda_Y}{2} \|{\theta}-\widehat{\theta}\|_1
\label{eq: basic 2}
\end{equation}
for any $\hat\theta$. Furthermore,  
\[\begin{aligned}
\begin{aligned}
\E\|B^{\top}\varepsilon\|_2^2=\E\sum_{j=1}^{M}\left(\sum_{i=1}^{n} \varepsilon_i B_{ij}\right)^2&=\sum_{j=1}^{M}\sum_{i=1}^{n} \E (\varepsilon_i^2 B_{ij}^2)\\
&=\sum_{j=1}^{M}\sum_{i=1}^{n} \E (\E(\varepsilon_i^2 B_{ij}^2|X,Z,v)) \\
&=\sum_{j=1}^{M}\sum_{i=1}^{n} \E (B_{ij}^2\E(\varepsilon_i^2 |X,Z,v)) \\
&= \sigma_\varepsilon^2 \sum_{i=1}^{n} \E (\sum_{j=1}^{M}B_{ij}^2) \leq nk\sigma_\varepsilon^2,  
\end{aligned}
\end{aligned}\]
where the last inequality applies Proposition \ref{prop: Spline2norm}. In a similar manner we deduce that 
\[\begin{aligned}
\begin{aligned}
\E\left[\|\widehat{H}^{\top}\varepsilon\|_2^2\right] \leq n k \sigma_\varepsilon^2.
\end{aligned}
\end{aligned}\]
Note that $\|\hatW^{\top}\varepsilon\|_2^2 \leq 2\|B^{\top}\varepsilon\|_2^2 + 2\|\widehat{H}^{\top}\varepsilon\|_2^2$. Then by Markov inequality we deduce that 
\begin{equation}\label{eq:eW} 
\begin{aligned}
\Pr\left(\| \hatW^\top \varepsilon\|_2^2\geq a_n^2 p_z n k \sigma_\varepsilon^2  \right) &\leq \frac{2 \E[\| \hatW^\top \varepsilon\|_2^2]}{ \log (pM)\cdot p_z n k \sigma_\varepsilon^2  } \leq \frac{2}{\log (pM)}
\end{aligned}
\end{equation}
and thus 
\begin{equation}
\Pr\left( \frac{1}{n}\|\hatW^{\top}\varepsilon\|_2\geq R_2 \right)\leq \frac{2}{\log (pM)}\to 0
\label{eq: basic 3}
\end{equation}
where \begin{equation}
    R_2=a_n\sigma_v\sqrt{\dfrac{2k}{n}}\lesssim \sqrt{\dfrac{\log (pM)}{n}}. 
    \label{eq: R2 bound}
\end{equation} 
By (\ref{eq: basic 3}) and the inequality $\frac{1}{n}\hatW^{\top}\varepsilon(\omega-\widehat{\omega})\leq \frac{1}{n}\|\hatW^{\top}\varepsilon\|_2\|\omega-\widehat{\omega}\|_2$, we further obtain w.p.a.1 
\begin{equation}
\left|\frac{1}{n}\hatW^{\top}\varepsilon(\omega-\widehat{\omega})\right| \lesssim R_2 
\label{eq: basic 3-1}
\end{equation}
for any $\hat\omega$. By plugging the upper bounds \eqref{eq: basic 1}, \eqref{eq: basic 2} and   \eqref{eq: basic 3-1}   into the basic inequality \eqref{eq: basic inequality}, we conclude that w.p.a.1, for any $\hat\theta$ and $\hat\omega$ 
\begin{equation}
\begin{aligned}
&\ \ \ \ \frac{1}{2n}\|\hatW(\omega-\widehat{\omega})+X({\theta}-\widehat{\theta})\|_2^2+(1-c_0){\lambda_Y}\|(\widehat{\theta}-{\theta})_{\mathcal{S}^c}\|_1 \\
&\leq  2R_1^2+2R_2\|\widehat{\omega}-\omega\|_2+(1+c_0)\lambda_Y\|(\widehat{\theta}-{\theta})_{\mathcal{S}}\|_1,  
\label{eq: key basic inequality}
\end{aligned}
\end{equation}
with $\mathcal{S}=\{j\in[p]:\theta_j\neq 0\}$.

\noindent {\bf Step 2: Establish restricted eigenvalue-type concentration.} In the following, we establish concentration bounds for $\frac{1}{2n}\|W(\omega-\widehat{\omega})+X({\theta}-\widehat{\theta})\|_2^2$. 
We first consider the case
\begin{equation}
2R_2\|\widehat{\omega}-\omega\|_2+(1+c_0)\lambda_Y\|(\widehat{\theta}-{\theta})_{\mathcal{S}}\|_1  \leq c_1 R_1^2.
\label{eq: special case 1}
\end{equation} for some positive constant $c_1>0$. It follows from \eqref{eq: key basic inequality} that 
\begin{equation}\label{eq: error L2 pl sparse L1}
\frac{1}{2n}\|W(\omega-\widehat{\omega})+X({\theta}-\widehat{\theta})\|_2^2 + (1-c_0){\lambda_Y}\|(\widehat{\theta}-{\theta})_{\mathcal{S}^c}\|_1\leq (2+c_1) R_1^2.
\end{equation}
Together with \eqref{eq: special case 1}, we have
\begin{equation}
\|\widehat{\theta}-{\theta}\|_1=\|(\widehat{\theta}-{\theta})_{\mathcal{S}}\|_1+\|(\widehat{\theta}-{\theta})_{\mathcal{S}^c}\|_1\leq \frac{R_1^2}{\lambda_Y}\left(\frac{2+c_1 }{1-c_0}+\frac{c_1}{1+c_0}\right),
\label{eq: upper for alpha 0}
\end{equation}
and 
\begin{equation}
\|\widehat{\omega}-\omega\|_2\leq \frac{c_1 R_1^2}{R_2}. 
\label{eq: upper for omega 0}
\end{equation}

When \eqref{eq: special case 1} does not hold, then 
\begin{equation}
2R_2\|\widehat{\omega}-\omega\|_2+(1+c_0)\lambda_Y\|(\widehat{\theta}-{\theta})_{\mathcal{S}}\|_1  > c_1 R_1^2.
\label{eq: special case 2}
\end{equation}
And \eqref{eq: key basic inequality} implies  
\begin{equation}
\begin{aligned}
&\frac{1}{2n}\|\hatW(\omega-\widehat{\omega})+X({\theta}-\widehat{\theta})\|_n^2+(1-c_0){\lambda_Y}\|(\widehat{\theta}-{\theta})_\mathcal{N}\|_1\\
&\leq  \frac{4+2c_1}{c_1}R_2\|\hatomega-\omega\|_2+\left(1+c_0+\dfrac{2}{c_1}\right)\lambda_Y\|(\widehat{\theta}-{\theta})_\mathcal{S}\|_1.   
\end{aligned}
\label{eq: key basic inequality 0}
\end{equation}
In this case, we consider the restricted parameter space, 
\begin{equation}
\label{eq: Set C0 }
\mathcal{C}_0=\left\{\delta= \left(\begin{array}{c}
     \widehat{\theta}-\theta \\
    \widehat{\omega}-\omega 
\end{array}\right): \|(\widehat{\theta}-{\theta})_{\mathcal{S}^c}\|_1\leq \frac{c_2R_2}{\lambda_Y}\|\widehat{\omega}-\omega\|_2 +    c_3\|(\widehat{\theta}-{\theta})_\mathcal{S}\|_1\right\},
\end{equation}
where   $c_2=\frac{4+2c_1}{(1-c_0)c_1}$ and  $c_3=\frac{(c_1+c_0c_1+2)}{(1-c_0)c_1}$.
 \begin{Lemma}  Suppose the conditions in Theorem \ref{thm: estimation bound} hold. Then w.p.a.1 
\begin{equation}
\sup_{ \delta \in \mathcal{C}_0} \frac{\frac{1}{n}\|\hatW(\omega-\widehat{\omega})+X({\theta}-\widehat{\theta})\|_2^2}{M^{-1}\|\omega-\widehat{\omega}\|_2^2+\|{\theta}-\widehat{\theta}\|_2^2} \geq 2c
\label{eq: RE concentration}
\end{equation}
 \label{lem: RE concentration}
for some universal positive constant $c$. 
 \end{Lemma}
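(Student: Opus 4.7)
The plan is to reduce the problem to a population-level restricted eigenvalue inequality for the noise-free design $W_\ic = (B_\ic^\top, H_\ic^\top)^\top$ in place of $\hat W_\ic$, and then transfer the bound to the empirical Gram matrix via concentration. Set $\delta_\omega := \omega - \hat\omega$, $\delta_\theta := \theta - \hat\theta$, $\gamma := (\delta_\omega^\top,\delta_\theta^\top)^\top$, and $U_\ic := (B_\ic^\top, H_\ic^\top, X_\ic^\top)^\top$. First I would use $(a-b)^2 \geq \tfrac12 a^2 - b^2$ to obtain
\[\tfrac{1}{n}\|\hat W\delta_\omega + X\delta_\theta\|_2^2 \ \geq \ \tfrac{1}{2n}\|W\delta_\omega + X\delta_\theta\|_2^2 \ - \ \tfrac{1}{n}\|(\hat H - H)(\eta - \hat\eta)\|_2^2,\]
and bound the residual term by $\bigl[\tfrac{1}{n}\|\hat H - H\|_F^2\bigr]\cdot\|\hat\eta-\eta\|_2^2$. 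Corollary \ref{cor: H approx} controls the Frobenius term by $M^2\bigl[(s+M)\log(pM)/n + M^{-2\gamma}\bigr]$, which under $M\asymp n^\nu$ with $\nu<1/4$ and $s = O(n^{1/4-c_0})$ is of order $o(M^{-1})$, so the residual can be absorbed into the population lower bound established next.

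For the population bound, let $D$ be the diagonal matrix with $D_{jj} = \sqrt{\E U_{ij}^2}$. By the uniform-knot construction and Proposition \ref{prop: spline eigen}, $D_{jj}\asymp M^{-1/2}$ on the $B$ and $H$ coordinates, while $D_{jj}\asymp 1$ on the $X$ coordinates by Assumption \ref{assu: eigen}. Combining with the bounded-away-from-zero smallest eigenvalue of $\E[\tilde U_\ic\tilde U_\ic^\top]$ gives
\[\E\bigl[(W_\ic^\top \delta_\omega + X_\ic^\top\delta_\theta)^2\bigr] \ = \ (D\gamma)^\top\E[\tilde U_\ic\tilde U_\ic^\top](D\gamma) \ \gtrsim \ \|D\gamma\|_2^2 \ \gtrsim \ M^{-1}\|\delta_\omega\|_2^2 + \|\delta_\theta\|_2^2.\]
To pass from population to empirical Gram, decompose $\tfrac1n\|W\delta_\omega + X\delta_\theta\|_2^2 - \E[(W_\ic^\top\delta_\omega + X_\ic^\top\delta_\theta)^2] = \gamma^\top\Delta\gamma$ blockwise: for the low-dimensional $W^\top W$ block, matrix Bernstein (splines are bounded by one) gives operator-norm deviation of order $\sqrt{M\log M/n} = o(M^{-1})$; for the $X^\top X$ and $W^\top X$ blocks, the sup-norm bound $\|\Delta\|_\infty \lesssim \sqrt{\log(pM)/n}$ from Proposition \ref{prop: DB}-type arguments, combined with the cone restriction $\|\delta_\theta\|_1 \lesssim \sqrt{s}\|\delta_\theta\|_2 + (R_2/\lambda_Y)\|\delta_\omega\|_2$ encoded in $\mathcal{C}_0$, makes each deviation strictly smaller than its matching population term under the sparsity and rate restrictions in Assumption \ref{assu: asymp}.

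The main obstacle is matching scales across the cone: the $B,H$ coordinates of $\gamma$ carry a natural $M^{-1/2}$ rescaling while the $X$ coordinates do not, and the cone constraints couple the two through both $R_2\|\delta_\omega\|_2$ and $\lambda_Y\|(\delta_\theta)_{\mathcal S}\|_1$, so the cross-block $W^\top X/n$ deviations must be bookkept carefully to avoid inflating the $M^{-1}\|\delta_\omega\|_2^2$ floor coming from the population bound. A secondary technicality is that $\hat H$ depends on the full sample through $\hat\varphi,\hat\kappa$, preventing a clean conditioning argument for Step~1; I would handle this by working on the high-probability event $\hat{\mathcal V}_n$ of Remark \ref{rem: v hat}, on which the spline replacement error reduces to a deterministic bound, and collect all sub-events via a single union bound yielding the claimed w.p.a.1 statement with probability $1 - c(pM)^{-4}$.
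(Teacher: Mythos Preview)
Your proposal is correct and follows the same overall blueprint as the paper: pass to the noise-free design $U=(W,X)$, invoke the population restricted-eigenvalue lower bound from Assumption~\ref{assu: eigen} and Proposition~\ref{prop: spline eigen}, and control the sample-vs-population deviation on the cone $\mathcal C_0$ via a sup-norm bound together with $\|\delta\|_1^2\lesssim M\|\delta_\omega\|_2^2+s\|\delta_\theta\|_2^2$.

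The one genuine point of departure is how you replace $\hat W$ by $W$. The paper writes the exact expansion
\[
\delta^\top\tfrac{\hat U^\top\hat U}{n}\delta \;=\; \delta^\top\tfrac{\E[U^\top U]}{n}\delta \;+\; 2\delta^\top\tfrac{(U-\hat U)^\top U}{n}\delta \;+\; \delta^\top\tfrac{(U-\hat U)^\top(U-\hat U)}{n}\delta \;+\; \delta^\top\tfrac{U^\top U-\E[U^\top U]}{n}\delta
\]
and bounds the cross-term $\Delta_1=2\delta^\top(U-\hat U)^\top U\delta/n$ directly; because $(U-\hat U)^\top U$ contains a $(W-\hat W)^\top X$ block, this forces a Cauchy--Schwarz absorption of a fraction of $(\hat\theta-\theta)^\top\tfrac{X^\top X}{n}(\hat\theta-\theta)$ back into the population lower bound. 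Your $(a+b)^2\geq\tfrac12 a^2-b^2$ shortcut sidesteps this entirely, since the plug-in residual $\tfrac1n\|(\hat H-H)\delta_\eta\|_2^2\leq\tfrac1n\|\hat H-H\|_F^2\,\|\delta_\eta\|_2^2=o(M^{-1})\|\delta_\omega\|_2^2$ involves no $X$-block at all. The price is the harmless factor $\tfrac12$; the gain is that the cross-block bookkeeping you flagged as the ``main obstacle'' essentially disappears for the plug-in part and is needed only in the concentration step, where it is routine. One minor slip: matrix Bernstein for $n^{-1}W^\top W$ gives an operator-norm deviation of order $\sqrt{\log M/(nM)}$, not $\sqrt{M\log M/n}$ (the variance proxy is $\|\E[W_\ic W_\ic^\top]\|_2\asymp M^{-1}$); your stated rate is off by a factor of $M$ but in the conservative direction, so the conclusion is unaffected.
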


By adding both sides with $(1-c_0)\lambda_Y\|(\widehat{\theta}-{\theta})_\mathcal{S}\|_1$ to  \eqref{eq: key basic inequality}, we have 
\begin{equation}
\dfrac{1}{2n}\|\hat W(\omega - \hat\omega) + X(\theta - \hat\theta)\|_2^2+(1-c_0)\lambda_Y\|\widehat{\theta}-{\theta}\|_1\leq 2R_1^2+2R_2\|\widehat{\omega}-\omega\|_2+ 2\lambda_Y\|(\widehat{\theta}-{\theta})_\mathcal{S}\|_1.
\label{eq: basic inequality 2}
\end{equation}
Since $\frac{1}{2(c )}a^2+\frac{c }{2}b^2\geq ab$ for any $a,b>0$,  we have 
\begin{equation}
R_2\|\widehat{\omega}-\omega\|_2\leq \frac{M}{2c} R_2^2+\frac{c}{2M}\|\widehat{\omega}-\omega\|_2^2,
\label{eq: decouple 1}
\end{equation}
and 
\begin{equation}
\begin{aligned}
2\lambda_Y\|(\widehat{\theta}-{\theta})_\mathcal{S}\|_1\leq 2\sqrt{s}\lambda_Y \|(\widehat{\theta}-{\theta})_\mathcal{S}\|_2&\leq \frac{2}{c} {s}\lambda_Y^2+\frac{c}{2} \|(\widehat{\theta}-{\theta})_\mathcal{S}\|_2^2\\
&\leq \frac{2}{c} {s}\lambda_Y^2+\frac{c}{2} \|\widehat{\theta}-{\theta}\|_2^2.
\label{eq: decouple 2}
\end{aligned}
\end{equation}
By Lemma \ref{lem: RE concentration}, 
\begin{equation}\label{eq: compatibility Y}
    \frac{1}{2n}\|\hatW(\omega-\widehat{\omega})+X({\theta}-\widehat{\theta})\|_2^2 \geq {cM^{-1}\|\omega-\widehat{\omega}\|_2^2+c\|{\theta}-\widehat{\theta}\|_2^2}
\end{equation} 
and thus 
\[R_2\|\widehat{\omega}-\omega\|_2 + 2\lambda_Y\|(\widehat{\theta}-{\theta})_\mathcal{S}\|_1 \leq \dfrac{M}{2c}R_2^2 + \dfrac{2}{c}s\lambda_Y^2 + \frac{1}{4n}\|\hatW(\omega-\widehat{\omega})+X({\theta}-\widehat{\theta})\|_2^2.\]
Combining the above with \eqref{eq: basic inequality 2}, we have 
\begin{equation}
\frac{1}{4n}\|\hatW(\omega-\widehat{\omega})+X({\theta}-\widehat{\theta})\|_2^2+(1-c_0)\lambda_Y\|\widehat{\theta}-{\theta}\|_1\leq 2R_1^2+\frac{M}{c}R_2^2 +\frac{2s\lambda_Y^2}{c},
\label{eq: basic inequality 3 with L2 error}
\end{equation}
and using (\ref{eq: compatibility Y}) again,  
\begin{equation}
\dfrac{c}{2M}\|\hat\omega - \omega\|_2^2+\dfrac{c}{2}\|\theta-\hat\theta\|_2^2 + (1-c_0)\lambda_Y\|\widehat{\theta}-{\theta}\|_1\leq 2R_1^2+\frac{M}{c}R_2^2 +\frac{2s\lambda_Y^2}{c}.
\label{eq: basic inequality 3}
\end{equation}
All the inequalities above hold w.p.a.1 for any $\hat\theta$ and $\hat\omega$. Hence 
\begin{equation}
\sup_g\|\widehat{\omega}-\omega\|_2 \lep \sqrt{M}\cdot\sqrt{2R_1^2+ MR_2^2 +  s\lambda_Y^2  },
\label{eq: upper for omega 1}
\end{equation}
and 
\begin{equation}
\sup_g\|\widehat{\theta}-{\theta}\|_1 \lep \frac{1}{ \lambda_Y}\left( R_1^2+ MR_2^2 + s\lambda_Y^2 \right).
\label{eq: upper for theta 1}
\end{equation}
Recall that the bounds of $R_1$ and $R_2$ are given as \eqref{eq: R1 bound} and \eqref{eq: R2 bound}. By combining \eqref{eq: upper for omega 0} and \eqref{eq: upper for omega 1}, we establish \eqref{eq: sup rate omega} and thus \eqref{eq: rate omega}; By combining \eqref{eq: upper for alpha 0} and \eqref{eq: upper for theta 1}, we establish \eqref{eq: sup rate theta} and thus  \eqref{eq: rate theta}.  
\par For \eqref{eq: rate derivative}, note that for any fixed $d$
\[(\hat g^\prime(d) - g^\prime(d) )^2 = ((\hat\beta-\beta)^\top B^\prime(d) - r^\prime_g(d))^2 \leq 2[(\hat\beta-\beta)^\top B^\prime(d)]^2 + 2[r^\prime_g(d)]^2 \]
and $\sup_g [r^\prime_g(d)]^2 \lep M^{-2\gamma+2}$ by Proposition \ref{prop: spline approx power}. Let $f_{\mathcal{D}}(s)$ denote the probability density function of $D_i$. Then \eqref{eq: rate derivative} follows by 
\[\begin{aligned}
\sup_g  \int_{\mathcal{D}}(\hat g^\prime - g^\prime )^2 &\lesssim \sup_g (\hat\beta-\beta)^\top \int_{\mathcal{D}} B^\prime(s)B^\prime(s)^\top {\rm d}s (\hat\beta-\beta) +  M^{-2\gamma+2}\\
&\leq \sup_g c_f^{-1}(\hat\beta-\beta)^\top \int_{\mathcal{D}} f_{\mathcal{D}}(s) B^\prime(s)B^\prime(s)^\top {\rm d}s (\hat\beta-\beta) +  M^{-2\gamma+2} \\
&= c_f^{-1}\sup_g (\hat\beta-\beta)^\top \E[B^\prime(D_i)B^\prime(D_i)^\top](\hat\beta-\beta) + M^{-2\gamma+2} \\
&\lesssim  M \cdot \sup_g\|\hat\beta-\beta\|_2^2 + M^{-2\gamma+2} \\
&\lep \dfrac{M^2(s+M)\logpM}{n} + M^{-2\gamma+2} 
\end{aligned}\]
where the second row applies Assumption \ref{assu: D v}, the fourth row applies Proposition \ref{prop: spline eigen}, and the last row applies \eqref{eq: rate omega}. 

\subsection{Proof of Proposition \ref{prop: feasible}}
\par Starting from this proof, we introduce some new definitions and notations to facilitate the proof of the honesty of the uniform confidence band. For any positive sequences $a_n$ and $b_n$, we use $a_n = o_{u.p.}(b_n)$ to represent the fact that for any $\epsilon>0$, $\supg \Pr\left\{a_n/b_n > \epsilon \right\} \to 0$ as $n\to\infty$. The abbrevation ``u.p.'' means the probability converges uniformly for all functions $g$. Also, $a_n \lepg b_n$ means there exists some absolute constant $C$ such that $\inf_g \Pr\left\{a_n \leq Cb_n\right\} \to 1$; $a_n \gepg b_n$ and means $b_n \lepg a_n$; $a_n \asymppg b_n$ means $a_n \lepg b_n$ and $b_n \lepg a_n$. We will show stronger results that 
\begin{equation}\label{eq: sup feasibility 1}
    \|\hat\Sigma_F\Sigma_{F|\mathcal{L}}^{-1} - I_{p_F}\|_\infty \lepg M\sqrt{\dfrac{\logpM}{n}}
\end{equation}
and 
\begin{equation}\label{eq: sup feasibility 2}
   \|n^{-1/2}\hat F\Sigma_{F|\mathcal{L}}^{-1}\|_\infty \lepg M\sqrt{\dfrac{\logpM}{n}}
\end{equation}
\par Define $q_i^\prime := q^\prime(v_i)$, $\hatqp_i:=\hatqp(\hat v_i)$. Recall that $H$ is the matrix with the $(i,j)$-th element being $H_{ij} = H_j(v_i)$, $\tilde X_{i\cdot} := q_i^\prime X_{i\cdot}$, $\tilde K_{i\cdot} := q_i^\prime K_{i\cdot}$ and $F_{i\cdot} := \left(B_{i\cdot}^\top,H_{i\cdot}^\top,\tilde K_{i\cdot}^\top, X_{i\cdot}^\top, \tilde X_{i\cdot}^\top\right)^\top$. Define $\mathbb{E}_{\mathcal{L}}(\cdot) := \mathbb{E}\left[\cdot|\mathcal{L}\right]$. The (conditional) covariance matrices are denoted as $\Sigma_F := \E(F_\ic F_\ic^\top)$ and $\Sigma_{F|\mathcal{L}}:=\E_{\mathcal{L}}(\hat F_\ic \hat F_\ic^\top )$. 
\par We first state the following Lemma about the eigenvalues of $\Sigma_{F|\mathcal{L}}$. 
\begin{Lemma}\label{lem: cond cov eigen} Under the conditions of Proposition \ref{prop: feasible}, we have $$1 \lep \inf_g \lambda_{\min}(\Sigma_{F|\mathcal{L}}^{-1}) \leq \sup_g \lambda_{\max}(\Sigma_{F|\mathcal{L}}^{-1}) \lep M. $$
\end{Lemma}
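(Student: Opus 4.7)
The plan is to bound the eigenvalues of $\Sigma_{F|\mathcal{L}}$ by first establishing the corresponding bounds for the unconditional Gram matrix $\Sigma_F = \E[F_\ic F_\ic^\top]$ and then controlling the perturbation coming from replacing $F_\ic$ with $\hat F_\ic$. Let $D_F$ denote the diagonal matrix with entries $\sqrt{\E F_{ij}^2}$, so that $D_F^{-1}\Sigma_F D_F^{-1} = \E[F_\ic^{\rm std}(F_\ic^{\rm std})^\top]$ has eigenvalues bounded away from $0$ and $\infty$ by Assumption \ref{assu: more compatibility}. By Proposition \ref{prop: spline eigen}, the boundedness of $q^\prime$, and Assumption \ref{assu: eigen}, the diagonal entries of $D_F$ are of order $M^{-1/2}$ on the spline blocks $B_\ic$, $H_\ic$, $q^\prime(v_i) K_\ic$ and of order $1$ on the covariate blocks $X_\ic$, $q^\prime(v_i) X_\ic$, so $\lambda_{\min}(D_F)^2 \asymp M^{-1}$ and $\lambda_{\max}(D_F)^2 \asymp 1$. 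This already yields $\lambda_{\min}(\Sigma_F) \gtrsim M^{-1}$ and $\lambda_{\max}(\Sigma_F) \lesssim 1$, uniformly over $g\in\mathcal{H}_{\mathcal{D}}(\gamma,L)$.

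Next, write $\hat F_\ic = F_\ic + \Delta_\ic$, where $\Delta_\ic$ vanishes on the $B$ and $X$ blocks and equals $\hat H_\ic - H_\ic$, $(\hat q^\prime(\hat v_i)-q^\prime(v_i))K_\ic$, and $(\hat q^\prime(\hat v_i)-q^\prime(v_i))X_\ic$ on the remaining blocks. Since $F_\ic$ is independent of $\mathcal{L}$,
\[
\Sigma_{F|\mathcal{L}} - \Sigma_F = \E_{\mathcal{L}}\bigl[\Delta_\ic F_\ic^\top + F_\ic \Delta_\ic^\top + \Delta_\ic \Delta_\ic^\top\bigr].
\]
Applying the matrix Cauchy--Schwarz bound $\|\E[UV^\top]\|_2^2 \leq \|\E[UU^\top]\|_2\cdot\|\E[VV^\top]\|_2$ together with $\|\E[UU^\top]\|_2 \leq \E\|U\|_2^2$, the task reduces to showing $\E_{\mathcal{L}}\|D_F^{-1}\Delta_\ic\|_2^2 = o_p(1)$; Weyl's inequality then transfers the eigenvalue bounds from $D_F^{-1}\Sigma_F D_F^{-1}$ to $D_F^{-1}\Sigma_{F|\mathcal{L}} D_F^{-1}$, and inversion delivers the advertised bounds for $\Sigma_{F|\mathcal{L}}^{-1}$.

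For the perturbation bound, on the $H$-block standardization by $\sqrt{M}$ per entry combined with $\|\hat H_\ic - H_\ic\|_2^2 \lesssim M^2(\hat v_i - v_i)^2$ (as derived inside Corollary \ref{cor: H approx}) and the conditional estimate $\E_{\mathcal{L}}(\hat v_i - v_i)^2 \lesssim M^{-2\gamma} + (s+M)\log(pM)/n$ from Corollary \ref{cor: v hat} gives $M^3[M^{-2\gamma}+(s+M)\log(pM)/n]$, which is $o_p(1)$ under Assumption \ref{assu: asym more}. On the $q^\prime K$-block the same standardization and $\|K_\ic\|_2 \leq 1$ reduce matters to controlling $M\cdot\E_{\mathcal{L}}[(\hat q^\prime(\hat v_i)-q^\prime(v_i))^2]$; splitting $\hat q^\prime(\hat v_i)-q^\prime(v_i) = [H^\prime(\hat v_i)^\top(\hat\eta^{\ind}-\eta)-r_q^\prime(\hat v_i)] + [q^\prime(\hat v_i)-q^\prime(v_i)]$, the first bracket is handled via $\lambda_{\max}(\E_{\mathcal{L}} H^\prime(\hat v_i)H^\prime(\hat v_i)^\top) \lesssim M$ together with the $\|\hat\eta^{\ind}-\eta\|_2$ rate from Theorem \ref{thm: estimation bound} applied to the independent split-sample, while the second is Lipschitz in $\hat v_i - v_i$ by Assumption \ref{assu: function}.

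The main obstacle I anticipate is the $q^\prime X$-block, whose standardized contribution is $\E_{\mathcal{L}}[(\hat q^\prime(\hat v_i)-q^\prime(v_i))^2\|D_X^{-1}X_\ic\|_2^2]$ and whose naive bound carries a spurious factor of $p$. The key to removing this factor is that by Assumption \ref{assu: more compatibility} $v_i$ is independent of $X_\ic$, so only the dependence of $\hat v_i - v_i$ on $X_\ic$ through $-X_\ic^\top(\hat\varphi^{\ind}-\varphi)$ couples $(\hat q^\prime(\hat v_i)-q^\prime(v_i))^2$ to $\|X_\ic\|_2^2$. Linearizing this dependence in $\hat v_i - v_i$ and treating the resulting $X_\ic$-quadratic form via the sub-Gaussian norm bound on $X_\ic$ from Assumption \ref{assu: more compatibility}, combined with the $L_2$ LASSO rate for $\hat\varphi^{\ind}-\varphi$ in Proposition \ref{prop: Lasso D}, should allow replacing the $p$ factor by $\lambda_{\max}(\E X_\ic X_\ic^\top) \lesssim 1$, after which the tighter sparsity and bandwidth windows in Assumption \ref{assu: asym more} deliver the required $o_p(1)$ bound. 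Making this decoupling rigorous while tracking uniformity over $g$ is where the bulk of the work lies.
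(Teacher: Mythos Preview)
Your overall architecture matches the paper's: standardize by $D_F$ (the paper calls it $S_F^{1/2}$), use Assumption \ref{assu: more compatibility} to control the eigenvalues of $D_F^{-1}\Sigma_F D_F^{-1}$, and then show $\|D_F^{-1}(\Sigma_{F|\mathcal{L}}-\Sigma_F)D_F^{-1}\|_2=o_p(1)$. The $H$-block and the $q'K$-block go through essentially as you describe; the paper uses the same trace-type bounds there.

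The gap is in your treatment of the $q'X$-block. You reduce to the scalar quantity $\E_{\mathcal{L}}\|D_F^{-1}\Delta_\ic\|_2^2$ via the inequality $\|\E[UU^\top]\|_2\le \E\|U\|_2^2$, and this is exactly what creates the spurious factor of $p$: on that block the quantity is $\E_{\mathcal{L}}\bigl[(\hat q'_i-q'_i)^2\sum_{j}X_{ij}^2/\E X_{ij}^2\bigr]$, and the inner sum is of order $p$ regardless of how small $(\hat q'_i-q'_i)^2$ is. Your proposed fix --- decoupling via the independence of $v_i$ and $X_\ic$ and linearizing in $\hat v_i-v_i$ --- does not remove this factor. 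After linearization you are left with terms of the form $\E_{\mathcal{L}}\bigl[(X_\ic^\top(\hat\varphi^{\rm ind}-\varphi))^2\|X_\ic\|_2^2\bigr]$; sub-Gaussianity controls $\E[(a^\top X_\ic)^k]$ for a \emph{single} direction $a$, but $\|X_\ic\|_2^2$ sums over $p$ directions, so the best you get is of order $\|\hat\varphi^{\rm ind}-\varphi\|_2^2\cdot p$, which need not be $o(1)$ since $p$ may far exceed $n$.

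The paper's resolution is much simpler and avoids the trace step altogether on this block: it bounds the \emph{spectral norm} $\|\E_{\mathcal{L}}[(\hat q'_i-q'_i)^2 X_\ic X_\ic^\top]\|_2$ directly. Writing $(\hat q'_i-q'_i)^2\lesssim \varDelta^q_1+\varDelta^q_2+\varDelta^q_3$ with $\varDelta^q_j=\sup_i\varDelta^q_{ji}$ (a Lipschitz piece, a spline-approximation piece, and a piece controlled by $\|\hat\eta^{\rm ind}-\eta\|_2^2 M^2$), each $\varDelta^q_j$ is a scalar that is $o_p(1)$ uniformly in $i$ and can be pulled outside, leaving $\|\E(X_\ic X_\ic^\top)\|_2\lesssim 1$ rather than $\E\|X_\ic\|_2^2\asymp p$. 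No independence or linearization argument is needed. In short: keep your Cauchy--Schwarz step for the cross term $\E_{\mathcal{L}}[\Delta_\ic F_\ic^\top]$, but for the quadratic term do \emph{not} pass to $\E_{\mathcal{L}}\|D_F^{-1}\Delta_\ic\|_2^2$ on the high-dimensional covariate blocks; bound $\|\E_{\mathcal{L}}[\Delta_\ic\Delta_\ic^\top]\|_2$ block-wise using a uniform-in-$i$ bound on the scalar multiplier instead.
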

 
\par \textbf{Step 1. Show (\ref{eq: sup feasibility 1}).} 
Recall the definitions of sub-Gaussian and sub-exponential norms given in (\ref{eq:def-subG-vec}) and (\ref{eq:def-sube-vec}). 
We first bound the sub-Gaussian norm of $\hat F_\ic$.
Note that for all $\beta\in\mathbb{R}^M$ such that $\|\beta\|_2=1$, 
\[(\beta^\top B_\ic)^2 \leq \|B_\ic\|_2 \leq \sum_{j=1}^MB_{ij}=1.\]
Thus, $\|B_\ic\|_{\psi_2|\mathcal{L}}\lesssim 1$ and similarly, $\|\hat H_\ic\|_{\psi_2|\mathcal{L}}, \|(K_\ell)_\ic\|_{\psi_2|\mathcal{L}}\lesssim 1$. Additionally, 
\[\begin{aligned}
    \supn\left|\hat q^\prime(\hat v_i)\right|  &\leq \|\hat\eta^{\ind}-\eta\|_2\supn\|\hat H_\ic^\prime\|_2 + \supn|\eta^\top\hat H_\ic^\prime - q^\prime(\hat v_i)| + \supn|q^\prime(\hat v_i)| \\ &\lesssim \|\hat\eta^{\ind}-\eta\|_2\cdot M + M^{-\gamma+1} + \sup_{v}|q^\prime(v)| \\
    &\lesssim \|\hat\eta^{\ind}-\eta\|_2\cdot M + 1.
\end{aligned}\] 
Thus, 
\[\begin{aligned}
    \|\hat F_\ic\|_{\psi_2|\mathcal{L}} &\leq   \|B_\ic\|_{\psi_2|\mathcal{L}} +  \|\hat H_\ic\|_{\psi_2|\mathcal{L}} + 
    \supn |\hat q^\prime(\hat v_i)|\sum_{\ell = 1}^{p_z}\sup_{j\in[M]}\|(K_\ell)_\ic\|_{\psi_2|\mathcal{L}}  \  + \left(1+\supn |\hat q^\prime(\hat v_i)|\right)\cdot\|X_\ic\|_{\psi_2} \\ 
    &\lesssim  \|\hat\eta^{\ind}-\eta\|_2\cdot M + 1  
\end{aligned}\]
and thus $\|\Sigma_{F|\mathcal{L}}^{-1}\hat F_\ic\|_{\psi_2|\mathcal{L}} \lesssim \|\Sigma_{F|\mathcal{L}}^{-1}\|_2 \left(\|\hat\eta^{\ind}-\eta\|_2\cdot M + 1\right) $. In the same manner we can also deduce $|\hat F_\ic|\lesssim \|\hat\eta^{\ind}-\eta\|_2\cdot M + 1$. Thus, any coordinate of $ \hat F_\ic \hat F_\ic^\top \Sigma_{F|\mathcal{L}}^{-1}$ then has a sub-Gaussian norm norm (conditionally on $\mathcal{L}$) bounded by $${\rm SG}_{\max} := C\left[\|\hat\eta^{\ind}-\eta\|_2\cdot M + 1\right]^2 \cdot\|\Sigma_{F|\mathcal{L}}^{-1}\|_2$$
where by \eqref{eq: sup rate omega} and Lemma \ref{lem: cond cov eigen} 
\begin{equation}\label{eq: SG max bound}
   \supg {\rm SG}_{\max} \lep (o_p(1) + 1)^2M \lep M. 
\end{equation}
 Then by \citep[Proposition 5.10]{vershynin2010introduction} and union bound, for any $t>0$
\[\begin{aligned}
    &\ \ \ \ \supg {\rm \Pr}\left(\|\hat\Sigma_F\Sigma_{F|\mathcal{L}}^{-1} - I_{p_F}\|_\infty > t\Bigg|\mathcal{L}\right) \\
    &= \supg {\rm \Pr}\left(\max_{j,k\in[p_F]}\left| \left(\dfrac{1}{n}\sumn \left(\hat F_\ic \hat F_\ic^\top \Sigma_{F|\mathcal{L}}^{-1} - \E_{\mathcal{L}}\left(\hat F_\ic \hat F_\ic^\top \Sigma_{F|\mathcal{L}}^{-1}\right)\right)\right)_{jk} \right| > t\Bigg|\mathcal{L}\right) \\ 
    &\leq 2p_F^2 \cdot \exp\left(-cn \cdot \dfrac{t^2}{\supg {\rm SG}_{\max}^2  } \right) \lep 2p_F^2 \cdot \exp\left(-cn \cdot \dfrac{t^2}{ M^2  } \right)
\end{aligned}\]
where the last inequality applies \eqref{eq: SG max bound}.  Taking $t =  \sqrt{3M^2\log p_F / (cn)}$, we have 
\[\begin{aligned}
    \supg {\rm \Pr}\left(\|\hat\Sigma_F\Sigma_{F|\mathcal{L}}^{-1} - I_{p_F}\|_\infty >  \sqrt{\dfrac{3M^2\log p_F}{cn}}\Bigg|\mathcal{L}\right)   
    &\leq 2p_F^2\cdot\exp\left(-3\log p_F\right) \to 0.
\end{aligned}\]
In other words, the supreme of conditional probability, as a random variable uniformly bounded in $[0,1]$, is $o_p(1)$. Thus by the Bounded Convergence Theorem, 
\[ \begin{aligned}
    \supg {\rm \Pr}\left(\|\hat\Sigma_F\Sigma_{F|\mathcal{L}}^{-1} - I_{p_F}\|_\infty >\sqrt{\dfrac{3\cdot M^2\log p_F}{cn}}\right)  
    &\leq   \E\left[\supg {\rm \Pr}\left(\|\hat\Sigma_F\Sigma_{F|\mathcal{L}}^{-1} - I_{p_F}\|_\infty >\sqrt{\dfrac{3\cdot M^2\log p_F}{cn}}\Bigg|\mathcal{L}\right)  \right]\\
    &\to 0
\end{aligned}\]
and thus $\|\hat\Sigma_F\Sigma_{F|\mathcal{L}}^{-1} - I\|_\infty \lepg M\sqrt{\dfrac{\log p_F}{n}}$. 
\par \textbf{Step 2. Show (\ref{eq: sup feasibility 2}).} Recall that conditionally on $\mathcal{L}$, $\hat F_\ic^\top \Sigma_{F|\mathcal{L}}^{-1}$ has sub-Gaussian norm bounded by ${\rm SG}_{\max}^{(1)}:=C\left[\|\hat\eta^{\ind} - \eta\|_2\cdot M + 1\right]\cdot \|\Sigma_{F|\mathcal{L}}^{-1}\|_2$ with 
\[\supg {\rm SG}_{\max}^{(1)} \lep M\]
following the arguments for \eqref{eq: SG max bound}. Then for any $t>0$
\[\begin{aligned}
   \supg {\rm Pr}\left(\|\hat F\Sigma_{F|\mathcal{L}}^{-1}\|_\infty > t\Bigg|\mathcal{L}\right) &= \supg  {\rm Pr}\left(\supn\left\|\Sigma_{F|\mathcal{L}}^{-1}\hat F_\ic \right\|_\infty > t\Bigg|\mathcal{L}\right) \\
   &\leq n\cdot p_F \exp\left(-\dfrac{ct^2}{\supg {\rm SG}_{\max}^{(1)}}\right) \lep n\cdot p_F \exp\left(-\dfrac{ct^2}{M^2}\right).
\end{aligned}\]
Taking $t =2 M\sqrt{\log ( n p_F ) / c}$, by similar arguments we deduce 
\[\begin{aligned}
   \supg {\rm Pr}\left(\|\hat F\Sigma_{F|\mathcal{L}}^{-1}\|_\infty \lep CM\sqrt{\log ( np_F)}\Bigg|\mathcal{L}\right)
   &\leq n\cdot p_F \exp\left( -2\log (np_F) \right)  \to 0.
\end{aligned}\]
Using the Bounded Convergence Theorem again, we deduce 
\[\supg {\rm Pr}\left(\|\hat F\Sigma_{F|\mathcal{L}}^{-1}\|_\infty >  2M\sqrt{\log (np_F) / c}\right) \to 0,\]
and hence $n^{-1/2}\|\hat F\Sigma_{F|\mathcal{L}}^{-1}\|_\infty \lepg M\sqrt{\dfrac{\log (n p_F )}{n}} \lesssim M\sqrt{\dfrac{\log (pM)}{n}}$. 
\subsection{Proof of Proposition \ref{prop: debiased estimator}} 
To prepare for the honesty, we will show the results with $o_p$ and $\asymp_p$ replaced by $o_{u.p.}$ and $\asymppg$.   Recall that $\hat m(d) := \hat\Omega_B B^\prime(d) $. Note that 
\begin{equation}\label{eq: sqrt n g prime}
\begin{aligned}
   &\ \ \ \ \sqrt{n}\left(\tilde{g}^\prime(d) - g^\prime(d)\right)  = \sqrt{n}\left( \sum_{i=1}^{M}\beta_jB_j^\prime(d) - g^\prime(d) \right) + \sqrt{n} B^\prime(d)^\top (\tilde\beta-\beta). 
\end{aligned}
\end{equation}
By Proposition \ref{prop: spline approx power}, 
\[\sup_g \sup_{d\in\mathcal{D}} \left|\sqrt{n}\left( \sum_{i=1}^{M}\beta_jB_j^\prime(d) - g^\prime(d) \right)\right| = O(\sqrt{n}M^{-\gamma+1}).\]
By the definition of the debiased estimator $\tilde\beta$ in \eqref{eq: db beta}, 
\begin{equation*}
    \begin{aligned}
        \sqrt{n} B^\prime(d)^\top (\tilde\beta-\beta) &= \dfrac{\hatm(d)^\top  \sumn \hat F_\ic \varepsilon_i}{\sqrt{n}} + \dfrac{\hatm(d)^\top \sumn \hat F_\ic (\hat\varepsilon_i-\varepsilon_i)}{\sqrt{n}} + \sqrt{n}B^\prime(d)^\top(\hat\beta-\beta) \\
        &= \mathcal{Z}(d) + \varDelta(d), 
    \end{aligned}
\end{equation*}
where \[\mathcal{Z}(d) :=\dfrac{\hatm(d)^\top \sumn \hat F_\ic \varepsilon_i}{\sqrt{n}},\ \varDelta(d) := \dfrac{\hatm(d)^\top \sumn \hat F_\ic (\hat\varepsilon_i-\varepsilon_i)}{\sqrt{n}} + \sqrt{n}B^\prime(d)^\top(\hat\beta-\beta).\]
 We then need to prove the following results:
 \begin{enumerate}[(S1)] 
     \item The scale of $\hat s(d)$   
        \begin{equation}\label{eq: variance scale}
         \hat{s}(d) := \sqrt{\hat m(d)^\top\hat \Sigma_F\hat m(d)} \asymppg M^{1.5}.
     \end{equation} 
      \item $ \sup_{d\in\mathcal{D}}\dfrac{\sqrt{n}M^{-\gamma+1}  + |\varDelta(d)|}{\hat s(d)} \lepg \frac{1}{\log n}$ uniformly for all $d$. The additional $1/\log n$ handles the Gaussian approximation for Theorem \ref{thm: limiting distribution}.
\end{enumerate} 
  \par {\bf\underline{Proof of (S1)}}. By Proposition \ref{prop: feasible}, the vector $j$-th column of $\Sigma_{F|\mathcal{L}}^{-1}$ belongs to the feasible set of the optimization algorithm (\ref{eq: projection direction 1}). Define $\textbf{i}_j$ as the $j$-th standard basis with the $j$-th element being one and others being zero. Then,
\begin{equation}\begin{aligned}\label{eq: upper bound theory}
    \widehat{m}(d)^{\top}\hat\Sigma_F\widehat{m}(d) &= B^\prime(d)^\top \hat\Omega_B\hat\Sigma_F\hat\Omega_B^\top B^\prime(d) \leq \|B^\prime(d)\|_1^2 \cdot \|\hat\Omega_B\hat\Sigma_F\hat\Omega_B^\top\|_\infty \lesssim M^2 \|\hat\Omega_B\hat\Sigma_F\hat\Omega_B^\top\|_\infty\\
    \end{aligned}
\end{equation}
where the last step applies Proposition \ref{prop: SplineDerivativeL2Norm}, and 
\begin{equation}\label{eq: OmegaSigmaOmega sup norm}
\begin{aligned}
  \|\hat\Omega_B\hat\Sigma_F\hat\Omega_B^\top\|_\infty &\leq     \max_{j\in[M]}|\hat \Omega_{Bj}^\top\hat\Sigma_F\hat\Omega_{Bj}| \\
  &\lepg  \max_{j\in[M]}|\textbf{i}_j^\top\Sigma_{F|\mathcal{L}}^{-1}\hat\Sigma_F\Sigma_{F|\mathcal{L}}^{-1}\textbf{i}_j| \\
    &\leq  \max_{j\in[M]}|\textbf{i}_j^\top \Sigma_{F|\mathcal{L}}^{-1}\textbf{i}_j| +   \|\textbf{i}_j\|_1^2 \cdot \|\Sigma_{F|\mathcal{L}}^{-1}\hat\Sigma_F\Sigma_{F|\mathcal{L}}^{-1} - \Sigma_{F|\mathcal{L}}^{-1}\|_\infty \\
      &\lepg   M +   \|\Sigma_{F|\mathcal{L}}^{-1}\hat\Sigma_F\Sigma_{F|\mathcal{L}}^{-1} - \Sigma_{F|\mathcal{L}}^{-1}\|_\infty 
\end{aligned}
\end{equation}
where the first step applies the positive semi-definiteness of $\hat\Omega_B\hat\Sigma_F\hat\Omega_B$, the second step follows by the definition of $\hat\Omega_B$ and the fact that $\Sigma_{F|\mathcal{L}}^{-1}$ is feasible for (\ref{eq: projection direction 1}) with uniformly high probability by \eqref{eq: sup feasibility 1} and \eqref{eq: sup feasibility 2}, and the last step applies Lemma \ref{lem: cond cov eigen}.  
It then suffices to show that $$\|\Sigma_{F|\mathcal{L}}^{-1}\hat \Sigma_F \Sigma_{F|\mathcal{L}}^{-1} - \Sigma_{F|\mathcal{L}}^{-1}\|_\infty = o_{u.p.}(1).$$
We have shown $\|\Sigma_{F|\mathcal{L}}^{-1}\hat F_\ic\|_{\psi_2|\mathcal{L}} \lesssim  \|\Sigma_{F|\mathcal{L}}^{-1}\|_2\cdot (\|\hat\eta^\ind - \eta\|_2\cdot M+1)$ in the proof of (\ref{eq: sup feasibility 1}) for Proposition \ref{prop: feasible}. 
Any coordinate of $ \Sigma_{F|\mathcal{L}}^{-1}\hat F_\ic \hat F_\ic^\top \Sigma_{F|\mathcal{L}}^{-1}$ then has a sub-exponential norm (conditionally on $\mathcal{L}$) bounded by 
${\rm SE}_{\max} = C\left[\|\hat\eta^{\ind}-\eta\|_2\cdot M + 1\right]^2 \cdot\|\Sigma_{F|\mathcal{L}}^{-1}\|_2^2$, and by \eqref{eq: sup rate omega} and Lemma \ref{lem: cond cov eigen}
$$\supg {\rm SE}_{\max} \lep \left(o(1) + 1\right)^2M^2 \lesssim M. $$
Since $\E_{\mathcal{L}}[\Sigma_{F|\mathcal{L}}^{-1}\hat \Sigma_F \Sigma_{F|\mathcal{L}}^{-1} ] = \Sigma_{F|\mathcal{L}}^{-1}$, by \citet[Corollary 5.17]{vershynin2010introduction} for centered sup-exponential variables and the union bound, for any $t>0$
\[\begin{aligned}
    &\ \ \ \ \supg {\rm \Pr}\left(\|\Sigma_{F|\mathcal{L}}^{-1}\hat\Sigma_F\Sigma_{F|\mathcal{L}}^{-1} - \Sigma_{F|\mathcal{L}}^{-1}\|_\infty > t\Bigg|\mathcal{L}\right) \\
    &\leq 2p_F^2 \cdot \exp\left(-cn\cdot\min\left(\dfrac{t^2}{\supg {\rm SE}_{\max}^2},\dfrac{t}{\supg {\rm SE}_{\max}}\right)  \right) \\
    &\lep 2p_F^2 \cdot \exp\left(-cn\cdot\min\left(\dfrac{t^2}{M^4},\dfrac{t}{M^2}\right)  \right)
\end{aligned}\]
Taking $t = 2\sqrt{M^4\log p_F / cn}$ with $C_t$ large enough, we have 
\[\dfrac{t^2}{M^4} \lesssim \dfrac{\log p_F}{n} \to 0\]
with $n$ large enough, implying that $t^2/M^4 \leq t/M^2$.  
Thus,  
\[\begin{aligned}
   \supg {\rm \Pr}\left(\|\Sigma_{F|\mathcal{L}}^{-1}\hat\Sigma_F\Sigma_{F|\mathcal{L}}^{-1} - \Sigma_{F|\mathcal{L}}^{-1}\|_\infty > C_t\sqrt{\dfrac{M^4\log p_F}{n}}\Bigg|\mathcal{L}\right)   
    &\lep 2p_F^2\cdot\exp\left(-4\log p_F\right) \to 0.
\end{aligned}\]
In other words, the supreme of the conditional probability is $o_p(1)$. Thus by the Bounded Convergence Theorem, 
\[ \begin{aligned}
    &\ \ \ \ \supg{\rm \Pr}\left(\|\Sigma_{F|\mathcal{L}}^{-1}\hat\Sigma_F\Sigma_{F|\mathcal{L}}^{-1} - \Sigma_{F|\mathcal{L}}^{-1}\|_\infty > C_t\sqrt{\dfrac{M^4\log p_F}{n}}\right) \\
    &\leq  \E\left[\supg{\rm \Pr}\left(\|\Sigma_{F|\mathcal{L}}^{-1}\hat\Sigma_F\Sigma_{F|\mathcal{L}}^{-1} - \Sigma_{F|\mathcal{L}}^{-1}\|_\infty > C_t\sqrt{\dfrac{M^4\log p_F}{n}}\Bigg|\mathcal{L}\right)  \right] \to 0,
\end{aligned} \]
and thus $\|\Sigma_{F|\mathcal{L}}^{-1}\hat\Sigma_F\Sigma_{F|\mathcal{L}}^{-1} - \Sigma_{F|\mathcal{L}}^{-1}\|_\infty \lepg M^2\sqrt{\dfrac{\log p_F}{n}} \lesssim  M^2\sqrt{\dfrac{\log (pM)}{n}} = o_p(M)$. Then by (\ref{eq: OmegaSigmaOmega sup norm}) we have 
\begin{equation}\label{eq: result OmegaSigmaOmega sup norm}
    \|\hat\Omega_B\hat\Sigma_F\hat\Omega_B^\top\|_\infty \lepg M.
\end{equation} 
Together with (\ref{eq: upper bound theory}), we deduce 
\begin{equation}\label{eq: scale result upper}
    \hat m(d)^\top\hat\Sigma_F\hat m(d) \lepg M^3.
\end{equation}
 \par The proof the other side of the inequality \eqref{eq: variance scale} follows \citet[Lemma 12]{javanmard2014confidence}. We construct the following estimator   
\begin{equation}
\begin{aligned}
{m}^{*}(d)=\arg\min\ &m^{\top}\hat\Sigma_F m\\
\text{subject to}& \left|  B^\prime(d)^\top I_B\hat\Sigma_F m- \|B^\prime(d)\|_2^2\right|\leq \mu\|B^\prime(d)\|_1^2\end{aligned}
\label{eq: proof projection direction}
\end{equation}
where $\mu = \max_{j\in[M]}\mu_j$ and $I_B$ is the first $M$ rows of the $p_F$-dimensional identity matrix. Note that for any $M\times p_F$ matrix $\hat\Omega_{B}  = (\hat\Omega_{Bj})^\top_{j\in[M]}$ belonging to the feasible set of \eqref{eq: projection direction 1}, $\hat\Omega_{B}^\top B^\prime(d)$ belongs to the feasible set of \eqref{eq: proof projection direction} due to the fact that 
\begin{equation}
\begin{aligned}
\left|B^\prime(d)^{\top} I_B \hat\Sigma_F I_B^\top B^\prime(d) -\|B^\prime(d)\|_2^2\right|&=\left|B^\prime(d)^{\top} I_B\left(\hat\Sigma_F \hat\Omega_B- I_B^\top \right) B^\prime(d)\right|\\ 
&\leq \|B^\prime (d)\|_1^2\left\|\hat\Sigma_F\hat\Omega_B- I_B^\top \right\|_{\infty}\\
&\leq\|B^\prime(d)\|_1^2 \mu ,
\end{aligned}
\end{equation}
where the last inequality follows from the feasibility of $\hat\Omega_B$ for  \eqref{eq: projection direction 1}. Recall that $\hat m(d) := \hat\Omega_B B^\prime(d)$. Hence, we have 
\begin{equation}
(m^{*}(d))^{\top}\hat\Sigma_F m^{*}(d)\leq \widehat{m}(d)^{\top}\hat\Sigma_F\widehat{m}(d).
\label{eq: proof lower bound}
\end{equation}
It is thus sufficient to establish a lower bound for $(m^{*}(d))^\top\hat\Sigma_F m^{*}(d)$. Due to the feasibility condition of \eqref{eq: proof projection direction}, we have $-B^\prime(d)^{\top} I_B \hat\Sigma_F m^{*}(d)+\|B^\prime(d)\|_2^2-\mu\|B^\prime(d)\|_1^2\leq 0$ and hence for any $c_0>0$,
\begin{equation}
\begin{aligned}
&(m^{*}(d))^{\top}\hat\Sigma_Fm^{*}(d)\geq (m^{*}(d))^{\top}\hat\Sigma_Fm^{*}(d)+c_0\left(-B^\prime(d)^{\top} I_B\hat\Sigma_Fm^{*}(d)+\|B^\prime(d)\|_2^2-\mu\|B^\prime(d)\|_1^2\right)\\
&\geq \min_{m\in \R^{p_F}}\left(m^{\top}\hat\Sigma_Fm+c_0\left(-B^\prime(d)^{\top} I_B\hat\Sigma_Fm + \|B^\prime(d)\|_2^2-\mu\|B^\prime (d)\|_1^2\right)\right)\\
&\geq -\frac{c_0^2}{4}B^\prime(d)^{\top} I_B\hat\Sigma_F I_B^\top B^\prime(d)+c_0\left(\|B^\prime(d)\|_2^2-\mu\|B^\prime(d)\|_1^2\right)\\
&= -\frac{c_0^2}{4}B^\prime(d)^{\top} \hat\Sigma_B B^\prime(d)+c_0\left(\|B^\prime(d)\|_2^2-\mu\|B^\prime(d)\|_1^2\right)\\
\end{aligned}
\label{eq: theoretical lower bound}
\end{equation}
where $\hat\Sigma_B = n^{-1}B^\top B$. 
By Proposition \ref{prop: SplineDerivativeL2Norm}, $\|B^\prime(d)\|_2 \asymp \|B^\prime(d)\|_1 \asymp M$. Thus, $\|B^\prime(d)\|_2^2-\mu\|B^\prime(d)\|_1^2 \geq (1-C\mu)\|B^\prime(d)\|_2^2\geq c\|B^\prime(d)\|_2^2$
where the last step applies the fact that $\mu = o(1)$. 
Taking maximum of the right hand side of \eqref{eq: theoretical lower bound} over all $c_0>0$, we have 
\begin{equation}\label{eq: m star lower bound}
\begin{aligned}
(m^{*}(d))^{\top}\hat\Sigma_Fm^{*}(d)&\geq\max_{c_0} \left(-\frac{c_0^2}{4}B^\prime(d)^{\top} \hat\Sigma_B B^\prime(d)+c_0\left(\|B^\prime(d)\|_2^2-\mu\|B^\prime(d)\|_1^2\right)\right)\\
&= \frac{\left(\|B^\prime(d)\|_2^2-\mu\|B^\prime(d)\|_1^2\right)^2}{B^\prime(d)^{\top} \hat\Sigma_B B^\prime(d)}\geq \frac{c^2 \|B^\prime(d)\|_2^4}{B^\prime(d)^{\top} \hat\Sigma_B B^\prime(d)}. 
\end{aligned}
\end{equation}
It suffices to find an upper bound of $B^\prime(d)^{\top} \hat\Sigma_B B^\prime(d)$. By (\ref{eq: BB m EBB}), 
\[\|B^\top B - \E(B^\top B)\|_2 \lep \sqrt{n M^{-1}\log (pM)}\]
and $\|\E(B^\top B)\|_2\asymp nM^{-1}$ by Proposition \ref{prop: spline eigen}. Thus,  
\[\begin{aligned}\sup_{d\in\mathcal{D}}\left|\dfrac{B^\prime(d)^\top\left( \hat\Sigma_B - n^{-1}\E(B^\top B)\right)B^\prime(d)}{B^\prime(d)^\top n^{-1}\E(B^\top B)B^\prime(d)}\right| &\lep
\sqrt{\dfrac{\log(pM)}{n}} \to 0
\end{aligned}\]
which implies $B^\prime(d)^\top \hat\Sigma_B B^\prime(d) \asymp_p  B^\prime(d)^\top n^{-1}\E(BB^\top) B^\prime(d) \asymp M^{-1}\|B^\prime(d)\|_2^2$. Then 
\begin{equation}\label{eq: m star lower lower bound}
    \inf_{d\in\mathcal{D}} \dfrac{\|B^\prime(d)\|_2^4}{B^\prime(d)^\top \hat\Sigma_B B^\prime(d)} \gep \inf_{d\in\mathcal{D}} M\|B^\prime(d)\|_2^2 \gtrsim M^3.
\end{equation} 
Then by  \eqref{eq: proof lower bound}, (\ref{eq: m star lower bound}), and \eqref{eq: m star lower lower bound}, 
\begin{equation}\label{eq: scale result lower}
    \inf_g \inf_{d\in\mathcal{D}} \hat m(d)^\top\hat\Sigma_F\hat m(d) \geq  \inf_g \inf_{d\in\mathcal{D}} (m^{*}(d))^\top\hat\Sigma_Fm^{*}(d) \gtrsim_p M^3.
\end{equation}
The proof is completed by combining (\ref{eq: scale result upper}) and (\ref{eq: scale result lower}).  
  \par {\bf\underline{Proof of (S2)}}. 
By (\ref{eq: variance scale}) and the fact that $M = n^\nu$ with $\nu > \frac{1}{2\gamma+1}$ in Assumption \ref{assu: asym more},  we have $\sqrt{n}M^{-\gamma+1} = o(M^{\frac{2\gamma+1}{2}-\gamma+1}) = o(M^{1.5}) = o_p(\hat s(d))$. 
\par We then carefully decompose the residual $\hat\epsilon_i$. Recall that we define $\varDelta^Y_i = \hatW_\ic^\top(\omega-\hat\omega) + X_\ic(\theta-\hat\theta)$ and $\varDelta^v_i = q(v_i) - q(\hat v_i)$ in \eqref{eq: residual begin}. Then 
\begin{equation}\label{eq: resid in proof}
    \hat\epsilon_i - \varepsilon_i = \varDelta^Y_i + q(v_i) - q(\hat v_i) + r_g(D_i) + r_q(\hat v_i). 
\end{equation} 
Using Taylor expansion, 
\[\begin{aligned}
   \varDelta^v_i &= q^\prime(\hat v_i)(v_i - \hat v_i) + \qpp(v_i^*)(v_i - \hat v_i)^2 \\
    &= q^\prime(\hat v_i)(X_\ic^\top(\hat\varphi^\ind - \varphi) + K_\ic(\hat\kappa^\ind - \kappa) ) + q^\prime(\hat v_i)r_{\psi i} + \qpp(v_i^*)(v_i - \hat v_i)^2 \\
\end{aligned}\]
where $r_{\psi i}$ is the spline approximation error in \eqref{eq: d model spline}. Define 
\[\hat\varDelta^v_i = \hat q^\prime(\hat v_i)(X_\ic^\top(\hat\varphi^\ind - \varphi)+K_\ic^\top(\hat\kappa^\ind-\kappa))\]
with $\hat q^\prime(\hat v_i) = H^\prime(\hat v_i)^\top\hat\eta^\ind$. 
We deduce that
\begin{equation*}\label{eq: hat delta v}
\varDelta^v_i = \hat\varDelta^v_i +  q^\prime(\hat v_i)r_{\psi i} + \left(q^\prime(\hatv_i)-\hatqp_i\right)\left[X_{i\cdot}^\top(\hatvarphi^\ind-\varphi)+K_{i\cdot}^\top(\hatkappa^\ind-\kappa)\right] +\qpp(v_i^*)(v_i-\hatv_i)^2.
\end{equation*}
By \eqref{eq: resid in proof}, 
\begin{equation}\label{eq: resid in proof 2}
    \hat\epsilon_i - \varepsilon_i = \tilde r_i + \varDelta^Y_i + \hat\varDelta^v_i. 
\end{equation} 
where $\tilde r_i :=  (\varDelta^v_i - \hat\varDelta^v_i) + r_g(D_i) + r_q(\hat v_i)$ the high order terms in \eqref{eq: resid in proof 2}. \eqref{eq: resid in proof 2} yields a decomposition of the error we need to bound, given as 
\begin{equation}
\begin{aligned}
    \varDelta(d) &= \dfrac{\hat m(d)^\top \sumn \hat F_\ic \tilde r_i}{\sqrt{n}} + \left(\dfrac{\hat m(d)^\top \sumn \hat F_\ic(\varDelta^Y_i + \hat\varDelta^v_i)}{\sqrt{n}} + \sqrt{n}B^\prime(d)^\top(\beta-\hat\beta) \right) \\
    &=: \varDelta_1(d) + \varDelta_2(d).     
\end{aligned}
\end{equation}
\par \underline{We first bound $\varDelta_1(d)$}. Define $\check r_i :=  r_g(D_i) + r_q(\hat v_i) + q^\prime(\hat v_i)r_{\psi i}$ as the spline approximation errors. Then 
\[\tilde r_i = \check r_i + q^{\prime\prime}(v_i^*)(v_i - \hat v_i)^2 + (q^\prime(\hat v_i) - \hat q^\prime(\hat v_i))\left[X_\ic^\top (\hat\varphi^\ind-\varphi ) + K_\ic^\top (\hat\kappa^\ind - \kappa) \right]\]
where 
\[|\varDelta_1(d)| \leq \left|\dfrac{\hatm(d)^\top \sumn \hat F_\ic \check r_i}{\sqrt{n}}\right| + \left|\dfrac{\hatm(d)^\top \sumn \hat F_\ic (\tilde r_i - \check r_i)}{\sqrt{n}}\right| =: \varDelta_{11}(d) + \varDelta_{12}(d).\]
When $\nu\in(\frac{1}{2\gamma},\frac{1}{4.5})$ in Assumption \ref{assu: asym more} holds, by Cauchy-Schwartz inequality
\begin{equation} \label{eq: Delta11 bound}
    |\varDelta_{11}(d)|  \leq \sqrt{\hatm(d)^\top\dfrac{\hat F^\top\hat F}{n}\hatm(d)}\cdot\sqrt{ \sum_{i=1}^n  \check r_i^2} = \hat s(d)\sqrt{ \sumn  \check r_i^2} = o(\hat s(d) / \log n)
\end{equation} 
uniformly for all $d$, where the last equality applies Proposition \ref{prop: spline approx power} and the boundness of the function $q^\prime(\cdot)$, implying that $\sumn  \check r_i^2 
 = O(n\cdot M^{-2\gamma}) = o(1/\log n)$. For $\varDelta_{12}(d)$, again by Cauchy-Schwartz inequality 
  \[|\varDelta_{12}(d)|  \leq  \hat s(d)\sqrt{ \sumn  (\tilde r_i - \check r_i)^2}.\]
  It thus suffices to show the following lemma.
  \begin{Lemma}\label{lem: tilde r} Under the conditions for Theorem \ref{thm: limiting distribution},  $\supg \sqrt{\sumn (\tilde r_i - \check r_i)^2}=o_{u.p.}(1/\log n)$. 
  \end{Lemma}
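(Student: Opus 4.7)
The plan is to split $\tilde r_i-\check r_i$ into its two constituent pieces,
\[
T_{1i}:=q^{\prime\prime}(v_i^*)(v_i-\hat v_i)^2,\qquad
T_{2i}:=\bigl(q^\prime(\hat v_i)-\hatqp(\hat v_i)\bigr)\bigl[X_{i\cdot}^\top(\hatvarphi^{\ind}-\varphi)+K_{i\cdot}^\top(\hatkappa^{\ind}-\kappa)\bigr],
\]
and show $\sumn T_{1i}^2=o_{u.p.}(1/(\log n)^2)$ and $\sumn T_{2i}^2=o_{u.p.}(1/(\log n)^2)$ separately. Throughout, the split-sample independence of $(\hatvarphi^{\ind},\hatkappa^{\ind},\hateta^{\ind})$ from $\mathscr D_n$ in Assumption \ref{assu: indep} will be used freely to condition on the sigma-field $\mathcal L$ and then apply the unconditional LASSO rates to the plugged-in initial estimators.

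For $T_{1i}$, since $q^{\prime\prime}$ is uniformly bounded by Assumption \ref{assu: function}, we have $\sumn T_{1i}^2\lesssim\sumn(v_i-\hat v_i)^4\le\supn(v_i-\hat v_i)^2\cdot\sumn(v_i-\hat v_i)^2$. Inserting the sup-norm bound (\ref{eq: vhat error}) and the sum-of-squares bound (\ref{eq: vhat error L2 prop p}) of Corollary \ref{cor: v hat}, and using Assumption \ref{assu: asym more}(a)--(b) (so that $s=o(n^{2/9})$, $M\asymp n^\nu$ with $\nu\in(1/(2\gamma),1/4.5)$), every term in the product collapses to a power of $n$ strictly smaller than $1/(\log n)^2$; the dominant contribution is of order $n^{-1/3}(\log n)^2$, which is comfortably $o(1/(\log n)^2)$.

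For $T_{2i}$, I would first decompose the derivative error via the spline approximation of $q^\prime$:
\[
q^\prime(\hat v_i)-\hatqp(\hat v_i)=r_q^\prime(\hat v_i)+H^\prime(\hat v_i)^\top(\eta-\hateta^{\ind}),
\]
so that $\supn|q^\prime(\hat v_i)-\hatqp(\hat v_i)|^2\lep M^{-2(\gamma-1)}+M^2\|\hateta^{\ind}-\eta\|_2^2$ by Proposition \ref{prop: SplineDerivativeL2Norm}. I would then rewrite the bracket as $K_\ic^\top(\hatkappa^{\ind}-\kappa)+X_\ic^\top(\hatvarphi^{\ind}-\varphi)=v_i-\hat v_i+r_{\psi i}$, so that
$\sumn\bigl[X_\ic^\top(\hatvarphi^{\ind}-\varphi)+K_\ic^\top(\hatkappa^{\ind}-\kappa)\bigr]^2\lesssim\sumn(v_i-\hat v_i)^2+nM^{-2\gamma}$, which is controlled by Corollary \ref{cor: v hat}. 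Combining these two bounds with a sup-times-sum split, $\sumn T_{2i}^2\le\supn|q^\prime(\hat v_i)-\hatqp(\hat v_i)|^2\cdot\sumn(v_i-\hat v_i+r_{\psi i})^2$, plus, where needed, a complementary split using $\sup$ of the bracket times $\sumn(q^\prime-\hatqp)^2$ (the latter handled via the eigenvalue bound (\ref{eq: spline deriv eigen}) of Proposition \ref{prop: spline eigen} applied to $\|\hateta^{\ind}-\eta\|_2^2$ from (\ref{eq: rate omega}) of Theorem \ref{thm: estimation bound}, and via Corollary \ref{cor: H approx} (\ref{eq: vhat error H prime}) to transfer the bound from $v_i$ to $\hat v_i$).

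The main obstacle will be keeping the arithmetic of rates sharp enough: the Cauchy--Schwarz-style bounds are tight only when all three of (a) $s=o(n^{2/9})$, (b) $\nu>1/(2\gamma)$ (to kill $nM^{-2\gamma+2}$-type residuals coming from spline derivative approximation), and (c) $\nu<1/4.5$ (to prevent the $M^2(s+M)\log(pM)/n$ term in $\|\hateta^{\ind}-\eta\|_2^2$ from blowing up after multiplication by $\sumn(v_i-\hat v_i)^2$) are used simultaneously. These are precisely the rates imposed in Assumption \ref{assu: asym more}, and together with $\gamma>2.25$ (needed for $H^\prime$ to have a valid spline derivative approximation via Proposition \ref{prop: spline approx power}) they ensure every leading power is strictly negative with room to absorb the $\log n$ factor. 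Uniformity in $g\in\mathcal H_{\mathcal D}(\gamma,L)$ is automatic since $\tilde r_i-\check r_i$ and all intermediate bounds above are independent of $g$ (only $\check r_i$ and the initial $\hat\omega$ depended on $g$, and they are absent from $T_{1i}$ and $T_{2i}$).
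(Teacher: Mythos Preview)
Your decomposition into $T_{1i}$ and $T_{2i}$ matches the paper, and your treatment of $T_{1i}$ and of the spline-residual part $r_q^\prime(\hat v_i)$ in $T_{2i}$ is fine. The gap is in the dominant piece of $T_{2i}$, namely
\[
\sumn \bigl|H^\prime(\hat v_i)^\top(\eta-\hateta^{\ind})\bigr|^2\,\bigl[X_\ic^\top(\hatvarphi^{\ind}-\varphi)+K_\ic^\top(\hatkappa^{\ind}-\kappa)\bigr]^2.
\]
Neither of your two Cauchy--Schwarz splits is sharp enough. Under Assumption \ref{assu: asym more} with $s,M\lesssim n^{2/9}$ and $\|\hateta^{\ind}-\eta\|_2^2\lesssim (s+M)M\log(pM)/n$, the sup-times-sum split gives
\[
M^2\|\hateta^{\ind}-\eta\|_2^2\cdot (s+M)\log(pM)\ \lesssim\ \frac{M^3(s+M)^2(\log(pM))^2}{n}\ \asymp\ n^{1/9}(\log n)^2,
\]
and the complementary sum-times-sup split (using $\sumn|H_\ic^{\prime\top}(\eta-\hateta^{\ind})|^2\lesssim nM\|\hateta^{\ind}-\eta\|_2^2$ via \eqref{eq: spline deriv eigen} and $\supn[\text{bracket}]^2\lesssim(s+M)^2\log(pM)/n$) gives $M^2(s+M)^3(\log(pM))^2/n\asymp n^{1/9}(\log n)^2$ as well. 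Both diverge; the required $o((\log n)^{-2})$ is missed by a full factor of $n^{2/9}$.

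The missing idea is the independence of $v_i$ from $(X_\ic,Z_\ic)$ in Assumption \ref{assu: more compatibility}. The paper first transfers $H^\prime(\hat v_i)$ to $H^\prime(v_i)$ (the difference is handled by \eqref{eq: vhat error H prime} and a crude sup bound, which is small enough), and then exploits that conditionally on $\mathcal L$, $|(\hateta^{\ind}-\eta)^\top H^\prime(v_i)|^2$ is a function of $v_i$ alone while the bracket is a function of $(X_\ic,Z_\ic)$ alone. Independence factorizes the conditional expectation:
\[
n\,\E_{\mathcal L}\!\bigl[\,|(\hateta^{\ind}-\eta)^\top H_\ic^\prime|^2\cdot[\text{bracket}]^2\,\bigr]
= n\,\E_{\mathcal L}\!\bigl[|(\hateta^{\ind}-\eta)^\top H_\ic^\prime|^2\bigr]\cdot\E_{\mathcal L}\!\bigl[[\text{bracket}]^2\bigr]
\ \lesssim\ \frac{M^2(s+M)^2(\log(pM))^2}{n}\ \asymp\ n^{-1/9}(\log n)^2,
\]
after which a conditional Markov inequality and bounded convergence give the $o_{u.p.}((\log n)^{-2})$ bound. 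This saves exactly one factor of $M\asymp(s+M)$ relative to either Cauchy--Schwarz split, and is the reason Assumption \ref{assu: more compatibility} is invoked for this lemma. A smaller point: your uniformity claim is slightly off, since $\hateta^{\ind}$ does enter $T_{2i}$ through $\hatqp(\hat v_i)$ and depends on $g$; uniformity must come from the $\sup_g$ LASSO bound \eqref{eq: sup rate omega}, not from $g$-independence of the expression.
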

 \par \underline{It remains to bound $\varDelta_2(d)$}. Define 
 \[\pi = (\beta^\top,\eta^\top,\theta^\top,-\kappa^\top,-\varphi^\top)^\top,\]
 \[\hat\pi = (\hat\beta^\top,\hat\eta^{\ind\top},\hat\theta^\top,-\hat\kappa^{\ind\top},-\hat\varphi^{\ind\top})^\top,\]
 and recall the definition of $\hat F_\ic$ in \eqref{eq: def F hat i}. Thus, \[\varDelta^Y_i+\hat\varDelta^v_i = \hat F_\ic^\top (\pi - \hat \pi).\] 
 We use $I_B$ to denote the first $M_D$ rows of the $p_F$ demensional identify matrix. Then uniformly for all $d$,
 \[\begin{aligned}
     |\varDelta_2(d)| &= \left|\sqrt{n}B^\prime(d)^\top\hat\Omega_B \sumn\hat F_\ic \hat F_\ic^\top(\pi-\hat\pi)  - B^\prime(d)^\top I_B(\pi-\hat\pi)\right| = \left|\sqrt{n}B^\prime(d)^\top(\hat\Omega_B\hat\Sigma_F - I_B)(\pi-\hat\pi)\right|. \\
   \sup_{d\in\mathcal{D}}|\varDelta_2(d)|  &\leq \sup_{d\in\mathcal{D}}\sqrt{n}\cdot \|B^\prime(d)\|_1 \cdot \|\hat\Omega_B\hat\Sigma_F - I_B\|_\infty \cdot \|\hat\pi-\pi\|_1 \\
     &\lepg \sqrt{n}\cdot M^2 \sqrt{\dfrac{\log (pM)}{n}} \cdot \|\hat\pi-\pi\|_1\\
 \end{aligned}\]
 where the last step applies the bound of $\|B^\prime(d)\|_1$ by Proposition \ref{prop: SplineDerivativeL2Norm}, and the first restriction of \eqref{eq: projection direction 1}.  By \eqref{eq: sup rate omega} and Proposition \ref{prop: Lasso D}, we can deduce that under the conditions for Theorem \ref{thm: limiting distribution}, 
 \[\supg\|\beta - \hat\beta\|_2 + \supg\|\eta-\hat\eta\|_2 + \|\kappa - \hat\kappa^\ind\|_2 = o_p\left(\frac{1}{M\sqrt{\log (pM) }\log n }\right)\]
 and by \eqref{eq: sup rate theta}
  \[\supg\|\theta - \hat\theta\|_1 + \|\varphi-\hat\varphi^\ind\|_1  = o_p\left(\frac{1}{\sqrt{M \log (pM) } \log n}\right).\]
Thus, 
\[\begin{aligned}
  \supg\|\hat\pi-\pi\|_1 &\leq  \sqrt{M}\supg\|\beta-\hatbeta\|_2 + \sqrt{M}\supg\|\eta-\hateta\|_2 + \sqrt{p_zM}\|\kappa - \hat\kappa^\ind\|_2 + \supg\|\theta-\hattheta\|_1 + \|\varphi - \hat\varphi^\ind\|_1 \\
  &=o_p\left(\frac{1}{\sqrt{M \log (pM) } \log n}\right),
\end{aligned}\]
which implies 
\[ \sup_{d\in\mathcal{D}}|\varDelta_2(d)|\lepg \sqrt{n}\cdot M^2 \sqrt{\dfrac{\log (pM)}{n}} \cdot o_{u.p.}\left(\frac{1}{\sqrt{M \log (pM) } \log n}\right) = o_{u.p.}\left(\frac{1}{  \log n}\right) \]
Then, $$\sup_{d\in\mathcal{D}} \frac{\sup_{d}|\varDelta_2(d)|}{\hat s(d)} = o_p(\frac{M^{1.5}}{\hat s(d)\log n}) = o_{u.p.}(1/\log n).$$ We complete the proof of Proposition \ref{prop: debiased estimator}. 
\subsection{Proof of Theorem \ref{thm: limiting distribution}}
It suffices to prove the following three results. 
\begin{enumerate}[(R1)]
    \item  $\supg \left|\hat\sigma_\varepsilon/\sigma_\varepsilon - 1\right| \lep \frac{1}{n^{1/4}}$.  This result implies $\inf_g \hat\sigma_\varepsilon \gtrsim_p 1$, and together with (S2) in the proof of Proposition \ref{prop: debiased estimator} implies that \[\sup_{d\in\mathcal{D}}\left|\dfrac{\sqrt{n}\left(\tildegp(d) - \gp(d)\right) - \mathcal{Z}(d)}{\hat\sigma_\varepsilon\cdot \hat s(d)}\right| \lepg \frac{1}{\log n}.\] 
    \item  Show $\sup_{d\in\mathcal{D}}|\mathcal{Z}(d)/\hat s(d)| \lepg n^{1/4} / \log n$. This result together with the error bound for $\hat\sigma_\varepsilon$ in (R1) implies 
\begin{equation}\label{eq: bound H and tilde H}
  \sup_{d}\left|\mathbb{H}(d) - \tilde{\mathbb{H}}(d)\right| = o_{u.p.}\left(\dfrac{1}{\sqrt{\log n}}\right) 
\end{equation}
where $\mathbb{H}(d) := \dfrac{ \sqrt{n}\left(\tildegp(d) - \gp(d)\right)} {\hat\sigma_\varepsilon\cdot \hat s(d)}$ and $\tilde{\mathbb{H}}(d) := \dfrac{\mathcal{Z}(d)}{  \sigma_\varepsilon\cdot \hat s(d)}$.
 \item  There exists a version of Gaussian process $\mathbb{H}^{(1)}(d)$  such that 
     \begin{equation}
         \left|\sup_{d}|\tilde{\mathbb{H}}(d)| - \sup_{d}|\mathbb{H}^{(1)}(d)|\right| \lepg \frac{1}{\log n}
    \end{equation}
     with 
    \[\mathbb{H}^{(1)}(d) := (\sigma_\varepsilon^2 n)^{-1/2}\sumn B^\prime(d)^\top  \hat\Omega_B \hat F_\ic e^{(1)}_i\]
    where $e^{(1)}_i$ are i.i.d.\ $N(0,\sigma_\varepsilon^2)$ variables.
    Together with (\ref{eq: bound H and tilde H}) in (R2), we can deduce that 
      \begin{equation}\label{eq: H m H1 target}
         \left|\sup_{d}|{\mathbb{H}}(d)| - \sup_{d}|\mathbb{H}^{(1)}(d)|\right| \lepg \frac{1}{\log n}.
    \end{equation}
     We will also show
     \begin{equation}\label{eq: exp sup upper bound root log n}
         \supg \E\left[\sup_{d}| \mathbb{H}^{(1)}(d)|\right] \lesssim \sqrt{\log n} 
     \end{equation}     
     which by \citet[Corollary 2.1]{chernozhukov2014anti} implies that for any $\epsilon > 0$
     \begin{equation}\label{eq: anti concentration}
         \supg \sup_{x\in\mathbb{R}}\Pr\left[ \left|\sup_{d}|\mathbb{H}^{(1)}(d)| - x \right|\leq \epsilon \right] \lesssim \epsilon\sqrt{\log n}.
     \end{equation}
    \item Recall that $\hat c_n(\alpha)$ is defined as the $(1-\alpha)$-quantile of $\sup_{d}|\hat{\mathbb{H}}(d)|$ with $\hat{\mathbb{H}}(d)$ defined in (\ref{eq: H hat}). Let $c_n(\alpha)$ be the $(1-\alpha)$-th quantile of $\sup_{d}|\mathbb{H}^{(1)}(d)|$. Show that for some sequences $\tau_{n}= o(1)$ and $\epsilon_n = o(1/\sqrt{\logpM})$, 
    \begin{equation}\label{eq: quantile bound}
        \begin{aligned}
             \hat c_n(\alpha) \geq c(\alpha + \tau_n) - \epsilon_n. \\
        \end{aligned}
    \end{equation}
 \end{enumerate}
 Then through (R1)-(R4), we deduce that 
 \[\begin{aligned}
     &\ \ \ \ \inf_g\Pr\left(g^\prime(d)\in\mathcal{C}_{n,\alpha}(d)\text{ for all }d\in\mathcal{D}\right) \\
     &\geq  \inf_g\Pr\left(\sup_{d}|{\mathbb{H}}(d)| \leq \hat c_n(\alpha)\right) \\
      &\geq  \inf_g\Pr\left(\sup_{d}|\mathbb{H}^{(1)}(d)| \leq \hat c_n(\alpha) + o((\logpM)^{-1/2}) \right) + o(1) \\
      &\geq \inf_g\Pr\left(\sup_{d}|\mathbb{H}^{(1)}(d)| \leq c_n(\alpha + \tau_n) - \epsilon_n + o((\logpM)^{-1/2}) \right) + o(1) \\
      &\geq 1 - \alpha - \tau_n -c\left[\epsilon_n + o((\logpM)^{-1/2})\right]\sqrt{\logpM} + o(1) \to 1-\alpha.
 \end{aligned}\]
 The second inequality applies (\ref{eq: H m H1 target}). The third inequality applies (\ref{eq: quantile bound}). The last inequality applies (\ref{eq: anti concentration}).
 \par {\bf\underline{Proof of (R1)}}. Note that 
 \[\left|\dfrac{\hat\sigma_\varepsilon}{\sigma_\varepsilon}-1\right|=\dfrac{\left|\dfrac{\hat\sigma_\varepsilon^2}{\sigma_\varepsilon^2}-1\right|}{\left|\dfrac{\hat\sigma_\varepsilon}{\sigma_\varepsilon}+1\right|} = \sigma_\varepsilon\dfrac{\left| \hat\sigma_\varepsilon^2 -\sigma_\varepsilon^2\right|}{\left|\hat\sigma_\varepsilon+\sigma_\varepsilon\right|},\]
 It thus suffices to show that $\supg|\hat\sigma_\varepsilon^2 -\sigma_\varepsilon^2|=o_p(1)$. Since 
 $n^{-1}\sumn \varepsilon_i^2 - \sigma_\varepsilon^2 = O_p(n^{-1/2})$ by the bounded fourth order moments of $\varepsilon_i$, and 
 it suffices to show that $\supg|\hat\sigma_\varepsilon^2 - n^{-1} \sumn\varepsilon_i^2| =  \supg|n^{-1}\sumn(\hat\varepsilon_i^2 - \varepsilon_i^2)| = o_p(1/n^{-1/4})$. As 
  \[\begin{aligned}
      \left|n^{-1}\sumn (\hat\varepsilon_i^2 - \varepsilon_i^2)\right| &\leq n^{-1}\sumn (\hat\varepsilon_i - \varepsilon_i)^2 + 2\left|n^{-1}\sumn (\hat\varepsilon_i - \varepsilon_i)\varepsilon_i\right| \\
      &\leq n^{-1}\sumn (\hat\varepsilon_i - \varepsilon_i)^2 + 2\sqrt{n^{-1}\sumn (\hat\varepsilon_i - \varepsilon_i)^2 }\cdot \sqrt{n^{-1}\sumn \varepsilon_i^2},
  \end{aligned}\]
  it remains to show $n^{-1}\sumn (\hat\varepsilon_i - \varepsilon_i)^2 = o_p(1/\sqrt{n})$. By (\ref{eq: residual begin}), $\hat\varepsilon_i - \varepsilon_i = \hat W_i^\top(\omega-\hat\omega) + X_\ic(\theta - \hat\theta) + q(v_i) - q(\hat v_i) + r_g(D_i) + r_q(\hat v_i)$, then 
  \[ \begin{aligned}
      \dfrac{1}{n}\sumn(\hat\varepsilon_i - \varepsilon_i)^2 &\lesssim \dfrac{1}{n}\sumn\left[\hat W_i^\top(\omega-\hat\omega) + X_\ic(\theta - \hat\theta)\right]^2 + \dfrac{1}{n}\sumn [q(v_i) - q(\hat v_i)]^2 + \dfrac{1}{n}\sumn[r_g^2(D_i) + r_q^2(\hat v_i)] \\
      &=: \varDelta^\varepsilon_1 +  \varDelta^\varepsilon_2 +  \varDelta^\varepsilon_3.
  \end{aligned}\]
 \eqref{eq: R1 bound}, \eqref{eq: R2 bound}, (\ref{eq: error L2 pl sparse L1}), (\ref{eq: basic inequality 3 with L2 error}), and $M\gg n^{1/(2\gamma)}$ in Assumption \ref{assu: asym more} imply $\supg \varDelta^\varepsilon_1 = o_p(n^{-1/2})$.  
(\ref{eq: sum sq q approx}) implies $\varDelta^\varepsilon_2 = o_p(n^{-1/2})$. Proposition \ref{prop: spline approx power} implies $\supg \varDelta^\varepsilon_3 = o_p(n^{-1/2})$. It completes the proof of $\supg|\hat\sigma_\varepsilon / \sigma_\varepsilon - 1| = o_p(n^{-1/4})$. 
\par {\bf \underline{Proof of (R2)}}. Note that
\begin{equation}\label{eq: bound Zd sd begin}
    \begin{aligned}
   \sup_{d\in\mathcal{D}} \dfrac{|\mathcal{Z}(d)|}{\hat s(d)} \leq \sup_{d\in\mathcal{D}}\dfrac{\|B^\prime(d)\|_1}{\hat s(d)}\cdot n^{-1/2}\|\sumn \hat\Omega_B \hat F_\ic \varepsilon_i \|_\infty \lepg (nM)^{-1/2}\|\sumn \hat\Omega_B \hat F_\ic \varepsilon_i \|_\infty
\end{aligned}
\end{equation} 
where the last inequality applies the bound of $\|B^\prime(d)\|_1$ by Proposition \ref{prop: SplineDerivativeL2Norm} and the lower bound of $\hat s(d)$ by \eqref{eq: variance scale}. 
\par It suffices to bound $\|\sumn \hat\Omega_B \hat F_\ic \varepsilon_i \|_\infty$. By the Chebyshev inequality   and the union bound,  
   \[\begin{aligned}
\Pr\left(\max_{j\in[M_D]}\dfrac{n^{-1/2}\left|\sumn\hat\Omega_{Bj}^\top\hat F_\ic\varepsilon_i\right|}{\sqrt{ \Omega_{Bj}^\top\hat\Sigma_F\Omega_{Bj}}\cdot \sigma_\varepsilon  } > \tau \Bigg|\mathcal{G} \right) \leq \dfrac{M}{\tau^2}. 
   \end{aligned}\] 
Taking $\tau = \sqrt{M\logpM}$,
     \[\begin{aligned}
       \supg \Pr\left(\max_{j\in[M_D]}\dfrac{n^{-1/2}\left|\sumn\hat\Omega_{Bj}^\top\hat F_\ic\varepsilon_i\right|}{\sqrt{ \Omega_{Bj}^\top\hat\Sigma_F\Omega_{Bj}}\cdot \sigma_\varepsilon  } >  \sqrt{M\logpM}\Bigg|\mathcal{G} \right)  \leq \dfrac{1}{\logpM} \to 0.
   \end{aligned}\]
   Using Bounded Convergence Theorem on the supreme of the conditional probability, 
   \[\begin{aligned}
       &\ \ \ \ \supg \Pr\left(\max_{j\in[M_D]}\dfrac{n^{-1/2}\left|\sumn\hat\Omega_{Bj}^\top\hat F_\ic\varepsilon_i\right|}{\sqrt{ \Omega_{Bj}^\top\hat\Sigma_F\Omega_{Bj}}\cdot \sigma_\varepsilon  } > \sqrt{ M\logpM} \right) \\ &\leq  \E\left(\supg \Pr\left(\max_{j\in[M_D]}\dfrac{n^{-1/2}\left|\sumn\hat\Omega_{Bj}^\top\hat F_\ic\varepsilon_i\right|}{\sqrt{ \Omega_{Bj}^\top\hat\Sigma_F\Omega_{Bj}}\cdot \sigma_\varepsilon  } >   \sqrt{ M\logpM}\Bigg|\mathcal{G} \right)\right) \to 0
   \end{aligned}\]
   which implies 
   \begin{equation}\label{eq: sup norm OFe bound}
       \begin{aligned}
   \left\|\sumn\hat\Omega_B\hat F_\ic\varepsilon_i\right\|_\infty \lepg  \sqrt{nM\logpM\cdot \max_{j\in[M_D]}\Omega_{Bj}^\top\hat\Sigma_F\Omega_{Bj}} 
      \cdot \sigma_\varepsilon \lepg M\sqrt{n\logpM}
   \end{aligned}
   \end{equation}
   where the last inequality applies by (\ref{eq: variance scale}) that $$\max_{j\in[M_D]}\Omega_{Bj}^\top\hat\Sigma_F\Omega_{Bj} \leq  \|\Omega_{B}^\top\hat\Sigma_F\Omega_{B}^\top \|_\infty \lepg M.$$
   \eqref{eq: bound Zd sd begin} and \eqref{eq: sup norm OFe bound} imply 
   \[\sup_{d\in\mathcal{D}} \dfrac{|\mathcal{Z}(d)|}{\hat s(d)} \leq \sup_{d\in\mathcal{D}}\dfrac{\|B^\prime(d)\|_1}{\hat s(d)}\cdot n^{-1/2}\|\sumn \hat\Omega_B \hat F_\ic \varepsilon_i \|_\infty \lepg \sqrt{M\logpM} = o(n^{1/4}/\log n),\]
   which completes the proof of (R2). 
\par {\bf\underline{Proof of (R3)}}. Let $\zeta_i(d) := \hat s(d)^{-1} \hat m(d)^\top\hat F_\ic \varepsilon_i$ and $\zeta_i^*(d) := \hat s(d)^{-1}\hat m(d)^\top \hat F_\ic e^{(1)}_i$ where $e^{(1)}_i$ are i.i.d.\ $N(0,\sigma_\varepsilon^2)$ variables. Then for any $d_0, d_1, \cdots$, the vector $(\zeta_i^*(d_0),\zeta_i^*(d_1),\cdots)^\top$ is jointly normal conditionally on $\mathcal{G}$ with pairwise covariance for any $(d_0,d_1)$, given as 
 $$ \E_{\mathcal{G}}[\zeta^*_i(d_0)\zeta^*_i(d_1)] =  \sigma_\varepsilon^2 \frac{\hat m(d_0)^\top\hat F_\ic \hat F_\ic \hat m(d_1)}{\hat s(d_0) \hat s(d_1) } = \E_{\mathcal{G}}[\zeta_i(d_0)\zeta_i(d_1)]. $$
 Define 
 \[ \mathbb{G} := \sup_{d\in\mathcal{D}} \sumn \zeta_i(d),\ \mathbb{G}^\prime := \sup_{d\in\mathcal{D}} \sumn [- \zeta_i(d)],\]
  \[ \mathbb{G}^* := \sup_{d\in\mathcal{D}} \sumn \zeta_i^*(d),\ \mathbb{G}^{*\prime} := \sup_{d\in\mathcal{D}} \sumn [-\zeta_i^*(d)].\]
We have the following lemma. 
\begin{Lemma}\label{lem: gaussian approx}Under the conditions for Theorem \ref{thm: limiting distribution}, 
\[ n^{-1/2}|\mathbb{G} - \mathbb{G}^*| \lepg \dfrac{1}{\sqrt{\log n}} ,\]
\[ n^{-1/2}|\mathbb{G}^\prime - \mathbb{G}^{*\prime}| 
\lepg \dfrac{1}{\sqrt{\log n}}.\]
\end{Lemma}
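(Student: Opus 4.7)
The plan is to invoke the max-norm Gaussian coupling of \citet{chernozhukov2014gaussian} conditionally on the sigma-field $\mathcal G$ generated by $\mathcal L$ together with the design $\{\hat F_{i\cdot}\}_{i\in[n]}$, under which $\hat m(d)$, $\hat s(d)$, and $\hat\Omega_B$ become deterministic and the only randomness is the i.i.d.\ homoskedastic vector $(\varepsilon_1,\dots,\varepsilon_n)$. The key structural observation is that although $d$ ranges over the continuum $\mathcal D$, the process
\[
\sum_{i=1}^n\zeta_i(d)=w(d)^\top \xi,\qquad w(d):=B^\prime(d)/\hat s(d)\in\mathbb R^{M_D},\qquad \xi:=\hat\Omega_B\sum_{i=1}^n\hat F_{i\cdot}\varepsilon_i\in\mathbb R^{M_D},
\]
is merely a linear functional of an $M_D$-dimensional random vector. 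Gaussian coupling of the supremum therefore reduces to a high-dimensional $\ell_\infty$ coupling of $\xi$.

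Next I would apply the Gaussian approximation of \citet{chernozhukov2014gaussian} to the independent summands $\hat\Omega_B\hat F_{i\cdot}\varepsilon_i$ under $\mathbb P_{\mathcal G}$. This produces a Gaussian vector $\xi^*$, distributed as $\hat\Omega_B\sum_i\hat F_{i\cdot}e_i^{(1)}$ and hence sharing the conditional covariance of $\xi$, together with an $\ell_\infty$ coupling bound on $\|\xi-\xi^*\|_\infty$. The moment inputs required by CCK are already available: the feasibility constraint $\|n^{-1/2}\hat F\hat\Omega_{Bj}\|_\infty\le\mu_j\lesssim M\sqrt{\log(pM)/n}$ in \eqref{eq: projection direction 1} gives $\max_{i,j}|\hat\Omega_{Bj}^\top\hat F_{i\cdot}|\lesssim M\sqrt{\log(pM)}$, so each summand is dominated by $M\sqrt{\log(pM)}\cdot|\varepsilon_i|$; the conditional second moments satisfy $\hat\Omega_{Bj}^\top\hat\Sigma_F\hat\Omega_{Bj}\lepg M$ via \eqref{eq: result OmegaSigmaOmega sup norm}; and the conditional fourth moment of $\varepsilon_i$ is bounded by Assumption \ref{assu: control func}.

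The final step is the Lipschitz-type transfer
\[
|\mathbb G-\mathbb G^*|\le \sup_{d\in\mathcal D}\|w(d)\|_1\cdot \|\xi-\xi^*\|_\infty,
\]
and the analogous bound for $|\mathbb G^\prime-\mathbb G^{*\prime}|$, combined with $\sup_d\|w(d)\|_1\le \sup_d\|B^\prime(d)\|_1/\hat s(d)\lepg M/M^{3/2}=M^{-1/2}$ by Proposition \ref{prop: SplineDerivativeL2Norm} together with (S1) of Proposition \ref{prop: debiased estimator}. The passage from the $\mathbb P_{\mathcal G}$-conditional coupling to the unconditional statement is handled by the Bounded Convergence Theorem, exactly as in the proofs of Propositions \ref{prop: feasible} and \ref{prop: debiased estimator}; this also yields uniformity over $g\in\mathcal H_{\mathcal D}(\gamma,L)$ automatically, since the coupling constants depend only on the design $\hat F_{i\cdot}$ and on moments of $\varepsilon_i$, neither of which depends on $g$.

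The main obstacle will be carefully tracking the polynomial-in-$M$ factors through the CCK rate: with summand envelope of order $M\sqrt{\log(pM)}\cdot|\varepsilon_i|$ and ambient dimension $p_F\asymp p+M$, the CCK coupling error is of order $n^{-1/6}\cdot M\cdot\mathrm{polylog}(n,p,M)$ times a slowly growing moment factor, and the product $M^{-1/2}\cdot \|\xi-\xi^*\|_\infty$ must decay faster than $\sqrt{n/\log n}$. This balance is exactly what the upper bound $\nu<1/4.5$ in Assumption \ref{assu: asym more}(b) is tuned for, since it caps $M=n^\nu$ just enough that the $n^{-1/6}$-type rate absorbs the $M$-dependent factors. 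If the direct CCK rate is too loose, I would fall back on the sharper anti-concentration/Gaussian comparison bounds in \citet{chernozhukov2014anti} under the stronger envelope conditions afforded by the boundedness of $D_i$, $v_i$, $Z_{i\cdot}$ in Assumption \ref{assu: D v} and the sub-Gaussianity of $X_{i\cdot}$ in Assumption \ref{assu: more compatibility}.
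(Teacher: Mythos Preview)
Your proposal has a genuine gap at the coupling step. The result you invoke from \citet{chernozhukov2014gaussian} couples the \emph{supremum} of the empirical process to a scalar random variable with the Gaussian-supremum distribution (this is precisely the content of their Corollary~4.1); it does not construct a Gaussian vector $\xi^*$ with $\|\xi-\xi^*\|_\infty$ small. Your Lipschitz transfer
\[
|\mathbb G-\mathbb G^*|\le\sup_{d\in\mathcal D}\|w(d)\|_1\cdot\|\xi-\xi^*\|_\infty
\]
therefore lacks its key input: the $n^{-1/6}$-type rate you cite is for $|\max_j\xi_j-\max_j\xi_j^*|$, not for $\max_j|\xi_j-\xi_j^*|$. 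If you substitute a bona fide $\ell_\infty$ vector coupling (Yurinskii or Zaitsev), the dimension and envelope constants bite hard: with $\|\hat\Omega_B\hat F_{i\cdot}\varepsilon_i\|_2\lesssim M^{3/2}\sqrt{\log(pM)}\,|\varepsilon_i|$ the third-moment sum in Yurinskii scales like $nM^{9/2}(\log(pM))^{3/2}$, and the resulting bound is vacuous unless $M=o(n^{1/8})$, strictly stronger than the $\nu<1/4.5$ permitted by Assumption~\ref{assu: asym more}.

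The paper avoids vector coupling altogether. It discretizes $\mathcal D$ into a grid $\mathcal D^\varDelta$ of $J\asymp M^{3/2}\log n$ points, so that only a \emph{finite} maximum is in play, and then applies Corollary~4.1 of \citet{chernozhukov2014gaussian} directly to $\mathbb G^\varDelta=\max_{d\in\mathcal D^\varDelta}\sum_i\zeta_i(d)$, bounding the moment quantities $b_1,b_2,b_4$ appearing there using exactly the ingredients you identified ($\|\hat F\hat\Omega_B^\top\|_\infty\lesssim M\sqrt{\log(pM)}$, $\hat\Omega_{Bj}^\top\hat\Sigma_F\hat\Omega_{Bj}\lepg M$, bounded fourth moment of $\varepsilon_i$). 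The oscillation $|\mathbb G-\mathbb G^\varDelta|$ is then handled by the Lipschitz estimate $\|\hat s(d)^{-1}B'(d)-\hat s(d')^{-1}B'(d')\|_1\lepg M^{1/2}|d-d'|$ combined with the moderate-deviation bound $\|\sum_i\hat\Omega_B\hat F_{i\cdot}\varepsilon_i\|_\infty\lepg M\sqrt{n\log(pM)}$. Your structural observation that the process factors through the $M_D$-dimensional vector $\xi$ is correct and is implicitly what keeps the grid size $J$ polynomial in $M$ rather than in $p$, but the passage to the Gaussian must go through the scalar supremum over the grid, not through an $\ell_\infty$ coupling of $\xi$.
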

Note that $\sup_{d\in\mathcal{D}}|\sumn\zeta_i(d)| = \max\{\mathbb{G},\mathbb{G}^\prime \}=\frac{\mathbb{G}+\mathbb{G}^\prime+|\mathbb{G}-\mathbb{G}^\prime|}{2}$ and $\sup_{d\in\mathcal{D}}|\sumn\zeta_i^*(d)| = \max\{\mathbb{G}^*,\mathbb{G}^{*\prime} \}=\frac{\mathbb{G}^*+\mathbb{G}^{*\prime}+|\mathbb{G}^*-\mathbb{G}^{*\prime}|}{2}$. Thus by Lemma \ref{lem: gaussian approx}, 
\begin{equation}\label{eq: gaussian approx zeta}
    \begin{aligned}
 &\ \ \ \ n^{-1/2} \left|\sup_{d\in\mathcal{D}}|\sumn\zeta_i(d)| - \sup_{d\in\mathcal{D}}|\sumn\zeta_i^*(d)|\right| \\
 &= \left|\dfrac{\mathbb{G}+\mathbb{G}^\prime+|\mathbb{G}-\mathbb{G}^\prime|}{2\sqrt{n}} - \dfrac{\mathbb{G}^*+\mathbb{G}^{*\prime}+|\mathbb{G}^*-\mathbb{G}^{*\prime}|}{2\sqrt{n}}\right|  \\
 &\leq \dfrac{1}{2\sqrt{n}}\left(\left|\mathbb{G} - \mathbb{G}^*\right| + \left|\mathbb{G} - \mathbb{G}^*\right|+ \left||\mathbb{G}-\mathbb{G}^\prime| - |\mathbb{G}^*-\mathbb{G}^{*\prime}|\right|\right) \\
  &\lepg  \dfrac{1}{\sqrt{\log n}} + \dfrac{|\mathbb{G}  - \mathbb{G}^* -(\mathbb{G}^\prime - \mathbb{G}^{*\prime})|}{\sqrt{n}} \\
  &\leq  \dfrac{1}{\sqrt{\log n}} + \dfrac{|\mathbb{G}  - \mathbb{G}^*| +|\mathbb{G}^\prime - \mathbb{G}^{*\prime}|}{\sqrt{n}} \lepg \dfrac{1}{\sqrt{\log n}}.
\end{aligned}
\end{equation}
Recall that $\tilde{\mathbb{H}}(d) = \hat\sigma_\varepsilon n^{-1/2}\sumn\zeta_i(d)$, $\hat{\mathbb{H}}(d) = \hat\sigma_\varepsilon n^{-1/2}\sumn\zeta_i^*(d)$ and $\mathbb{H}^{(1)}(d) = \sigma_\varepsilon n^{-1/2}\sumn\zeta_i^*(d)$. By (R1) and (\ref{eq: gaussian approx zeta}),
\[\left|\sup_{d}|\hat{\mathbb{H}}(d)| - \sup_{d}|\tilde{\mathbb{H}}(d)|\right| \lepg \dfrac{1}{\hat\sigma_\varepsilon} n^{-1/2} \left|\sup_{d\in\mathcal{D}}|\sumn\zeta_i(d)| - \sup_{d\in\mathcal{D}}|\sumn\zeta_i^*(d)|\right| \lepg   \dfrac{1}{\sqrt{\log n}}.\]
If suffices to show 
\begin{equation}\label{eq: approx hat H H 1}
    \left|\sup_{d}|\hat{\mathbb{H}}(d)| - \sup_{d}|\mathbb{H}^{(1)}(d)|\right| \lepg \dfrac{1}{\sqrt{\log n}}.
\end{equation}
Note that 
\[\begin{aligned}
    \left|\sup_{d}|\hat{\mathbb{H}}(d)| - \sup_{d}|\mathbb{H}^{(1)}(d)|\right| &\leq \left|\dfrac{1}{\hat\sigma_\varepsilon} - \dfrac{1}{\sigma_\varepsilon}\right| \cdot \sigma_\varepsilon\sup_{d}\left|\mathbb{H}^{(1)}(d)\right| \lepg \dfrac{1}{(\logpM)^{1.5}} \cdot   \sup_{d}\left|\mathbb{H}^{(1)}(d)\right| 
\end{aligned} \]
where the last step applies (R1). If (\ref{eq: exp sup upper bound root log n}) holds, we can use the Markov inequality to deduce 
\[\sup_{d}\left|\mathbb{H}^{(1)}(d)\right| = o_{u.p.}((\log n)^{0.75})\]
and thus 
\[\left|\sup_{d}|\hat{\mathbb{H}}(d)| - \sup_{d}|\mathbb{H}^{(1)}(d)|\right| = o_{u.p.}\left(\frac{1}{(\log n)^{0.75}}\right).\]  
Then the proof of (R2) will end with the verification of (\ref{eq: exp sup upper bound root log n}). Let $d_j^\varDelta = a_D + j\varDelta_J$ for $j=0,1,\cdots,J$  with some integer $J$ large enough, where $\varDelta_J := \frac{b_D - a_D}{J}$. Then for any $d$, we can find some $d_j^\varDelta$ such that $|d_j^\varDelta - d_j| \leq \varDelta_J$ and hence 
\begin{equation}
    \label{eq: H1 bound 2 terms}
    \begin{aligned}
    |\mathbb{H}^{(1)}(d)| = |n^{-1/2}\mathbb{G}^*(d)| &\leq n^{-1/2}|\mathbb{G}^*(d)-\mathbb{G}^*(d^\varDelta_j)| + |\mathbb{G}^*(d^\varDelta_j)|\\ 
    &\leq \sup_{|d^\prime-d_2|\leq \varDelta_J} n^{-1/2}|\mathbb{G}^*(d)-\mathbb{G}^*(d^\varDelta_j)| + \max_{j\in[J]}|\mathbb{G}^*(d^\varDelta_j)|.
\end{aligned}
\end{equation}
Note that $\mathbb{G}^*(d^\varDelta_j)\sim N(0,1)$ for each $j\in[J]$, and thus 
\[ \E\left[\max_{j\in[J]} |\mathbb{G}^*(d^\varDelta_j)| \right] \lesssim \sqrt{\log J}. \] 
For any $d,d^\prime\in\mathcal{D}$, 
\begin{equation}\label{eq: G star d0d1 begin}
    \begin{aligned}
    n^{-1/2}|\mathbb{G}^*(d) - \mathbb{G}^*(d^\prime)| &\leq \|\hat s(d)^{-1}B^\prime(d) - \hat s(d^\prime)^{-1}B^\prime(d^\prime)\|_1 \cdot \sqrt{n}\|\hat F\hat\Omega_B^\top\|_\infty\cdot \supn|\varepsilon_i| \\
    &\lesssim \|\hat s(d)^{-1}B^\prime(d) - \hat s(d^\prime)^{-1}B^\prime(d^\prime)\|_1 \cdot \sqrt{n}M\sqrt{\logpM} \cdot \supn|\varepsilon_i| \\
\end{aligned}
\end{equation} 
where the last inequality applies 
the definition of $\hat\Omega_B$ when (\ref{eq: projection direction 1}) is feasible\footnote{We define $\hat\Omega_B = O$ and $\hat{\mathbb{H}}(d) = \mathbb{H}^{(1)}(d) = \tilde{\mathbb{H}}(d) = 0$ when (\ref{eq: projection direction 1}) is infeasible.}  with $\mu_j \asymp M\sqrt{\logpM / n}$. Additionally, by (\ref{eq: proof lower bound}) and (\ref{eq: m star lower bound}), we have almost surely that $$ \inf_g \inf_{d\in\mathcal{D}}\hat s(d) \gtrsim \inf_{d\in\mathcal{D}}\|B^\prime(d)\|_1^2 / \sqrt{B^\prime(d)^\top\hat\Sigma_B B^\prime(d)}\gtrsim \inf_{d\in\mathcal{D}}\dfrac{ \|B^\prime(d)\|_1^2 }{\sqrt{\|B^\prime(d)\|_1^2\cdot \|\hat\Sigma_B\|_\infty}} \gtrsim M$$
with $\hat\Sigma_B = n^{-1}\sumn B_\ic B_\ic^\top$ and thus $\|\hat\Sigma_B\|_\infty$ is uniformly bounded. In addition, by the second restriction in \eqref{eq: projection direction 1},
\begin{equation}\label{eq: Omega Sigma Omega}
    \supg \hat\Omega_B \hat\Sigma_F \hat\Omega_B^\top \leq   \supg  \|\hat F \hat\Omega_B^\top\|_\infty^2 \lesssim  M^2\logpM.
\end{equation} 
Also, $\|B^\prime(d)+B^\prime(d^\prime)\|_1 \lesssim M$ by Proposition \ref{prop: SplineDerivativeL2Norm}. Then 
 \[\begin{aligned}
      &\ \ \ \ \supg \|\hat s(d)^{-1}B^\prime(d) - \hat s(d^\prime)^{-1}B^\prime(d^\prime)\|_1 \\
      &\leq \|B^\prime(d)\|_1\cdot\supg \dfrac{\left|\hat s(d^\prime)^2 - \hat s(d)^2\right|}{\hat s(d)\hat s(d^\prime)[\hat s(d) + \hat s(d^\prime)]} + \dfrac{\|B^\prime(d) - B^\prime(d^\prime)\|_1}{\hat s(d^\prime)} \\
      &\lesssim \dfrac{\|B^\prime(d)\|_1\cdot\supg |(B^\prime(d) - B^\prime(d^\prime))^\top\hat\Omega_B \hat\Sigma_F \hat\Omega_B^\top (B^\prime(d) + B^\prime(d^\prime))| }{M^3} +\dfrac{\|B^\prime(d) - B^\prime(d^\prime)\|_1}{M}\\
      &\lesssim \left|\dfrac{\|B^\prime(d) - B^\prime(d^\prime)\|_1\cdot \supg \|\hat\Omega_B \hat\Sigma_F \hat\Omega_B^\top\|_\infty\cdot \|B^\prime(d) + B^\prime(d^\prime)\|_1}{M^{2}}\right| + \dfrac{\|B^\prime(d) - B^\prime(d^\prime)\|_1}{M} \\
      &\lesssim \|B^\prime(d) - B^\prime(d^\prime)\|_1\cdot M\logpM.     
   \end{aligned}\]
By Proposition \ref{prop: SplineDerivative}, there are at most a fixed number (denoted as $K_0$) nonzero elements in the vector $B^\prime(d) - B^\prime(d^\prime)$. Denote this active set as $\mathcal{K}_0$. Also, note that 
   \begin{equation}\label{eq: L1 bound Bd0 m Bd1}
       \begin{aligned}
       \|B^\prime(d) - B^\prime(d^\prime)\|_1 &= \sum_{j\in\mathcal{K}_0} |B_j^\prime(d) - B_j^\prime(d^\prime)| = \sum_{j\in \mathcal{K}_0}|B^{\prime\prime}(d_{2j})|\cdot|d - d^\prime| \\
      &\lesssim M^2\sum_{j\in[\mathcal{K}_0]}|d - d^\prime| = K_0M^2|d - d^\prime|
   \end{aligned}
   \end{equation} 
   where the inequality applies Proposition \ref{prop: SplineDerivativeL2Norm} for the upper bound of second-order derivative $B^{\prime\prime}$, and $d_{2j}$ is between $d$ and $d^\prime$. Thus, $$\supg \|\hat s(d)^{-1}B^\prime(d) - \hat s(d^\prime)^{-1}B^\prime(d^\prime)\|_1 \lesssim M^3\logpM |d - d^\prime|,$$ together with \eqref{eq: G star d0d1 begin} implying that 
   \[\begin{aligned}
    n^{-1/2}|\mathbb{G}^*(d) - \mathbb{G}^*(d^\prime)| \lesssim \sqrt{n}M^4(\logpM)^{1.5}\cdot |d - d^\prime|\cdot \supn|\varepsilon_i|. \\
\end{aligned}\] 
Thus, 
\[\begin{aligned}
    \supg \E\left(\sup_{|d-d^\prime|\leq\varDelta_J}  |n^{-1/2}[\mathbb{G}^*(d) - \mathbb{G}^*(d^\prime)]|\right) &\lesssim \sqrt{n}M^4(\logpM)^{1.5}\Delta_J \E(\supn|\varepsilon_i|) \\&\lesssim  \sqrt{n}M^4(\logpM)^2\Delta_J. 
\end{aligned}\]
Taking $J \asymp n^5$ , we have $\Delta_j \asymp n^{-5}$ and by (\ref{eq: H1 bound 2 terms})
\[\begin{aligned}
   \E\left( \sup_{d\in\mathcal{D}} |\mathbb{H}^{(1)}(d)| \right)  
    &\leq \E\left(\sup_{|d-d^\prime|\leq \varDelta_J} n^{-1/2}|\mathbb{G}^*(d)-\mathbb{G}^*(d^\varDelta_j)|\right) + \E\left(\max_{j\in[J]}|\mathbb{G}^*(d^\varDelta_j)|\right) \\
    &\lesssim \sqrt{n}M^4(\logpM)^2\cdot n^{-5} + \sqrt{5\log n} \lesssim \sqrt{\log n}. 
\end{aligned}
\]
\par {\bf \underline{Proof of (R4)}}. By (\ref{eq: approx hat H H 1}),
\begin{equation*} 
   \supg \Pr\left\{\left|\sup_{d}|\hat{\mathbb{H}}(d)| - \sup_{d}|\mathbb{H}^{(1)}(d)|\right| > \dfrac{C}{(\log n)^{0.75}}\right\} \leq \tau_{n}
\end{equation*}
with some $\tau_{n}\to 0$. Thus, 
\[\begin{aligned}
    \supg \Pr\left( \sup_{d}|\mathbb{H}^{(1)}(d)| > \hat c_n(\alpha ) + \dfrac{C}{(\log n)^{0.75}}\right) &\geq 
   \supg \Pr\left( \sup_{d}|\mathbb{H}^{(1)}(d)| > \hat c_n(\alpha ) \right)  - \tau_{n} \\
   &= 1- \alpha - \tau_{n}
\end{aligned}\]
which implies $\hat c_n(\alpha ) < c_n(\alpha + \tau_{n}) - \dfrac{C}{(\log n)^{0.75}}$. The proof completes by taking $\epsilon_n = \dfrac{C}{(\log n)^{0.75}}$.   

\subsection{Proof of Technical Lemmas}

\begin{proof}[Proof of Lemma \ref{lem: L2 varphi hat}]
Following the previous notations, we use $S_x$ to denote ${\rm diag}(\E(x_{ij}^2))_{j\in[p]}$ for any generic random vector $x_\ic = (x_{ij})_{j\in[p]}$, and $\tilde x_\ic := S_x^{-1/2} x_\ic$ to denote the standardized version of any random vector $x_\ic$. Note that 
\begin{equation}\label{eq: phi L2 norm begin}
    \begin{aligned}
    \|\hat\varphi - \varphi\|_2^2 &\lesssim (\hat\varphi - \varphi)^\top \E[X_\ic X_\ic^\top] (\hat\varphi - \varphi) + (\hat\kappa - \kappa)^\top \E[K_\ic K_\ic^\top] (\hat\kappa - \kappa) \\
    &= (\hat\varphi - \varphi)^\top S_X^{1/2}\E[\tilde X_\ic \tilde X_\ic^\top]S_X^{1/2} (\hat\varphi - \varphi) + (\hat\kappa - \kappa)^\top S_K^{1/2} \E[\tilde K_\ic \tilde K_\ic^\top]S_K^{1/2} (\hat\kappa - \kappa) \\
     &\lesssim (\hat\varphi - \varphi)^\top S_X^{1/2} S_X^{1/2} (\hat\varphi - \varphi) + (\hat\kappa - \kappa)^\top S_K^{1/2}  S_K^{1/2} (\hat\kappa - \kappa) \\
    &\lesssim (\hat\varphi - \varphi)^\top S_X^{1/2} \E[\tilde X_\ic \tilde X_\ic^\top] S_X^{1/2}(\hat\varphi - \varphi) + (\hat\kappa - \kappa)^\top S_K^{1/2} \E[\tilde K_\ic \tilde K_\ic^\top] S_K (\hat\kappa - \kappa) + \\
    &\ \ \ \  2(\hat\kappa - \kappa)^\top S_K^{1/2} \E[\tilde K_\ic \tilde X_\ic^\top] S_X^{1/2} (\hat\varphi - \varphi)\\
    &= (\hat\varphi - \varphi)^\top \E[ X_\ic  X_\ic^\top] (\hat\varphi - \varphi) + (\hat\kappa - \kappa)^\top \E[ K_\ic  K_\ic^\top] (\hat\kappa - \kappa) + \\
    &\ \ \ \  2(\hat\kappa - \kappa)^\top \E[ K_\ic  X_\ic^\top] (\hat\varphi - \varphi)\\
\end{aligned}
\end{equation} 
where the third and the fourth steps apply Assumption \ref{assu: eigen}. 
By (\ref{eq: key basic inequality phi}) and (\ref{eq: special case 1 phi}), we deduce that 
\begin{equation}\label{eq: phi L2 error bound special case 1 proof}
     n^{-1}\|K(\hat\kappa-\kappa) + X(\varphi - \hat\varphi)\|_2^2 \lesssim R_{D1}^2.
\end{equation} 
and also note the following decomposition 
\[\begin{aligned}
     n^{-1}\|K(\hat\kappa-\kappa) + X(\varphi - \hat\varphi)\|_2^2 &= (\hat\varphi - \varphi)^\top \dfrac{\sumn X_\ic X_\ic^\top}{n} (\hat\varphi - \varphi) + (\hat\kappa - \kappa)^\top  \dfrac{\sumn K_\ic K_\ic^\top}{n} (\hat\kappa - \kappa) + \\
    &\ \ \ \  2(\hat\kappa - \kappa)^\top \dfrac{\sumn K_\ic X_\ic^\top}{n} (\hat\varphi - \varphi). 
\end{aligned} \]
By Proposition \ref{prop: DB}, with probability at least $1 - c(pM)^{-4}$
\[\|\dfrac{1}{n}\sumn\left[K_\ic X_\ic^\top - \E(K_\ic X_\ic^\top)  \right]\|_\infty + \|\dfrac{1}{n}\sumn\left[X_\ic X_\ic^\top - \E(X_\ic X_\ic^\top)  \right]\|_\infty \lesssim \sqrt{\dfrac{\log (pM)}{n}},\]
\[n^{-1}\left\|\sumn [K_\ic K_\ic^\top - \E(K_\ic K_\ic^\top)] \right\|_\infty \lesssim \sqrt{\dfrac{\log (pM)}{n}}.\]
Together with (\ref{eq: varphi hat L1 special case 1}) and (\ref{eq: kappa hat L2 special case 1}), we deduce that with probability at least $1 - c(pM)^{-4}$
\[\begin{aligned}
    \varDelta^D_1 := (\hat\varphi - \varphi)^\top \dfrac{1}{n}\sumn\left[X_\ic X_\ic^\top - \E(X_\ic X_\ic^\top)  \right] (\hat\varphi - \varphi) &\leq \|\hat\varphi - \varphi\|_1^2 \cdot \|\dfrac{1}{n}\sumn\left[X_\ic X_\ic^\top - \E(X_\ic X_\ic^\top)  \right]\|_\infty\\
    &\lesssim \dfrac{R_{D1}^4}{\lambda_D^2} \sqrt{\dfrac{\log (pM)}{n}} \lesssim \sqrt{\dfrac{n}{\log (pM)}}R_{D1}^4, 
\end{aligned}\]
\[\begin{aligned}
    \varDelta^D_2 := (\hat\kappa - \kappa)^\top \dfrac{1}{n}\sumn\left[X_\ic X_\ic^\top - \E(X_\ic X_\ic^\top)  \right] (\hat\kappa - \kappa) &\leq \|\hat\kappa - \kappa\|_1^2 \cdot \|\dfrac{1}{n}\sumn\left[K_\ic K_\ic^\top - \E(K_\ic K_\ic^\top)  \right]\|_\infty \\
    &\lesssim \dfrac{MR_{D1}^4}{R_{D2}^2} \sqrt{\dfrac{\log (pM)}{n}} \lesssim \sqrt{\dfrac{nM^2}{\log (pM)}}R_{D1}^4, 
\end{aligned}\]
and 
\[\begin{aligned}
    \varDelta^D_3 &:= 2(\hat\kappa - \kappa)^\top \dfrac{1}{n}\sumn\left[K_\ic X_\ic^\top - \E(K_\ic X_\ic^\top)  \right] (\hat\varphi - \varphi) \\
    &\leq \sqrt{p_zM} \|\hat\kappa - \kappa\|_2 \cdot \|\hat\varphi - \varphi\|_1  \cdot \|\dfrac{1}{n}\sumn\left[K_\ic X_\ic^\top - \E(K_\ic X_\ic^\top)  \right]\|_\infty\\
    &\lesssim \dfrac{\sqrt{M} R_{D1}^4}{R_{D2}\lambda_D} \sqrt{\dfrac{\log (pM)}{n}} \lesssim \sqrt{\dfrac{nM}{\log (pM)}}R_{D1}^4.
\end{aligned}\]
Recall that $R_{D1} \asymp M^{-\gamma}$ and thus 
\[\sqrt{\dfrac{nM^2}{\log (pM)}}R_{D1}^2 \lesssim  o[\sqrt{n}\cdot M^{-2\gamma + 1}] = o[M^{\frac{2\gamma+1}{2}-2\gamma+1}] = o(M^{-\gamma + 1.5}) = o(1)\]
as $\gamma \geq 2$, which implies $\varDelta^D_1 + \varDelta^D_2 + \varDelta^D_3 = o(R_{D1}^2)$. Consequently, by (\ref{eq: phi L2 norm begin}) and (\ref{eq: phi L2 error bound special case 1 proof}), 
\[\begin{aligned}
    \|\hat\varphi - \varphi\|_2^2 &= n^{-1}\|K(\hat\kappa - \kappa) + X(\hat\varphi-\varphi)\|_2^2 + \varDelta^D_1 + \varDelta^D_2 + \varDelta^D_3  \\
    &\lesssim R_{D1}^2 + o(R_{D1}^2) \lesssim R_{D1}^2 \\
\end{aligned}\]
with probability at least $1-c(pM)^{-4}$.
\end{proof}

 \begin{proof}[Proof of Lemma \ref{lem: RE concentration phi}]
 \par Let $K_\ell$ denote the matrix with the $(i,j)$-th element $K_{ij}=K(v_i)$. Define $Q=(K,X)$ and $\delta = \left(\begin{array}{c}
    \hatkappa -\kappa    \\
      \hat\varphi-\varphi
 \end{array}\right)$. Then we decompose the target quadratic form as 
 \begin{equation}
 \begin{aligned}
 \delta^\top \dfrac{Q^\top Q}{n} \delta  
 &= \delta^\top \dfrac{\E[Q^\top Q]}{n} \delta +   \Delta_0  
 \end{aligned}
 \end{equation}
 where 
 \[\Delta_0 =   \delta^\top\dfrac{Q^\top Q-\E[Q^\top Q]}{n} \delta.\]
  The remaining proofs are composed of two steps.

 \par {\bf Step 1: Find a lower bound of $\delta^\top \dfrac{\E[Q^\top Q]}{n} \delta$.}
 Note that by Assumption \ref{assu: eigen} we have 
  \begin{equation}
  \begin{aligned}
  \delta^\top \dfrac{\E[Q^\top Q]}{n} \delta &=   \delta^\top S_Q^{1/2}\dfrac{\E[\tilde Q^\top \tilde Q]}{n} S_Q^{1/2}\delta \\
  &\gtrsim (\hatkappa-\kappa)^\top S_K^{1/2}\dfrac{\E[\tilde K^\top \tilde K]}{n} S_K^{1/2}(\hatkappa-\kappa)   + 
 (\hatvarphi-\varphi)^\top S_X^{1/2}\dfrac{\E[\tilde X^\top \tilde X]}{n} S_X^{1/2} (\hatvarphi-\varphi) \\ 
   &= (\hatkappa-\kappa)^\top \dfrac{\E[ K^\top K]}{n} (\hatkappa-\kappa)   + 
 (\hatvarphi-\varphi)^\top \dfrac{\E[ X^\top  X]}{n} (\hatvarphi-\varphi) \\ 
&\gtrsim M^{-1}\|\hat\kappa - \kappa\|_2^2 + \|\hat\varphi - \varphi\|_2^2.
  \end{aligned}
    \label{eq: decompose target quadratic} 
\end{equation}

 \par {\bf Step 2:  Bound $\Delta_0$.} Note that each row in the $n\times(p_zM + p)$ matrix $Q$ are independent bounded variables. By Bernstein-type inequality, with probabilty at least $1-c(M\vee p)^{-4}$  
 \[
 \left\|\dfrac{Q^\top Q}{n}-\dfrac{\mathbb{E}(Q^\top Q)}{n}\right\|_\infty \lesssim \sqrt{\dfrac{\log(M\vee p)}{n}}
 \]
 and hence 
 \begin{equation*}
     |\Delta_0| \lep \|\delta\|_1^2\sqrt{\dfrac{\log(M\vee p)}{n}}. 
 \end{equation*}
 Note that 
 \[\begin{aligned}
  \|\delta\|_1^2  &=   \|\hat\kappa - \kappa\|_1^2 + \|(\hat\varphi-\varphi)_\mathcal{S}\|_1^2 + \|(\hat\varphi-\varphi)_{\mathcal{S}^c}\|_1^2 \\
  &\leq  M\|\hat\kappa - \kappa\|_2^2 +  \|(\hat\varphi-\varphi)_\mathcal{S}\|_1^2 + 2\dfrac{c_2^2R_{D2}^2}{\lambda_D^2}\|\hat\kappa - \kappa\|_2^2 + 2c_3^2\|(\hat\varphi-\varphi)_\mathcal{S}\|_1^2 \\
  &= \left(2M + 2\dfrac{c_2^2R_{D2}^2}{\lambda_D^2}\right)\|\hat\kappa - \kappa\|_2^2 + (1+2c_3^2)s\|\hat\varphi-\varphi\|_2^2 \\
    &\lesssim M\|\hat\kappa - \kappa\|_2^2 +  s\|\hat\varphi-\varphi\|_2^2 
 \end{aligned}
 \]
 which implies 
 \begin{equation}
 \begin{aligned}
     |\Delta_0|  
    & \lep M\sqrt{\dfrac{\logpM}{n}}\|\hat\kappa-\kappa\|_2^2 +  s\sqrt{\dfrac{\logpM}{n}}\|\hat\varphi-\varphi\|_2^2 \\
    &= o_{\rm{a.s.}}\left(M^{-1}\|\hat\kappa-\kappa\|_2^2 + \|\hat\varphi-\varphi\|_2^2\right)
    \label{eq: Delta2 Bound D}. 
 \end{aligned}
 \end{equation}
\end{proof}

 \begin{proof}[Proof of Lemma \ref{lem: RE concentration}]Note that the proof of Lemma \ref{lem: RE concentration} depends on Lemma \ref{lem: RE concentration phi} since the LASSO algorithm \eqref{eq: partial penalization} depends on $\hat v_i$ from \eqref{eq: partial penalization for D}. 
 \par Let $H$ denote the matrix with the $(i,j)$-th element $H_{ij}=H_j(v_i)$. Define $W=(B,H)$, $\widehat{U}=(\widehat{W},X)$, $U=(W,X)$ and $\delta = \left(\begin{array}{c}
    \hatomega-\omega    \\
      \hattheta-\theta
 \end{array}\right)$. Then we decompose the target quadratic form as 
 \begin{equation}
 \begin{aligned}
 \delta^\top \dfrac{\hatU^\top \hatU}{n} \delta &= \delta^\top \dfrac{\E[U^\top U]}{n} \delta + \delta^\top \dfrac{(U-\hatU)^\top (U-\hatU)}{n} \delta + 2\delta^\top \dfrac{(U-\hatU)^\top  U  }{n} \delta  \\
 &\ \ \ + \delta^\top\dfrac{U^\top U-\E[U^\top U]}{n} \delta \\
 &\geq \delta^\top \dfrac{\E[U^\top U]}{n} \delta +   \Delta_1  + \Delta_2
 \end{aligned}
 \end{equation}
 where 
 \[\Delta_1 = 2\delta^\top \dfrac{(U-\hatU)^\top  U  }{n} \delta,\ \  \Delta_2 = \delta^\top\dfrac{U^\top U-\E[U^\top U]}{n} \delta.\]
  The remaining proofs are composed of three steps.

 \par {\bf Step 1: Find a lower bound of $\delta^\top \dfrac{\E[U^\top U]}{n} \delta$.}
 Note that following (\ref{eq: decompose target quadratic}), we have  by Assumption \ref{assu: eigen}
  \begin{equation}
  \begin{aligned}
  \delta^\top \dfrac{\E[U^\top U]}{n} \delta &\geq  
c^*(\hatbeta-\beta)^\top \dfrac{\E[B^\top B]}{n}(\hatbeta-\beta) +
c^*(\hateta-\eta)^\top \dfrac{\E[H^\top H]}{n}(\hateta-\eta)  \\ &\ \ \  + 
c^*(\hattheta-\theta)^\top \dfrac{\E[X^\top X]}{n}(\hattheta-\theta) \\ 
&\geq  
 c^*c_B M^{-1}\|\hatbeta-\beta\|_2^2  +
 c^*c_H M^{-1}\|\hateta-\eta\|_2^2   + 
c^*c_\Sigma\|\hattheta-\theta\|_2^2  \\
&\geq  
 c^*(c_B\wedge c_H) M^{-1}\|\hatomega-\omega\|_2^2  +
c^*c_\Sigma\|\hattheta-\theta\|_2^2. 
  \end{aligned}
    \label{eq: decompose target quadratic 2} 
\end{equation}

\par {\bf Step 2: Bound $\Delta_1$.} Note that
\begin{equation}
   \Delta_1 = \dfrac{2}{n}(\hatomega - \omega)^\top(W-\hatW)^\top W(\hatomega - \omega) + \dfrac{2}{n}(\hatomega - \omega)^\top(W-\hatW)^\top X(\hattheta-\theta).
    \label{eq: decompose quadratic error} 
\end{equation} 
Define $a_n :=  M^{-\gamma + 1} + \sqrt{\dfrac{M^2(s+M)\logpM}{n}}$. 
By Corollary \ref{cor: H approx},  with probability at least $1-c(pM)^{-4}$
 \begin{equation}
 \begin{aligned}
   \|W-\hatW\|_2  &\leq \sqrt{\sum_{i=1}^n\sum_{j=1}^{M}[H_j(v_i)-H_j(\hatv_i)]^2} \\
   &\lesssim \sqrt{n}a_n.
 \end{aligned}
 \label{eq: diff W}
 \end{equation}
Besides, $ \|W\|_2 \leq \|B\|_2 + \|H\|_2$ and 
 \begin{equation}
 \begin{aligned}
     \|B\|_2  = \sqrt{\|B^\top B\|_2}\leq \sqrt{\|B^\top B - \E(B^\top B)\|_2} + \sqrt{\|\E(B^\top B)\|_2}.\\ 
 \end{aligned}
 \label{eq: Bnorm2 bound 0}
 \end{equation}

 Since $\|B_{i\cdot} B_{i\cdot}^\top\|_2\leq \|B_{i\cdot}\|_2^2\leq k_D$ and 
 \[ 
 \begin{aligned}
  \|\sum_{i=1}^n \mathbb{E}[ B_{i\cdot} B_{i\cdot}^\top B_{i\cdot} B_{i\cdot}^\top ] \|_2 &= \| \sum_{i=1}^n \mathbb{E}[ \|B_{i\cdot}\|_2^2 B_{i\cdot}  B_{i\cdot}^\top  ]\|_2  \\
   &= k_D \| \sum_{i=1}^n \mathbb{E}[ B_{i\cdot}  B_{i\cdot}^\top ]\|_2 \leq k_D n M^{-1},  
 \end{aligned}
 \]
 Then applying the matrix Bernstein inequality \citep[Theorem 1.6.2]{tropp2015introduction}, we have for any $t>0$
 \begin{equation*}
     \PP\left(\left\|\dfrac{B^\top B}{n} - \E(\dfrac{B^\top B}{n})\right\|_2 > \dfrac{t}{n} \right) \leq 2M\cdot \exp\left(-\dfrac{t^2}{2k_D nM^{-1}+k_D t}\right).
 \end{equation*}
 Taking $t=\sqrt{12k_D nM^{-1}\log (M\vee p)}$ we have with probability at least $1-c(M\vee p)^{-4}$
  \begin{equation}
      \sqrt{\left\|B^\top B - \E(B^\top B)\right\|_2} \lesssim  (nM^{-1}\log (M\vee p))^{1/4}.
      \label{eq: BB m EBB}
 \end{equation}
Together with the fact that $\sqrt{\|\E(B^\top B)\|_2}\lesssim \sqrt{nM^{-1}}$ by Proposition \ref{prop: spline eigen}, \eqref{eq: Bnorm2 bound 0} and \eqref{eq: BB m EBB} imply $\|B\|_2  \lesssim \sqrt{nM^{-1}\log (pM)}$. Following the same procedures we have $ \|H\|_2 \lesssim \sqrt{nM^{-1}\log (pM)} $ and hence 
  \begin{equation}
 \begin{aligned}
     \|W\|_2 \leq \|B\|_2 + \|H\|_2   \lesssim \sqrt{nM^{-1}\log  (pM)}. 
 \end{aligned}
 \label{eq: Wnorm2 bound}
 \end{equation}
\eqref{eq: diff W} and \eqref{eq: Wnorm2 bound} imply that with probability at least $1-c(M\vee p)^{-4}$
 \begin{equation}
 \begin{aligned}
      \left|\dfrac{2}{n}(\hatomega-\omega)^\top(W-\hatW)^\top W (\hatomega-\omega)\right| &\leq \dfrac{2}{n}\|\hatomega-\omega\|_2^2 \|W-\hatW\|_2\|W\|_2 \\
      &\lesssim \|\hatomega-\omega\|_2^2 a_n \sqrt{M^{-1}\log (pM)}. 
 \end{aligned}
 \label{eq: Quadratic Bound 1}
 \end{equation}
 In addition, \eqref{eq: diff W} also implies 
 \begin{equation}
     \begin{aligned}
&\ \ \ \ \left| \dfrac{2}{n}(\hatomega-\omega)^\top(W-\hatW)^\top X (\hattheta-\theta) \right| \\
 &\leq 2\sqrt{(\hatomega-\omega)^\top\dfrac{(W-\hatW)^\top(W-\hatW)}{n}(\hatomega-\omega)}\sqrt{(\hattheta-\theta)^\top\dfrac{X^\top X}{n}(\hattheta-\theta)} \\
  &\leq  (\hatomega-\omega)^\top\dfrac{4C_\Sigma(W-\hatW)^\top(W-\hatW)}{c^*c_\Sigma n}(\hatomega-\omega) +  (\hattheta-\theta)^\top\dfrac{c^* c_\Sigma X^\top X}{4C_\Sigma n}(\hattheta-\theta) \\
   &\leq  \|\hatomega-\omega\|^2_2\dfrac{4C_\Sigma\|W-\hatW\|^2_2}{c^*c_\Sigma n}  +  (\hattheta-\theta)^\top\dfrac{c^*c_\Sigma X^\top X}{4C_\Sigma n}(\hattheta-\theta) \\
   &\leq  C\|\hatomega-\omega\|^2_2 a_n^2  +  (\hattheta-\theta)^\top\dfrac{c^*c_\Sigma X^\top X}{4C_\Sigma n}(\hattheta-\theta) \\
 \end{aligned}
\label{eq: Quadratic Bound 2}
 \end{equation}
 for some $C>0$. Besides, by Proposition \ref{prop: DB} we have with probability at least $1-(M\vee p)^{-4}$
 \[ \|\dfrac{X^\top X}{n} - \dfrac{\E(X^\top X)}{n}\|_\infty \leq C_x \sqrt{\dfrac{\log (pM)}{n}} \]
 for some $C_x>0$ large enough. Note that $((\hatomega-\omega),(\hattheta-\theta)^\top)^\top \in \mathcal{C}_0$ as defined in \eqref{eq: Set C0 }, 
 \begin{equation}
 \begin{aligned}\label{eq: upper bound theta X}
 (\hattheta-\theta)^\top\dfrac{X^\top X}{n}(\hattheta-\theta) &\leq (\hattheta-\theta)^\top\dfrac{\E(X^\top X)}{n}(\hattheta-\theta) + \|\hattheta-\theta\|_1^2\|\dfrac{X^\top X}{n} - \dfrac{\E(X^\top X)}{n}\|_\infty \\
 &\leq C_\Sigma \|\hattheta-\theta\|_2^2   + \left(\dfrac{2c_2^2R_2^2}{\lambda_Y^2}\|\hatomega-\omega\|_2^2 + 2c_3^2s\|\hattheta-\theta\|_2^2\right) \cdot C_x\sqrt{\dfrac{\log p}{n}} \\
  &\leq 2C_\Sigma \|\hattheta-\theta\|_2^2   +  \dfrac{2c_2^2R_2^2}{\lambda_Y^2}C_x\sqrt{\dfrac{\log p}{n}}\|\hatomega-\omega\|_2^2.    \\
 \end{aligned}
 \end{equation}
 Combining \eqref{eq: decompose quadratic error}, \eqref{eq: Quadratic Bound 1}, \eqref{eq: Quadratic Bound 2} and \eqref{eq: upper bound theta X} we deduce that with probability at least $1-c(M\vee p)^{-4}$
 \begin{equation}
 \begin{aligned}
   |\Delta_1| &\lesssim C\|\hatomega-\omega\|_2^2(a_n^2 + a_n\sqrt{M^{-1}\log (pM)}) +  (\hattheta-\theta)^\top\dfrac{c^*c_\Sigma X^\top X}{4C_\Sigma n}(\hattheta-\theta) \\
   &\leq C\|\hatomega-\omega\|_2^2(a_n^2 + a_n\sqrt{M^{-1}\log (pM)} + \dfrac{R_2^2}{\lambda_Y^2}\sqrt{\dfrac{\log p}{n}}) +   \dfrac{c^*c_\Sigma \|\hattheta-\theta\|_2^2}{2}  \\
   &= o_p\left(M^{-1}\|\hatomega-\omega\|_2^2\right) +  \dfrac{c^*c_\Sigma \|\hattheta-\theta\|_2^2}{2}.
 \end{aligned}
    \label{eq: decompose quadratic bound} 
\end{equation} 
 
 \par {\bf Step 3:  Bound $\Delta_2$.} Note that each row in the $n\times(2M+p)$ matrix $U$ are independent bounded variables. By Bernstein-type inequality, with probabilty at least $1-c(M\vee p)^{-4}$  
 \[
 \left\|\dfrac{U^\top U}{n}-\dfrac{\mathbb{E}(U^\top U)}{n}\right\|_\infty \lesssim \sqrt{\dfrac{\log(M\vee p)}{n}}
 \]
 and hence 
 \begin{equation*}
     |\Delta_2| \lesssim \|\delta\|_1^2\sqrt{\dfrac{\log(M\vee p)}{n}}. 
 \end{equation*}
 Note that 
 \[\begin{aligned}
  \|\delta\|_1^2  &=   \|\hatomega-\omega\|_1^2 + \|(\hattheta-\theta)_\mathcal{S}\|_1^2 + \|(\hattheta-\theta)_\mathcal{N}\|_1^2 \\
  &\leq  2M\|\hatomega-\omega\|_2^2 +  \|(\hattheta-\theta)_\mathcal{S}\|_1^2 + 2\dfrac{c_2^2R_2^2}{\lambda_Y^2}\|\hatomega-\omega\|_2^2 + 2c_3^2\|(\hattheta-\theta)_\mathcal{S}\|_1^2 \\
  &= \left(2M + 2\dfrac{c_2^2R_2^2}{\lambda_Y^2}\right)\|\hatomega-\omega\|_2^2 + (1+2c_3^2)s\|\hattheta-\theta\|_2^2 \\
    &\lesssim  M \|\hatomega-\omega\|_2^2 +  s\|\hattheta-\theta\|_2^2 
 \end{aligned}
 \]
 which implies 
 \begin{equation}
 \begin{aligned}
     |\Delta_2|  
    &\lesssim M \sqrt{\dfrac{\logpM}{n}}\|\hat\omega-\omega\|_2^2 +  s\sqrt{\dfrac{\logpM}{n}}\|\hat\theta-\theta\|_2^2 \\
    &= o_{\rm{a.s.}}\left(M^{-1}\|\hat\omega-\omega\|_2^2 + \|\hat\theta-\theta\|_2^2\right)
    \label{eq: Delta2 Bound}. 
 \end{aligned}
 \end{equation}
\end{proof}

\begin{proof}[Proof of Lemma \ref{lem: cond cov eigen}]
Define $\Sigma_F = \E\left(F_\ic F_\ic^\top \right)$, $S_F := \text{diag}(\Sigma_F)$ and recall that $F^{\rm std}_\ic = S_F^{-1/2}F_\ic$ in  Assumption \ref{assu: more compatibility}. The eigenvalues of $\E[S_F^{-1/2}\Sigma_{F}S_F^{-1/2}] = \E[F^{\rm std}_\ic (F^{\rm std}_\ic)^\top]$ are bounded away from zero and above by Assumption \ref{assu: more compatibility}. It then suffices to show that 
\begin{equation} \label{eq: target bound eigen}
   \supg \left\|S_F^{-1/2}\left[\Sigma_F - \Sigma_{F|\mathcal{L}}\right] S_F^{-1/2}  \right\|_2 = o_p(1) 
\end{equation}
which implies that the eigenvalues of $S_F^{-1/2}\Sigma_{F|\mathcal{L}}S_F^{-1/2}$ are bounded away from zero and above. As $1 \lep \lambda_{\min}(S_F^{-1/2}) \leq \lambda_{\max}(S_F^{-1/2}) \lep M$, we deduce that $M^{-1} \lep \inf_g \lambda_{\min}(\Sigma_{F|\mathcal{L}})\leq \sup_g \lambda_{\max}(\Sigma_{F|\mathcal{L}}) \lep 1$ by (\ref{eq: target bound eigen}). 
\par We then start proving (\ref{eq: target bound eigen}). By standard arguments, 
\[\begin{aligned}
    \|S_F^{-1/2}[\Sigma_F - \Sigma_{F|\mathcal{L}}]S_F^{-1/2}  \|_2 &=\left\|S_F^{-1/2}\E_{\mathcal{L}}(F_{i\cdot}^{\ot}-\hat F_{i\cdot}^{\ot})S_F^{-1/2}\right\|_2 \\ &\lesssim \left\|S_F^{-1/2}\E_{\mathcal{L}}(F_{i\cdot}-\hat F_{i\cdot})^{\ot}S_F^{-1/2}\right\|_2 + \left\|\E_{\mathcal{L}}S_F^{-1/2}(F_{i\cdot}-\hat F_{i\cdot})F_{i\cdot}^\top S_F^{-1/2}\right\|_2\\
    &=: L_1 + L_2.
\end{aligned}\]
\underline{We first bound $L_1$}. Given that $\max_{j\in[M_D]}\E(H_{ij}^2) \lesssim M$, $\max_{j\in[p_zM]}\E[(q^\prime(v_i)K_{ij})^2] \lesssim M$ and $\max_{j\in[p]}\E[(q^\prime(v_i)X_{ij})^2] \lesssim 1$, we deduce that 
\[\begin{aligned}
    L_1 &\lesssim \max_{j\in[M_D]}\E(H_{ij}^2)^{-1}\left\|\E_{\mathcal{L}}(H_{i\cdot}-\hat H_{i\cdot})^{\ot}\right\|_2 +  \max_{j\in[p]}\E[(q^\prime(v_i)X_{ij})^2]^{-1}\left\|\E_{\mathcal{L}}(\hat q_i^\prime - q_i^\prime)^2X_\ic^{\ot}\right\|_2  + \\ 
&\ \ \ \ \max_{j\in[p_zM]}\E[(q^\prime(v_i)K_{ij})^2]^{-1}\left\|\E_{\mathcal{L}}(\hat q_i^\prime - q_i^\prime)^2K_\ic^{\ot}\right\|_2 \\
&\lesssim M\left\|\E_{\mathcal{L}}(H_{i\cdot}-\hat H_{i\cdot})^{\ot}\right\|_2 +\left\|\E_{\mathcal{L}}(\hat q_i^\prime - q_i^\prime)^2X_\ic^{\ot}\right\|_2 + M\left\|\E_{\mathcal{L}}(\hat q_i^\prime - q_i^\prime)^2K_\ic^{\ot}\right\|_2 \\
&=: L_{11} + L_{12} + L_{13}. 
\end{aligned}\]
\par \underline{Bound $L_{11}$}. By (\ref{eq: cond exp vhat H}), 
\[\begin{aligned}
    L_{11} = \left\|n^{-1}\sumn M\E_{\mathcal{L}}(H_{i\cdot}-\hat H_{i\cdot})^{\ot}\right\|_2 \leq  n^{-1}M\sumn\sum_{j=1}^{M} \E_{\mathcal{L}}(H_{ij}-\hat H_{ij})^2 = o_p(1).  
\end{aligned}\]
Since the $H$ and $\hat H$ are not dependent on $g$, the $o_p(1)$ holds uniformly for all $g$. 

\par \underline{Bound $L_{12}$}. Note that
\begin{equation}\label{eq: bound q hat approx}
    \begin{aligned}
    (\hat q_i^\prime - q_i^\prime)^2 
   &\lesssim \left|q^\prime(\hat v_i) - q^\prime(v_i)\right|^2 + \left|\eta^\top \hat H_{i\cdot}^\prime - q^\prime(\hat v_i)\right|^2  + \left|(\hat\eta^{\ind} - \eta)^\top \hat H_{i\cdot}^\prime\right|^2  \\
   &=: \varDelta^q_{1i} + \varDelta^q_{2i} + \varDelta^q_{3i}. 
\end{aligned}
\end{equation}
Define $\varDelta^q_{j}:=\supn \varDelta^q_{ji}$ for $j=1,2,3.$ Eq. (\ref{eq: sup q approx}) shows when $\hat{\mathcal{V}}_n$ in \eqref{eq: hat v event} holds, 
\begin{equation}\label{eq: Delta q1 bound}
    \varDelta^q_1 \lesssim c_{1n} := \dfrac{n}{\log (pM)}M^{-4\gamma} + \dfrac{(s+M)^2\log (pM)}{n} + M^{-2\gamma} = o(1). 
\end{equation}
When $\hat{\mathcal{V}}_n$ in (\ref{eq: hat v event}) does not hold, we claim $\varDelta^q_1$ is uniformly bounded since $q^\prime(\cdot)$ is bounded over a compact interval. Thus 
\begin{equation}\label{eq: Vn not hold E bound}
    \begin{aligned} 
    \sup_g\left\|\E_{\mathcal{L}}\left[  \varDelta^q_1   X_\ic^\ot\right]\right\|_2  
    &\leq \sup_g\left\|\E_{\mathcal{L}}\left[ 1(\hat{\mathcal{V}}_n) \varDelta^q_1   X_\ic^\ot\right]\right\|_2 + \sup_g\left\|\E_{\mathcal{L}}\left[ 1(\hat{\mathcal{V}}_n^c) \varDelta^q_1 X_\ic^\ot\right]\right\|_2 \\
    &\lesssim c_{1n}\cdot \|\E(X_\ic X_\ic^\top)\|_2 + \Pr(\hat{\mathcal{V}}_n|\mathcal{L}) \cdot \E( \|X_\ic\|_2^2) \\
    &\lesssim c_{1n} + \Pr(\hat{\mathcal{V}}_n|\mathcal{L})\cdot p = o_p(1)
\end{aligned}
\end{equation} 
where the third inequality applies that boundness of $X_\ic$ and the $o_p(1)$ applies \eqref{eq: prob vhat compact}. 
\par Furthermore, by Proposition \ref{prop: spline approx power} 
\begin{equation}\label{eq: Delta q2 bound}
    \Delta^q_2 = \sup_{v\in[a_v-\epsilon_v,b_v+\epsilon_v]}\left|\sum_{j\in[M]}\eta_j H_j^\prime(v) -q^\prime(v)\right|^2  \lesssim M^{-2\gamma+2}.
\end{equation}
 By  Proposition \ref{prop: SplineDerivativeL2Norm}  and the functional extensions stated in the beginning of Section \ref{sec: proof},
 \begin{equation}\label{eq: Delta q3 bound}
     \varDelta^q_3 \leq \supn \|\hat\eta^\ind - \eta\|_2^2\cdot \|\hat H^\prime_\ic\|_2^2  \lesssim \|\hat\eta^\ind - \eta\|_2^2\cdot M^2. 
 \end{equation}
Consequently,
\begin{equation}\label{eq: Vn holds E bound}
    \begin{aligned}
    \sup_g\left\|\E_{\mathcal{L}}\left[  (  \varDelta^q_2 + \varDelta^q_3 ) X_\ic^\ot\right]\right\|_2 
    &\lesssim \left( M^{-2\gamma+2}+\sup_g\|\hat\eta^\ind - \eta\|_2^2 M^2\right) \cdot \lambda_{\max}(\E[X_\ic X_\ic^\top]) = o_p(1).
\end{aligned}
\end{equation} 
The $o_p(1)$ applies the uniform error bound for $\hat\eta^\ind$ in \eqref{eq: sup rate omega}.
Therefore,  
\[ \sup_g L_{12} = \sup_g \left\|\E_{\mathcal{L}}(\hat q_i^\prime - q_i^\prime)^2X_\ic^{\ot}\right\|_2 = \sup_g \left\|\E_{\mathcal{L}}(\varDelta^q_1 + \varDelta^q_2 + \varDelta^q_3)X_\ic^{\ot}\right\|_2 = o_p(1).\]
\par \underline{Bound $L_{13}$}. Note that $\|M\E(K_\ic K_\ic^\top)\|_2 \lesssim 1$ by Proposition \ref{prop: spline eigen}, and
\begin{equation}\label{eq: Bound K}
     M \|K_\ic\|_2^2 \leq M\sup_{\{z_\ell\}}\sum_{\ell\in[p_z]}\sum_{j\in[M]} [K_{j\ell}(z_\ell)]^2 \leq M\sup_{\{z_\ell\}}\sum_{\ell\in[p_z]}\left(\sum_{j\in[M]} |K_{j\ell}(z_\ell)|\right)^2 = Mp_z.
\end{equation}
Using these two results for $K_\ic$, we can deduce 
\[\begin{aligned} 
   \supg L_{13} &\lesssim \supg M\left\|\E_{\mathcal{L}}\left[ (\varDelta^q_1 + \varDelta^q_2 + \varDelta^q_3 ) K_\ic^\ot\right]\right\|_2 = o_p(1).
\end{aligned}\]
following similar arguments when bounding $L_{12}$,  
\par \underline{We then bound $L_2$}. It follows that 
\[\begin{aligned}
   \sup_g L_2 &\lesssim \sup_g\sqrt{\left\|\E\left[S_F^{-1/2}(F_\ic-\hat F_\ic)^\ot S_F^{-1/2}\right]\right\|_2} \cdot \sqrt{\|\E(F^{\rm std}_\ic(F^{\rm std}_\ic)^\top)\|_2}\\ &= \sup_g\sqrt{L_1}\cdot O(1) = o_p(1)
\end{aligned}\] 
where the second step applies Assumption \ref{assu: more compatibility} where $F^{\rm std}_\ic = S_F^{-1/2}F_\ic$. (\ref{eq: target bound eigen}) then follows the bounds of $L_1$ and $L_2$. 
\end{proof}

\begin{proof}[Proof of Lemma \ref{lem: tilde r}] Note that
\[\sumn (\tilde r_i - \check r_i)^2 \leq \sumn q^{\prime\prime}(v_i^*)^2(v_i - \hat v_i)^4 + \sumn (q^\prime(\hat v_i) - \hat q^\prime(\hat v_i))^2\left[X_\ic^\top (\hat\varphi^\ind-\varphi ) + K_\ic^\top (\hat\kappa^\ind - \kappa) \right]^2.\]
By (\ref{eq: vhat error}), we deduce that 
\begin{equation}\label{eq: vhat error 4 power}
    \begin{aligned}
    \sumn (v_i - \hat v_i)^4 \lep \dfrac{n^3}{(\log (pM))^2}M^{-8\gamma} + \dfrac{(s+M)^4[\log (pM)]^2}{n} + n\cdot M^{-4\gamma} = o(1) 
\end{aligned}
\end{equation}
under the conditions for Theorem \ref{thm: limiting distribution}.  Besides, $q^{\prime\prime}$ is uniformly bounded when $\hat{\mathcal{V}}_n$ holds. Hence, 
\[\begin{aligned}
    \sumn q^{\prime\prime}(v_i^*)^2(v_i - \hat v_i)^4 \lep \sumn (v_i - \hat v_i)^4 = o_p(1).
\end{aligned}\]
It remains to show 
\[  \sumn (q^\prime(\hat v_i) - \hat q^\prime(\hat v_i))^2\left[X_\ic^\top (\hat\varphi^\ind-\varphi ) + K_\ic^\top (\hat\kappa^\ind - \kappa) \right]^2 = o_p\left(\frac{1}{(\log n)^2}\right).\]
By (\ref{eq: bound q hat approx}), 
\[\begin{aligned}
    (q^\prime(\hat v_i) - \hat q^\prime(\hat v_i))^2 &\lesssim \varDelta^q_{1i} + \varDelta^q_{2i} + \varDelta^q_{3i}  
\end{aligned}.\] 
It suffices to show that
\[ \supg \sumn  (\varDelta^q_{1i} + \varDelta^q_{2i} + \varDelta^q_{3i} )\left[X_\ic^\top (\hat\varphi^\ind-\varphi ) + K_\ic^\top (\hat\kappa^\ind - \kappa) \right]^2 = o_p\left(\frac{1}{(\log n)^2}\right).\]
Note that by Proposition \ref{prop: Lasso D},  
\[\begin{aligned}
     \supn \left[X_\ic^\top (\hat\varphi^\ind-\varphi ) + K_\ic^\top (\hat\kappa^\ind - \kappa) \right]^2 &\lesssim \|\hat\varphi^\ind-\varphi\|_1^2 \supn \|X_\ic\|_\infty^2 + \|\hat\kappa^\ind - \kappa\|_2^2 \cdot \supn \|K_\ic\|_2^2 \\
     &\lep c_{0n} := \dfrac{n}{\log (pM)} M^{-4\gamma} + \dfrac{(s+M)^2 \log (pM)}{n}.  
\end{aligned}\] 
By Assumption \ref{assu: asym more}, $c_{0n} = o(n^{-1/2}(\log n)^{-2})$. Furthermore, recall that $\varDelta^q_{j} = \supn\varDelta^q_{ji}$ defined right after (\ref{eq: bound q hat approx}). Equations (\ref{eq: Delta q1 bound}) and (\ref{eq: Delta q2 bound}) show that 
\[ \varDelta^q_1  
     \lep \dfrac{n}{\log (pM)} M^{-4\gamma} + \dfrac{(s+M)^2\log (pM)}{n} + M^{-2\gamma} = o(n^{-1/2})  \]
 and 
 \[ \varDelta^q_2  
     \lesssim M^{-2\gamma + 2} = o(n^{-1/2}).  \]
Note that $\varDelta^q_1$ and $\varDelta^q_2$ are not dependent on $g$. Thus, \[\begin{aligned}
     &\ \ \ \ \supg \sumn (\varDelta^q_{1i} + \varDelta^q_{2i}  )\left[X_\ic^\top (\hat\varphi^\ind-\varphi ) + K_\ic^\top (\hat\kappa^\ind - \kappa) \right]^2 \\
     &\lesssim n(\varDelta^q_1 + \varDelta^q_2)c_{0n}  = n\cdot o_p(n^{-1/2}) \cdot o_p(1/(\log n)^2) = o_p(1/(\log n)^2).
\end{aligned} \] 
For $\varDelta_{3i}^q$, by the definition in \eqref{eq: bound q hat approx}, 
\[\varDelta_{3i}^q \lesssim \|\hat\eta^\ind - \eta\|_2^2\cdot \supn \|\hat H^\prime_\ic - H^\prime_\ic\|_2^2 + \left|(\hat\eta^\ind - \eta)^\top H^\prime_\ic \right|^2 =: \varDelta_{3}^{(1)} + \varDelta_{3i}^{(2)} \]
If suffices to show 
\[\supg \sumn (\varDelta_{3}^{(1)} + \varDelta_{3i}^{(2)}  )\left[X_\ic^\top (\hat\varphi^\ind-\varphi ) + K_\ic^\top (\hat\kappa^\ind - \kappa) \right]^2 = o(1/(\log n)^2).\]
By \eqref{eq: sup rate omega} 
\begin{equation}\label{eq: rate omega simplified}
    \supg \|\hat\eta^\ind - \eta\|_2^2 \lep M^{-2\gamma + 1} + \dfrac{(s+M)^2 \log (pM)}{n} \lesssim \dfrac{(s+M)^2 \log (pM)}{n}
\end{equation} 
where the last inequality applies the fact that $M^{2\gamma+1} \gtrsim n$ under Assumption \ref{assu: asym more}. A similar argument applies to the last inequalities in the following two equations. By \eqref{eq: vhat error H prime},
\[\begin{aligned}
    \supn \|\hat H^\prime_\ic - H^\prime_\ic\|_2^2 &\lep M^4\left(\dfrac{n}{\logpM}M^{-4\gamma} + \dfrac{(s+M)^2\logpM}{n}+M^{-2\gamma}\right) \\
    &\lesssim \dfrac{M^4(s+M)^2\logpM}{n}. 
\end{aligned}\]
Besides, by (\ref{eq: Lasso D estimation error proof corollary}) 
\[ n^{-1}\sumn \left[X_\ic^\top (\hat\varphi^\ind-\varphi ) + K_\ic^\top (\hat\kappa^\ind - \kappa) \right]^2 \lep M^{-2\gamma} + \dfrac{(s+M)\log (pM)}{n} \lesssim \dfrac{(s+M)\log (pM)}{n}.\] 
Thus  
\[\begin{aligned}
    &\ \ \ \ \supg \sumn \varDelta_{3}^{(1)} \left[X_\ic^\top (\hat\varphi^\ind-\varphi ) + K_\ic^\top (\hat\kappa^\ind - \kappa) \right]^2\\
    &\lep \dfrac{(s+M)^2\logpM}{n}\cdot \dfrac{M^4(s+M)^2\logpM}{n} \cdot (s+M)\log (pM) \\
    &= \dfrac{M^4(s+M)^5[\logpM]^3}{n^2} \lesssim \dfrac{(s^9+M^9)[\logpM]^3}{n^2} = o(1/(\log n)^2)
\end{aligned}\]
where the $o(1/(\log n)^2)$ applies Assumption \ref{assu: asym more}. It remains to show 
\[\supg \sumn \left[ \varDelta^{(2)}_{3i} \left[X_\ic^\top (\hat\varphi^\ind-\varphi ) + K_\ic^\top (\hat\kappa^\ind - \kappa) \right]^2\right] = o_{p.g.}\left(\frac{1}{(\log n)^2}\right). \]
When we consider $\hat\eta^\ind$, $\hat\varphi^\ind$ and $\hat\kappa^\ind$ that generate the sigma-field $\mathcal{L}$ as given, $\varDelta^{(2)}_{3i}$ is a function of $v_i$, and $X_\ic^\top (\hat\varphi^\ind-\varphi ) + K_\ic^\top (\hat\kappa^\ind - \kappa) $ is a function of $X_\ic$ and $Z_\ic$. By independence between $(X_\ic^\top, Z_\ic^\top)^\top$ and $v_i$ (conditionally on $\mathcal{L}$), we deduce that 
\[n\E_{\mathcal{L}}\left( \varDelta^{(2)}_{3i} \left[X_\ic^\top (\hat\varphi^\ind-\varphi ) + K_\ic^\top (\hat\kappa^\ind - \kappa) \right]^2\right) = n\E_{\mathcal{L}}\left( \varDelta^{(2)}_{3i} \right)\E_{\mathcal{L}}\left[X_\ic^\top (\hat\varphi^\ind-\varphi ) + K_\ic^\top (\hat\kappa^\ind - \kappa) \right]^2.\] 
By Proposition \ref{eq: spline eigen} and \eqref{eq: rate omega simplified}, 
\[n\cdot \supg \E_{\mathcal{L}}\left( \varDelta^{(2)}_{3i} \right)\lep n\|\hat\eta^\ind - \eta\|_2^2 \cdot \lambda_{\max}(\E(H^\prime_\ic {H^\prime_\ic}^\top )) \lep (s+M)^2\logpM\cdot M \] 
In addition, by \eqref{eq: conditional Exp vhat} and \eqref{eq: corollary proof vhat error very beginning}
\[\begin{aligned}
    \E_{\mathcal{L}}\left[X_\ic^\top (\hat\varphi^\ind-\varphi ) + K_\ic^\top (\hat\kappa^\ind - \kappa) \right]^2  
    &\lep M^{-2\gamma} + \dfrac{(s+M)\logpM}{n} \lesssim \dfrac{(s+M)\logpM}{n}
\end{aligned}\]
where the last inequality applies the rate of $M$ in Assumption \ref{assu: asym more}.  Thus, 
\[\begin{aligned}
     \supg n\E_{\mathcal{L}}\left( \varDelta^{(2)}_{3i} \left[X_\ic^\top (\hat\varphi^\ind-\varphi ) + K_\ic^\top (\hat\kappa^\ind - \kappa) \right]^2\right) \lep \dfrac{(s+M)^4(\logpM)^2}{n} = o\left(\dfrac{1}{(\log n)^3}\right)
\end{aligned}\]
where the $o(1)$ applies Assumption \ref{assu: asym more}. By Markove inequality, 
\[\begin{aligned}
    &\ \ \ \ \supg\Pr\left\{\left| \sumn \left[ \varDelta^{(2)}_{3i} \left[X_\ic^\top (\hat\varphi^\ind-\varphi ) + K_\ic^\top (\hat\kappa^\ind - \kappa) \right]^2\right]\right| > (\log n)^{-2.5}\Bigg|\mathcal{L}\right\} \\
    &\leq (\log n)^{2.5}\supg n\E_{\mathcal{L}}\left( \varDelta^{(2)}_{3i} \left[X_\ic^\top (\hat\varphi^\ind-\varphi ) + K_\ic^\top (\hat\kappa^\ind - \kappa) \right]^2\right) = o_p(1). 
\end{aligned}\]
Then 
\[\begin{aligned}
    &\ \ \ \ \supg \Pr\left\{\left|  \sumn \left[ \varDelta^{(2)}_{3i} \left[X_\ic^\top (\hat\varphi^\ind-\varphi ) + K_\ic^\top (\hat\kappa^\ind - \kappa) \right]^2\right]\right| > (\log n)^{-2.5} \right\} \\
    &= \supg \E\left(\Pr\left\{\left|  \sumn \left[ \varDelta^{(2)}_{3i} \left[X_\ic^\top (\hat\varphi^\ind-\varphi ) + K_\ic^\top (\hat\kappa^\ind - \kappa) \right]^2\right]\right| > (\log n)^{-2.5} \Bigg|\mathcal{L}\right\}\right) \\
    &\leq  \E\left(\supg\Pr\left\{\left|  \sumn \left[ \varDelta^{(2)}_{3i} \left[X_\ic^\top (\hat\varphi^\ind-\varphi ) + K_\ic^\top (\hat\kappa^\ind - \kappa) \right]^2\right]\right| > (\log n)^{-2.5} \Bigg|\mathcal{L}\right\}\right) \\ 
    &\to 0
\end{aligned}\]
where the limit applies the Bounded Convergence Theorem to supreme of the conditional probability as a random variable. Thus, $\sumn \left[ \varDelta^{(2)}_{3i} \left[X_\ic^\top (\hat\varphi^\ind-\varphi ) + K_\ic^\top (\hat\kappa^\ind - \kappa) \right]^2\right] = o_{p.g.}(1)$. We complete the proof of Lemma \ref{lem: tilde r}. 
\end{proof}

\begin{proof}[Proof of Lemma \ref{lem: gaussian approx}]
     We only show the Gaussian approximation error for $\mathbb{G}$ and the same arguments apply to $\mathbb{G}^\prime$. Let $\mathcal{G}$ be the sigma-field generated by $\{X_\ic,Z_\ic,v_i\}_{i\in[n]}$ and the LASSO estimators $\hat\kappa^{\ind}$, $\hat\varphi^{\ind}$, $\hat\eta^{\ind}$. Let $d_j^\varDelta = a_D + j\varDelta_J$ for $j=0,1,\cdots,J$  with some integer $J$ large enough, where $\varDelta_J := \frac{b_D - a_D}{J}$. Recall $\mathcal{D}=[a_D,b_D]$ and define  $\mathcal{D}^\varDelta:=\{d_j^\varDelta\}_{0\leq j\leq J}$. Define 
 \[ \mathbb{G} := \sup_{d\in\mathcal{D}} \sumn \zeta_i(d),\ \mathbb{G}^\varDelta := \sup_{d\in\mathcal{D}^\varDelta} \sumn \zeta_i(d),\]
  \[ \mathbb{G}^* := \sup_{d\in\mathcal{D}} \sumn \zeta_i^*(d),\ \mathbb{G}^{*\varDelta} := \sup_{d\in\mathcal{D}^\varDelta} \sumn \zeta_i^*(d)\]
  and using triangular inequalities 
  \begin{equation} \label{eq: G decompose}
      | \mathbb{G} - \mathbb{G}^*| \leq |\mathbb{G} - \mathbb{G}^\varDelta| + |\mathbb{G}^* - \mathbb{G}^{*\varDelta}| +  |\mathbb{G}^\varDelta - \mathbb{G}^{*\varDelta}|.
  \end{equation} 
  \underline{Bound $|\mathbb{G} - \mathbb{G}^\varDelta| + |\mathbb{G}^* - \mathbb{G}^{*\varDelta}|$}. Observe that $|\mathbb{G} - \mathbb{G}^\varDelta| \leq \sup_{|d-d^\prime| \leq \varDelta_J} |\sumn \left(\xi_i(d) - \xi_i(d^\prime)\right) |$ and we further deduce that 
 \begin{equation}\label{eq: GG bound 1}
     \begin{aligned}
      |\mathbb{G} - \mathbb{G}^\varDelta| &\leq \sup_{|d-d^\prime| \leq \varDelta_J}\|\hat s(d)^{-1}B^\prime(d) - \hat s(d^\prime)^{-1}B^\prime(d^\prime)\|_1 \cdot \left\|\sumn \hat\Omega_B \hat F_\ic \varepsilon_i \right\|_\infty. 
  \end{aligned}
 \end{equation} 
  We first bound the supreme of $L_1$ norm. We have $\|B^\prime(d)\|_1$ by Proposition \ref{prop: SplineDerivativeL2Norm}, $\hat s(d) \asymppg M^{1.5}$ by (\ref{eq: variance scale}). Then 
  \[\begin{aligned}
      &\ \ \ \ \|\hat s(d)^{-1}B^\prime(d) - \hat s(d^\prime)^{-1}B^\prime(d^\prime)\|_1 \\
      &\leq \|B^\prime(d)\|_1\cdot\dfrac{\left|\hat s(d^\prime)^2 - \hat s(d)^2\right|}{\hat s(d)\hat s(d^\prime)[\hat s(d) + \hat s(d^\prime)]} + \dfrac{\|B^\prime(d) - B^\prime(d^\prime)\|_1}{\hat s(d^\prime)} \\
      &\lepg \dfrac{M\cdot \left|\hat s(d^\prime)^2 - \hat s(d)^2\right| }{M^{4.5}} + \dfrac{\|B^\prime(d) - B^\prime(d^\prime)\|_1}{M^{1.5}}\\
      &=\dfrac{|(B^\prime(d) - B^\prime(d^\prime))^\top\hat\Omega_B \hat\Sigma_F \hat\Omega_B^\top (B^\prime(d) + B^\prime(d^\prime))|}{M^{3.5}} + \dfrac{\|B^\prime(d) - B^\prime(d^\prime)\|_1}{M^{1.5}} \\
      &=\left|\dfrac{\|B^\prime(d) - B^\prime(d^\prime)\|_1\cdot \|\hat\Omega_B \hat\Sigma_F \hat\Omega_B^\top\|_\infty\cdot \|B^\prime(d) + B^\prime(d^\prime)\|_1}{M^{3.5}}\right| + \dfrac{\|B^\prime(d) - B^\prime(d^\prime)\|_1}{M^{1.5}} \\
      &\lepg \dfrac{\|B^\prime(d) - B^\prime(d^\prime)\|_1}{M^{1.5}}.
   \end{aligned}\]
    By (\ref{eq: L1 bound Bd0 m Bd1}), 
   \[\|\hat s(d)^{-1}B^\prime(d) - \hat s(d^\prime)^{-1}B^\prime(d^\prime)\|_1 \lepg M^{0.5}|d - d^\prime|.\]
      and thus 
   \begin{equation}\label{eq: GG bound 2}
       \begin{aligned}
      |\mathbb{G} - \mathbb{G}^\varDelta| &\lepg  M^{0.5}\sup_{|d-d^\prime|\leq \varDelta_J}|d - d^\prime|\left\|\sumn \hat\Omega_B \hat F_\ic \varepsilon_i \right\|_\infty \leq M^{0.5}\varDelta_J\left\|\sumn \hat\Omega_B \hat F_\ic \varepsilon_i \right\|_\infty. 
  \end{aligned}
   \end{equation}
 \eqref{eq: GG bound 2} together with \eqref{eq: sup norm OFe bound}
   \begin{equation}\label{eq: diff G GDelta}
       |\mathbb{G} - \mathbb{G}^\varDelta| \lepg \sqrt{n\logpM}M^{1.5}\cdot \Delta_J.
   \end{equation} 
   Following exactly the same arguments,
   \begin{equation}\label{eq: diff G star GStarDelta}
       |\mathbb{G}^* - \mathbb{G}^{*\varDelta}| \lepg \sqrt{n\logpM}M^{1.5}\cdot \Delta_J.
   \end{equation} 
\underline{Bound $| \mathbb{G}^\varDelta - \mathbb{G}^{*\varDelta}|$}. Using \citet[Corollary 4.1]{chernozhukov2014gaussian}, for any $t>0$ 
\begin{equation}\label{eq: Chernozhukov bound}
    \begin{aligned}
    \Pr\left\{| \mathbb{G}^\varDelta - \mathbb{G}^{*\varDelta}| > 16 t \Bigg|\mathcal{G}\right]\} \lesssim \dfrac{b_1\log (J\vee n)}{t^2} + \dfrac{(b_2+b_4)[\log (J\vee n)]^2}{t^3} + \dfrac{\log n}{n} 
\end{aligned}
\end{equation} 
where 
\[\begin{aligned}
    b_1 &:= \E_{\mathcal{G}}\left[\max_{d,d^\prime\in\mathcal{D}^\varDelta}\left|\sumn(\zeta_{i}(d)\zeta_{i}(d^\prime) - \E_{\mathcal{G}}[\zeta_{i}(d)\zeta_{i}(d^\prime)])\right|\right],
\end{aligned}\]
\[\begin{aligned}
    b_2 &:= \E_{\mathcal{G}}\left[\max_{d\in\mathcal{D}^\varDelta}\sumn\left|\zeta_i(d)\right|^3\right]  
\end{aligned}\]
and 
\[\begin{aligned}
    b_4 &:= \sumn\E_{\mathcal{G}}\left[\max_{d\in\mathcal{D}^\varDelta}\left|\zeta_i(d)\right|^3\cdot 1\left(\max_{d\in\mathcal{D}^\varDelta}\left|\zeta_i(d)\right|>t/\log (J\vee n)\right)\right]  
\end{aligned}\]
Note that 
 $\hat F, \hat\Omega, \hat s(d)\in\mathcal{G}$ and 
 \begin{equation}\label{eq: bound zeta i d0}
      |\zeta_{i}(d)| = \left|  \dfrac{B^\prime(d)^\top\hat\Omega_B \hat F_\ic\varepsilon_i}{\hat s(d)}\right| \leq  \dfrac{\|B^\prime(d)\|_1\cdot \|\hat F\hat\Omega_B^\top\|_\infty\cdot |\varepsilon_i|}{\hat s(d)},
 \end{equation} 
 and 
 \[\sup_g \sup_{d\in\mathcal{D}} \dfrac{\|B^\prime(d)\|_1\cdot \|\hat F\hat\Omega_B^\top\|_\infty }{\hat s(d)} \lep \sqrt{M\logpM }\]
by the fact that $\sup_{d\in\mathcal{D}}\|B^\prime(d)\|_1\lesssim M$ by Proposition \ref{prop: SplineDerivativeL2Norm}, $\sup_g \|\hat F\hat\Omega_B^\top\|_\infty\lesssim M\sqrt{\logpM / n}$ by the definition of $\hat\Omega_B^\top$ in (\ref{eq: projection direction 1}), and the lower bound for $\hat s(d)$ for (\ref{eq: scale result lower}). 
Thus,  
\[\begin{aligned}
  \sup_g  b_2 &\leq \sup_g \sup_{d\in\mathcal{D}} \left(\dfrac{\|B^\prime(d)\|_1\cdot \|\hat F\hat\Omega_B^\top\|_\infty}{\hat s(d)}\right)^3\sumn \E_{\mathcal{G}}|\varepsilon_i^3|  \lesssim n\cdot (M\logpM)^{1.5},
\end{aligned}\]
and similarly 
\[\sup_g b_4 \lep n\cdot (M\logpM)^{1.5}.\]
We then bound $b_1$. Using the Chebyshev's inequality, for any $d,d^\prime\in\mathcal{D}^\varDelta$ and $\tau >0$,  
\[\begin{aligned}    &\ \ \ \ \supg \Pr\left( n^{-1/2}\left|\sumn(\zeta_{i}(d)\zeta_{i}(d^\prime) - \E_{\mathcal{G}}[\zeta_{i}(d)\zeta_{i}(d^\prime)])\right|>\tau \Bigg|\mathcal{G}\right) \\
&\leq \dfrac{1}{n\tau^{2}} \supg \E_{\mathcal{G}}\left[ \left|\sumn(\zeta_{i}(d)\zeta_{i}(d^\prime) - \E_{\mathcal{G}}[\zeta_{i}(d)\zeta_{i}(d^\prime)])\right|^{2}\right] \\
&= \dfrac{1}{n\tau^{2}}\supg \sumn \E_{\mathcal{G}} \left(   \zeta_{i}(d)\zeta_{i}(d^\prime) - \E_{\mathcal{G}}[\zeta_{i}(d)\zeta_{i}(d^\prime)] \right)^{2}  \\
&\leq \dfrac{1}{n\tau^{2}}\supg  \sumn\left(\E_{\mathcal{G}}\left[ \zeta_{i}(d)^2\zeta_{i}(d^\prime)^2\right] \right) \\
&\lesssim \sup_g \sup_{d\in\mathcal{D}} \left(\dfrac{\|B^\prime(d)\|_1\cdot \|\hat F\hat\Omega_B^\top\|_\infty}{\hat s(d)}\right)^2 n^{-1}\supg  \sumn \E_{\mathcal{G}}(\xi_i(d^\prime)^2\varepsilon_i^2) \\ &\lep \dfrac{M[\logpM]}{\tau^2},\\
\end{aligned}\]
where the fourth row applies (\ref{eq: bound zeta i d0}), and the last step applies \[
\begin{aligned}
    \sumn \E_{\mathcal{G}}(\xi_i(d^\prime)^2\varepsilon_i^2) &= \sumn \dfrac{(B^\prime(d)^\top \hat\Omega_B\hat F_\ic )^2}{\hat s(d)^2 }\E_{\mathcal{G}}(\varepsilon_i^4) \lesssim \sumn \dfrac{(B^\prime(d)^\top \hat\Omega_B\hat F_\ic )^2}{\hat s(d)^2 } = n 
\end{aligned}\] 
where the second step applies the bounded fourth condition moment of $\varepsilon_i$, and the last step applies the definition of $\hat s(d)$. By union bound, \[\begin{aligned}    &\supg \Pr\left( n^{-1/2}\max_{d,d^\prime\in\mathcal{D}^\varDelta}\left|\sumn(\zeta_{i}(d)\zeta_{i}(d^\prime) - \E_{\mathcal{G}}[\zeta_{i}(d)\zeta_{i}(d^\prime)])\right|>\tau \Bigg|\mathcal{G}\right) &\lep \dfrac{J^2 M[\logpM]}{\tau^2}\\
\end{aligned}\]
and thus 
\[\begin{aligned}    &\ \ \ \ \supg \Pr\left( (nJ^2M)^{-1/2}(\logpM)^{-1}\max_{d,d^\prime\in\mathcal{D}^\varDelta}\left|\sumn(\zeta_{i}(d)\zeta_{i}(d^\prime) - \E_{\mathcal{G}}[\zeta_{i}(d)\zeta_{i}(d^\prime)])\right|>\tau \Bigg|\mathcal{G}\right) \\
&\lep \dfrac{J^2 M[\logpM]^2}{ (\sqrt{J^2 M}\logpM\tau)^2} = \dfrac{1}{\tau^2}.\\
\end{aligned}\]
By $\E(x) = \int_0^\infty \Pr(x > e)de$ for any random variable $x$, 
\[\begin{aligned}
    &\ \ \ \ (nJ^2M)^{-1/2}(\logpM)^{-1}\cdot \supg b_1 \\
    &\leq \int_0^\infty \supg \Pr\left( (nJ^2M)^{-1/2}(\logpM)^{-1}\max_{d,d^\prime\in\mathcal{D}^\varDelta}\left|\sumn(\zeta_{i}(d)\zeta_{i}(d^\prime) - \E_{\mathcal{G}}[\zeta_{i}(d)\zeta_{i}(d^\prime)])\right|>\tau \Bigg|\mathcal{G}\right)  d\tau \\
    &\lep 1 + \int_1^\infty \dfrac{1}{\tau^2} d\tau = 2
\end{aligned}\]
and hence $\supg b_1\lep \sqrt{nM}J\logpM$. Let $t = \sqrt{n}t_n$. Using \eqref{eq: Chernozhukov bound} and the bounds for $b_1$, $b_2$, and $b_4$, we have 
\[\begin{aligned}
   \supg  \Pr\left\{n^{-1/2}|\mathbb{G}^\varDelta-\mathbb{G}^{*\varDelta}|>16t_n\Bigg|\mathcal{G}\right\} \lep \dfrac{\sqrt{M}J\log (J\vee n)}{\sqrt{n}t_n^2} + \frac{(M\logpM)^{1.5}[\log (J\vee n)]^2}{\sqrt{n}t_n^3} + \frac{\log n}{n}.
\end{aligned}\]
Let $J \asymp M^{1.5}\log n$ and $t_n = (\log n)^{-1/2}$, we have 
\[\begin{aligned}
   \supg \Pr\left\{n^{-1/2}|\mathbb{G}^\varDelta-\mathbb{G}^{*\varDelta}|>\frac{16}{\sqrt{\log n}}\Bigg|\mathcal{G}\right\} \lep \dfrac{M^2(\log n)^2}{\sqrt{n}} + \frac{M^{1.5}[\logpM]^5}{\sqrt{n}} + \frac{\log n}{n} = o(1).
\end{aligned}\]
Then 
\[\begin{aligned}
    \supg \Pr\left\{n^{-1/2}|\mathbb{G}^\varDelta-\mathbb{G}^{*\varDelta}|>\frac{16}{\sqrt{\log n}}\right\} &\leq \E\left(\supg \Pr\left\{n^{-1/2}|\mathbb{G}^\varDelta-\mathbb{G}^{*\varDelta}|>\frac{16}{\sqrt{\log n}}\Bigg|\mathcal{G}\right\} \right)
    \\&\to 0
\end{aligned}\] 
where the limit uses the Bounded Convergence Theorem for the supreme of conditional probability. Thus 
\begin{equation}\label{eq: G bound 2}
    n^{-1/2}|\mathbb{G}^\varDelta-\mathbb{G}^{*\varDelta}| \lepg \frac{1}{\sqrt{\log n}}.
\end{equation}
 Also by (\ref{eq: diff G GDelta}) and (\ref{eq: diff G star GStarDelta}) we have \begin{equation}\label{eq: G bound 1}
    n^{-1/2}|\mathbb{G} - \mathbb{G}^\varDelta| + n^{-1/2}|\mathbb{G}^* - \mathbb{G}^{*\varDelta}| \lepg \dfrac{1}{\sqrt{\log n}}.
\end{equation} 
Thus, by (\ref{eq: G decompose}), (\ref{eq: G bound 1}) and (\ref{eq: G bound 2})
\begin{equation}\label{eq: diff G target}
    n^{-1/2}|\mathbb{G}-\mathbb{G}^*| \lepg \frac{1}{\sqrt{\log n}}.
\end{equation}
We complete the proof of Lemma \ref{lem: gaussian approx}.
\end{proof}

\section{Additional Simulation Results}
\label{app: additional simul}
\par This section includes the robustness simulation results omitted from the main text. The results in this section all use full-sample inference as described in the main text. 
\par As mentioned in Section \ref{sec: sim} of the main text, we first check the performance of our methodology under different choices of $M_D$. As shown in Tables \ref{tab:simul_M} and \ref{tab:simul_M_NL}, a larger $M_D$ does not benefit the inference in terms of coverage but produces a wider confidence band due to additional variances from a larger dimension of B-Spline functions. It shows that $M_D = 5$ is a reasonable choice. 
\par Next, we check the robustness of our method under different types of distributions that violate the compactness assumption. We set $U_{ji}\sim N(0,12^{-1/2})$ and $v_i\sim N(0,1)$ to maintain the same variances of the data from bounded distributions in the main text. Other settings are unchanged. Table \ref{tab:simul_UB} shows that the performance of the proposed method is robust to violations of compact supports. 
\begin{table}[H]
\begin{center} 
\caption{Simulation Results for Linear $g$ Functions with different $M_d$.}
\label{tab:simul_M}
\begin{tabular}{cc|cccc}
\hline\hline 
 {$M_d$}  &  {$n$} &    BiasInit & BiasDB & Coverage & Length  \\
                    \hline 
                     \multicolumn{6}{c}{$g(d) = 0$, $g^\prime(d)=0$}  \\    
                     \hline 
\multirow{4}{*}{5}  & 500  & 0.047 & 0.012 & 0.956 & 1.319 \\
                    & 1000 & 0.065 & 0.011 & 0.954 & 0.695 \\
                    & 2000 & 0.050 & 0.008 & 0.954 & 0.434 \\
                    & 3000 & 0.050 & 0.003 & 0.964 & 0.342 \\
                     \hline 
\multirow{4}{*}{10} & 500  & 0.047 & 0.025 & 0.950 & 3.740 \\
                    & 1000 & 0.067 & 0.014 & 0.966 & 1.952 \\
                    & 2000 & 0.050 & 0.011 & 0.956 & 1.227 \\
                    & 3000 & 0.049 & 0.009 & 0.958 & 0.971 \\
                     \hline 
\multirow{4}{*}{15} & 500  & 0.052 & 0.041 & 0.954 & 7.989 \\
                    & 1000 & 0.066 & 0.020 & 0.940 & 4.158 \\
                    & 2000 & 0.053 & 0.021 & 0.972 & 2.626 \\
                    & 3000 & 0.049 & 0.014 & 0.966 & 2.083 \\
                    \hline 
                     \multicolumn{6}{c}{$g(d) = d$, $g^\prime(d)=1$}  \\    
                     \hline 
\multirow{4}{*}{5}  & 500  & 0.052 & 0.014 & 0.956 & 1.337 \\
                    & 1000 & 0.056 & 0.006 & 0.956 & 0.689 \\
                    & 2000 & 0.053 & 0.012 & 0.938 & 0.435 \\
                    & 3000 & 0.044 & 0.010 & 0.948 & 0.341 \\
                     \hline 
\multirow{4}{*}{10} & 500  & 0.052 & 0.019 & 0.968 & 3.782 \\
                    & 1000 & 0.052 & 0.012 & 0.952 & 1.950 \\
                    & 2000 & 0.055 & 0.011 & 0.956 & 1.234 \\
                    & 3000 & 0.045 & 0.011 & 0.956 & 0.974 \\
                     \hline 
\multirow{4}{*}{15} & 500  & 0.056 & 0.042 & 0.960 & 8.076 \\
                    & 1000 & 0.066 & 0.031 & 0.952 & 4.154 \\
                    & 2000 & 0.056 & 0.013 & 0.964 & 2.642 \\
                    & 3000 & 0.044 & 0.014 & 0.956 & 2.086 \\
                      \hline  \hline 
\end{tabular}
\end{center}
{\footnotesize Note: ``BiasInit'' and ``BiasDB'' denote the average bias of the initial Lasso estimator $\hat g^\prime(\cdot)$ and the bias-corrected estimator $\tilde g^\prime(\cdot)$, respectively. ``Coverage'' shows the coverage probability of the 95\% confidence band defined as \eqref{eq: def confidence band} over 500 replications. ``Length'' stands for the point-wise average length of the confidence band.}
\end{table}


\begin{table}[H]
\begin{center} 
\caption{Simulation Results for 
Nonlinear $g$ Functions with different $M_d$.}
\label{tab:simul_M_NL}
\begin{tabular}{cc|cccc}
\hline\hline 
 {$M_d$}  &  {$n$} &    BiasInit & BiasDB & Coverage & Length  \\
                    \hline 
                     \multicolumn{6}{c}{$g(d) = 0.05(d-3)^2$, $g^\prime(d)=0.1(d-3)$}  \\    
                     \hline  
                    \multirow{4}{*}{5} & 500  & 0.058 & 0.011 & 0.962 & 1.341 \\
                    & 1000 & 0.062 & 0.012 & 0.966 & 0.693 \\
                    & 2000 & 0.053 & 0.010 & 0.962 & 0.435 \\
                    & 3000 & 0.043 & 0.008 & 0.934 & 0.341 \\
                      \hline 
\multirow{4}{*}{10} & 500  & 0.058 & 0.018 & 0.968 & 3.775 \\
                    & 1000 & 0.061 & 0.019 & 0.956 & 1.951 \\
                    & 2000 & 0.057 & 0.008 & 0.966 & 1.232 \\
                    & 3000 & 0.044 & 0.010 & 0.932 & 0.969 \\
                      \hline 
\multirow{4}{*}{15} & 500  & 0.062 & 0.023 & 0.964 & 8.043 \\
                    & 1000 & 0.063 & 0.026 & 0.970 & 4.160 \\
                    & 2000 & 0.057 & 0.019 & 0.956 & 2.636 \\
                    & 3000 & 0.046 & 0.015 & 0.952 & 2.078 \\
                      \hline 
                     \multicolumn{6}{c}{$g(d) = 0.02(d-3)^3$, $g^\prime(d) = 0.06(d-3)^2$}  \\    
                     \hline
\multirow{4}{*}{5}  & 500  & 0.055 & 0.015 & 0.968 & 1.336 \\
                    & 1000 & 0.061 & 0.007 & 0.958 & 0.693 \\
                    & 2000 & 0.055 & 0.007 & 0.946 & 0.436 \\
                    & 3000 & 0.047 & 0.006 & 0.960 & 0.341 \\
                      \hline 
\multirow{4}{*}{10} & 500  & 0.052 & 0.016 & 0.956 & 3.783 \\
                    & 1000 & 0.062 & 0.007 & 0.968 & 1.952 \\
                    & 2000 & 0.057 & 0.009 & 0.966 & 1.233 \\
                    & 3000 & 0.048 & 0.005 & 0.950 & 0.970 \\
                      \hline 
\multirow{4}{*}{15} & 500  & 0.061 & 0.048 & 0.956 & 8.053 \\
                    & 1000 & 0.067 & 0.026 & 0.954 & 4.155 \\
                    & 2000 & 0.061 & 0.016 & 0.966 & 2.637 \\
                    & 3000 & 0.048 & 0.008 & 0.948 & 2.080 \\
                      \hline  \hline 
\end{tabular}
\end{center}
{\footnotesize Note: ``BiasInit'' and ``BiasDB'' denote the average bias of the initial Lasso estimator $\tilde g^\prime(\cdot)$ and the bias-corrected estimator $\hat g^\prime(\cdot)$, respectively. ``Coverage'' shows the coverage probability of the 95\% confidence band defined as \eqref{eq: def confidence band} over 500 replications. ``Length'' stands for the point-wise average length of the confidence band.}
\end{table}
\begin{table}[H]
\begin{center} 
\caption{Simulation Results with Unbounded Supports.}
\label{tab:simul_UB}
\begin{tabular}{c|cccc}
\hline \hline 
 {$n$} &    BiasInit & BiasDB & Coverage & Length  \\
  \hline 
                     \multicolumn{5}{c}{$g(d) = 0$, $g^\prime(d)=0$}  \\    
                     \hline  
500                & 0.048                                             & 0.011                                               & 0.956                                         & 1.170                                          \\

1000               & 0.055                                             & 0.007                                               & 0.948                                         & 0.528                                          \\

2000               & 0.054                                             & 0.004                                               & 0.942                                         & 0.320                                          \\

3000               & 0.046                                             & 0.001                                               & 0.932                                         & 0.245                                          \\
  \hline 
                     \multicolumn{5}{c}{$g(d) = d$, $g^\prime(d)=1$}  \\    
                     \hline 
500                & 0.063                                             & 0.005                                               & 0.946                                         & 1.160                                          \\

1000               & 0.061                                             & 0.004                                               & 0.958                                         & 0.531                                          \\

2000               & 0.053                                             & 0.002                                               & 0.950                                         & 0.317                                          \\

3000               & 0.046                                             & 0.002                                               & 0.950                                         & 0.243                                          \\
 \hline 
                     \multicolumn{5}{c}{$g(d) = 0.05(d-3)^2$, $g^\prime(d)=0.1(d-3)$}  \\    
                     \hline
500                & 0.053                                             & 0.013                                               & 0.952                                         & 1.143                                          \\

1000               & 0.061                                             & 0.002                                               & 0.962                                         & 0.532                                          \\

2000               & 0.050                                             & 0.003                                               & 0.960                                         & 0.316                                          \\

3000               & 0.045                                             & 0.003                                               & 0.934                                         & 0.243                                          \\
\hline 
                     \multicolumn{5}{c}{$g(d) = 0.02(d-3)^3$, $g^\prime(d) = 0.06(d-3)^2$}  \\    
                     \hline
500                & 0.053                                             & 0.005                                               & 0.966                                         & 1.167                                          \\

1000               & 0.055                                             & 0.004                                               & 0.944                                         & 0.533                                          \\

2000               & 0.052                                             & 0.002                                               & 0.942                                         & 0.318                                          \\

3000               & 0.044                                             & 0.002                                               & 0.956                                         & 0.247              \\
\hline \hline 
\end{tabular}
\end{center}
{\footnotesize Note: ``BiasInit'' and ``BiasDB'' denote the average bias of the initial Lasso estimator $\hat g^\prime(\cdot)$ and the bias-corrected estimator $\tilde g^\prime(\cdot)$, respectively. ``Coverage'' shows the coverage probability of the 95\% confidence band defined as \eqref{eq: def confidence band} over 500 replications. ``Length'' stands for the point-wise average length of the confidence band.}
\end{table}

\end{document}